\documentclass{article}

    \PassOptionsToPackage{round,compress}{natbib}
\usepackage[preprint]{neurips_2026}

\usepackage[utf8]{inputenc} 
\usepackage[T1]{fontenc}    
\usepackage{hyperref}       
\hypersetup{linktoc=page,hidelinks}
\usepackage{url}            
\usepackage{booktabs}       
\usepackage{nicefrac}       
\usepackage{microtype}      
\usepackage{xcolor}         

\usepackage{float}
\usepackage{dblfloatfix}
\usepackage{paralist}

\usepackage[toc,page,header]{appendix}
\usepackage{minitoc}
\usepackage{tocloft}

\usepackage{amsmath}
\usepackage{amssymb}
\usepackage{mathtools}
\usepackage{amsthm}
\usepackage{thmtools}
\usepackage{thm-restate}

\usepackage{wrapfig}
\usepackage{algorithm}
\usepackage[noend]{algorithmic}

\usepackage[capitalize,noabbrev]{cleveref}

\newcommand{\para}[1]{\textbf{#1}}
\newcommand{\minisection}[1]{%
    \phantomsection%
    \addcontentsline{toc}{subsection}{#1}%
    \textbf{#1.}%
}

\theoremstyle{plain}
\newtheorem{theorem}{Theorem}[section]
\newtheorem{proposition}[theorem]{Proposition}
\newtheorem{lemma}[theorem]{Lemma}
\newtheorem{corollary}[theorem]{Corollary}

\theoremstyle{definition}
\newtheorem{definition}[theorem]{Definition}

\newtheorem{example}[theorem]{Example}
\theoremstyle{remark}
\newtheorem{remark}[theorem]{Remark}

\makeatletter
\def\thmhead@plain#1#2#3{%
  \thmname{#1}\thmnumber{\@ifnotempty{#1}{ }\@upn{#2}}%
  \thmnote{ {\the\thm@notefont#3}}}
\let\thmhead\thmhead@plain
\makeatother

\makeatletter
\newcommand{\SHORTIFELSE}[3]{%
  \ALC@it#2\ \algorithmicif\ #1\ \algorithmicelse\ #3%
}
\makeatother

\newcommand{\M}{\mathcal{M}}
\renewcommand{\O}{\mathcal{O}}
\newcommand{\midd}{\!\mid\mid\!}
\newcommand{\F}{\mathcal{F}}
\newcommand{\Na}{\mathbb{N}}

\renewcommand{\L}{\mathcal{L}}
\newcommand{\real}{\mathbb{R}}
\renewcommand{\H}{\mathcal{H}}
\newcommand{\E}{\mathbb{E}}
\newcommand{\norm}[1]{\left\Vert #1\right\Vert }
\renewcommand{\d}{\mathrm{d}}
\newcommand{\cemin}{\textup{ce}\,\textup{min}}
\newcommand{\ceinf}{\textup{ce}\,\textup{inf}}
\renewcommand{\_}{\rule{4pt}{0.4pt}}
\newcommand{\G}{\mathcal{G}}
\newcommand{\e}{\mathrm{e}}
\newcommand{\N}{\mathcal{N}}
\newcommand{\Var}{\textup{Var}}
\newcommand{\sym}{\textup{sym}}

\title{$f$-Differential Privacy Filters: Validity and Approximate Solutions}

\author{%
  \begin{tabular}{cccc}
    Long Tran$^{1}$ & Antti Koskela$^{2}$ & Ossi Räisä$^{3}\thanks{Work primarily done while at the University of Helsinki.}$ & Antti Honkela$^{1}$
  \end{tabular}\\[0.5em]
  $^{1}$Department of Computer Science, University of Helsinki, Finland \\
  $^{2}$Nokia Bell Labs, Espoo, Finland \\
  $^{3}$CISPA Helmholtz Center for Information Security, Saarbrücken, Germany
}

\begin{document}

\maketitle

\begin{abstract}
  Accounting for privacy loss under fully adaptive composition---where mechanism choice and privacy parameters may depend on the history of prior outputs---is a central challenge in differential privacy (DP). Here, privacy filters are stopping rules ensuring a prescribed global budget is not exceeded. A leading candidate for optimal filter design is $f$-DP, which characterizes the full extent of adversarial hypothesis testing and recovers $(\varepsilon,\delta)$-DP through piece-wise linear trade-off functions, while enabling tight $(\varepsilon,\delta)$-DP accounting in standard compositions via tensor products. Yet whether such filters can be correctly defined under $f$-DP remains unclear. We show that the natural $f$-DP filter---tracking path-wise accumulating tensor products and stopping when the prescribed curve is crossed---is fundamentally invalid, precluding the direct use of standard efficient numerical Fast-Fourier-Transform accounting in the fully adaptive setting. We characterize this failure, establishing necessary and sufficient conditions for the natural filter's validity. Furthermore, we prove a fully adaptive central limit theorem for $f$-DP, establishing Gaussian convergence of cumulative privacy losses under full adaptivity. As a demonstration, we construct a closed-form approximate GDP filter for subsampled Gaussian mechanisms that provably outperforms RDP-based accounting in asymptotic regimes ($q\ll 1$ and $q\approx 1$) without tracking the full trade-off function, demonstrating that the slack in RDP is not intrinsic to adaptive composition---though CLT-based approximations are known to be optimistic at realistic subsampling rates, a limitation that remains an open challenge.
\end{abstract}

\section{Introduction}

Differential privacy (DP) rigorously quantifies privacy loss from computations on sensitive data~\citep{dwork2006calibrating,dwork2014algorithmic}. A key property is composition: complex algorithms can be analyzed by composing privacy guarantees of simpler mechanisms. Common composition results require knowing privacy parameters beforehand. However, fully adaptive compositions—where mechanisms and their privacy parameters depend on prior outputs—are essential for iterative algorithms like differentially private stochastic gradient descent and adaptive data analysis. Accounting for privacy loss in this setting is a central challenge.

Privacy filters and odometers formalize fully adaptive composition for $(\varepsilon,\delta)$-DP~\citep{rogers_privacy_2016}, and 
Rényi DP (RDP) admits simple, tight filters under full adaptivity~\citep{feldman_individual_2021}, making RDP-based accounting the current state of the art. In contrast, much less is understood about fully adaptive composition for $f$-DP \citep{dong_gaussian_2022}. Only in the special case of Gaussian DP (GDP) can accurate fully adaptive bounds be obtained~\citep{koskela_individual_2023}. Applying the $f$-DP tensor product gives the tightest composition results for non-adaptive compositions---and since any $(\varepsilon,\delta)$-DP guarantee admits an exact equivalent piece-wise linear trade-off function~\citep{dong_gaussian_2022}, this tightness extends to the $(\varepsilon,\delta)$-DP framework as well. This raises a natural question:

\emph{Does $f$-DP admit valid filters under fully adaptive composition, in analogy to RDP and GDP?}

A natural candidate filter tracks the accumulating tensor product of trade-off functions and halts once it crosses the prescribed $f$-DP budget, mirroring standard $f$-DP composition and adaptive RDP/GDP accounting schemes. Our first main result shows that this approach is fundamentally invalid.

\para{Failure of natural $f$-DP filters.} We show that composing trade-off functions and stopping at a threshold does not, in general, guarantee that the resulting stopped interaction satisfies the same $f$-DP bound. This failure occurs even when each individual mechanism is validly $f$-DP and even when the composed trade-off function never exceeds the threshold along the realized history of outcomes. We give explicit negative examples and a general construction that explains when and why such failures arise. Moreover, we show that this phenomenon occurs for fully adaptive compositions of subsampled Gaussian mechanisms, the main workhorse of private machine learning. As a consequence, the natural filter rules out the most direct route to efficient numerical Fast-Fourier-Transform (FFT) accounting \citep{pmlr-v108-koskela20b,NEURIPS2021_6097d8f3,Doroshenko2022ConnectTD} in the fully adaptive setting, blocking the standard approach for accurate $(\varepsilon, \delta)$ bounds.

\para{When $f$-DP filters are valid.} While the natural $f$-DP filter fails in general, we show that it becomes valid under specific structural conditions on the trade-off functions. We provide necessary and sufficient conditions under which stopping based on composed trade-off curves yields a correct $f$-DP guarantee. These conditions clarify why divergence-based notions such as RDP admit simple filters under full adaptivity, while trade-off-function-based notions do not.

\para{Fully adaptive $f$-DP central limit theorem.} We prove a central limit theorem (CLT) for adaptive privacy loss processes, extending the non-fully adaptive CLT of~\citet{dong_gaussian_2022} to the fully adaptive setting via a modern variant of the martingale Berry--Esseen bound \citep{FAN20191028}. This CLT suggests a general methodology for constructing approximate GDP filters: any mechanism family admitting closed-form PLRV moment approximations yields a closed-form approximate filter, with the CLT error quantifying approximation quality. As a demonstration, we instantiate this methodology for subsampled Gaussian mechanisms. At realistic subsampling rates, the resulting filter underestimates privacy loss---a known limitation of CLT-based Gaussian approximations even in the non-adaptive setting \citep{Zheng2020SharpCB,NEURIPS2021_6097d8f3,pmlr-v202-alghamdi23a}. In the extreme low- and high-subsampling regimes, however, it yields asymptotically valid guarantees strictly tighter than the fully adaptive RDP accounting of~\citet{feldman_individual_2021}, without ever tracking the full trade-off function, demonstrating that the slack in RDP stems from the lossy reduction to Rényi divergences rather than any fundamental barrier in adaptive composition.

\para{Contributions.}
The main contributions of this paper are as follows:
\begin{compactenum}
    
    \item We show that the natural $f$-DP filter --- which composes individual trade-off functions and stops at a prescribed threshold --- is not valid under fully adaptive composition, precluding the direct extension of FFT numerical accounting to such setting. We provide explicit counterexamples and a general construction of this failure. We show that this failure occurs even when we restrict to subsampled Gaussian mechanisms. (Corollary \ref{cor:SG_counterexample}, Figure \ref{fig:SG_counterexample})
    
    \item We establish necessary and sufficient conditions under which the natural $f$-DP filter becomes valid, characterizing when trade-off-function-based accounting is possible under full adaptivity. (Theorems \ref{thm:validity} and \ref{thm:nochain=>invalid})
    
    \item We prove a fully adaptive $f$-DP central limit theorem, establishing that cumulative privacy losses converge to Gaussian distributions, yielding GDP bounds under full adaptivity. This extends the non-adaptive CLT of~\citet{dong_gaussian_2022} via martingale Berry--Esseen techniques, and suggests a general methodology for constructing approximate GDP filters for any mechanism family admitting closed-form PLRV moment approximations. (Theorem \ref{thm:fully-adaptive-CLT})
    
    \item As a demonstration of our CLT, we instantiate this methodology for 
    subsampled Gaussian mechanisms, constructing a closed-form approximate GDP filter that yields strictly tighter guarantees than fully adaptive RDP accounting in asymptotic regimes $q\ll1$ and $q\approx1$, without tracking the full trade-off function. (Algorithm \ref{alg:approx-GDP-filter})

\end{compactenum}

\para{Related work.}
Privacy filters were introduced by \citet{rogers_privacy_2016}, who formalized fully adaptive composition for $(\varepsilon,\delta)$-DP and established the foundations of filter-based accounting. \citet{feldman_individual_2021} showed that R\'{e}nyi DP admits simple and tight privacy filters under full adaptivity, with the added flexibility of individual privacy accounting. \citet{whitehouse_fully-adaptive_2023} subsequently constructed approximate R\'{e}nyi DP filters achieving essentially no loss in tightness relative to advanced standard composition. In the GDP setting, \citet{koskela_individual_2023} developed an accurate fully adaptive filter exploiting the Gaussian structure. \citet{haney} studied the transfer of filters from the noninteractive to the interactive setting under a general framework that includes $f$-DP, but did not analyze specific filter constructions. Independently and concurrently, \citet{regehr2026} also prove that the natural $f$-DP filter is invalid and provide complementary validity conditions. Beyond this shared result, we additionally prove a fully adaptive $f$-DP central limit theorem that suggests a general methodology for constructing approximate GDP filters via mechanism-specific PLRV moment approximations, of which the subsampled Gaussian filter is one instantiation.

\section{Background}

\minisection{Differential Privacy}
A mechanism \(\M\) is a function that maps a dataset input \(S=\!\left(X_1,\ldots,X_n\right)\) to a distribution \(\M\!\left(S\right)\) on a space \(\O\) of possible outputs. We call \(\left(S,S^-\right)\) remove neighbours if \(S^-\) is obtained by removing one datapoint from \(S\).
\begin{definition}[(Differential Privacy)]
    A mechanism \(\M\) is \(\left(\varepsilon,\delta\right)\)-DP if for any remove pair \(S\) and \(S^-\),
    \begin{align*}
        P\!\left[\M\!\left(S\right)\in A\right] \leq e^\varepsilon P\!\left[\M\!\left(S^-\right)\in A\right] + \delta \quad\text{and}\quad
        P\!\left[\M\!\left(S^-\right)\in A\right] \leq e^\varepsilon P\!\left[\M\!\left(S\right)\in A\right] + \delta
    \end{align*} for all set \(A\) in the output space \(\O\). When \(\delta=0\), we shorten \(\left(\varepsilon,0\right)\)-DP into \(\varepsilon\)-DP.
\end{definition}

\begin{definition}
    Given distributions \(P,Q\) on a common space \(\O\), the trade-off function is defined as
    \begin{align*}
        T[P,Q](\alpha)=\inf_{\phi:\O\to[0,1]}\left\{1-\E_Q[\phi]:\E_P[\phi]\le \alpha\right\}.
    \end{align*}
\end{definition}
\vspace{-5pt}
When treating the pair \(\left(P,Q\right)\) as a binary testing problem of \(H_0:y\sim P\) vs.~\(H_1:y\sim Q\), the function
\(\phi\) is a randomized rejection rule that maps \({y\in\O}\) to the probability \(\phi\!\left(y\right)\) of accepting \(H_1\). Then, \(\E_P\!\left[\phi\right]\) is the false positive rate (FPR), while \(1-\E_Q\!\left[\phi\right]\) is the false negative rate (FNR), and hence \(T\!\left[P,Q\right]\) represents the optimal FNR at any FPR \(\alpha\in\left[0,1\right]\).
\begin{theorem}
    A function \(f:\left[0,1\right]\to\left[0,1\right]\) is the trade-off function of some distribution pair if and only if \(f\!\left(\alpha\right)\leq1-\alpha\), \(f\) is decreasing, convex, continuous.
\end{theorem}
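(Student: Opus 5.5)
We prove the two directions separately. For necessity, fix distributions \(P,Q\) on \(\O\) and set \(f=T[P,Q]\).

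\textbf{Necessity.}
\begin{itemize}
\item \emph{Upper bound.} The constant test \(\phi\equiv\alpha\) has \(\E_P[\phi]=\alpha\) and FNR \(1-\alpha\). Hence \(f(\alpha)\le 1-\alpha\).
\item \emph{Monotonicity.} Increasing \(\alpha\) enlarges the feasible set of tests, so the infimum cannot increase and \(f\) is decreasing.
\item \emph{Convexity.} Let \(\phi_1,\phi_2\) be feasible at levels \(\alpha_1,\alpha_2\), each within \(\eta\) of optimal. The mixture \(\lambda\phi_1+(1-\lambda)\phi_2\) is feasible at \(\lambda\alpha_1+(1-\lambda)\alpha_2\). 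Its FNR is the same convex combination of the two FNRs, so \(f\) is convex after letting \(\eta\to0\).
\item \emph{Continuity.} On \([0,1)\), continuity follows from convexity together with boundedness in \([0,1]\). The only issues are the endpoints.
  \begin{itemize}
  \item At \(\alpha=1\): \(0\le f(\alpha)\le 1-\alpha\to0=f(1)\).
  \item At \(\alpha=0\): we need right-continuity. Monotonicity gives \(\lim_{\alpha\downarrow0}f(\alpha)\le f(0)\), so we must show the reverse. By Neyman–Pearson, the infimum is attained by likelihood-ratio tests, which identifies \(f\) explicitly. In particular \(f(0)=Q(p=0)\), where \(p,q\) are densities with respect to \(\mu=P+Q\). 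For small \(\alpha\), any test with \(\E_P\phi\le\alpha\) gives \(\E_Q\phi\le Q(p=0)^{c}\)-part \(+\;o(1)\), since \(Q\) restricted to \(\{p>0\}\) is absolutely continuous with respect to \(P\). This yields the reverse inequality.
  \end{itemize}
\end{itemize}
I expect this endpoint argument at \(0\) to be the main technical obstacle, since it requires the absolute-continuity argument rather than a general convexity fact.

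\textbf{Sufficiency.} Given \(f\) with the stated properties, take \(P=\mathrm{Unif}[0,1]\) and let \(Q\) have density \(|f'(1-x)|\) on \([0,1]\), plus an atom of mass \(1-\int_0^1|f'|=f(0)\) at a point outside \([0,1]\), where \(P\) has no mass.
\begin{itemize}
\item \emph{Validity of \(Q\).} Since \(f\) is convex and decreasing, \(f'\) exists almost everywhere and is monotone. Since \(f\) is continuous and \(f(1)=0\), we get \(\int_0^1|f'|=f(0)-f(1)=f(0)\le1\), so the atom has nonnegative mass and \(Q\) is a probability measure.
\item \emph{Monotone likelihood ratio.} The likelihood ratio \(dQ/dP(x)=|f'(1-x)|\) is nondecreasing in \(x\), because \(|f'|\) is nonincreasing by convexity.
\item \emph{Optimal tests.} By Neyman–Pearson, optimal level-\(\alpha\) tests reject on \(\{x>1-\alpha\}\), and they always reject on the atom, which costs no FPR since \(P\) has no mass there.
\item \emph{Computing the trade-off.} The resulting FNR is \(\int_0^{1-\alpha}|f'(1-x)|\,dx=\int_\alpha^1|f'(t)|\,dt=f(\alpha)-f(1)=f(\alpha)\).
\end{itemize}
Hence \(T[P,Q]=f\), which completes the proof.
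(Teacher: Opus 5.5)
The paper gives no proof of this statement; it is imported as background from Dong et al.\ (2022), so there is no in-paper argument to compare against. Your route is the standard one and it works: elementary necessity, then $P=\mathrm{Unif}[0,1]$ and $Q$ with density $|f'(1-x)|$ plus an atom where $P$ has no mass. However, two formulas are wrong as written, and one of them sits in the step you flagged as the main obstacle.

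\textbf{The endpoint $\alpha=0$.} A level-$0$ test must vanish $P$-a.s., so the best one is $\mathbf{1}_{\{p=0\}}$. Hence $f(0)=1-Q(\{p=0\})=Q(\{p>0\})$, not $Q(\{p=0\})$. For instance, $P=Q$ gives $f(0)=1$ while $Q(\{p=0\})=0$. With your value, the bound ``$\E_Q\phi\le Q(\{p=0\})+o(1)$'' does not give the reverse inequality. With the correct value it does, but only if the $o(1)$ is uniform over all tests with $\E_P\phi\le\alpha$, so make that explicit. For any such $\phi$ and any $K>0$, splitting $\{p>0\}$ into $\{q\le Kp\}$ and $\{q>Kp\}$ gives $\E_Q[\phi\mathbf{1}_{\{p>0\}}]\le K\alpha+Q(\{q>Kp\}\cap\{p>0\})$. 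The last term tends to $0$ as $K\to\infty$ by continuity from above, since these sets decrease to the empty set because $q<\infty$. Taking $K=\alpha^{-1/2}$ gives $1-\E_Q\phi\ge Q(\{p>0\})-\E_Q[\phi\mathbf{1}_{\{p>0\}}]\ge f(0)-o(1)$, hence $\liminf_{\alpha\downarrow0}f(\alpha)\ge f(0)$. Combined with monotonicity, this gives right-continuity. Absolute continuity as stated is a property of sets; the truncation (or the $\varepsilon$--$\delta$ form applied to level sets of $\phi$) is what supplies the uniformity over tests.

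\textbf{Sufficiency and a minor point.} The atom must have mass $1-\int_0^1|f'|=1-f(0)$, not $f(0)$. As written, $Q$ is not a probability measure in general. Your next bullet (nonnegative mass because $f(0)\le1$) already uses the correct value, so this is a slip. It is also worth stating that $f(1)=0$ follows from $0\le f(1)\le 1-1$. The identity $\int_0^1|f'|=f(0)$ follows by monotone convergence from $\int_c^1|f'|=f(c)$ as $c\downarrow0$, which uses right-continuity at $0$. That is exactly where the continuity hypothesis enters, since continuity at $1$ is automatic from $0\le f(\alpha)\le1-\alpha$. Finally, convexity gives continuity only on the open interval $(0,1)$, not on $[0,1)$. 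A convex function can jump up at an endpoint, which is why your separate argument at $0$ is needed.
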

An everywhere smaller trade-off function means a uniformly easier testing problem, and hence weaker privacy protection.
\begin{definition}
    Two trade-off functions \(f_1,f_2\) are Blackwell ordered if \(f_1\geq f_2\) or \(f_1\leq f_2\).
\end{definition}
\begin{definition}[($f$-Differential Privacy)]\label{def:fDP}
    A mechanism \(\M\) is \(f\)-DP where \(f\) is a trade-off function, if \begin{align*}
        T\!\left[\M\!\left(S\right)\!,\M\!\left(S^-\right)\right] \geq f \quad\text{and}\quad
        T\!\left[\M\!\left(S^-\right)\!,\M\!\left(S\right)\right] \geq f
    \end{align*} point-wise for any remove pair \(S,S^-\). In particular, \(\M\) is \(\mu\)-GDP if it is \(G_\mu\)-DP:
    \begin{align*}
        G_\mu = T\!\left[\N\!\left(0,1\right)\!,\N\!\left(\mu,1\right)\right]
            = T\!\left[\N\!\left(0,\nicefrac{1}{\mu^2}\right)\!,\N\!\left(1,\nicefrac{1}{\mu^2}\right)\right].
    \end{align*}
\end{definition}
\vspace{-5pt}
The subsampled Gaussian trade-off function of a remove pair is \(SG_{q,\mu}^- = \left(SG_{q,\mu}^+\right)^{-1}\) where
\begin{align*}
    SG_{q,\mu}^+\!\left(\alpha\right) = T\!\left[\N\!\left(0,1\right)\!,\left(1-q\right)\!\N\!\left(0,1\right)+q\N\!\left(\mu,1\right)\right] = \left(1-q\right)\!\left(1-\alpha\right) + qG_\mu\!\left(\alpha\right).
\end{align*}
\vspace{-6pt}
\begin{proposition}
    A mechanism \(\M\) is \(\left(\varepsilon,\delta\right)\)-DP if and only if it is \(f_{\varepsilon,\delta}\)-DP for
    \begin{align*}
        f_{\varepsilon,\delta}\!\left(\alpha\right) = \max\!\left(0,1-\delta-\e^{\varepsilon}\alpha,\e^{-\varepsilon}\!\left(1-\delta-\alpha\right)\right).
    \end{align*}
\end{proposition}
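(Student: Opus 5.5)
We prove the two implications separately. Both reduce to the observation that the pair of conditions in the $(\varepsilon,\delta)$-DP definition, taken over all measurable sets $A$, is the same as a pointwise lower bound on the trade-off functions in both directions. Fix a remove pair $(S,S^-)$ and write $P=\M(S)$, $Q=\M(S^-)$. Since the definitions of $(\varepsilon,\delta)$-DP and $f$-DP (Definition~\ref{def:fDP}) are each symmetric in the roles of $P$ and $Q$, it suffices to show the following single-direction statement. The inequalities $Q(A)\le e^\varepsilon P(A)+\delta$ and $P(A)\le e^\varepsilon Q(A)+\delta$ hold for all $A$ if and only if $T[P,Q]\ge f_{\varepsilon,\delta}$. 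We then apply this with the roles of $P$ and $Q$ swapped.

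\emph{($\Rightarrow$)} Take any test $\phi$ with $\E_P[\phi]\le\alpha$. First we extend the set inequalities to $[0,1]$-valued functions. Write $\phi=\int_0^1 \mathbf{1}\{\phi>t\}\,\d t$ and integrate the set inequality over the level sets $\{\phi>t\}$. This gives $\E_Q[\phi]\le e^\varepsilon\E_P[\phi]+\delta$, and similarly $\E_Q[1-\phi]\le e^\varepsilon \E_P[1-\phi]+\delta$.

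The first inequality yields
\[
1-\E_Q[\phi]\ge 1-\delta-e^\varepsilon\alpha.
\]
The second, applied with $P$ and $Q$ exchanged to the complementary test $1-\phi$, gives $\E_P[1-\phi]\le e^\varepsilon\E_Q[1-\phi]+\delta$. Since $\E_P[1-\phi]\ge 1-\alpha$, this yields
\[
1-\E_Q[\phi]\ge e^{-\varepsilon}(1-\delta-\alpha).
\]
Together with the trivial bound $1-\E_Q[\phi]\ge 0$, we get $1-\E_Q[\phi]\ge f_{\varepsilon,\delta}(\alpha)$. Taking the infimum over $\phi$ gives $T[P,Q]\ge f_{\varepsilon,\delta}$.

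\emph{($\Leftarrow$)} Given $T[P,Q]\ge f_{\varepsilon,\delta}$, fix a set $A$ and use the indicator test $\phi=\mathbf{1}_A$ at level $\alpha=P(A)$. By the definition of $T$, we have $1-Q(A)\ge T[P,Q](P(A))\ge 1-\delta-e^\varepsilon P(A)$, which rearranges to $Q(A)\le e^\varepsilon P(A)+\delta$.

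For the reverse inequality, apply the same bound to $A^c$, using the branch $e^{-\varepsilon}(1-\delta-\alpha)$:
\[
1-Q(A^c)\ge e^{-\varepsilon}\bigl(1-\delta-P(A^c)\bigr).
\]
Substituting $Q(A^c)=1-Q(A)$ and $P(A^c)=1-P(A)$ turns this into $Q(A)\ge e^{-\varepsilon}(P(A)-\delta)$, that is, $P(A)\le e^\varepsilon Q(A)+\delta$.

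The only delicate points are bookkeeping. The first is making sure each branch of the max is matched with the right complement and the right direction of the pair. The second is the extension from sets to randomized tests, which is handled by the layer-cake argument above.
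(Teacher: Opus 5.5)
Your proof is correct and complete. The layer-cake extension to randomized tests yields both linear branches of $f_{\varepsilon,\delta}$, and the indicator and complement tests recover both set inequalities. The reduction to the single condition $T[P,Q]\ge f_{\varepsilon,\delta}$ is legitimate because that condition alone already implies both DP inequalities. The paper gives no proof of its own here, since it imports the proposition as standard background from \citet{dong_gaussian_2022}, and your argument is the standard hypothesis-testing proof of this equivalence.
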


By deploying \(\left(\varepsilon,\delta\right)\)-DP as \(f_{\varepsilon,\delta}\)-DP, all results for \(f\)-DP immediately apply to \(\left(\varepsilon,\delta\right)\)-DP as well. In particular, the \(f\)-DP standard composition results is expressed via tensor products
\begin{align*}
    T\!\left[P_1,Q_1\right] \otimes T\!\left[P_2,Q_2\right] = T\!\left[P_1\times P_2, Q_1\times Q_2\right],
\end{align*}
from which we can derive the tight \(\left(\varepsilon,\delta\right)\)-DP of standard compositions.
\begin{theorem}[($f$-DP Composition)]\label{thm:composition_fDP}
    If \(\M_1\) is tightly \(f_1\)-DP, and the adaptively chosen \(\M_2\) is \(f_2\)-DP where \(f_2\) is fixed and tight, then \(\left(\M_1,\M_2\right)\) is \(f_1\otimes f_2\)-DP. In particular, if \(\M_1\) is \(\mu_1\)-GDP and \(\M_2\) is \(\mu_2\)-GDP, then \(\left(\M_1,\M_2\right)\) is \(\sqrt{\mu_1^2+\mu_2^2}\)-GDP. Moreover, \(f_1\otimes f_2\) can be converted into the tight \(\left(\varepsilon,\delta\right)\)-DP guarantee at each \(\varepsilon>0\),
    \begin{align*}
        \min\!\left\{\delta\geq 0: \left(\M_1,\M_2\right)\,\textup{is}\,\left(\varepsilon,\delta\right)\!\textup{-DP} \right\} = \min\!\left\{\delta\geq 0: f_1\otimes f_2 \geq f_{\varepsilon,\delta} \right\}.
    \end{align*}
\end{theorem}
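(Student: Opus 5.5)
The plan is to reduce the adaptive two-step composition to a single pair of product-type distributions, and then use the Blackwell characterization of trade-off functions. Throughout, fix a remove pair $S,S^-$. By assumption $T[\M_1(S),\M_1(S^-)] = f_1$ exactly; tightness means every pair in question has trade-off function exactly $f_1$. By Blackwell's theorem there is then a (possibly randomized) Markov kernel $K_1$ with $K_1(P_1)=\M_1(S)$ and $K_1(Q_1)=\M_1(S^-)$, where $f_1=T[P_1,Q_1]$ is a fixed canonical pair; for instance, take $P_1$ uniform on $[0,1]$ and $Q_1$ with density $-f_1'(1-x)$. Similarly, for each $y_1$ the adaptively chosen $\M_2(\cdot;y_1)$ satisfies $T[\M_2(S;y_1),\M_2(S^-;y_1)]\ge f_2$. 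So there is a kernel $K_2^{y_1}$, depending measurably on $y_1$, with $K_2^{y_1}(P_2)=\M_2(S;y_1)$ and $K_2^{y_1}(Q_2)=\M_2(S^-;y_1)$, where $(P_2,Q_2)$ is a fixed pair with $T[P_2,Q_2]=f_2$. Fixing $f_2$ in advance is exactly what makes a single $(P_2,Q_2)$ work for all histories.

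Next I construct one kernel $K$ on the product space. Given $(z_1,z_2)$, set $y_1\sim K_1(\cdot\mid z_1)$ and then $y_2\sim K_2^{y_1}(\cdot\mid z_2)$. Under $P_1\times P_2$ this produces the joint law of $(\M_1(S),\M_2(S;\M_1(S)))$, because $z_2$ is independent of $z_1$ and hence of $y_1$. The same argument under $Q_1\times Q_2$ gives the law under $S^-$. The data-processing inequality for trade-off functions, which follows directly from the definition since any test on the outputs pulls back to a test on $(z_1,z_2)$, then gives $T[\M(S),\M(S^-)]\ge T[P_1\times P_2,Q_1\times Q_2]=f_1\otimes f_2$. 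For the reverse direction $T[\M(S^-),\M(S)]$, I would assume the trade-off functions are symmetric, or else repeat the argument with the inverse pairs $f^{-1}$. The GDP statement is then the special case $G_{\mu_1}\otimes G_{\mu_2}=T[\N(0,I),\N((\mu_1,\mu_2),I)]=G_{\sqrt{\mu_1^2+\mu_2^2}}$, which follows by rotating the two-dimensional Gaussian.

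For the $(\varepsilon,\delta)$ conversion, the inequality "$\le$" follows from the first part combined with the preceding Proposition. If $f_1\otimes f_2\ge f_{\varepsilon,\delta}$, then the pair is $f_{\varepsilon,\delta}$-DP and therefore $(\varepsilon,\delta)$-DP. For "$\ge$" I use tightness: I would exhibit an instance, namely $\M_1$ and a non-adaptive $\M_2$ realizing $(P_1,Q_1)$ and $(P_2,Q_2)$, whose composition has trade-off function exactly $f_1\otimes f_2$. For that instance, $(\varepsilon,\delta)$-DP is equivalent to $f_1\otimes f_2\ge f_{\varepsilon,\delta}$. The minimum is attained because $\{\delta : f\ge f_{\varepsilon,\delta}\}$ is closed by continuity.

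The main obstacle is the measurable choice of $K_2^{y_1}$ uniformly in $y_1$. Blackwell's theorem gives existence pointwise, so I would construct the kernel explicitly: use the likelihood-ratio / Neyman–Pearson quantile transform, mapping $z_2$ through the inverse CDF of the privacy loss of $\M_2(\cdot;y_1)$ and then randomizing within level sets. This is measurable in $y_1$ whenever the mechanism kernels are. A secondary concern is the meaning of "tight" for $\M_1$. It is needed only for the conversion equality, not for the upper bound, so I would state that dependence explicitly.
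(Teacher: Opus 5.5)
\textbf{The gap: the ``$\geq$'' half of the $(\varepsilon,\delta)$ identity.} The displayed equality is about the \emph{given} composition $(\M_1,\M_2)$. You argue instead by exhibiting a different instance, a non-adaptive product of canonical pairs, whose trade-off function is exactly $f_1\otimes f_2$. That shows only that $f_1\otimes f_2$ cannot be improved as a bound valid for all mechanisms meeting the hypotheses. It gives no lower bound on $\min\{\delta:(\M_1,\M_2)\text{ is }(\varepsilon,\delta)\text{-DP}\}$.

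What you need is a remove pair $(S,S^-)$ at which the given composition attains $f_1\otimes f_2$. Take the reading of tightness you state: the pair has trade-off exactly $f_1$ under $\M_1$, and exactly $f_2$ under $\M_2(\cdot\,;y_1)$ for every $y_1$. Then Fubini gives this directly, as in the paper's treatment of adaptive products:
\begin{align*}
  D_\gamma\!\left[\M(S)\midd\M(S^-)\right]
  = \int_{\O_1} H_{f_2}\!\left(\gamma\tfrac{q_1(y_1)}{p_1(y_1)}\right)p_1(y_1)\,\d y_1
  = \left(H_{f_1}\otimes H_{f_2}\right)\!(\gamma).
\end{align*}
Here $p_1,q_1$ are the densities of $\M_1(S),\M_1(S^-)$, and $H_f$ is the privacy profile of $f$ (Theorem~\ref{thm:f<->H}). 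Adaptivity is harmless because every branch has the same profile, and together with your ``$\leq$'' direction this gives the identity.

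This reading is strictly stronger than ``$f_i$ is the greatest lower bound over remove pairs'', and under that weaker reading the identity is false. Take datasets of size at most one over $\{a,b\}$. Let $\M_1$ release $\mathbf{1}[a\in S]+Z_1$ and $\M_2$ release $\mathbf{1}[b\in S]+Z_2$, with $Z_i\sim\N(0,1)$. Each mechanism is tightly $G_1$-DP. Yet every pair gives $G_1\otimes\textup{Id}=G_1$, so the composition is tightly $G_1$-DP rather than $G_{\sqrt2}$-DP (cf.\ the strict case of Lemma~\ref{lem:accounting-per-pair}). So the hypothesis must be an aligned worst-case pair, and the argument must be run on the given mechanisms.

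\textbf{The first claim.} Your Blackwell-kernel proof is a valid route that differs from the paper's. The paper does not reprove this background result; it cites \citet{dong_gaussian_2022} and the dominating-pair version of \citet{zhu_optimal_2022}. It proves only the Gaussian identity, by the same rotation you use (Proposition~\ref{prop:product-G}). Your version needs three repairs:
\begin{itemize}
  \item Definition~\ref{def:fDP} already gives $T[\M_i(S^-),\M_i(S)]\geq f_i$, so the reverse direction is the same construction with $S$ and $S^-$ swapped. Your fallback via inverse pairs would only yield $(f_1\otimes f_2)^{-1}$.
  \item Your canonical $Q_1$ needs an extra atom of mass $1-f_1(0)$ off the support of $P_1$ whenever $f_1(0)<1$ (e.g.\ $f_{\varepsilon,\delta}$).
  \item The quantile transform you propose for choosing $K_2^{y_1}$ measurably only reproduces a pair with the \emph{same} trade-off function. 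When $T[\M_2(S;y_1),\M_2(S^-;y_1)]>f_2$ strictly, a genuine garbling is required, and the transform does not supply one.
\end{itemize}

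The privacy-profile route implicit in the paper's appendix avoids kernels entirely. Let $H_2^{(y_1)}$ be the profile of $(\M_2(S;y_1),\M_2(S^-;y_1))$ and $H_1^{S}$ that of $(\M_1(S),\M_1(S^-))$. Then
\begin{align*}
  D_\gamma\!\left[\M(S)\midd\M(S^-)\right]
  &= \int_{\O_1} H_2^{(y_1)}\!\left(\gamma\tfrac{q_1}{p_1}\right)p_1\,\d y_1
  \leq \int_{\O_1} H_{f_2}\!\left(\gamma\tfrac{q_1}{p_1}\right)p_1\,\d y_1 \\
  &= \left(H_1^{S}\otimes H_{f_2}\right)\!(\gamma)
  \leq \left(H_{f_1}\otimes H_{f_2}\right)\!(\gamma).
\end{align*}
This uses only Fubini, Theorem~\ref{thm:order-f-H}, and monotonicity of $\otimes$ (Proposition~\ref{prop:properties-H-product}). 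Your construction buys the stronger statement that the composed pair is a garbling of $(P_1\times P_2,Q_1\times Q_2)$. The cost is Blackwell's theorem on general spaces plus a measurable-selection argument you have not yet supplied.
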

\begin{definition}
    For distributions \(\left(P,Q\right)\) on a common space \(\O\), denote \(L=\ln\frac{P}{Q}\). The PLRVs are defined as \(L\!\left(y\right)\), \(y\sim P\) and \(L\!\left(y\right)\), \(y\sim Q\). The PLRV density is called the privacy loss density (PLD).
\end{definition}
The cdf of each PLRV characterizes the FPR and FNR of likelihood ratio tests,  directly representing the trade-off function via the Neyman-Pearson lemma. Under compositions, PLRVs sum as $\ln\frac{\M_{1:T}\left(S\right)}{\M_{1:T}\left(S^-\right)} = \sum\ln\!\frac{\M_t\left(S\right)}{\M_t\left(S^-\right)}$, indicating CLT-based convergence to Gaussian distributions.

\minisection{Extended DP Notions} In adaptive composition, the next distributions \(P_2\!\left(y_1\right)\) and \(Q_2\!\left(y_1\right)\) are chosen based on the realization \(y_1\sim P_1\) and \(y_1\sim Q_1\), respectively. We denote then the adaptive product as
\begin{align}
    T\!\left[P_1,Q_1\right] \otimes_{\textup{adapt}} T\!\left[P_2\!\left(y_1\right)\!,Q_2\!\left(y_1\right)\right] = T\!\left[P_1\times P_2\!\left(y_1\right)\!, Q_1\times Q_2\!\left(y_1\right)\right].\label{eq:f-adapt-product}
\end{align}

Let \(\F\) be a family of trade-off functions. The greatest lower bound of $\F$ is not the point-wise infimum, but rather the lower convex envelope of that infimum, denoted $f^\downarrow=\ceinf_{f\in\F} f$. For a single trade-off function \(f\), we define its symmetrization as $\sym(f) = \cemin(f, f^{-1})$.

Not all trade-off functions are Blackwell ordered, so the bound \(f^\downarrow\) of a family \(\F\) can deviate significantly from every \(f\in\F\). \citet{kaissis_beyond_2024} proposed \(\Delta\)-divergence to measure the distance between two arbitrary trade-off functions. We then define convergence of trade-off functions via \(\Delta\)-divergence, and prove its equivalence to other convergence notions (see Theorem \ref{thm:unifying-fconv}).
\begin{definition}
    The lower closure \(\bar{\F}\) of a family \(\F\) of trade-off functions is the set of all limits of any everywhere increasing sequence in \(\F\). We furthermore define \(\F^{\_\otimes}=\left\{f\otimes g:f\in\F,g\in\bar{\F}\right\}\).
\end{definition}
In addition, we use \(\Delta\)-divergence to extend GDP to privacy guarantees that converge to it.
\begin{definition}
    A mechanism \(\M\) is \(\Delta\)-approximately \(\mu\)-GDP if its \(f\)-DP guarantee satisfies
    \begin{align*}
        G_\mu\!\left(\alpha+\Delta\right)-\Delta \leq f\!\left(\alpha\right) \leq G_\mu\!\left(\alpha-\Delta\right)+\Delta \text{ for all }\alpha.
    \end{align*}
\end{definition}

\minisection{Fully Adaptive Composition}
Fully adaptive composition extends standard composition by allowing mechanism choice and privacy parameters to depend arbitrarily on the history of prior outputs. This is crucial in practice: in DP-SGD, for instance, committing to a fixed schedule of $q_t$ and $\sigma_t$ in advance is often suboptimal for utility, yet adapting them based on observed model behaviour or remaining privacy budget requires exactly this framework.

\setlength{\intextsep}{1pt}
\begin{wrapfigure}{r}{0.5\textwidth}
\begin{minipage}{\linewidth}
    \begin{algorithm}[H]
        \caption{Privacy filter in fully adaptive composition}
    \begin{algorithmic}\label{alg:fully-adaptive-composition}
        \STATE\algorithmicinput\  Adaptively chosen mechanisms \(\M_t^{\left(y_{1:t-1}\right)}\!\left(\cdot\right)\) for \(t = 1,\ldots,T\), privacy budget \(\omega_B\), dataset \(S\).
        \FOR{\(t=1\) to \(T\)}
            \STATE \textbf{compute} privacy level \(\omega_t\) of \(\M_t^{\left(y_{1:t-1}\right)}\)
            \IF{\(F\!\left(\omega_B,\omega_1,\ldots,\omega_t\right)=\text{HALT}\)}
                \STATE \textbf{return} \(\left(y_1,\ldots,y_{t-1}\right)\)
            \ENDIF
            \STATE \textbf{compute} \(y_t\gets\M_t^{\left(y_{1:t-1}\right)}\!\left(S\right)\)
        \ENDFOR
        \STATE \textbf{return} \(\left(y_1,\ldots,y_T\right)\)
    \end{algorithmic}
    \end{algorithm}
\end{minipage}
\end{wrapfigure}

Under full adaptivity, privacy parameters become random variables dependent on the stochastic outcomes of previous mechanisms, requiring path-dependent privacy accounting methods where the composition length itself may be random. To manage this complexity, \citet{rogers_privacy_2016} introduced privacy filters and privacy odometers as two complementary approaches. Privacy filters are adaptive stopping rules that halt mechanism releases to ensure pre-specified privacy guarantees, while privacy odometers track cumulative privacy loss without requiring a fixed budget in advance. Since odometers must ensure valid privacy bounds at any analyst-chosen stopping time, they present significant theoretical and practical difficulties. In this work, we focus on privacy filters (Algorithm \ref{alg:fully-adaptive-composition}), which maintain fixed privacy guarantees at the cost of random stopping times.
Here, the privacy level is denoted \(\omega_t\) due to the freedom to choose DP definitions. Central to the composition, the privacy filter \(F\) compares the privacy parameters \(\omega_{1:t}\) realized so far to \mbox{the budget \(\omega_B\)} to determine whether the upcoming mechanism release is allowed. 

\begin{definition}
    A privacy filter \(F\) is valid if given any sequence of mechanisms \(\M_{1:T}\) and any budget \(\omega_B\), Algorithm \ref{alg:fully-adaptive-composition} produces a composition that satisfies a privacy guarantee equal to \(\omega_B\).
\end{definition}
Among several valid filters developed in previous work (e.g.~\citealp{feldman_individual_2021}; \citealp{whitehouse_fully-adaptive_2023}), the GDP filter by \citet{koskela_individual_2023} directly adapts the standard GDP composition result to the filtering framework, and provides a foundation for our more general \(f\)-DP results.\negthickspace
\begin{theorem}\label{thm:filter-GDP}
    With GDP guarantees \(\omega_t=\mu_t\), the following filter is valid (cf.~Theorem \ref{thm:composition_fDP}):
        \begin{align*}
            F\!\left(\mu_B,\mu_1,\ldots,\mu_t\right) = \begin{cases}
                                \textup{CONT} & \text{if }\sum_{i=1}^t \mu_i^2 \leq \mu_B^2,\\
                                \textup{HALT} & \text{otherwise.}                        
                            \end{cases}
        \end{align*}
\end{theorem}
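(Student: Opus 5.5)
Our plan is to reduce the statement to a bound on the stopped privacy loss process and then use a Gaussian exponential martingale. Fix a remove pair \(S,S^-\), and write \(P\) and \(Q\) for the laws of the full transcript \((y_1,\ldots,y_T)\) produced by Algorithm \ref{alg:fully-adaptive-composition} on \(S\) and on \(S^-\). We handle only \(T[P,Q]\geq G_{\mu_B}\); the reverse direction is symmetric because \(G_\mu\) is symmetric. Since \(\mu_t\) is a function of \(y_{1:t-1}\), the halting decision before step \(t\) is predictable, and the released prefix has length \(\tau\), a stopping time. Padding the halted steps with a trivial mechanism (\(\mu=0\), identical output under \(S\) and \(S^-\)) lets us assume every path runs exactly \(T\) steps. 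Then \(\mu_t\) is predictable and \(\sum_{t\le T}\mu_t^2\le\mu_B^2\) almost surely. If we pad further with one final Gaussian mechanism of parameter \(\sqrt{\mu_B^2-\sum_t\mu_t^2}\), this only post-processes away information, so we may assume \(\sum_t\mu_t^2=\mu_B^2\) exactly on every path.

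Next we write the log-likelihood ratio as \(L=\sum_t L_t\), where \(L_t=\ln\frac{\M_t(S)}{\M_t(S^-)}(y_t)\) is taken conditionally on the history. Conditional on \(y_{1:t-1}\), the mechanism \(\M_t\) is \(\mu_t\)-GDP. The key fact we would invoke is that a \(\mu\)-GDP pair is a post-processing of the pair \((\N(0,1),\N(\mu,1))\), by Blackwell's theorem. Choosing such a kernel measurably in the history, we can realize the whole composition as post-processing of an adaptive Gaussian experiment. In that experiment one observes \(Z_t\sim\N(0,1)\) under \(S^-\) (respectively \(\N(\mu_t,1)\) under \(S\)), with \(\mu_t\) depending on past observations and on internal randomness that is common to both hypotheses. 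The trade-off function is monotone under post-processing, so it suffices to prove that this adaptive Gaussian experiment is \(G_{\mu_B}\)-DP.

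For the Gaussian experiment, the log-likelihood ratio is \(\Lambda=\sum_t(\mu_t Z_t-\mu_t^2/2)\) under the null. Under \(Q\), the sum \(M_T=\sum_t\mu_t Z_t\) is a martingale with predictable quadratic variation \(\sum_t\mu_t^2=\mu_B^2\), which is deterministic. We then compute the conditional characteristic function step by step, \(\E[\exp(i\theta\mu_tZ_t)\mid\mathcal F_{t-1}]=\exp(-\theta^2\mu_t^2/2)\). Iterating backwards, the fact that the total variance is deterministic lets us multiply by \(\exp(\theta^2\sum_{s\le t}\mu_s^2/2)\) to form a martingale, which yields \(M_T\sim\N(0,\mu_B^2)\) exactly. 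The same argument under \(P\) gives \(M_T\sim\N(\mu_B^2,\mu_B^2)\). Consequently \(\Lambda\) is a sufficient statistic with exactly the law it has in the pair \((\N(0,1),\N(\mu_B,1))\) up to an affine rescaling. The Neyman--Pearson lemma then gives \(T[P,Q]=G_{\mu_B}\) for the Gaussian experiment, and hence \(T[P,Q]\ge G_{\mu_B}\) for the original composition.

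The main obstacle is the reduction step. We must justify that the history-dependent simulating kernels can be chosen measurably, and that the stopping and padding of the filter commute with the post-processing. What we would try first is to sidestep measurability: rather than simulating, we work directly with the privacy loss \(L\). By convexity, the condition \(T[P,Q]\ge G_{\mu_B}\) is equivalent to the family of bounds \(\E_Q[(e^{L}-e^{\varepsilon})_+]\le\delta_{\mu_B}(\varepsilon)\) for every \(\varepsilon\). We would then try to prove these by backward induction on \(t\). The induction hypothesis would be that, given the remaining budget \(\rho^2\), the conditional tail functional is dominated by the Gaussian one. The single-step argument would use the fact that the \(\mu_t\)-GDP tail functional composed with the \(\rho'\)-Gaussian tail functional is bounded by the \(\sqrt{\mu_t^2+\rho'^2}\) one, which is Theorem \ref{thm:composition_fDP} applied conditionally.
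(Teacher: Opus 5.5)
Your proposal is correct, and it contains two proofs. The one you lead with is genuinely different from the paper's. You simulate each \(\mu_t\)-GDP step as a post-processing of \(\N(0,1)\) vs.\ \(\N(\mu_t,1)\), then show that \(\sum_t\mu_tZ_t\) is exactly \(\N(0,\mu_B^2)\) (resp.\ \(\N(\mu_B^2,\mu_B^2)\)) because its predictable quadratic variation is the deterministic \(\mu_B^2\).

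The paper instead gets this filter as the Gaussian special case of Theorem~\ref{thm:validity}, using that Gaussian trade-off functions form a Blackwell chain closed under \(\otimes\). That theorem is proved by backward induction. Conditionally on the history, the future trade-off functions are replaced by their greatest lower bound over the current output. The chain property, an increasing approximating sequence and monotone convergence show the filter inequality survives this replacement, and then the fixed-\(f_2\) composition theorem is applied.

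Your fallback is essentially that argument specialized to Gaussians, and there the chain/limit step is unnecessary. The remaining budget \(\rho'=\sqrt{\rho^2-\mu_t^2}\) depends only on \(y_{1:t-1}\), because \(\mu_t\) does. So \(G_{\rho'}\) is one fixed lower bound for every continuation after \(y_t\), and \(G_{\mu_t}\otimes G_{\rho'}=G_{\rho}\) closes the induction. You should state this predictability explicitly. It is exactly what breaks down for general \(f\), where there is no way to ``subtract'' a trade-off function from the budget.

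As for what each route buys:
\begin{itemize}
\item The martingale argument identifies the worst-case privacy loss exactly as \(\N(\pm\mu_B^2/2,\mu_B^2)\). It therefore shows the bound is attained by Gaussian mechanisms that exhaust the budget, and it is the error-free analogue of the paper's fully adaptive CLT: Conditions~1--4 hold with \(\eta_1=\eta_2=\kappa=0\), and the increments are themselves Gaussian. Its cost is that the simulating kernels must be chosen jointly measurably in the history, which you rightly flag.
\item The backward induction needs only regular conditional distributions and the conditional composition theorem. It is the more economical proof, and it is the version that extends to general Blackwell chains as in the paper.
\end{itemize}
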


\section{Natural \(f\)-DP Filter}
We first give the formal definition of the natural $f$-DP filter construction.
\begin{definition}\label{def:fDP-filter}
    The natural \(f\)-DP filter is defined as
    \begin{align*}
        F\!\left(f_B,f_1,\ldots,f_t\right) = \begin{cases}
                            \text{CONT} & \text{if } f_1\otimes\ldots\otimes f_t \geq f_B,\\
                            \text{HALT} & \text{otherwise.}                        
                        \end{cases}
    \end{align*}
\end{definition}
This filter keeps track of the realized trade-off functions \(f_t\) of the released mechanisms and compares their tensor product with the budget \(f_B\). Such a procedure is motivated by the standard \(f\)-DP composition theorem (Theorem \ref{thm:composition_fDP}), and it is a direct generalization of the valid GDP filter (Theorem \ref{thm:filter-GDP}). It is therefore tempting to expect validity under full adaptivity as well---which would also allow numerical PLD composition methods \citep{pmlr-v108-koskela20b,NEURIPS2021_6097d8f3,Doroshenko2022ConnectTD} to extend to this setting. We show that this validity claim is false in general.

\minisection{Calibration Point and Filter Invalidity} To understand this failure, we reduce to a two-step adaptive composition, since the general case can be viewed through this setting. The first mechanism has a trade-off of \(f_1\), and the second adaptively chosen mechanism has a trade-off of \(f_2^{(y_1)}\) which can be selected arbitrarily from a family \(\F_2\). The true composition is described by the adaptive product of \(f_1\) and \(f_2^{(y_1)}\), whereas the natural \(f\)-DP filter only tracks the standard products \(f_1\otimes f_2^{(y_1)}\).

\begin{theorem}[(Adaptive Tightness of Theorem \ref{thm:composition_fDP})]
    Let \(f_1\) be a fixed trade-off function, and let \(\F_2\) be a family of trade-off functions. Denote the greatest lower bound
    \(f_2^\downarrow=\ceinf_{f_2 \in \F_2} f_2\).
    Then, for any adaptive choice \(y_1\mapsto f_2^{\left(y_1\right)}\in\F_2\), the adaptive product of \(f_1\) and \(f_2^{\left(y_1\right)}\) is lower bounded by \(f_1\otimes f_2^\downarrow\). Moreover, this bound is tight:
    \begin{align*}
        f_1\otimes f_2^\downarrow = \cemin \left(f_1\otimes_{\textup{adapt}} f_2^{\left(y_1\right)}\right).
    \end{align*}
    More precisely, at each FPR \(\alpha\in\left(0,1\right)\), there exists an adaptive choice \(y_1\mapsto f_2^{\left(y_1\right)}\) such that the adaptive product \(f_1\otimes_{\textup{adapt}} f_2^{\left(y_1\right)}\) is arbitrarily close to the lower bound \(f_1\otimes f_2^\downarrow\) at \(\alpha\) (but not necessarily at other FPRs).
\end{theorem}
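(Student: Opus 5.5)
Let \(f_1=T[P_1,Q_1]\). I would first use Neyman--Pearson to write the adaptive product as an optimization over tests that split the FPR budget across \(y_1\). A test \(\phi(y_1,y_2)\) for \(P_1\times P_2(y_1)\) vs \(Q_1\times Q_2(y_1)\) restricts, for each \(y_1\), to a test \(\phi_{y_1}\) in the second stage, with local FPR \(a(y_1)=\E_{P_2(y_1)}[\phi_{y_1}]\). The best FNR given \(a(y_1)\) is \(f_2^{(y_1)}(a(y_1))\). Hence
\begin{align*}
f_1\otimes_{\textup{adapt}} f_2^{(y_1)}(\alpha)=\inf\Big\{\E_{Q_1}\big[f_2^{(y_1)}(a(y_1))\big]: a:\O\to[0,1],\ \E_{P_1}[a]\le\alpha\Big\}.
\end{align*}
The non-adaptive tensor product \(f_1\otimes g\) has the same representation with \(g\) in place of \(f_2^{(y_1)}\).

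For the \textbf{lower bound}, note that \(f_2^{(y_1)}\ge f_2^\downarrow\) pointwise for every \(y_1\), because \(f_2^\downarrow\) lies below the pointwise infimum. Substituting \(f_2^\downarrow\) into the variational formula gives \(f_1\otimes_{\textup{adapt}} f_2^{(y_1)}\ge f_1\otimes f_2^\downarrow\). The right-hand side is convex, so it also lies below the lower convex envelope \(\cemin\) over all adaptive choices. This yields one inequality of the tightness identity.

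For \textbf{tightness at a fixed \(\alpha\)}, I would use that the lower convex envelope of \(\inf_{\F_2}\) is attained by convex combinations of points on the curves. For any \(\beta\), there are \(g_1,g_2\in\F_2\) (two suffice in one dimension, by Carathéodory), \(\lambda\in[0,1]\), and \(\beta_1,\beta_2\) with \(\lambda\beta_1+(1-\lambda)\beta_2=\beta\) and \(\lambda g_1(\beta_1)+(1-\lambda)g_2(\beta_2)\le f_2^\downarrow(\beta)+\eta\).

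Next, take a near-optimal \(a^*(y_1)\) for \(f_1\otimes f_2^\downarrow(\alpha)\). Since the objective is evaluated with the convex function \(f_2^\downarrow\), only the distribution of \(a^*\) under \(P_1\) and \(Q_1\) matters. I would simulate the convex combination by randomizing the choice. The clean way is to augment \(y_1\) with an independent coin, or equivalently to split the outcome space. If \(P_1,Q_1\) are atomless, partition each level set into pieces of relative \(P_1\)- and \(Q_1\)-mass \(\lambda\) and \(1-\lambda\). On the first piece choose \(f_2^{(y_1)}=g_1\) with local FPR \(\beta_1\), and on the second choose \(g_2\) with local FPR \(\beta_2\).

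This choice preserves the FPR budget \(\E_{P_1}[a]\). It also achieves \(\E_{Q_1}[\lambda g_1(\beta_1)+(1-\lambda)g_2(\beta_2)]\le f_1\otimes f_2^\downarrow(\alpha)+\eta\), so the adaptive product is within \(\eta\) of the bound at \(\alpha\). Taking \(\cemin\) over adaptive choices and \(\eta\to0\) gives equality at every \(\alpha\). The result is a convex function that is sandwiched between matching bounds.

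The \textbf{main obstacle} is the splitting step. The \((\lambda,\beta_1,\beta_2,g_1,g_2)\) depend on \(\beta=a^*(y_1)\), so the adaptive rule \(y_1\mapsto f_2^{(y_1)}\) must be selected measurably. In addition, \(P_1,Q_1\) may have atoms, which prevents splitting into fractions \(\lambda\) that are equal under both measures simultaneously.

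For atoms, I would first discretize \(a^*\) to finitely many values, at a cost of \(\eta\) using the continuity of \(f_2^\downarrow\). I would then replace \((P_1,Q_1)\) by an equivalent pair with the same trade-off \(f_1\). Such a pair can be taken on \([0,1]\) via the canonical representation \(P=U[0,1]\), \(Q\) with density \(-f_1'(1-x)\), which is atomless. The statement only concerns trade-off functions, so this replacement is legitimate. With finitely many level sets, measurability is trivial. Alternatively, one could take \(\eta\)-optimal selections from a countable dense subfamily of \(\F_2\).
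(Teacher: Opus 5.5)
Your argument is correct, but it takes a different route from the paper.

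\textbf{The paper's route.} The paper never works with tests or FPR allocations. It passes to privacy profiles, where the least upper bound of the step-2 family is simply the pointwise supremum $H_2^\uparrow$. That supremum is already convex, so no envelope has to be realized. The adaptive composition satisfies $D_\gamma=\int H_2^{(y_1)}\bigl(\gamma q_1(y_1)/p_1(y_1)\bigr)p_1(y_1)\,\mathrm{d}y_1$, an integral with no budget constraint linking different $y_1$. Tightness at a fixed order $\gamma$ then follows by choosing, separately for each $y_1$, a member of $\mathcal{H}_2$ that nearly attains the supremum at the rescaled order $\gamma q_1(y_1)/p_1(y_1)$. No mixing, discretization or atomlessness is needed, and it works for the given pair $(P_1,Q_1)$. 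Since a supremum of profiles corresponds to a convex-envelope infimum of trade-off functions, this already gives the identity $f_1\otimes f_2^\downarrow=\cemin(\cdot)$.

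\textbf{Your route.} In your primal formulation the budget $\E_{P_1}[a]\le\alpha$ couples the $y_1$'s, and $f_2^\downarrow$ is a genuine convex envelope. You therefore have to realize it by coin-mixing two members of $\F_2$. That is why you need Carath\'eodory, the Lyapunov-type equal-fraction splitting, and the atomless representative.

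\textbf{What each buys.} Your route gives a direct proof of the pointwise-in-$\alpha$ refinement, which does not follow from pointwise-in-$\gamma$ tightness for a fixed first pair. Take $P_1=Q_1$ a point mass (so $f_1=\mathrm{Id}$) and $\F_2=\{f_{21},f_{22}\}$ with the two crossing transversally at $\alpha_0$. Every rule then produces $f_{21}$ or $f_{22}$, and neither reaches $\cemin(f_{21},f_{22})(\alpha_0)$, even though the profile statement holds at every $\gamma$. So your passage to an atomless pair (or releasing an independent coin along with $y_1$) is not a mere convenience; it is exactly what the $\alpha$-form requires. It is legitimate because the statement fixes only $f_1$, while the adaptive product depends on the representing pair.

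\textbf{Small correction.} The canonical $Q$ with density $-f_1'(1-x)$ on $[0,1]$ has total mass $f_1(0)$. When $f_1(0)<1$, the remaining mass must be placed off the support of $P$, for example uniformly on $(1,2]$ so the pair stays atomless.
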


Suppose there exist \(f_{21},f_{22}\in\F_2\) that intersect at an FPR \(\alpha_0\) (a calibration point, see \citealp{kaissis_beyond_2024}). Their lower envelope \(f_2^\downarrow=\cemin\!\left(f_{21},f_{22}\right)\) then falls below both \(f_{21}\) and \(f_{22}\) on some interval containing \(\alpha_0\). Consequently, \(f_1\otimes f_2^\downarrow\) can be strictly smaller than both \(f_1\otimes f_{21}\) and \(f_1\otimes f_{22}\): the former captures all extent of the true effect achievable under adaptive selection over \(\F_2\), whereas the latter are the per-branch bounds monitored by the natural \(f\)-DP filter.
\begin{corollary}\label{cor:SG_counterexample}
    The \(f\)-DP filter is not valid for subsampled Gaussian mechanisms (and is thus not valid in general): There exist \(f_1,f_{21},f_{22}\) that are (tensor products of) subsampled Gaussian trade-off functions, and a budget \(f_B\), such that \(f_1\otimes f_{21}\geq f_B\) and \(f_1\otimes f_{22}\geq f_B\), but \(\left(f_1\otimes \cemin\!\left(f_{21},f_{22}\right)\right)\!\left(\alpha\right) < f_B\!\left(\alpha\right)\) at some FPR \(\alpha\in\left(0,1\right)\). See Figure \ref{fig:SG_counterexample}.
\end{corollary}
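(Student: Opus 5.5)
The plan is to exhibit an explicit instance and check the inequalities, partly analytically and partly numerically (the paper points to Figure \ref{fig:SG_counterexample}). By the Adaptive Tightness theorem, the true two-step adaptive composition with $\F_2=\{f_{21},f_{22}\}$ has greatest lower bound $f_1\otimes\cemin(f_{21},f_{22})$. So it suffices to find subsampled Gaussian ingredients for which this lower bound dips below a budget $f_B$ that both branch products clear. Invalidity of the natural filter then follows: at the FPR $\alpha$ where the dip occurs, the theorem supplies an adaptive choice $y_1\mapsto f_2^{(y_1)}$ whose adaptive product is arbitrarily close to $f_1\otimes\cemin(f_{21},f_{22})$ at $\alpha$, hence strictly below $f_B(\alpha)$. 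Meanwhile the filter continues on every path, because $f_1\otimes f_{2j}\ge f_B$ for $j=1,2$.

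\textbf{Step 1: a calibration point.} I need $f_{21},f_{22}$ that are not Blackwell ordered. Two subsampled Gaussian curves work well here: $f_{21}=SG^-_{q,\mu}$ with small $q$ and large $\mu$ (flat except near the ends, with a big jump at small $\alpha$), and $f_{22}=SG^-_{q',\mu'}$ with $q'=1$ and moderate $\mu'$, i.e.\ $G_{\mu'}$. One could instead symmetrize or take tensor powers of the first to make it smooth. Using the explicit form $SG^+_{q,\mu}(\alpha)=(1-q)(1-\alpha)+qG_\mu(\alpha)$, I will compare both curves at $\alpha\to0$ and at mid-range $\alpha$. Near $0$ the heavily subsampled curve lies below $G_{\mu'}$, since it has a nonzero "atom" of distinguishability $\approx q$; at $\alpha=1/2$ it lies above for suitable parameters. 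Continuity then gives a crossing $\alpha_0$, and $\cemin(f_{21},f_{22})$ lies strictly below both on a neighbourhood of $\alpha_0$.

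\textbf{Step 2: propagate through $f_1$.} Take $f_1$ a subsampled Gaussian, possibly $f_1=\mathrm{Id}$-like (tiny $\mu$), or simply a single $SG$ step. The aim is to show $f_1\otimes\cemin(f_{21},f_{22})<\min(f_1\otimes f_{21},f_1\otimes f_{22})$ at some $\alpha$. Tensoring with a non-trivial $f_1$ is monotone but not necessarily strict, so the gap must be checked. The cleanest route is to note that $f\otimes f_1\ge f$ relation is degenerate only at trivial $f_1$; I therefore first take $f_1$ close to $1-\alpha$, where $f_1\otimes g\to g$ uniformly. In that case strict gaps of $g$ persist by continuity of the tensor product. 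This continuity holds in the $\Delta$-divergence sense via Theorem \ref{thm:unifying-fconv}.

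\textbf{Step 3: choose the budget.} Set $f_B=\cemin(f_1\otimes f_{21},f_1\otimes f_{22})$, which is a trade-off function and by construction satisfies $f_1\otimes f_{2j}\ge f_B$. It then remains to show $f_1\otimes\cemin(f_{21},f_{22})<f_B$ somewhere. This is the main obstacle, because taking the convex envelope on the outside could also lower $f_B$ and swallow the gap. I would handle it by computing both quantities numerically via PLD/FFT composition for concrete parameters (e.g.\ $q=0.01$, $\mu=6$, $\mu'=0.8$) and plotting them, as in Figure \ref{fig:SG_counterexample}. If that is unsatisfying, an alternative is to take $f_B$ to be exactly one of the branch products when they happen to be ordered after composition. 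The analytic fallback is as follows. If $\cemin$ commuted with $f_1\otimes\cdot$, then $f_1\otimes$ would be affine along supporting lines; the fact that tensoring by a non-trivial Gaussian-type $f_1$ smooths kinks strictly would break equality, by the Neyman–Pearson representation as a mixture of likelihood-ratio thresholds.
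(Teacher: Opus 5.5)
Your reduction (Adaptive Tightness $\Rightarrow$ filter invalidity), your choice $f_B=\cemin(f_1\otimes f_{21},f_1\otimes f_{22})$, and the existence of a crossing pair in Step 1 are all fine. One small slip: near $\alpha=0$ it is $SG^+_{q,\mu}$, not $SG^-_{q,\mu}$, that lies below $G_{\mu'}$. This is harmless, since inversion preserves non-comparability.

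The gap is in the one step that carries the content of the corollary, namely the strict inequality $f_1\otimes\cemin(f_{21},f_{22})<f_B$ at some $\alpha$. Step 2 cannot supply it. Continuity from $f_1=\mathrm{Id}$ only transports a gap measured against the pointwise $\min(f_1\otimes f_{21},f_1\otimes f_{22})$, which is not a trade-off function and is not the budget. Against the actual budget, at $f_1=\mathrm{Id}$ both sides equal $\cemin(f_{21},f_{22})$. So the gap you are perturbing is exactly zero, and continuity yields nothing. (Also, Theorem \ref{thm:unifying-fconv} is an equivalence of convergence modes, not continuity of $\otimes$.) Step 3 then defers the decisive inequality to a computation you have not carried out, at parameters you have not checked. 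The ``kink-smoothing'' fallback is a heuristic, not an argument.

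The missing idea is to pass to privacy profiles, where the obstacle you identify in Step 3 disappears. By Theorem \ref{thm:order-f-H}, the $\cemin$ of trade-off functions corresponds to the pointwise maximum of privacy profiles. Hence $f_B$ has profile $H_B=\max(H_1\otimes H_{21},H_1\otimes H_{22})$ with no envelope to worry about, while $f_1\otimes\cemin(f_{21},f_{22})$ has profile $H_1\otimes\max(H_{21},H_{22})$. By Definition \ref{def:H-product},
\[
\bigl(H_1\otimes\max(H_{21},H_{22})\bigr)(\gamma)=\int\max(H_{21},H_{22})\Bigl(\gamma\,\tfrac{q_1(y_1)}{p_1(y_1)}\Bigr)p_1(y_1)\,\d y_1 .
\]
This strictly exceeds both $(H_1\otimes H_{2j})(\gamma)$ as soon as $\gamma\, q_1/p_1$ falls, with positive $P_1$-probability, into both $\{H_{21}>H_{22}\}$ and $\{H_{21}<H_{22}\}$.

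For a subsampled Gaussian first step, $q_1/p_1$ is continuous and strictly monotone in $y_1$ with range $(0,1/(1-q))$. So at a sign-changing crossing $\gamma_0$ of $H_{21},H_{22}$, the quantity $\gamma_0 q_1/p_1$ sweeps $(0,\gamma_0/(1-q))\ni\gamma_0$ and hits both sides. This is the mechanism the paper uses to explain its instance: $q=0.5$, $f_1=SG^-_{0.5,1.3}$, $f_{21}=SG^-_{0.5,0.1}\otimes SG^-_{0.5,10}$, $f_{22}=SG^-_{0.5,2}\otimes SG^-_{0.5,2}$, with a crossing at $\gamma_0\approx0.971$.

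It also shows why the perturbation-from-identity intuition points the wrong way. The strictness comes from the first step's likelihood ratio spreading across the calibration point, which is exactly what degenerates as $f_1\to\mathrm{Id}$.
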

\begin{figure}
    \centering
    \includegraphics[width=0.835\linewidth]{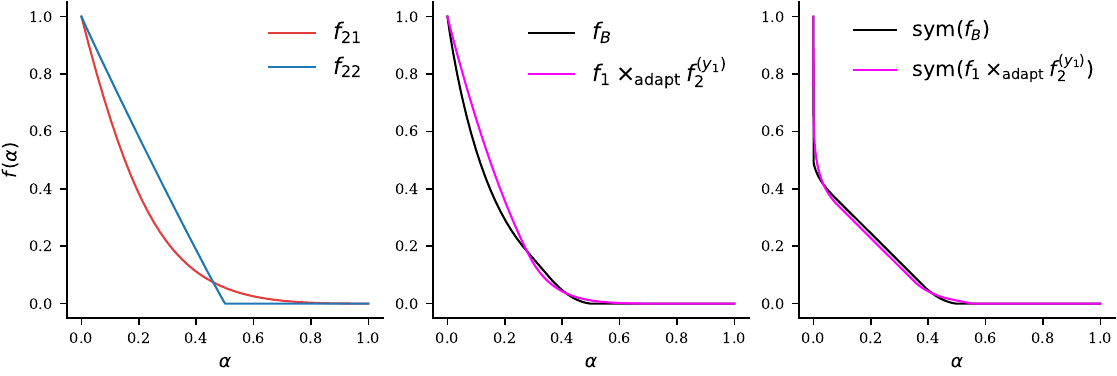}
    \caption{Counterexample showing that the natural \(f\)-DP filter can fail in fully adaptive composition of subsampled Gaussian mechanisms (Corollary~\ref{cor:SG_counterexample}). We fix \(q=0.5\), set \(f_1=SG^-_{q,1.3}\), \(f_{21}=SG^-_{q,0.1}\otimes SG^-_{q,10}\), \(f_{22}=SG^-_{q,2}\otimes SG^-_{q,2}\), and define the budget \(f_B=\cemin\!\left(f_1\otimes f_{21},\,f_1\otimes f_{22}\right)\). \emph{Left}: the two candidate future trade-off functions cross (not Blackwell-ordered). \emph{Middle}: each branch passes the budget check, but the realized adaptive composition still falls below \(f_B\), violating validity. \emph{Right}: the violation persists after symmetrization into appropriate \(f\)-DP guarantee.}
    \label{fig:SG_counterexample}
\end{figure}
Such crossings are generic in common mechanism families, including Laplace mechanisms. The absence of crossings for (tensor products of) Gaussian trade-off functions is a notable special case. Additional details and proofs for the material up to this point are deferred to Appendix~\ref{appx:counterexample}.

\minisection{Blackwell Chains and Filter Characterization} By contrast, this issue does not occur for RDP or GDP filters, since their guarantees are scalar-valued and therefore always comparable. This motivates imposing structural assumptions under which the natural \(f\)-DP filter becomes valid.

\begin{definition}[(Blackwell Chain)]\label{def:chain}
    A family \(\F\) of trade-off functions is a Blackwell chain if \(\F\) is totally ordered with respect to the Blackwell order: \(f_1\geq f_2\) or \(f_1\leq f_2\) for any \(f_1,f_2\in\F\).
\end{definition}
For any Blackwell chain \(\F\), the trade-off functions \(f\in\F\) can approach the greatest lower bound \(f^\downarrow=\ceinf_{f\in\F} f\) arbitrarily close FPR-uniformly. Gaussian trade-off functions form such a chain, with \(G_\mu\) decreasing in \(\mu\) and their lower bound approximated uniformly. Thus, the Blackwell chain assumption removes the gap between adaptive products and standard products (as in Corollary \ref{cor:SG_counterexample}) and ensures validity of the \(f\)-DP filter.

\begin{restatable}{thm}{filtervalidity}\label{thm:validity}
    Let \(\F\) be the family of tight \(f\)-DP guarantees of mechanisms available for adaptive selection at each step. The natural \(f\)-DP filter is valid in fully adaptive composition of up to \(T\geq2\) steps if \(\F^{\left(\_\otimes\right)\left(T-2\right)}\) is a Blackwell chain.
\end{restatable}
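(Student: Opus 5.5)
We argue by backward induction on the horizon: peel off the first mechanism, treat everything after it as one adaptively chosen "second mechanism", and use the Adaptive Tightness theorem. The whole argument rests on one fact about Blackwell chains.

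\emph{Key fact.} If $\G$ is a Blackwell chain and $f_1\otimes g\geq f_B$ for every $g\in\G$, then $f_1\otimes\ceinf_{g\in\G}g\geq f_B$. This holds because, in a chain, $g^\downarrow=\ceinf_{g\in\G}g$ is an FPR-uniform limit of an everywhere decreasing sequence $g_n\in\G$. The tensor product is monotone and continuous under uniform convergence of trade-off functions (Theorem~\ref{thm:unifying-fconv}), so
\[
f_1\otimes g^\downarrow=\lim_n f_1\otimes g_n\geq f_B .
\]

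\emph{Base case $T=2$.} Here $\F^{(\_\otimes)(0)}=\F$ is a chain. Fix $f_1$ as the tight guarantee of the first mechanism, and let $\G$ be the set of $f_2\in\F$ that the filter permits after $f_1$, i.e.\ those with $f_1\otimes f_2\geq f_B$. If the filter halts on some branches, those branches contribute the trivial trade-off $1-\alpha$. The permitted set together with this trivial element is still a chain: the permitted set is a down-closed segment of $\F$ in the Blackwell order, because $f_2\geq f_2'$ with $f_2'$ permitted implies $f_2$ is permitted, and $1-\alpha$ dominates every trade-off function. By the Adaptive Tightness theorem, the adaptive product is bounded below by $f_1\otimes\ceinf\G$, which is at least $f_B$ by the key fact. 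We apply the same argument to the reversed pair $(S^-,S)$, which uses $f^{-1}$, to get the two-sided $f$-DP guarantee. One point needs checking here: tight guarantees in $\F$ presumably already include both directions, and I would handle this via the symmetrization $\sym$ or by assuming the chain property for inverses.

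\emph{Inductive step.} Condition on $y_1$. The remaining $T-1$ steps form a fully adaptive composition whose filter budget is effectively "$f_B$ given $f_1$". The tail's tight guarantee is some element built from tensor products of members of $\F$ and lower limits of such products, which is exactly where $\bar\F$ and $\F^{\_\otimes}$ enter. I would prove the stronger statement: the composition of $k$ steps, stopped by the filter, is $h$-DP for some $h\in\F^{(\_\otimes)(k-1)}$ (or its lower closure) with $h\geq f_B$. This lets the two-step argument apply at the first step. The first step uses $f_1\in\F$, and the tail guarantees range over a family contained in $\F^{(\_\otimes)(T-2)}$, which is a chain by hypothesis. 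The tail family is nested: applying $\_\otimes$ once more to a chain keeps it in the hypothesized family. The key fact then gives validity.

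\emph{Main obstacle.} The hardest step is the bookkeeping. I must show that the set of possible tight guarantees of the stopped $(T-1)$-step tail really lies in $\F^{(\_\otimes)(T-2)}$, including the halted branches and the infima that create limits, which is the reason for $\bar\F$. I must also show that the filter's comparison $f_1\otimes\cdots\otimes f_t\geq f_B$ translates into a per-branch condition $f_1\otimes h^{(y_1)}\geq f_B$ on the tail guarantee. I would first try to show that on a chain, the filter's stopping rule selects the "least private admissible" tail, which is itself a tensor product of the realized functions, so the per-branch inequality holds by definition of the filter.
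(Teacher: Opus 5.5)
The gap is in the inductive step, which you defer to the ``main obstacle''. First, the tail's \emph{tight} guarantee is the wrong object to track. It is an adaptive product, and in general it is not of the form $f\otimes g$ with $f\in\F$, $g\in\bar\F$, so it need not lie in $\F^{(\_\otimes)(T-2)}$. What does lie there is a collapsed bound such as $f_2^{(y_1)}\otimes f_3^{(y_1,\downarrow)}$, with $f_3^{(y_1,\downarrow)}=\ceinf_{y_2}f_3^{(y_1,y_2)}\in\bar\F$. Second, your inductive claim ``$h\geq f_B$'' does not give the per-branch condition you need. Applied to the tail, it only yields $f_1\otimes h^{(y_1)}\geq f_1\otimes f_B$, which is weaker than $f_B$. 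Your proposed repair also fails: you suggest the stopping rule selects a least-private admissible tail that is a tensor product of realized functions, so that $f_1\otimes h^{(y_1)}\geq f_B$ holds by definition. But the collapsed bound is a $\ceinf$ over continuations and is generally attained along no single path. The per-path filter check therefore says nothing about it directly, and proving $f_1\otimes h^{(y_1)}\geq f_B$ requires using the chain hypothesis again at every inner level.

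The missing idea is to carry the accumulated prefix as the fixed first factor of the two-step lemma and to collapse from the last step backwards. This is the paper's argument for $T=3$:
\begin{itemize}
\item Fix $y_1$ and apply Theorem~\ref{thm:chain->nonfully-2step-valid} with first factor $f_1\otimes f_2^{(y_1)}$ to the chain $\{f_3^{(y_1,y_2)}\}_{y_2}\subset\F$. This gives $f_1\otimes f_2^{(y_1)}\otimes f_3^{(y_1,\downarrow)}\geq f_B$.
\item Let $y_1$ vary and apply the lemma again with first factor $f_1$ to the chain $\{f_2^{(y_1)}\otimes f_3^{(y_1,\downarrow)}\}_{y_1}\subset\F^{\_\otimes}$.
\item Conclude with Theorem~\ref{thm:composition_fDP}.
\end{itemize}
As an induction, the correct statement is: for every fixed $g\geq f_B$, a $k$-step composition stopped by the rule $g\otimes f_1\otimes\cdots\otimes f_t\geq f_B$ is $h$-DP for some $h\in\F^{(\_\otimes)(k-1)}$ (or $h=\mathrm{Id}$) with $g\otimes h\geq f_B$.

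Your ``residual budget'' intuition can also be made rigorous, but only through the chain. Set $b'=\ceinf\{X\in\F^{(\_\otimes)(T-2)}:f_1\otimes X\geq f_B\}$. The key fact gives $f_1\otimes b'\geq f_B$, hence $f_1\otimes X\geq f_B\iff X\geq b'$ for $X$ on the chain, which is where the tail's partial products live. The tail is then a natural-filter composition at budget $b'$, and the induction hypothesis applies. Without the chain, no such $b'$ exists in general, as Corollary~\ref{cor:SG_counterexample} shows.

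The rest of your plan matches the paper: the key fact is Theorem~\ref{thm:chain->nonfully-2step-valid}, halted branches are padded with $\mathrm{Id}$, and the nesting $\F\subset\F^{\_\otimes}\subset\cdots$ is used. Two small points remain. Theorem~\ref{thm:unifying-fconv} does not assert continuity of $\otimes$; the paper instead uses monotone convergence in the integral form $\int H_2(\gamma q_1/p_1)\,p_1\,\d y_1$. Your two-sidedness worry is moot, since $f$-DP here bounds both directions by the same $f$.
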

The form \(\F^{\left(\_\otimes\right)\left(T-2\right)}\) arises from reducing \(T\)-step adaptive composition to a two-step setting, where the first and second trade-off functions are drawn from \(\F\) and \(\F^{\left(\_\otimes\right)\left(T-2\right)}\), respectively. The proof (\mbox{Appendix \ref{appx:fDP-filter}}) shows that the Blackwell chain assumption allows one to conclude that if every \(f_2\in\F^{\left(\_\otimes\right)\left(T-2\right)}\) satisfies \(f_1\otimes f_2\geq f_B\), then \(f_2^\downarrow\) does as well. This deduction is a generalization of the RDP filter proof in \cite{feldman_individual_2021} and the GDP filter proof in \cite{koskela_individual_2023}, extended from scalar to function objects. As Gaussian trade-off functions are closed under tensor products, Theorem \ref{thm:validity} straightforwardly recovers the valid GDP filter as a special case.

The notion of lower closure \(\bar{\F}\) within \(\F^{\_\otimes}\) used above relies on convergence of trade-off functions, for which we adopt \(\Delta\)-divergence \citep{kaissis_beyond_2024} as the underlying metric. As an independent technical contribution, we prove that convergence in \(\Delta\)-divergence is equivalent to pointwise convergence on \((0,1]\) and to uniform convergence on compact subsets within \((0,1]\) (Theorem \ref{thm:unifying-fconv}, Appendix \ref{appx:convergences}), providing a flexible toolkit for theoretical analyses of trade-off function approximations.

We conclude this section by sharpening Corollary~\ref{cor:SG_counterexample} into a general converse statement.
\begin{restatable}{thm}{nochainisinvalid}\label{thm:nochain=>invalid}
    Let \(\F\) be the family of tight \(f\)-DP guarantees of mechanisms available for adaptive selection at each step. In fully adaptive composition of up to \(T\) steps, if:
    \begin{compactenum}
        \item Some mechanism at step 1 is not \(\varepsilon\)-DP for any \(\varepsilon\), and
        \item \(\F^{\left(\_\otimes\right)\left(T-2\right)}\) fails to be a Blackwell chain,
    \end{compactenum} then the \(f\)-DP filter is invalid.
\end{restatable}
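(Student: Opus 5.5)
The plan is to turn the failure of the Blackwell chain property into an explicit counterexample, following the template of Corollary~\ref{cor:SG_counterexample}. We reduce to a two-step composition. The first mechanism has tight guarantee \(f_1\in\F\), chosen to be the step-1 mechanism from hypothesis 1 that is not \(\varepsilon\)-DP for any \(\varepsilon\). The second adaptive choice ranges over \(\F^{(\_\otimes)(T-2)}\), which encodes the remaining \(T-1\) steps. By hypothesis 2 this family contains two elements \(g_1,g_2\) that are not Blackwell ordered. Hence there are FPRs \(\alpha_1,\alpha_2\) with \(g_1(\alpha_1)<g_2(\alpha_1)\) and \(g_2(\alpha_2)<g_1(\alpha_2)\). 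By continuity the two curves cross at some calibration point \(\alpha_0\), and \(g^\downarrow:=\cemin(g_1,g_2)\) lies strictly below both \(g_1\) and \(g_2\) on an open interval \(I\).

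We take the budget to be \(f_B=\cemin(f_1\otimes g_1,f_1\otimes g_2)\), as in Figure~\ref{fig:SG_counterexample}. Both branches then pass the natural filter's check, since \(f_1\otimes g_i\ge f_B\). Some care is needed because elements of \(\F^{\_\otimes}\) involve lower-closure limits. We would either take \(g_1,g_2\) realized by actual mechanism sequences, or approximate them using the \(\Delta\)-convergence equivalences of Theorem~\ref{thm:unifying-fconv} and keep the strict inequality with a margin.

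By the Adaptive Tightness theorem, for each \(\alpha\) there is an adaptive choice \(y_1\mapsto g^{(y_1)}\in\{g_1,g_2\}\) whose adaptive product comes arbitrarily close at \(\alpha\) to \(f_1\otimes g^\downarrow\). Invalidity then reduces to the strict inequality
\begin{align*}
(f_1\otimes g^\downarrow)(\alpha) < \cemin(f_1\otimes g_1, f_1\otimes g_2)(\alpha)\quad\text{for some }\alpha\in(0,1).
\end{align*}
If it holds, we symmetrize as in the right panel of the figure. Choosing the adaptive rule to act symmetrically under \((P,Q)\) and \((Q,P)\), or applying the argument to \(\sym\), gives a violation of the two-sided \(f\)-DP definition.

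The main obstacle is proving this strict inequality in general, and this is where hypothesis 1 enters. If \(f_1\) were \(\varepsilon\)-DP, for instance identity-like, tensoring could smooth the gap away. Conversely, an \(f_1\) with \(f_1(0)=1\) and no linear bound near \(0\) might push all of the mass to the endpoints.

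I would work with PLRVs and Neyman--Pearson. Write \(f_1\otimes g\) as the trade-off function of the product experiment. Its value at \(\alpha\) is the infimum over splittings \(\alpha=\int \alpha(y_1)\,dP_1\) of \(\int g(\alpha(y_1))\,dQ_1\), and the adaptive product allows a different \(g\) per \(y_1\). The fact that \(f_1\) is not \(\varepsilon\)-DP for any \(\varepsilon\) means the likelihood ratio of the first experiment is unbounded, or has atoms where one of \(P_1,Q_1\) vanishes. It therefore has a continuum, or at least two distinct values, of LR levels. The optimal non-adaptive splittings can then place \(\alpha(y_1)\) at the two different points \(\alpha_1,\alpha_2\) on either side of \(\alpha_0\), where \(g_1\) and \(g_2\) respectively are smaller than \(g^\downarrow\)'s chord supporting points. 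The adaptive rule chooses \(g_1\) on one region and \(g_2\) on the other, strictly beating any single \(g_i\), and even beating their convex-envelope combination \(f_B\), because \(f_B\) mixes only at the outer level while the adaptive rule mixes inside.

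Concretely, I would fix the tangent slope \(-\lambda\) of the supporting line of \(g^\downarrow\) across \(I\). Then I would pick a level set of the first-step likelihood ratio so that the conditional FPRs equal the two tangency points. Finally, I would verify the strict gap by comparing dual (Legendre) representations: \(\cemin\) corresponds to a maximum of conjugates, while the adaptive product corresponds to a pointwise-in-\(y_1\) maximum inside the integral, and these differ strictly exactly when the LR of step 1 spans the slopes needed. Hypothesis 1 guarantees this spanning.
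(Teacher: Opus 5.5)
\textbf{Overall.} Your outline follows the paper's proof. Both reduce to two steps with \(f_1\) the non-\(\varepsilon\)-DP mechanism, and both replace two non-ordered elements of \(\F^{(\_\otimes)(T-2)}\) by genuine \((T-1)\)-fold products that still cross on a compact range of \(\gamma\). Both take the budget \(\cemin(f_1\otimes g_1,f_1\otimes g_2)\), i.e.\ the pointwise maximum of the two product profiles. With \(r=q_1/p_1\), both then compare \(\int\max(H_{g_1},H_{g_2})(\gamma_0 r)\,\mathrm{d}P_1\) against \(\max_i\int H_{g_i}(\gamma_0 r)\,\mathrm{d}P_1\). This inequality is strict as soon as \(\gamma_0 r\) puts positive \(P_1\)-mass in both open sets \(U_1=\{H_{g_1}>H_{g_2}\}\) and \(U_2=\{H_{g_1}<H_{g_2}\}\).

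The Neyman--Pearson/tangency detour is not needed, and as phrased it aims at the wrong thing. At the chord slope of \(\cemin(g_1,g_2)\) the two branches tie. The gain comes only from likelihood-ratio levels on opposite sides of the crossing.

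\textbf{The gap.} The gap is your closing claim that ``Hypothesis 1 guarantees this spanning''. This is the step that carries the argument. The paper's proof makes the same one-line inference (the likelihood ratio is unbounded above or not bounded away from \(0\), therefore a suitable \(\gamma_0\) exists), so following it will not fill the hole.

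\textbf{Counterexample.} Non-\(\varepsilon\)-DP alone does not give the spanning. Take \(P_1(\{a\})=\delta\), \(P_1(\{c\})=1-\delta\), \(Q_1(\{a'\})=\delta\), \(Q_1(\{c\})=1-\delta\). Its tight guarantee is \(f_{0,\delta}\), which is not \(\varepsilon\)-DP for any \(\varepsilon\). On the support of \(P_1\) we have \(r\in\{0,1\}\), and every privacy profile equals \(1\) at \(0\). Hence \((H_{f_1}\otimes H)(\gamma)=\delta+(1-\delta)H(\gamma)\) for every \(H\), and so \(f_1\otimes\cemin(g_1,g_2)=\cemin(f_1\otimes g_1,f_1\otimes g_2)\) for all \(g_1,g_2\). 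Your strict inequality therefore fails whichever non-ordered pair you pick. An unbounded likelihood ratio supported on a lattice such as \(\{2^k\}_{k\in\mathbb{Z}}\) can fail the same way when the crossing regions are narrow.

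\textbf{Fix.} To close the argument you need a richness condition on the first step. For example, require that the support of the law of \(r\) under \(P_1\) contains an interval \((0,b)\) or \((a,\infty)\); this holds for Gaussian and subsampled Gaussian first steps. Then fix \(\gamma_1\in U_1\) and \(\gamma_2\in U_2\), and choose \(\gamma_0\) so that \(\gamma_1/\gamma_0\) and \(\gamma_2/\gamma_0\) lie in that interval. Openness of \(U_1,U_2\), which follows from continuity of profiles on \((0,\infty)\), gives the two positive masses.

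\textbf{Symmetrization.} If you work with the symmetric tight guarantees throughout, as the paper does, the budget is itself symmetric. A one-directional violation at \(\gamma_0\) then already breaks \(f_B\)-DP, so your separate symmetrization step is unnecessary.
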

Theorem \ref{thm:validity} and Theorem \ref{thm:nochain=>invalid} together yield a sharp characterisation: 
    \textit{Beyond \(\varepsilon\)-DP mechanisms, the natural \(f\)-DP filter is valid  if and only if \(\F^{\left(\_\otimes\right)\left(T-2\right)}\) is a Blackwell chain.}

\section{Fully Adaptive $f$-DP Central Limit Theorem}

\minisection{Fully Adaptive Central Limit Theorem} Having established when the natural exact $f$-DP filter is and is not valid, we move on to proving a central limit theorem for fully adaptive privacy loss processes. In contrast to the previous section, we analyze the composition structure through the stochastic process of PLRVs. At each step \(t\), the output \(y_t\) extends the observed history \(y_{1:t-1}\), which in turn governs the choice of the next pair of distrbutions \(\left(P_{t+1},Q_{t+1}\right)\) and hence the next PLRV \(L_{t+1}\). If a filter halts the composition early, all subsequent PLRVs are set to zero. The total privacy loss is thus the terminal value of a stochastic process — a running sum of sequentially realized log-likelihood ratios, each drawn from a distribution that may depend on all preceding outputs. Despite this adaptive feedback, the cumulative sum remains approximately Gaussian under some control and regularity conditions.
\begin{restatable}{thm}{adaptiveCLT}\label{thm:fully-adaptive-CLT}
    Let \(P_1,\ldots,P_T\) and \(Q_1,\ldots,Q_T\) be adaptive sequences of distributions, where for each \(t\), the choices of \(P_t\) and \(Q_t\) may depend on the past realizations \(y_1,\ldots,y_{t-1}\) drawn under \(P_1,\ldots,P_{t-1}\) and \(Q_1,\ldots,Q_{t-1}\), respectively. Define \(L_t=\ln\frac{P_t}{Q_t}\) as the PLRV at step \(t\), and let \(P=\prod_{t=1}^T P_t\), \(Q=\prod_{t=1}^T Q_t\), and \(y=\left(y_1,\ldots,y_T\right)\). We assume that the following conditions hold with probability 1:

        \begin{compactenum}
            \item \(\left|\sum_{t=1}^T \E_{y\sim P}\!\left[L_t\!\left(y_t\right)\!\mid\!y_{1:t-1}\right] - m_1\right| \leq \eta_1\),
            \item \(\left|\sum_{t=1}^T \E_{y\sim Q}\!\left[L_t\!\left(y_t\right)\!\mid\!y_{1:t-1}\right] + m_2\right| \leq \eta_2\),
            \item \(\left|\sum_{t=1}^T \Var_{y\sim P}\!\left[L_t\!\left(y_t\right)\!\mid\!y_{1:t-1}\right] - v\right| \leq \kappa\),
            \item \(\left|\sum_{t=1}^T \Var_{y\sim Q}\!\left[L_t\!\left(y_t\right)\!\mid\!y_{1:t-1}\right] - v\right| \leq \kappa\),
            \item \(\E_{y\sim P}\!\left[\left|L_t-\E_{y\sim P}\!\left[L_t\!\left(y_t\right)\!\mid\!y_{1:t-1}\right]\right|^3\,\middle|\,y_{1:t-1}\right] \leq \rho\cdot\Var_{y\sim P}\!\left[L_t\!\left(y_t\right)\!\mid\!y_{1:t-1}\right]\) for all \(t\),
            \item \(\E_{y\sim Q}\!\left[\left|L_t-\E_{y\sim Q}\!\left[L_t\!\left(y_t\right)\!\mid\!y_{1:t-1}\right]\right|^3\,\middle|\,y_{1:t-1}\right] \leq \rho\cdot\Var_{y\sim Q}\!\left[L_t\!\left(y_t\right)\!\mid\!y_{1:t-1}\right]\) for all \(t\).
        \end{compactenum}

        Here, we require that \(4\rho^2\leq v\) and \(4\kappa\leq v\). Under these assumptions, the trade-off function \(T\!\left[P,Q\right]\) satisfies the following inequalities for all \(\alpha\):
        \begin{align*}
            G_{\mu+\phi}\!\left(\alpha+\Delta\right) - \Delta \leq T\!\left[P,Q\right]\!\left(\alpha\right) \leq G_{\mu-\phi}\!\left(\alpha-\Delta\right) + \Delta,
        \end{align*}
        where \(\mu=\frac{m_1+m_2}{\sqrt{v}}\), \(\phi=\frac{\eta_1+\eta_2}{\sqrt{v}}\), and \(\Delta=C\!\left(\frac{\rho}{\sqrt{v}}\!\left|\ln\!\frac{\rho}{\sqrt{v}}\right|+\sqrt{\frac{\kappa}{v}}\right)\) for some universal constant \(C\).
\end{restatable}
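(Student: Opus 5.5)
The plan is to reduce the trade-off function to the distribution functions of the total privacy loss $L=\sum_{t=1}^T L_t$ under $P$ and under $Q$. Then each of these laws will be shown to be close in Kolmogorov distance to a Gaussian by a martingale Berry--Esseen bound.

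\textbf{Step 1: reduction to the likelihood ratio tests.} Since $\ln\frac{P}{Q}(y)=\sum_t L_t(y_t)$ holds with the adaptive choices inside each factor, the Neyman--Pearson lemma says $T[P,Q]$ is traced out by the tests that threshold $L$. Concretely, rejecting $H_0$ when $L<\tau$ (with randomization at ties) gives FPR $\alpha(\tau)=P[L<\tau]$ and FNR $\beta(\tau)=Q[L\ge\tau]$.

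\textbf{Step 2: centered martingale decompositions.}
\begin{itemize}
\item Under $P$, write $L=A^P+M^P$. Here $A^P=\sum_t\E_P[L_t\mid y_{1:t-1}]$ is predictable, and $M^P$ is a sum of martingale differences $\xi_t$.
\item By conditions 1 and 3, $|A^P-m_1|\le\eta_1$ almost surely, and the conditional variances sum to $v\pm\kappa$.
\item The analogous statements hold under $Q$, with mean $-m_2\pm\eta_2$, by conditions 2 and 4.
\item Conditions 5 and 6 give $\sum_t\E[|\xi_t|^3\mid\cdot]\le\rho\sum_t\Var_t\le\rho(v+\kappa)$.
\end{itemize}

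\textbf{Step 3: martingale Berry--Esseen.} Apply the bound of \citet{FAN20191028} to $M^P/\sqrt v$. Its hypotheses are a conditional Lyapunov-type bound $\E[|\xi_t|^3\mid\mathcal F_{t-1}]\le\epsilon\,\E[\xi_t^2\mid\mathcal F_{t-1}]$ with $\epsilon=\rho/\sqrt v$, and $\|\sum_t\E[\xi_t^2\mid\cdot]/v-1\|_\infty\le\delta^2$ with $\delta^2=\kappa/v$. The conclusion is
\[
\sup_x\left|P[M^P\le x\sqrt v]-\Phi(x)\right|\le C\left(\epsilon|\ln\epsilon|+\delta\right).
\]
This matches the form of $\Delta$. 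The conditions $4\rho^2\le v$ and $4\kappa\le v$ are used to place $\epsilon,\delta$ in the range where that theorem applies, for instance $\epsilon\le 1/2$.

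The predictable drift $A^P$ is random but confined to $[m_1-\eta_1,m_1+\eta_1]$. Hence $P[L<\tau]$ is sandwiched between $P[M^P<\tau-m_1\mp\eta_1]$. Each of these is within $\Delta$ of $\Phi((\tau-m_1\pm\eta_1)/\sqrt v)$, and similarly under $Q$.

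\textbf{Step 4: converting to $G_{\mu\pm\phi}$.} Substitute $\alpha\approx\Phi(s)$ with $s=(\tau-m_1)/\sqrt v$. Then $\beta\approx1-\Phi(s+\mu)$ up to a shift of size at most $\phi$ in the argument, in the direction that makes the bound worse or better. This yields
\[
G_{\mu+\phi}(\alpha+\Delta)-\Delta\le T[P,Q](\alpha)\le G_{\mu-\phi}(\alpha-\Delta)+\Delta.
\]
The argument uses that $G_\mu$ is monotone in $\mu$ and that $G_\mu(\alpha)=\Phi(\Phi^{-1}(1-\alpha)-\mu)$.

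The sandwich must be derived for all $\alpha$, not only those attained by thresholds. This follows because $T[P,Q]$ is the convex interpolation of the threshold points, and the bounds $G(\cdot\pm\Delta)\pm\Delta$ are convex in the right way. The lower bound uses convexity of $G_{\mu+\phi}(\cdot+\Delta)-\Delta$. The upper bound only needs to hold at the realized points plus monotonicity.

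\textbf{Main obstacle.} I expect Step 3 to be the crux: matching exactly the hypotheses of the Fan martingale Berry--Esseen theorem. In particular, that theorem usually normalizes by the actual quadratic variation. The almost-sure bound $|\langle M\rangle-v|\le\kappa$ must be turned into the $\sqrt{\kappa/v}$ term. This may require rescaling by $\sqrt v$ and absorbing the mismatch $\sqrt{(v\pm\kappa)/v}$ into $C$, using $\kappa\le v/4$.

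Stopped steps are harmless: $L_t=0$ has zero variance and zero third moment, so conditions 5 and 6 hold trivially for them.
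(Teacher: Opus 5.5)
Your proposal is correct and follows the paper's proof essentially step for step. It applies Fan's martingale Berry--Esseen bound to the $\sqrt v$-normalized centered increments under $P$ and under $Q$, where $4\rho^2\le v$ and $4\kappa\le v$ give $\rho/\sqrt v\le\tfrac12$ and $\kappa/v\le\tfrac14$. It then uses the almost-sure drift sandwich from Conditions 1--2 and inverts the resulting cdf bounds along Neyman--Pearson threshold tests to obtain $G_{\mu\pm\phi}(\alpha\pm\Delta)\mp\Delta$.

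Your handling of atoms of $L$ goes beyond the paper, which simply assumes a threshold with FPR exactly $\alpha$ exists. To make the upper bound airtight, bound each randomized test directly: pair $\beta\le Q[L\ge\tau]$ with $\alpha\le P[L\le\tau]$. Both are controlled by the Kolmogorov bound because $\Phi$ is continuous. Endpoint bounds plus a chord argument alone do not stay below the convex function $G_{\mu-\phi}(\cdot-\Delta)+\Delta$.
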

Our fully adaptive CLT generalizes the non-adaptive CLT of \citet{dong_gaussian_2022}. Instead of relying on the classical Berry--Esseen theorem, the proof's main idea (detailed in Appendix \ref{appx:adaptCLT-approxGDPfilter}) applies the martingale Berry--Esseen theorem of \citet{FAN20191028} to the martingale differences
\begin{align*}
    \frac{L_t\!\left(y_t\right) - \E_{y\sim P}\!\left[L_t\!\left(y_t\right)\!\mid\!y_{1:t-1}\right]}{\sqrt{v}}, y\sim P 
    \quad\textup{and}\quad 
    \frac{L_t\!\left(y_t\right) - \E_{y\sim Q}\!\left[L_t\!\left(y_t\right)\!\mid\!y_{1:t-1}\right]}{\sqrt{v}}, y\sim Q.
\end{align*}
Conditions 5--6 control higher-order moments relative to variance, ensuring no small number of steps dominates privacy loss and allowing a Gaussian approximation with error \(\Delta\) of such form---where the constant \(C\), inherited from \citet{FAN20191028}, is not tracked explicitly as is standard in this literature. Conditions 1--4 ensure path-realized sums of conditional means and variances remain approximately constant, reflecting moment computation from \citet{dong_gaussian_2022}.

More broadly, Conditions 1--4 suggest a general methodology for constructing approximate GDP filters: any mechanism family for which the cumulative conditional means and variances of PLRVs can be approximated analytically admits a closed-form approximate filter via Theorem \ref{thm:fully-adaptive-CLT}. The resulting filter tracks these moment approximations in place of the full trade-off function, with the CLT error \(\Delta\) quantifying the approximation quality.

\minisection{Demonstration: Approximate GDP Filter for Subsampled Gaussian Mechanisms}
We derive closed-form PLRV moment approximations for subsampled Gaussian mechanisms in the asymptotic regimes \(q\ll1\) and \(q\approx1\). For an arbitrary remove pair \(S,S^-\), let \(P=\M\!\left(S\right)\) and \(Q=\M\!\left(S^-\right)\) denote the mechanism outputs, with \(\sigma\) the Gaussian noise and \(\mu\) the norm of the removed datapoint's contribution. The leading-order Taylor expansions give:
{\begin{compactenum}\setlength{\leftskip}{-1em}
    \item For \(q\ll1\), \(\E_{P}\!\left[\ln\frac{P}{Q}\right] \approx -\E_{Q}\!\left[\ln\frac{P}{Q}\right] \approx \frac{1}{2}\Var_{P}\!\left[\ln\frac{P}{Q}\right] \approx \frac{1}{2} \Var_{Q}\!\left[\ln\frac{P}{Q}\right] \approx \frac{1}{2}q^2\!\left(\e^{\frac{\mu^2}{\sigma^2}}-1\right)\).
    \item For \(q\approx1\), \(\E_{P}\!\left[\ln\frac{P}{Q}\right] \approx -\E_{Q}\!\left[\ln\frac{P}{Q}\right] \approx \frac{1}{2}\Var_{P}\!\left[\ln\frac{P}{Q}\right] \approx \frac{1}{2} \Var_{Q}\!\left[\ln\frac{P}{Q}\right] \approx \frac{1}{2}q^2\frac{\mu^2}{\sigma^2}\).
\end{compactenum}}
Crucially, these terms depend on the remove pair only through \(\mu\), making them pair-invariant and enabling a path-wise filter that certifies privacy uniformly over all neighbouring pairs. We denote these per-step approximations \(\operatorname{Budg}(\tilde{q},q,\sigma,\mu)\), with \(\tilde{q}\in\{0,1\}\) indicating the regime. Approximation errors are numerically negligible for \(q<0.2\) and \(q>0.8\); proofs and error bounds are in Appendix~\ref{appx:approximations}.

Algorithm \ref{alg:approx-GDP-filter} instantiates one approximate GDP filter, allowing \(q_t\in(0,\bar{q}]\) with \(\bar{q}\ll1\) (regime \(\tilde{q}=0\)) or \(q_t\in[\bar{q},1]\) with \(\bar{q}\approx1\) (regime \(\tilde{q}=1\)). The function \(\operatorname{invBudg}\) is the algebraic inverse of \(\operatorname{Budg}\) in \(\mu\), returning the clipping level that exactly uses the remaining budget. Since the PLRV mean equals the Rényi divergence of order one, the filter can be interpreted as an approximate order-one RDP-style filter, while providing a final guarantee of approximate GDP.

The following scaling conditions ensure the total privacy spend remains non-degenerate as \(T\) grows: \(T\bar{q}^2=\Theta(1)\) as \(\bar{q}\to0\) for regime \(\tilde{q}=0\), and \(T\bar{\sigma}^{-2}=\Theta(1)\) with \(\hat{q}:=1-\bar{q}=O(\bar{\sigma}^{-2})\) as \(\bar{\sigma}\to\infty\) for regime \(\tilde{q}=1\).
\begin{restatable}{thm}{validapproxGDPfilter}\label{thm:validity-approx-GDP-filter}
Under the above scaling assumptions, the composition of subsampled Gaussian mechanisms produced by Algorithm \ref{alg:approx-GDP-filter} satisfies:
{\begin{compactenum}\setlength{\leftskip}{-1em}
    \item Near \(\tilde q=0\): \(\Delta\)-approx.~\(\bigl(\sqrt{2B}+O(\bar q)\bigr)\)-GDP, with an error \(\Delta=O(\sqrt{\bar q})\) as \(\bar q\to 0\).
    \item Near \(\tilde q=1\): \(\Delta\)-approx.~\(\bigl(\sqrt{2B}+O(\bar\sigma^{-2})\bigr)\)-GDP, with an error \(\Delta=O\!\left(\frac{\ln\bar\sigma}{\bar\sigma}\right)\) as \(\bar\sigma\to\infty\).
\end{compactenum}}
\end{restatable}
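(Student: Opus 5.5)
The plan is to reduce to the fully adaptive CLT (Theorem \ref{thm:fully-adaptive-CLT}), taking $P_t=\M_t(S)$ and $Q_t=\M_t(S^-)$ along the path generated by Algorithm \ref{alg:approx-GDP-filter}, and setting all PLRVs to zero after the filter halts. The parameters are $m_1=m_2=B$ and $v=2B$, where $B$ is the total budget tracked by the filter. Then $\mu=(m_1+m_2)/\sqrt v=\sqrt{2B}$, which gives the stated GDP parameter, and the $O(\cdot)$ correction to $\sqrt{2B}$ comes from $\phi=(\eta_1+\eta_2)/\sqrt v$.

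The key structural fact is the filter's invariant. Every path satisfies $\sum_t \operatorname{Budg}(\tilde q,q_t,\sigma_t,\mu_t)=B$ exactly, because the last step is truncated via $\operatorname{invBudg}$, or at least $\le B$ with equality up to one step's contribution. Together with the leading-order identities $\E_P L\approx -\E_Q L\approx \tfrac12\Var_P L\approx\tfrac12\Var_Q L\approx\operatorname{Budg}$, this makes the path-wise sums in Conditions 1--4 deterministic up to the accumulated Taylor remainders.

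The main step is to bound these remainders uniformly over paths using the per-step error bounds of Appendix \ref{appx:approximations}.

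\textbf{Regime $\tilde q=0$.} Each per-step relative error is $O(q_t)\le O(\bar q)$. Summing over $T$ steps gives $\eta_i,\kappa=O(\bar q)\cdot B$, so $\phi=O(\bar q)$ and $\sqrt{\kappa/v}=O(\sqrt{\bar q})$.

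\textbf{Regime $\tilde q=1$.} Expanding in $\hat q=1-q_t$ and $\mu_t^2/\sigma_t^2$, the relative per-step errors are $O(\hat q+\bar\sigma^{-2})=O(\bar\sigma^{-2})$. This gives $\phi=O(\bar\sigma^{-2})$ and $\kappa/v=O(\bar\sigma^{-2})$.

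Next I would verify Conditions 5--6 by bounding the conditional third absolute central moment of a single-step PLRV against its variance, $\E|L-\E L|^3\le\rho\Var L$.

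\textbf{Regime $\tilde q=0$.} The PLRV is $\ln(1-q+q e^{(\mu y-\mu^2/2)/\sigma^2})$, which is roughly $q$ times a bounded-moment variable. The third moment is therefore $O(q^3)$ and the variance is $\Theta(q^2)$, so $\rho=O(\bar q)$. The term $\frac{\rho}{\sqrt v}|\ln\frac{\rho}{\sqrt v}|=O(\bar q\ln(1/\bar q))$ is then dominated by $\sqrt{\bar q}$, which yields $\Delta=O(\sqrt{\bar q})$.

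\textbf{Regime $\tilde q=1$.} The PLRV is approximately Gaussian with standard deviation $\mu/\sigma=O(\bar\sigma^{-1})$, so $\rho=O(\bar\sigma^{-1})$. This gives $\frac{\rho}{\sqrt v}|\ln\rho|=O(\ln\bar\sigma/\bar\sigma)$, which dominates $\sqrt{\kappa/v}=O(\bar\sigma^{-1})$ and matches the claimed $\Delta$.

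The scaling assumptions ($T\bar q^2=\Theta(1)$, respectively $T\bar\sigma^{-2}=\Theta(1)$) keep $v$ bounded away from $0$. This guarantees $4\rho^2\le v$ and $4\kappa\le v$ for small $\bar q$ or large $\bar\sigma$. Applying Theorem \ref{thm:fully-adaptive-CLT} then gives the two-sided approximate GDP bound. Since $\operatorname{Budg}$ depends on the pair only through $\mu$, the bound holds uniformly over remove pairs. For the swapped direction $T[Q,P]$, the roles of $m_1,m_2$ simply exchange, so symmetry is automatic.

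The main obstacle I expect is the uniform control of the Taylor remainders, especially the third-moment bound in the $q\ll1$ regime. There the likelihood ratio $e^{\mu y/\sigma^2}$ has heavy exponential moments, and the clipping level $\mu_t$ chosen adaptively by $\operatorname{invBudg}$ could be large relative to $\sigma_t$. I would first try to show that the filter forces $q_t^2(e^{\mu_t^2/\sigma_t^2}-1)\le B$, which bounds $e^{\mu_t^2/\sigma_t^2}$ by $O(B/q_t^2)$. Then I would check whether the moment remainders stay $O(q_t)$ relative under this constraint. If they do not, I would add an explicit bound $\mu_t/\sigma_t\le c$ as an implicit regularity assumption of the algorithm.
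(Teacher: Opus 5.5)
\paragraph{Verdict.} The reduction to Theorem~\ref{thm:fully-adaptive-CLT} with $m_1=m_2=B$, $v=2B$, your Taylor-remainder bookkeeping and your $\rho$-estimates in both regimes match the paper. However, the structural fact you build everything on is false. Along a path, the per-step budgets of a fixed remove pair do \emph{not} sum to $B$, neither exactly nor up to one step's contribution. There are two reasons.
\begin{itemize}
\item Algorithm~\ref{alg:approx-GDP-filter} deducts the worst-case amount $\operatorname{Budg}(\tilde q,q_t,\sigma_t,1)$. The conditional PLRV moments of the pair $S,S^-$ are instead governed by the actual clipped contribution $\|\grave{g}_t(X)\|/C_t$, which can be arbitrarily small, even $0$.
\item A path that reaches step $T$ without triggering the break leaves an arbitrary amount of budget unspent.
\end{itemize}
So the path-wise sums of conditional means and variances are genuinely random, ranging from $0$ up to about $B$ and $2B$ respectively. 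Conditions 1--4 then cannot hold with small $\eta_1,\eta_2,\kappa$. This is exactly the setting in which a martingale CLT breaks down: the total loss behaves like a variance mixture of Gaussians, not a single Gaussian. Theorem~\ref{thm:fully-adaptive-CLT} therefore cannot be applied to the actual composition as you propose.

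\paragraph{Missing step: padding and post-processing.} The paper closes this gap as follows.
\begin{enumerate}
\item Fix the pair $S,S^-$ and append a fictitious, pair-specific, adaptively parametrized mechanism $\M_{T+1}$ that consumes exactly that pair's remaining budget. In the padded composition the budget is then exhausted almost surely.
\item Apply Theorem~\ref{thm:fully-adaptive-CLT} to the padded composition. There your remainder estimates do yield $\eta_i,\kappa$ of the claimed orders.
\item Recover $(y_1,\ldots,y_T)$ by projection. By post-processing, its trade-off function dominates that of the padded composition, which transfers the privacy-relevant lower bound $G_{\mu+\phi}(\alpha+\Delta)-\Delta$.
\end{enumerate}
To keep Conditions 5--6 with the same $\rho$, realize the padding as a sequence of small steps of the same type rather than one large step. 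A single Gaussian-type step carrying variance of order $B$ forces $\rho/\sqrt{v}=\Theta(1)$.

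\paragraph{Minor point.} You do not need an extra regularity assumption on $\mu_t/\sigma_t$. In Algorithm~\ref{alg:approx-GDP-filter}, clipped contributions have norm at most $C_t$ and the noise has standard deviation $\sigma_tC_t$ with $\sigma_t\ge\bar\sigma$. The effective ratio is therefore at most $1/\bar\sigma$, so $e^{\mu_t^2/\sigma_t^2}\le e^{1/\bar\sigma^2}$. By contrast, your proposed bound via $q_t^2(e^{\mu_t^2/\sigma_t^2}-1)\le B$ would only give $e^{\mu_t^2/\sigma_t^2}=O(B/q_t^2)$, which is far too weak to control the remainders.
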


Theorem \ref{thm:validity-approx-GDP-filter} is derived by applying Theorem \ref{thm:fully-adaptive-CLT} with \(m_1=m_2=B\) and \(v=2B\) to an arbitrary remove pair (Appendix \ref{appx:adaptCLT-approxGDPfilter}). Verification that the CLT conditions hold in each regime is given by Theorems \ref{thm:approx-q=0} and \ref{thm:approx-q=1}. The errors here vanish as \(\bar{q}\to0\) or \(\bar{\sigma}\to\infty\), so the guarantee approaches exact \(\sqrt{2B}\)-GDP in each regime. Since the RDP-to-$f$-DP conversion is inherently lossy by a nonvanishing factor \citep{zhu_optimal_2022}, appropriate parameter selection yields strictly tighter guarantees than fully adaptive RDP accounting.

\begin{figure*}[t]
  \noindent\hrule
  \smallskip
  \begin{minipage}[t]{0.48\linewidth}
    \refstepcounter{algorithm}\label{alg:approx-GDP-filter}
    \addcontentsline{loa}{algorithm}{%
      \protect\numberline{\thealgorithm}%
      Approximate GDP Filter in Fully Adaptive Composition
      of Subsampled Gaussian Mechanisms}
    \noindent\textbf{Algorithm~\thealgorithm:}
    Approximate GDP Filter in Fully Adaptive Composition
    of Subsampled Gaussian Mechanisms
  \end{minipage}
  \hfill\vline\hfill
  \begin{minipage}[t]{0.48\linewidth}
    \refstepcounter{algorithm}\label{alg:individual-approx-GDP-filter}
    \addcontentsline{loa}{algorithm}{%
      \protect\numberline{\thealgorithm}%
      Individual Privacy Accounting for Subsampled Gaussian
      Mechanisms via Approximate GDP Filtering}
    \noindent\textbf{Algorithm~\thealgorithm:}
    Individual Privacy Accounting for Subsampled Gaussian
    Mechanisms via Approximate GDP Filtering
  \end{minipage}
  \smallskip\noindent\hrule
  \begin{algorithmic}
    \STATE\textbf{common input}\ Boolean flag \(\tilde{q}\),
      dataset \(S\).
      Adaptively chosen sequence of objects for \(t=1,\ldots,T\):
      functions \(g_t\!\left(\cdot,y_{1:t-1}\right)\),
      sampling rates \(q_t\!\left(y_{1:t-1}\right)\),
      noise scales \(\sigma_t\!\left(y_{1:t-1}\right)\) in
      \(\left[\bar{\sigma},\infty\right)\),
      clips \(C_t\!\left(y_{1:t-1}\right)\).
  \end{algorithmic}
  \noindent\hrule
  \smallskip
  \begin{minipage}[t]{0.48\linewidth}
    \raggedright
    \parbox[t][0.5\baselineskip][t]{\linewidth}{%
      \textbf{additional input}\ privacy budget \(B\).%
    }
    \begin{algorithmic}
      \FOR{\(t=1\) to \(T\)}
        \STATE \textbf{choose}
          \(g_t,q_t,\sigma_t,C_t\) based on \(y_{1:t-1}\)
        \STATE \algorithmicif\ \(t=1\) \algorithmicthen\
          \(B_t\gets B\)
        \STATE \algorithmicelse\ \(B_t\gets B_{t-1} -
          \operatorname{Budg}\!\left(\tilde{q},q_{t-1},
          \sigma_{t-1},1\right)\)
        \STATE \(C_{B,t}\gets
          C_t\cdot\operatorname{invBudg}\!\left(\tilde{q},q_t,
          \sigma_t,B_t\right)\)
        \STATE \textbf{sample} \(S_t\) from \(S\) at rate \(q_t\)
        \FOR{\(X_j\in S_t\)}
        \STATE \(\grave{g}_{t,j} \gets g_t\!\left(X_j\right)\cdot
            \min\!\left[1,\frac{\min\left(C_t,C_{B,t}\right)}
            {\norm{g_t\left(X_j\right)}}\right]\)
        \ENDFOR
        \STATE \(y_t\gets \sum_{X_j\in S_t}\grave{g}_{t,j} +
          \N\!\left(0,\sigma_t^2C_t^2\mathbf{I}\right)\)
        \STATE \algorithmicif\ \(C_{B,t}\leq C_t\)
          \algorithmicthen\ \textbf{break}
      \ENDFOR
      \STATE \textbf{return} \(y_1,\ldots,y_t\)
    \end{algorithmic}
  \end{minipage}
  \hfill\vline\hfill
  \begin{minipage}[t]{0.48\linewidth}
    \raggedright
    \parbox[t][0.5\baselineskip][t]{\linewidth}{%
      \textbf{additional input}\ dataset size \(n\),
        per-datapoint budgets \(B_j\) for \(j=1,\ldots,n\).%
    }
    \begin{algorithmic}
      \FOR{\(t=1\) to \(T\)}
        \STATE \textbf{choose}
          \(g_t,q_t,\sigma_t,C_t\) based on \(y_{1:t-1}\)
        \FOR{\(X_j\) in \(S\)}
          \STATE \algorithmicif\ \(t=1\) \algorithmicthen\
            \(B_{t,j}\gets B_j\)
          \STATE \algorithmicelse\ \(B_{t,j}\gets B_{t-1,j} -\)
           \newline\hspace*{\fill} \({} \operatorname{Budg}\!\left(\tilde{q},q_{t-1},
                     \sigma_{t-1}C_{t-1},
                     \norm{\grave{g}_{t-1,j}}\right)\)
          \STATE \(C_{B,t,j}\gets
            \operatorname{invBudg}\!\left(\tilde{q},q_t,
            \sigma_tC_t,B_{t,j}\right)\)
          \STATE \(\grave{g}_{t,j} \gets g_t\!\left(X_j\right)
            \cdot\min\!\left[1,
            \frac{\min\left(C_t,C_{B,t,j}\right)}
            {\norm{g_t\left(X_j\right)}}\right]\)
        \ENDFOR
        \STATE \textbf{sample} \(S_t\) from \(S\) at rate \(q_t\)
        \STATE \(y_t\gets \sum_{X_j\in S}\grave{g}_{t,j} +
          \N\!\left(0,\sigma_t^2C_t^2\mathbf{I}\right)\)
      \ENDFOR
      \STATE \textbf{return} \(y_1,\ldots,y_T\)
    \end{algorithmic}
  \end{minipage}
  \smallskip\noindent\hrule
\end{figure*}

The CLT methodology extends naturally to individual privacy accounting. Algorithm \ref{alg:individual-approx-GDP-filter} replaces the single global budget of Algorithm \ref{alg:approx-GDP-filter} with per-individual budgets \(B_{t,j}\) for each datapoint \(X_j\in S\), updating privacy costs using \(\norm{\grave{g}_{t-1,j}}\) rather than the shared worst-case bound \(C_{t-1}\). The budget-aware clipping threshold \(C_{B,t,j}\) is chosen so that the next update cannot overspend \(B_{t,j}\); once depleted, \(C_{B,t,j}=0\) and \(X_j\) contributes no further. In fact, Algorithm \ref{alg:individual-approx-GDP-filter} generalizes the individual accounting of \citet{feldman_individual_2021} from \(q=1\) to both asymptotic regimes \(q\ll1\) and \(q\approx1\).
\begin{restatable}{thm}{individualapproxGDPfilter}
    Suppose the initial individual budgets are uniform, i.e.\ \(B_j = B\) for all \(j=1,\ldots,n\). Then Algorithm \ref{alg:individual-approx-GDP-filter} satisfies identical asymptotic privacy guarantees as stated in Theorem \ref{thm:validity-approx-GDP-filter}.
\end{restatable}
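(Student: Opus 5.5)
The approach is to reduce Algorithm \ref{alg:individual-approx-GDP-filter} to the same verification of the CLT conditions that underlies Theorem \ref{thm:validity-approx-GDP-filter}, and then apply Theorem \ref{thm:fully-adaptive-CLT} with \(m_1=m_2=B\) and \(v=2B\).

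First, fix an arbitrary remove pair \(S,S^-\), where \(S^-\) drops the datapoint \(X_j\). The key observation is that every other datapoint \(X_i\), \(i\neq j\), is common to both inputs. Its clipped contribution \(\grave{g}_{t,i}\) depends only on \(y_{1:t-1}\) and on \(X_i\)'s own budget trajectory. That trajectory in turn depends only on \(y_{1:t-1}\) and \(\norm{\grave{g}_{s,i}}\) for \(s<t\). Hence, conditional on the history, the \(t\)-th output under \(S\) versus \(S^-\) is a subsampled Gaussian pair with a common shift. The pair has noise \(\sigma_tC_t\), sampling rate \(q_t\), and sensitivity \(\mu_t=\norm{\grave{g}_{t,j}}\).

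Because the step-\(t\) PLRV \(L_t\) is invariant to the common shift, its conditional mean, variance and third absolute moment are exactly those of a single subsampled Gaussian mechanism with parameters \((q_t,\sigma_tC_t,\mu_t)\). By the approximations of Appendix~\ref{appx:approximations}, each of these conditional moments is \(\operatorname{Budg}(\tilde q,q_t,\sigma_tC_t,\mu_t)\) up to the stated error, or twice that quantity in the case of the variance.

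Second, control the path-wise sums. The per-individual bookkeeping subtracts exactly \(\operatorname{Budg}(\tilde q,q_{t},\sigma_{t}C_{t},\norm{\grave g_{t,j}})\) from \(B_{t,j}\) at each step. The clip \(C_{B,t,j}=\operatorname{invBudg}(\cdot,B_{t,j})\) ensures that no step overspends. So along every path we have \(\sum_t \operatorname{Budg}(\tilde q,q_t,\sigma_tC_t,\mu_t)\le B_j=B\), and the remainder is bounded by the spend of one step. Unlike in Algorithm \ref{alg:approx-GDP-filter}, a datapoint need not exhaust its budget by time \(T\), so the sum may be strictly below \(B\).

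To handle this slack, I would pad the composition with dummy steps: after time \(T\), or once the budget is not exhausted, append fictitious subsampled Gaussian mechanisms that spend exactly the remaining budget. By post-processing, or equivalently because the trade-off function can only decrease when releases are added, a GDP guarantee for the padded process implies one for the actual one. After padding, Conditions~1--4 of Theorem \ref{thm:fully-adaptive-CLT} hold with \(m_1=m_2=B\), \(v=2B\), and errors \(\eta_i,\kappa\). These errors are bounded by the summed Taylor remainders, and under the scaling assumptions they are \(O(\bar q)\) or \(O(\bar\sigma^{-2})\) exactly as in Theorems \ref{thm:approx-q=0} and \ref{thm:approx-q=1}.

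Conditions~5--6, the third-moment ratio \(\rho\), are per-step and depend only on \((q_t,\sigma_tC_t,\mu_t)\) with \(\mu_t\le C_t\). They therefore inherit the same bounds \(\rho=O(\sqrt{\bar q})\)-type, respectively \(O(\bar\sigma^{-1})\), since clipping to a smaller \(\mu_t\) only decreases the relevant ratio. Plugging into Theorem \ref{thm:fully-adaptive-CLT} yields the same \(\mu=\sqrt{2B}+O(\cdot)\) and \(\Delta\) as in Theorem \ref{thm:validity-approx-GDP-filter}. Symmetrizing over the two orderings of the pair and taking the worst case over \(j\) with uniform \(B_j=B\) gives a pair-invariant guarantee.

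The main obstacle I expect is making the padding and monotonicity step rigorous while keeping Conditions~5--6 uniform. Two concerns arise here. First, steps with very small \(\mu_t\) must not blow up the ratio \(\rho\) of third moment to variance. I would first try to show that this ratio is bounded for every \(\mu\in(0,C_t]\) by its value at \(\mu=C_t\), or at least uniformly \(O(\cdot)\). Second, the dummy steps must be choosable history-dependently without breaking the martingale structure; since they are ordinary adaptively chosen mechanisms, I expect this to pose no problem.
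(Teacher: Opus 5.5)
Your argument follows the paper's proof. Conditioned on \(y_{1:t-1}\), all budgets and hence all clips are determined, so the step-\(t\) pair differs only in the removed point. Theorems~\ref{thm:approx-q=0} and~\ref{thm:approx-q=1} then supply the conditional moments, and the unspent budget is padded as in the proof of Theorem~\ref{thm:validity-approx-GDP-filter}. Finally, Theorem~\ref{thm:fully-adaptive-CLT} is applied with \(m_1=m_2=B\) and \(v=2B\); the uniform budgets are what make these parameters the same for every pair.

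\textbf{The common-shift step is misjustified.} It does not break the argument, but the claim is false. Under Poisson subsampling the other points enter as \(V=\sum_{i\neq j}\mathbf{1}[X_i\in S_t]\,\grave{g}_{t,i}\). This is a \emph{random} vector, independent of the Gaussian noise and of whether \(X_j\) is included. So the actual step-\(t\) pair is the canonical subsampled Gaussian pair post-processed by adding \(V\), not a translate of it, and its PLRV moments are not exactly the canonical ones. You do not need shift invariance. Theorems~\ref{thm:approx-q=0} and~\ref{thm:approx-q=1} are stated for an arbitrary pair \((X_1,\ldots,X_n,X)\) versus \((X_1,\ldots,X_n)\), and they show that the other points only affect the remainder terms. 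That is exactly how the paper invokes them for the history-determined active subsets.

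\textbf{Rates.} The paper has \(\rho/\sqrt{v}=O(\bar q)\), not \(O(\sqrt{\bar q})\). The \(O(\sqrt{\bar q})\) in \(\Delta\) comes from \(\sqrt{\kappa/v}\). You need the sharper \(\rho\) rate; otherwise the \(\rho|\ln\rho|\) term would give \(O(\sqrt{\bar q}\,|\ln\bar q|)\) instead of the identical \(\Delta\).

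\textbf{Your flagged obstacle is real.} The paper does not resolve it either, and it affects Theorem~\ref{thm:validity-approx-GDP-filter} just as much, since clipped gradient norms vary there too. The monotonicity you hope for does not follow from the stated bounds. Dividing item~5 by item~3 of Theorem~\ref{thm:approx-q=0} gives a ratio of order \(q(e^{3x}+3e^{x}+1)/(e^{x}-1)\) with \(x=\mu^2/\sigma^2\), which diverges as \(\mu\to0\). The same happens in Theorem~\ref{thm:approx-q=1}, where the remainders \(\hat q\cdot O(\sigma^{-4})+O(\hat q^2)\) do not scale with \(\mu\). The true ratio is \(O(q\mu/\sigma)\), because \(L\approx q(\mu/\sigma)Z\) for small \(\mu/\sigma\). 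Closing this step therefore needs a sharper small-\(\mu\) third-moment estimate with \(\mu\)-scaled remainders.
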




\minisection{Numerical Validation via Monte Carlo Simulation} We validate Theorem~\ref{thm:validity-approx-GDP-filter} through Monte Carlo simulation of DP-SGD-like fully adaptive compositions (Appendix \ref{appx:experiments}). We compare the theoretical \(\sqrt{2B}\)-GDP bounds against the true privacy bounds and those from the single-order \(\left(1,B\right)\)-RDP filter---the structurally equivalent baseline, as the approximate GDP filter operates by tracking order-one Rényi divergence (see below).

\begin{figure}\label{fig:CLT-validation}
    \centering
    \includegraphics[width=0.85\linewidth]{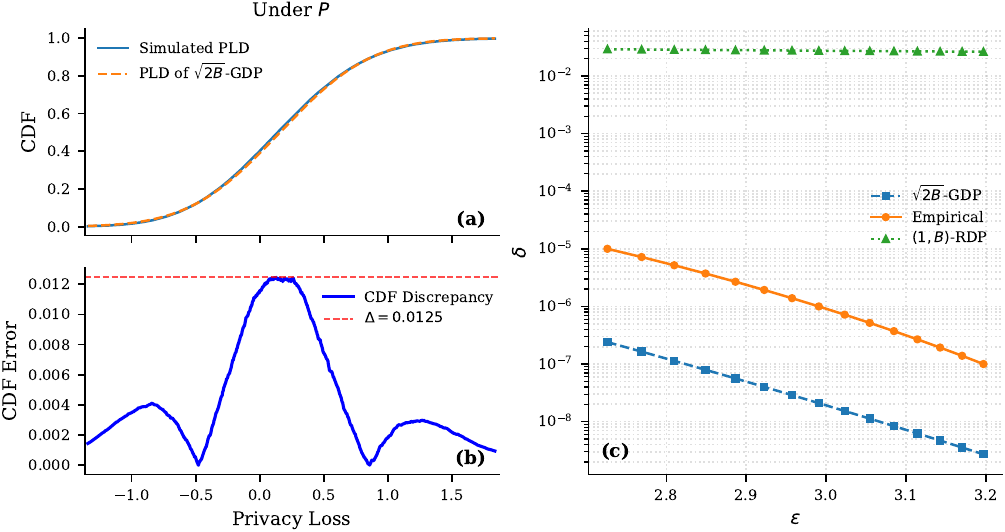}
    \caption{Monte Carlo validation of Theorem \ref{thm:validity-approx-GDP-filter} (Approximate GDP Filter, Algorithm \ref{alg:approx-GDP-filter}) via Theorem \ref{thm:fully-adaptive-CLT} (Fully Adaptive Central Limit Theorem). 
        \textbf{(a)} Empirical CDF vs.~Gaussian approximation: confirms CLT convergence.
        \textbf{(b)} CDF discrepancy: empirical $\Delta\approx0.0125$ (maximum CDF discrepancy), a large deviation from $\sqrt{2B}$-GDP bound.
        \textbf{(c)} Comparison between privacy bounds: \(\left(1,B\right)\)-RDP (single-order; the structurally equivalent baseline to our filter) is significantly pessimistic, \(\sqrt{2B}\)-GDP is optimistic. Comparison against multi-order RDP filters would be methodologically mismatched, as our filter never accesses higher-order information.
    } 
\end{figure}

Figure~\ref{fig:CLT-validation} illustrates Gaussian convergence of the empirical PLD and the corresponding privacy bound comparison. At \(q=0.1\), the \(\sqrt{2B}\)-GDP approximation underestimates privacy loss---a known limitation of CLT-based approximations even in the non-adaptive case \citep{Zheng2020SharpCB,NEURIPS2021_6097d8f3,pmlr-v202-alghamdi23a}---while the \((1,B)\)-RDP bound is notably pessimistic.

\setlength{\intextsep}{0pt}
\begin{wraptable}{r}{0.367\textwidth}
\begin{minipage}{0.367\textwidth}
\vspace{-10pt}
\caption{Empirical error \(\Delta\) and \(\sqrt{2B}\)-GDP 
values across subsampling rates. \(\Delta\) decreases 
as \(q\to0\) or \(q\to1\).}
\label{tab:delta-gdp-conv}
\centering
\begin{tabular}{c c cc c}
\toprule
$q$  & $\Delta$ Error & $\sqrt{2B}$-GDP \\
\midrule
$0.01$   & $0.003182$ & $0.1775$ \\
$0.1$    & $0.01464$ & $0.5612$ \\
$0.199$  & $0.02606$ & $0.7897$ \\
$0.801$  & $0.01463$ & $1.416$ \\
$0.95$   & $0.004807$ & $1.454$ \\
\bottomrule
\end{tabular}
\end{minipage}
\vspace{-10pt}
\end{wraptable}

The tightness over RDP has a clear interpretation: since the PLRV mean equals the order-one Rényi divergence, the approximate GDP filter effectively operates as a single order-one RDP filter. Unlike standardly deployed multi-order RDP filters, this single-order structure permits more mechanisms to proceed, yielding strictly tighter guarantees throughout---though at the cost of validity outside the asymptotic regimes.

Table \ref{tab:delta-gdp-conv} confirms that \(\Delta\) decreases as \(q\to0\) or \(q\to1\), consistent with Theorem \ref{thm:validity-approx-GDP-filter}, and becomes negligible once \(q\) is sufficiently extreme---at which point the \(\sqrt{2B}\)-GDP bound becomes valid and strictly outperforms fully adaptive RDP accounting, demonstrating that the slack in RDP stems from its lossy Rényi divergence representation rather than any fundamental barrier in adaptive composition. Whether tighter multi-order-equivalent approximate filters---constructed via the same CLT framework by tracking higher-order PLRV moments---can close the remaining gap at moderate subsampling rates is a natural direction for future work.

\section{Discussion and Conclusions}
This work resolves a central open question in fully adaptive privacy accounting for $f$-DP. We show that the natural $f$-DP filter is invalid under full adaptivity and establish necessary and sufficient conditions for validity---a sharp characterization that explains why divergence-based notions such as RDP admit simple filters while trade-off-function-based notions do not. On the constructive side, we prove a fully adaptive $f$-DP central limit theorem, establishing Gaussian convergence of cumulative privacy losses under full adaptivity, suggesting a general methodology for constructing approximate GDP filters via mechanism-specific PLRV moment approximations. As one instantiation, we construct a closed-form approximate GDP filter for subsampled Gaussian mechanisms that, in asymptotic regimes, yields valid guarantees strictly tighter than fully adaptive RDP accounting without tracking the full trade-off function---demonstrating that the slack in RDP stems from its lossy Rényi divergence representation rather than any fundamental barrier in adaptive composition.

\section*{Acknowledgements}
This research was conducted in collaboration with Nokia Bell Labs, as part of Nokia's Veturi program. The program is supported by funding from Business Finland, whose contribution is gratefully acknowledged. This work was also supported by the Research Council of Finland (Flagship programme: Finnish Center for Artificial Intelligence, FCAI, Grant 356499 and Grant 359111), the Strategic Research Council at the Research Council of Finland (Grant 358247), and the European Union (Project 101070617). Views and opinions expressed are however those of the author(s) only and do not necessarily reflect those of the European Union or the European Commission. Neither the European Union nor the granting authority can be held responsible for them. The authors acknowledge the research environment provided by ELLIS Institute Finland.

\bibliography{references}
\bibliographystyle{abbrvnat}


\newpage
\appendix
\onecolumn

\tableofcontents
\newpage

\addcontentsline{toc}{section}{Appendix}

\section{Additional Background: Privacy Profile}\label{appx:privacy_profile}
\subsection{Privacy Profile vs.~Trade-off Function}
The results here are standard; proofs can be found in \citet{dong_gaussian_2022}, \citet{zhu_optimal_2022}, \citet{kaissis_beyond_2024}. The hockey-stick divergence is an integral that directly captures \(\left(\varepsilon,\delta\right)\)-DP bounds. Define \(\left[x\right]_+ = \max\!\left(x,0\right)\).
\begin{definition}
    The hockey-stick divergence at order \(\gamma\geq0\) of a pair of distributions \(P,Q\) with densities \(p,q\) is
    \begin{align*}
        D_\gamma\!\left[P\midd Q\right] = \int_\O \left[p\!\left(y\right) - \gamma q\!\left(y\right)\right]_+ \d y
        = \max_{A\subset\O} P\!\left(A\right) - \gamma Q\!\left(A\right).
    \end{align*}
    A mechanism \(\M\) is \(\left(\varepsilon,\delta\right)\)-DP if \(D_{e^\varepsilon}\!\left[\M\!\left(S\right)\midd\M\!\left(S^-\right)\right]\leq\delta\) and \(D_{e^\varepsilon}\!\left[\M\!\left(S^-\right)\midd\M\!\left(S\right)\right]\leq\delta\) for all remove pair \(S,S^-\).
\end{definition}
The privacy profile on a mechanism level defined by \citet{Balle_Barthe_Gaboardi_2020}  is given as Definition \ref{def:tight-H} below. We first give a definition on the distribution level.
\begin{definition}
    The privacy profile of a pair of distributions \(\left(P,Q\right)\) is a function \(H:\left[0,\infty\right)\to\left[0,1\right]\) that represents the hockey-stick divergence \(H\!\left(\gamma\right)=D_\gamma\!\left[P\midd Q\right]\) of all orders \(\gamma\geq0\).
\end{definition}
\begin{theorem}\label{thm:H}
    A function \(H:\left[0,\infty\right)\to\left[0,1\right]\) is the privacy profile of some pair of distributions if and only if \(H\!\left(0\right)=1\), \(H\) is decreasing, convex, and \(H\!\left(\gamma\right)\geq\left[1-\gamma\right]_+\).
\end{theorem}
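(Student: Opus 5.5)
The plan is to prove the two directions separately: necessity directly from the integral formula, and sufficiency through convex duality with trade-off functions, so that the characterization of trade-off functions stated in the Background supplies the distributions.

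\emph{Necessity.} Let \(H(\gamma)=\int_\O[p-\gamma q]_+\,\d y\).
\begin{compactenum}
\item At \(\gamma=0\) we get \(H(0)=\int p=1\).
\item For each \(y\), the map \(\gamma\mapsto[p(y)-\gamma q(y)]_+\) is nonincreasing and convex, because it is the positive part of an affine function with slope \(-q(y)\le0\). Integrating over \(y\) preserves both properties, so \(H\) is decreasing and convex.
\item The lower bound follows from Jensen's inequality (positive part of an integral): \(\int[p-\gamma q]_+\ge\left[\int(p-\gamma q)\right]_+=[1-\gamma]_+\).
\item Finally \(0\le H\le H(0)=1\).
\end{compactenum}

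\emph{Sufficiency: a duality identity.} Use the set form \(H(\gamma)=\max_A P(A)-\gamma Q(A)\) and the Neyman--Pearson picture. Write \(g=T[Q,P]\), so that \(g(\alpha)=\inf\{1-P(A):Q(A)\le\alpha\}\), extended to randomized tests. Then
\begin{align*}
H(\gamma)=\sup_{\alpha\in[0,1]}\bigl(1-g(\alpha)-\gamma\alpha\bigr).
\end{align*}
In words, \(1-H(\cdot)\) is, up to sign conventions, the concave conjugate of \(g\). This suggests reversing the construction.

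\emph{Sufficiency: construction.} Given an \(H\) with the four properties, define
\begin{align*}
g(\alpha)=\sup_{\gamma\ge0}\bigl(1-H(\gamma)-\gamma\alpha\bigr),\qquad \alpha\in[0,1].
\end{align*}
I then check that \(g\) satisfies the characterization theorem for trade-off functions.
\begin{compactenum}
\item \(g\) is convex, as a supremum of affine functions of \(\alpha\).
\item \(g\) is decreasing, because every slope \(-\gamma\) is nonpositive.
\item \(g\ge0\): take \(\gamma=0\) and use \(H(0)=1\).
\item \(g(\alpha)\le1-\alpha\) is equivalent to \(H(\gamma)+\gamma\alpha\ge\alpha\) for all \(\gamma\ge0\).
  \begin{compactenum}
  \item For \(\gamma\le1\), use \(H(\gamma)\ge1-\gamma\): then \(H(\gamma)+\gamma\alpha-\alpha\ge(1-\alpha)(1-\gamma)\ge0\).
  \item For \(\gamma>1\), use \(H\ge0\): then \(H(\gamma)+\gamma\alpha\ge\gamma\alpha\ge\alpha\).
  \end{compactenum}
\item Continuity: \(g\) is lower semicontinuous, as a supremum of continuous functions. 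It is also finite and convex on the compact interval \([0,1]\) and nonincreasing. These properties force continuity, including at the endpoints.
\end{compactenum}
The characterization theorem therefore gives distributions with \(T[Q,P]=g\). Taking the set/test that attains each \(\alpha\), the duality identity above yields that \(D_\gamma[P\midd Q]\) equals \(\sup_\alpha(1-g(\alpha)-\gamma\alpha)\).

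\emph{Main obstacle: recovering \(H\).} It remains to show that this conjugate of the conjugate is \(H\) itself. I would invoke Fenchel--Moreau, which requires \(H\) to be proper, convex and lower semicontinuous on \([0,\infty)\).
\begin{compactenum}
\item Convexity and finiteness give continuity on \((0,\infty)\).
\item At \(\gamma=0\), monotonicity together with \(1-\gamma\le H(\gamma)\le1\) gives \(H(\gamma)\to1=H(0)\).
\end{compactenum}
I also need to handle the restriction of the dual variable to \(\alpha\in[0,1]\), rather than all of \(\real\). The subgradients of \(H\) lie in \([-1,0]\), since \(H\) is decreasing and \(H\ge1-\gamma\) with \(H(0)=1\). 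Hence the relevant dual variables \(\alpha=-\partial H\) lie in \([0,1]\), and the supremum over \(\alpha\in[0,1]\) recovers \(H\). This conjugacy bookkeeping, including the endpoint behaviour (for instance \(g(0)=1-\lim_{\gamma\to\infty}H(\gamma)\)), is the step I expect to need the most care. The rest is routine verification.
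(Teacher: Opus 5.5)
Your proposal is correct and follows the standard route the paper relies on. The paper states this result without proof, deferring to \citet{dong_gaussian_2022} and \citet{zhu_optimal_2022}; your argument gets necessity from the integral form and sufficiency from the conjugate correspondence \(g(\alpha)=1+H^*(-\alpha)\), \(H(\gamma)=1+g^*(-\gamma)\) of Theorem \ref{thm:f<->H}, combined with the trade-off characterization. The step you flag closes as you indicate: for \(\gamma>0\) any \(s\in\partial H(\gamma)\subset[-1,0]\) gives \(g(-s)\le 1-H(\gamma)+s\gamma\), hence \(\sup_{\alpha\in[0,1]}(1-g(\alpha)-\gamma\alpha)\ge H(\gamma)\), and at \(\gamma=0\) the choice \(\alpha=1\) works because \(H\ge 1-\gamma\) forces \(g(1)=0\).
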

We can describe the spectrum of all \(\left(\varepsilon,\delta\right)\)-DP bounds using privacy profiles.
\begin{definition}\label{def:dominating}
    A mechanism \(\M\) is dominated by a privacy profile \(H\) if for all remove pair \(S,S^-\),
    \begin{align*}
        D_\gamma \!\left[\M\!\left(S\right)\midd\M\!\left(S^-\right)\right] \leq H\!\left(\gamma\right) \quad\text{and}\quad D_\gamma \!\left[\M\!\left(S^-\right)\midd\M\!\left(S\right)\right] \leq H\!\left(\gamma\right)
    \end{align*}
    at all \(\gamma\geq0\). \(\M\) is dominated by a privacy profile \(H\) under remove adjacency if only the first inequality holds. \(\M\) is dominated by a pair of distributions \(\left(P,Q\right)\) (under remove adjacency) if it is dominated by the privacy profile of \(\left(P,Q\right)\) (under remove adjacency).
\end{definition}
For a family of privacy profiles \(\H\), the point-wise supremum \(H^\uparrow\) of all \(H\in\H\) is another privacy profile by Theorem \ref{thm:H}, and therefore \(H^\uparrow\) is the smallest upper bound of \(\H\).
\begin{definition}\label{def:tight-H}
    The privacy profile of a mechanism \(\M\) is the one that tightly dominates \(\M\):
    \begin{align*}
        H_\M\!\left(\gamma\right) = \sup_{S,S^-\,\textup{remove pair}} \max\!{\left(D_\gamma\!\left[\M\!\left(S\right)\midd\M\!\left(S^-\right)\right]\!, D_\gamma\!\left[\M\!\left(S^-\right)\midd\M\!\left(S\right)\right]\right)}.
    \end{align*}
    \(H_\M\) represents tight \(\delta\)-bounds in \(\left(\varepsilon,\delta\right)\)-DP for all \(\varepsilon\in\real\). In addition, the privacy profile of \(\M\) under remove adjacency is
    \begin{align*}
        H_\M^-\!\left(\gamma\right) = \sup_{S,S^-\,\textup{remove pair}} D_\gamma\!\left[\M\!\left(S\right)\midd\M\!\left(S^-\right)\right].
    \end{align*}
\end{definition}

In fact, tight \(f\)-DP is equivalent to a privacy profile of \mbox{\(\left(\epsilon,\delta\right)\)-DP} bounds. This relationship justifies the conventional reason why the privacy profile includes the parameters \mbox{\(\gamma=e^\varepsilon\in\left(0,1\right)\)}, or \(\varepsilon<0\).
\begin{theorem}\label{thm:f<->H}
    There is a bijection between the trade-off functions and the privacy profiles. Specifically, let \(f\) and \(H\) be the trade-off function and privacy profile of \(\left(P,Q\right)\) respectively, then
    \begin{enumerate}
        \item \(H\!\left(\gamma\right) = 1+\left(f^{-1}\right)^*\!\left(-\gamma\right)\) for all \(\gamma\geq0\),
        \item \(f^{-1}\!\left(\alpha\right) = 1+H^*\!\left(-\alpha\right)\) for all \(\alpha\in\left[0,1\right]\),
    \end{enumerate}
    where * denotes the convex conjugate operation and \(f^{-1}=T\!\left[Q,P\right]\) is the inverse function of \(f\) (see Proposition \ref{prop:symmetry-distribution-level}).
\end{theorem}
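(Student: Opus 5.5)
The plan is to read both identities directly off the variational (testing) form of the hockey-stick divergence, and then invert using Fenchel--Moreau duality.

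\textbf{Step 1: identity 1.} Write $g=f^{-1}=T[Q,P]$, extended by $+\infty$ outside $[0,1]$. For any test $\phi:\O\to[0,1]$, the integral form of $D_\gamma$ gives
\begin{align*}
H(\gamma)=\sup_{\phi}\bigl(\E_P[\phi]-\gamma\,\E_Q[\phi]\bigr).
\end{align*}
The supremum is attained at $\phi=\mathbf 1\{p>\gamma q\}$, so randomized tests add nothing.

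\begin{itemize}
\item Group the tests by $\alpha'=\E_Q[\phi]\in[0,1]$. This is the FPR when $Q$ is the null.
\item By the definition of $T[Q,P]$, among tests with $\E_Q[\phi]\le\alpha'$ the largest achievable $\E_P[\phi]$ is $1-g(\alpha')$.
\item The constraint $\E_Q[\phi]\le\alpha'$ rather than equality is harmless. Since $\gamma\ge0$, lowering $\E_Q[\phi]$ only increases the objective, and $g$ is decreasing.
\end{itemize}
Hence
\begin{align*}
H(\gamma)=\sup_{\alpha'\in[0,1]}\bigl(1-g(\alpha')-\gamma\alpha'\bigr)=1+g^*(-\gamma).
\end{align*}
This holds for every $\gamma\ge0$, which is exactly item 1.

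\textbf{Step 2: identity 2.} Here I invoke Fenchel--Moreau. The function $g$ is proper, convex and continuous on $[0,1]$, hence lower semicontinuous after the $+\infty$ extension, so $g=g^{**}$. I then need to show that the supremum
\begin{align*}
g^{**}(\alpha)=\sup_s\bigl(s\alpha-g^*(s)\bigr)
\end{align*}
may be restricted to slopes $s=-\gamma\le0$. The argument is that $g$ is decreasing, so every affine minorant with positive slope can be replaced by one with slope $0$ without decreasing its value on $[0,1]$.

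\begin{itemize}
\item Concretely, if $\ell(x)=sx+b\le g$ with $s>0$, then $\ell(x)\le\ell(1)\le g(1)$ for $x\in[0,1]$.
\item Since $g$ is decreasing, $g(1)\le g$ on $[0,1]$. So the constant minorant $g(1)$ dominates $\ell$ there.
\end{itemize}
Substituting Step 1 gives, for $\alpha\in[0,1]$,
\begin{align*}
g(\alpha)=\sup_{\gamma\ge0}\bigl(-\gamma\alpha-g^*(-\gamma)\bigr)=\sup_{\gamma\ge0}\bigl(-\gamma\alpha-H(\gamma)\bigr)+1=1+H^*(-\alpha),
\end{align*}
where $H$ is extended by $+\infty$ on $\gamma<0$. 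This is item 2.

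\textbf{Step 3: bijection.} Items 1 and 2 show that $f$ (equivalently $g=f^{-1}$, an involution on trade-off functions) and $H$ determine each other. To conclude that the map is onto, I use the characterizations of Theorem \ref{thm:H} and the trade-off function characterization.

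\begin{itemize}
\item Any admissible $H$ arises from some pair $(P,Q)$ by Theorem \ref{thm:H}.
\item That pair has a trade-off function $f$, and by item 1 this $f$ maps to $H$. So every profile is hit.
\item Injectivity in the other direction is item 2, together with the fact that $f^{-1}$ determines $f$.
\end{itemize}

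\textbf{Main obstacle.} The delicate point is the restriction to nonpositive slopes in Step 2, together with endpoint behaviour. At $\alpha=0$, $g$ may have slope $-\infty$, so there is no supporting line there. I would handle this by relying on lower semicontinuity of $g$, so that $g^{**}=g$ still holds at $0$, rather than on pointwise subgradients.
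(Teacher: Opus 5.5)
Your argument is correct. Step 1 is the variational form $H(\gamma)=\sup_\phi(\E_P[\phi]-\gamma\E_Q[\phi])$ regrouped by the $Q$-level of the test. Step 2 is Fenchel--Moreau with a valid reduction to nonpositive slopes, using that $g$ is decreasing and that $H$ is extended by $+\infty$ on $\gamma<0$; the bijection then follows from the two identities as you describe.

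The paper gives no proof of its own here and defers to \citet{dong_gaussian_2022}, \citet{zhu_optimal_2022} and \citet{kaissis_beyond_2024}, where the result comes from this same convex-conjugate duality, so there is nothing substantive to contrast.
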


Furthermore, this bijection carries over the Blackwell ordering.
\begin{theorem}\label{thm:order-f-H}
    Let \(f_1,f_2\) be trade-off functions, and let \(H_1,H_2\) be the corresponding privacy profiles respectively. Then the following are equivalent.
    \begin{enumerate}
        \item \(f_1\!\left(\alpha\right)\geq f_2\!\left(\alpha\right)\) for all \(\alpha\in\left[0,1\right]\).
        \item \(H_1\!\left(\gamma\right)\leq H_2\!\left(\gamma\right)\) for all \(\gamma\geq0\).
    \end{enumerate}
\end{theorem}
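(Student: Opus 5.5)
The plan is to derive the equivalence from the bijection of Theorem~\ref{thm:f<->H}, using two order facts. First, passing from $f$ to its inverse $f^{-1}=T[Q,P]$ preserves the pointwise order. Second, the convex conjugate reverses the pointwise order.

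\emph{Step 1 (inverse preserves order).} For a trade-off function $f$ I take the generalized inverse $f^{-1}(\beta)=\inf\{\alpha\in[0,1]: f(\alpha)\le\beta\}$, consistent with Proposition~\ref{prop:symmetry-distribution-level}. Assume $f_1\ge f_2$ pointwise on $[0,1]$. Then $\{\alpha: f_1(\alpha)\le\beta\}\subseteq\{\alpha: f_2(\alpha)\le\beta\}$, and taking infima gives $f_1^{-1}\ge f_2^{-1}$. Applying the same argument to the inverses, together with $(f^{-1})^{-1}=f$ for continuous, convex, decreasing trade-off functions, gives the converse. Hence $f_1\ge f_2 \iff f_1^{-1}\ge f_2^{-1}$.

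\emph{Step 2 (conjugation reverses order).} I extend each $f_i^{-1}$ by $+\infty$ outside $[0,1]$, so the conjugate is
\[
(f_i^{-1})^*(s)=\sup_{\alpha\in[0,1]}\{s\alpha-f_i^{-1}(\alpha)\}.
\]
If $g_1\ge g_2$ pointwise, then for every $s$ each term satisfies $s\alpha-g_1(\alpha)\le s\alpha-g_2(\alpha)$, so $g_1^*\le g_2^*$. Combining with item~1 of Theorem~\ref{thm:f<->H},
\[
f_1\ge f_2 \;\Rightarrow\; f_1^{-1}\ge f_2^{-1} \;\Rightarrow\; (f_1^{-1})^*(-\gamma)\le (f_2^{-1})^*(-\gamma) \;\Rightarrow\; H_1(\gamma)\le H_2(\gamma)\quad\text{for all }\gamma\ge0 .
\]

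\emph{Step 3 (converse).} Assume $H_1\le H_2$ on $[0,\infty)$. I extend each $H_i$ by $+\infty$ on $(-\infty,0)$, so that $H_i^*(-\alpha)=\sup_{\gamma\ge0}\{-\alpha\gamma-H_i(\gamma)\}$. By Step 2, $H_1^*\ge H_2^*$. Item~2 of Theorem~\ref{thm:f<->H} then gives $f_1^{-1}(\alpha)=1+H_1^*(-\alpha)\ge 1+H_2^*(-\alpha)=f_2^{-1}(\alpha)$ for all $\alpha\in[0,1]$. Step 1 then yields $f_1\ge f_2$.

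The main obstacle is Step 1 and its boundary bookkeeping. Trade-off functions can be flat at $0$ on $[\alpha^*,1]$ and can start below $1$ at $\alpha=0$, so one must confirm three things:
\begin{itemize}
\item the generalized inverse is the one matching $T[Q,P]$;
\item $(f^{-1})^{-1}=f$ holds at these endpoints;
\item the $+\infty$ extensions used in the conjugates agree with the conventions of Theorem~\ref{thm:f<->H}.
\end{itemize}
If the inverse route proves awkward, I would instead use the geometric fact that the region $\{(\alpha,\beta)\in[0,1]^2:\beta\ge f(\alpha)\}$ maps onto the corresponding region for $f^{-1}$ under the swap $(\alpha,\beta)\mapsto(\beta,\alpha)$. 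Since $f_1\ge f_2$ is exactly inclusion of these regions, and inclusion is preserved by the swap, order preservation under inversion follows immediately.
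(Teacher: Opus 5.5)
Your proof is correct, and it follows the route the paper intends. The paper gives no separate proof: it cites the standard literature and notes that the bijection of Theorem~\ref{thm:f<->H} ``carries over the Blackwell ordering.'' Your argument, order preservation under the generalized inverse plus order reversal under convex conjugation applied to items 1 and 2 of that theorem, is exactly that derivation.

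The boundary points you flag all check out:
\begin{itemize}
\item $\inf\{\alpha: f(\alpha)\le\beta\}$ is the reflection that gives $T[Q,P]$.
\item $(f^{-1})^{-1}=f$ holds even when $f$ is flat at $0$ or has $f(0)<1$.
\item The $+\infty$ extensions are the conventions under which both conjugate formulas hold.
\end{itemize}
Your swap-of-regions argument is a clean way to settle Step 1 in both directions.
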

Similar to trade-off functions, we say that the privacy profiles \(H_1\) and \(H_2\) are Blackwell ordered if \(H_1\leq H_2\) or \(H_1\geq H_2\) point-wise. In light of Theorem \ref{thm:f<->H} and Theorem \ref{thm:order-f-H}, Definition \ref{def:fDP} of \(f\)-DP is equivalent to Definition \ref{def:dominating} of the dominating privacy profile.

For a family of trade-off functions \(\F\), their point-wise infimum does not maintain convexity in general, and hence it is not necessarily a trade-off function. Instead, the greatest lower bound \(f^\downarrow\) of all \(f\in\F\) can be obtained by first converting \(\F\) to the corresponding family of privacy profiles \(\H\) by using \mbox{Theorem \ref{thm:f<->H}}, taking the point-wise supremum \(H^\uparrow\) and converting \(H^\uparrow\) to \(f^\downarrow\) using Theorem \ref{thm:f<->H} again. This is equivalently the lower convex envelope of the point-wise infimum of \(\F\).
\begin{proposition}\label{prop:symmetry-distribution-level}
    For a pair of distributions \(\left(P,Q\right)\),
    \begin{enumerate}
        \item \(T\!\left[Q,P\right]\) is the reflection of \(T\!\left[P,Q\right]\) across the identity line.
        \item For all \(\gamma>0\),
        \begin{align*}
            D_\gamma\!\left[Q\midd P\right] = 1 - \gamma + \gamma D_{\frac{1}{\gamma}}\!\left[P\midd Q\right].
        \end{align*}
    \end{enumerate}
\end{proposition}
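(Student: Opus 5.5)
The plan is to reduce the statement to the scalar inequality \(D_\gamma[Q\midd P]\ge 0\) and then use the privacy-profile/trade-off bijection of Theorem \ref{thm:f<->H}. Statement 2 is proved directly from the hockey-stick definition. Statement 1 is proved from the definition of the trade-off function by swapping the roles of the two hypotheses in each test.

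For part 2, write densities \(p,q\) and fix \(\gamma>0\). Use the identity \([a]_+ = a + [-a]_+\) with \(a=q-\gamma p\):
\begin{align*}
    D_\gamma[Q\midd P]=\int [q-\gamma p]_+ = \int (q-\gamma p) + \int[\gamma p - q]_+ = 1-\gamma+\gamma\int\left[p-\tfrac{1}{\gamma}q\right]_+ ,
\end{align*}
and the last integral is \(D_{1/\gamma}[P\midd Q]\). This step is routine.

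For part 1, take any test \(\phi\) with \(\E_P[\phi]=\alpha\) and \(1-\E_Q[\phi]=\beta\). The complementary test \(1-\phi\), read as a test of \(H_0:Q\) against \(H_1:P\), has FPR \(\E_Q[1-\phi]=\beta\) and FNR \(1-\E_P[1-\phi]=\alpha\). So the complement map \(\phi\mapsto 1-\phi\) is a bijection between achievable error pairs \((\alpha,\beta)\) for \((P,Q)\) and achievable pairs \((\beta,\alpha)\) for \((Q,P)\). The achievable region of each problem is therefore the reflection of the other across the diagonal.

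The trade-off function is the lower boundary of the achievable region, restricted to its decreasing part. Hence \(T[Q,P]\) is the reflection of \(T[P,Q]\), i.e.\ \(T[Q,P](\beta)=\inf\{\alpha: T[P,Q](\alpha)\le\beta\}\), the generalized inverse \(f^{-1}\).

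The main obstacle is the degenerate case where \(f\) has flat pieces, for example \(f=0\) on \([\alpha_0,1]\), or vertical parts. There the inverse must be taken as the reflection of the graph with vertical segments filled in, so that it is again a continuous convex trade-off function. I would handle this in two steps. First, use that the achievable region is convex, closed and symmetric under \((\alpha,\beta)\mapsto(1-\alpha,1-\beta)\). Second, identify each trade-off function with the lower-left boundary of its region. Reflecting the region then reflects the boundary curve exactly, including its segments on the axes.
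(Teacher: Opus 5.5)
Your proof is correct and follows the standard route. The paper gives no proof of its own here; it defers to Dong et al.\ and Zhu et al., whose arguments are exactly the pointwise identity \([a]_+=a+[-a]_+\) for part 2 and the complementary-test swap \(\phi\mapsto1-\phi\) for part 1.

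Delete your opening sentence, though. You never reduce anything to \(D_\gamma[Q\midd P]\ge 0\), and you never use Theorem~\ref{thm:f<->H}. Invoking that theorem here would also be circular, since it identifies \(f^{-1}\) with \(T[Q,P]\) by pointing back to this very proposition.

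The discussion of flat and vertical pieces can be shortened. The swap directly gives \(T[Q,P](x)=\inf\{\E_P[\phi]: 1-\E_Q[\phi]\le x\}\). This equals \(\inf\{\alpha: T[P,Q](\alpha)\le x\}\) as soon as the infimum defining \(T[P,Q](\alpha)\) is attained, which follows from Neyman--Pearson or from the closedness of the achievable region that you assert. That attainment is the only nontrivial ingredient, and it handles the degenerate cases automatically.
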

\begin{corollary}\label{cor:symmetry-f-H}
    Let \(H_\M\) be the privacy profile of a mechanism \(\M\), and let \(f_\M\) be the corresponding tightest trade-off function such that \(\M\) is \(f_\M\)-DP. Then the following holds.
    \begin{enumerate}
        \item The graph of \(f_\M\!\left(\alpha\right)\) is symmetric w.r.t.~the identity line.
        \item For all \(\gamma>0\), \(H_\M\!\left(\gamma\right) = 1 - \gamma + \gamma H_\M\!\left(\frac{1}{\gamma}\right)\).
    \end{enumerate}
\end{corollary}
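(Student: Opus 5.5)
The plan is to prove item 2 first, working entirely at the level of privacy profiles. Item 1 then follows through the bijection of Theorem \ref{thm:f<->H}, or alternatively by a direct order-theoretic argument on trade-off functions.

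For item 2, fix \(\gamma>0\) and a remove pair \(S,S^-\), and write \(P=\M\!\left(S\right)\), \(Q=\M\!\left(S^-\right)\). Applying Proposition \ref{prop:symmetry-distribution-level}(2) to the pair \(\left(P,Q\right)\) and to the swapped pair \(\left(Q,P\right)\) gives
\begin{align*}
    D_\gamma\!\left[Q\midd P\right] = 1-\gamma+\gamma D_{1/\gamma}\!\left[P\midd Q\right], \qquad
    D_\gamma\!\left[P\midd Q\right] = 1-\gamma+\gamma D_{1/\gamma}\!\left[Q\midd P\right].
\end{align*}
Hence
\begin{align*}
    \max\!\left(D_\gamma\!\left[P\midd Q\right]\!,D_\gamma\!\left[Q\midd P\right]\right) = 1-\gamma+\gamma\max\!\left(D_{1/\gamma}\!\left[P\midd Q\right]\!,D_{1/\gamma}\!\left[Q\midd P\right]\right).
\end{align*}
The map \(h\mapsto 1-\gamma+\gamma h\) is affine and strictly increasing because \(\gamma>0\). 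It therefore commutes with the supremum over remove pairs in Definition \ref{def:tight-H}, and taking that supremum on both sides yields \(H_\M\!\left(\gamma\right)=1-\gamma+\gamma H_\M\!\left(1/\gamma\right)\).

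For item 1, let \(\F_\M\) denote the family of all \(T\!\left[\M\!\left(S\right)\!,\M\!\left(S^-\right)\right]\) and \(T\!\left[\M\!\left(S^-\right)\!,\M\!\left(S\right)\right]\) over remove pairs. By Proposition \ref{prop:symmetry-distribution-level}(1), this family is closed under reflection \(f\mapsto f^{-1}\) across the identity line. The tightest \(f_\M\) is \(\ceinf_{f\in\F_\M} f\), which is the trade-off function corresponding to \(H_\M\) via the sup-of-profiles description in Appendix \ref{appx:privacy_profile}.

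Reflection is an involution on trade-off functions and preserves the Blackwell order: \(f\geq g\) iff \(f^{-1}\geq g^{-1}\), since the region above the graph of \(f\) contains that of \(g\), and reflection maps these regions to each other. Consequently \(f_\M^{-1}\) is a lower bound for the reflected family, which is \(\F_\M\) itself. Moreover, any lower bound \(g\) of \(\F_\M\) has \(g^{-1}\) also a lower bound, so \(g^{-1}\leq f_\M\) and thus \(g\leq f_\M^{-1}\). This shows \(f_\M^{-1}\) is the greatest lower bound, so \(f_\M^{-1}=f_\M\).

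As a cross-check, item 1 can also be read off from item 2 using Theorem \ref{thm:f<->H}(2), since the identity in item 2 is exactly the profile-level image of reflection given by Proposition \ref{prop:symmetry-distribution-level}(2).

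The main point needing care is the identification of \(f_\M\), the greatest lower bound (a lower convex envelope rather than a pointwise infimum), with the trade-off function of \(H_\M\). This rests on Theorems \ref{thm:f<->H} and \ref{thm:order-f-H}: the order-reversing bijection sends greatest lower bounds of trade-off functions to pointwise suprema of profiles. I would state this explicitly before the order-theoretic argument, which itself only uses that reflection is an order-preserving involution.
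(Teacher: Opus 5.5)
Your proof is correct and follows essentially the route the paper intends. The paper gives no separate proof: it presents the corollary as an immediate consequence of Proposition~\ref{prop:symmetry-distribution-level} applied to both orderings $(\M(S),\M(S^-))$ and $(\M(S^-),\M(S))$ of each remove pair, followed by the supremum over pairs for $H_\M$ and the greatest lower bound for $f_\M$, which is exactly what your item~2 computation and your order-automorphism argument for item~1 do. One small slip: if $f\geq g$, then the region above the graph of $f$ is \emph{contained in} the region above the graph of $g$, not the reverse. Since reflection sends each such region to the corresponding region for $f^{-1}$ and $g^{-1}$ and preserves inclusion, your conclusion that reflection preserves the Blackwell order still holds.
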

A privacy profile \(H_\M^-\) of \(\M\) under remove adjacency implies that the privacy profile \(H_\M\) is
\begin{align*}
    H_\M\!\left(\gamma\right) = \max\!\left(H_\M^-\!\left(\gamma\right),1 - \gamma + \gamma H_\M^-\!\left(\frac{1}{\gamma}\right)\right).
\end{align*}
It is therefore sufficient to analyze privacy profiles under remove adjacency.
\begin{definition}
    A privacy profile \(H\) is symmetric if \(H\!\left(\gamma\right) = 1-\gamma + \gamma H\!\left(\frac{1}{\gamma}\right)\).
\end{definition}
    For an arbitrary privacy profile \(H\), denote \(\hat{H}\!\left(\gamma\right)=1-\gamma + \gamma H\!\left(\frac{1}{\gamma}\right)\) and \(\sym\!\left(H\right) = \max\!\left(H,\hat{H}\right)\). Note that if \(H\) is symmetric, then \(H = \hat{H} = \sym\!\left(H\right)\).

\subsection{Tensor Product of Trade-off Functions/Privacy Profiles}
\citet{zhu_optimal_2022} showed that the dominating pairs of distributions transform nicely under standard compositions.
\begin{theorem}\label{thm:product_dominating}
    If \(\M_1\) is dominated by \(\left(P_1,Q_1\right)\), and the adaptively chosen \(\M_2\) is dominated by a fixed pair \(\left(P_2,Q_2\right)\), then \(\left(\M_1,\M_2\right)\) is dominated by \(\left(P_1\times P_2,Q_1\times Q_2\right)\).
\end{theorem}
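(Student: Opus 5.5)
The plan is to work directly with hockey-stick divergences, because domination by a pair is exactly a family of pointwise inequalities $D_\gamma \le H(\gamma)$. Equivalently, it is an inequality between Neyman--Pearson type quantities. Fix a remove pair $S,S^-$ and write $A_1=\M_1(S)$, $B_1=\M_1(S^-)$. For each realized $y_1$, write $A_2(y_1)=\M_2^{(y_1)}(S)$ and $B_2(y_1)=\M_2^{(y_1)}(S^-)$. By assumption, $D_\gamma[A_1\midd B_1]\le D_\gamma[P_1\midd Q_1]$ for all $\gamma\ge 0$. We also have $D_\gamma[A_2(y_1)\midd B_2(y_1)]\le D_\gamma[P_2\midd Q_2]$ for every $y_1$ and all $\gamma$; the fact that the dominating pair $(P_2,Q_2)$ does not depend on $y_1$ is exactly what we will exploit. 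The goal is to show $D_\gamma[A_1\times A_2\midd B_1\times B_2]\le D_\gamma[P_1\times P_2\midd Q_1\times Q_2]$ for every $\gamma\ge0$, where the left side is the adaptive product. For the symmetric version, the same argument applies to the swapped pair.

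\textbf{Step 1 (conditioning on the first output).} Write the joint hockey-stick divergence as an iterated integral:
\begin{align*}
D_\gamma[A_1\times A_2\midd B_1\times B_2]=\sup_{E}\int \Bigl(a_1(y_1)A_2(y_1)(E_{y_1})-\gamma\, b_1(y_1)B_2(y_1)(E_{y_1})\Bigr)\,\d y_1 .
\end{align*}
For fixed $y_1$, the inner expression equals $b_1(y_1)$ times $\bigl(r A_2(E_{y_1})-\gamma B_2(E_{y_1})\bigr)$, where $r=a_1/b_1$; the region where $b_1=0$ is handled separately and contributes at most $A_1(b_1=0)$. Optimizing over the section $E_{y_1}$ gives $b_1(y_1)\, r\, D_{\gamma/r}[A_2(y_1)\midd B_2(y_1)]$. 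Hence
\begin{align*}
D_\gamma[A_1\times A_2\midd B_1\times B_2]=\E_{y_1\sim A_1}\!\left[D_{\gamma b_1(y_1)/a_1(y_1)}[A_2(y_1)\midd B_2(y_1)]\right].
\end{align*}
Note that $D_0=1$ covers the region where $b_1=0$.

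\textbf{Step 2 (replace the second step by the fixed dominating pair).} Bound the integrand pointwise by $H_2(\gamma e^{-L_1(y_1)})$, where $H_2$ is the privacy profile of $(P_2,Q_2)$ and $L_1=\ln(a_1/b_1)$. This yields $\E_{A_1}[H_2(\gamma e^{-L_1})]$. This is the step that needs a single fixed $H_2$, which is why the natural filter with $y_1$-dependent bounds fails.

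\textbf{Step 3 (replace the first step).} It remains to show $\E_{A_1}[H_2(\gamma e^{-L_1})]\le\E_{P_1}[H_2(\gamma e^{-L_{P_1/Q_1}})]$; by Step 1 applied to the product pair, the right side is $D_\gamma[P_1\times P_2\midd Q_1\times Q_2]$. This is the main obstacle, since it is a statement about general convex test functions rather than a single $D_\gamma$. My approach is to use Theorem \ref{thm:H}: $H_2$ is convex, decreasing, $H_2(0)=1$, and it has slope bounded below by $-1$ because of $H_2\ge[1-\gamma]_+$ and convexity. Such a function is a mixture of an affine part and hinge functions, $H_2(s)=c_0-c_1 s+\int [\lambda - s]_+\,\nu(\d\lambda)$ with $\nu\ge 0$. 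For each term, $a\,[\lambda-\gamma b/a]_+=[\lambda a-\gamma b]_+$. Therefore $\E_{A_1}[H_2(\gamma b_1/a_1)]=c_0-c_1\gamma+\int \lambda D_{\gamma/\lambda}[A_1\midd B_1]\,\nu(\d\lambda)$, using that the total masses are $1$. Domination of $(A_1,B_1)$ by $(P_1,Q_1)$ then bounds each hinge term by the corresponding term for $(P_1,Q_1)$. This gives the claim. I need to double-check the endpoint behavior, namely the atoms at $a_1=0$ and $\lambda\to\infty$; these are handled by monotone convergence.

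Since $S,S^-$ were arbitrary, the bound holds uniformly over remove pairs, which completes the argument.
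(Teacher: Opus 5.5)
Your proof is correct. The paper does not prove this statement itself; it quotes it from Zhu et al.\ (2022). The natural comparison is therefore with what the paper's own machinery gives.

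Your Steps 1 and 2 match that machinery. They are the Fubini conditioning of Definition~\ref{def:H-product}, together with the monotonicity of the adaptive product in its second factor noted in the paragraph on the adaptive tensor product.

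Step 3 is where you diverge, and the paper's tools give a shorter route there. The quantity $\mathbb{E}_{A_1}[H_2(\gamma e^{-L_1})]$ is exactly $D_\gamma[A_1\times P_2\,\|\,B_1\times Q_2]$, a \emph{non-adaptive} product. Integrating in the other order (commutativity, Proposition~\ref{prop:properties-H-product}) rewrites it as $\mathbb{E}_{y_2\sim P_2}\bigl[H_{A_1,B_1}(\gamma q_2(y_2)/p_2(y_2))\bigr]$, where $H_{A_1,B_1}(s)=D_s[A_1\|B_1]$. Pointwise domination $H_{A_1,B_1}\le H_{P_1,Q_1}$ then finishes in one line.

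Your hinge decomposition is this same swap in disguise. Its canonical form has $c_1=0$, $c_0=P_2(q_2=0)$, and $\nu$ equal to the law of $p_2/q_2$ under $Q_2$. With that choice, $\int\lambda D_{\gamma/\lambda}[A_1\|B_1]\,\nu(\mathrm{d}\lambda)$ is exactly the $y_2$-integral above.

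What your version buys is a standalone lemma: hockey-stick domination controls $\mathbb{E}_{A_1}[\varphi(b_1/a_1)]$ for every convex decreasing $\varphi$ of hinge form. That is a piece of the Blackwell/$f$-divergence equivalence. The cost is the representation theorem and the endpoint bookkeeping you flagged; the swap needs neither.

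One small correction: $\mathbb{E}_{A_1}[b_1/a_1]=B_1(a_1>0)$, which can be less than $1$. So ``total masses are 1'' is wrong for the linear term. It is harmless only because $0\le H_2\le 1$ forces $c_1=0$.

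Finally, the order of your Steps 2 and 3 is essential, and you chose it correctly. Replacing $(A_1,B_1)$ by $(P_1,Q_1)$ while the second step is still adaptive can fail, as the paper's remark on $\otimes_{\textup{adapt}}$ (Figure~\ref{fig:adapt_prod_notdefined}) shows.
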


This motivates the definition of the tensor product of privacy profiles, or equivalently, trade-off functions.
\begin{definition}\label{def:H-product}
    Let \(H_1,H_2\) be privacy profiles. The tensor product \(H_1\otimes H_2\) is defined as
    \begin{align*}
        \left(H_1\otimes H_2\right)\!\left(\gamma\right) = D_\gamma\!\left[P_1\times P_2 \midd Q_1\times Q_2\right],
    \end{align*}
    where \(\left(P_1,Q_1\right)\) is any pair of distributions that has a privacy profile of \(H_1\), and \(\left(P_2,Q_2\right)\) is any pair of distributions that has a privacy profile of \(H_2\). Using Fubini's theorem,
    \begin{align*}
        \left(H_1\otimes H_2\right)\!\left(\gamma\right) 
        &= \int_{\O_1\times\O_2} \left[
            p_1\!\left(y_1\right)p_2\!\left(y_2\right)
            - \gamma q_1\!\left(y_1\right)q_2\!\left(y_2\right)
        \right]_+ \d y_1\d y_2\\
        &= \int_{\O_1} H_2\!\left(\gamma\frac{q_1\!\left(y_1\right)}{p_1\!\left(y_1\right)}\right) p_1\!\left(y_1\right)\d y_1.
    \end{align*}
\end{definition}
\begin{proposition}\label{prop:properties-H-product}
    The tensor product of privacy profiles satisfies all following properties.
    \begin{enumerate}\setcounter{enumi}{-1}
        \item It is well-defined.
        \item It is commutative and associative.
        \item\label{propitem:H-product-ordering} It carries over the Blackwell ordering: If \(H_{21}\geq H_{22}\), then \(H_1\otimes H_{21}\geq H_1\otimes H_{22}\).
        \item\label{propitem:H-product-with-zero} \(H\otimes H_{\textup{Id}} = H_{\textup{Id}}\otimes H = H\), where \(H_{\textup{Id}}\!\left(\gamma\right)=\left[1-\gamma\right]_+\) is the perfectly private privacy profile.
        \item \(\widehat{\left(H_1\otimes H_2\right)} = \hat{H}_1\otimes\hat{H}_2\).
    \end{enumerate}
\end{proposition}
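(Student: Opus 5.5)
We prove the four items of Proposition~\ref{prop:properties-H-product} from the Fubini representation in Definition~\ref{def:H-product},
\begin{align*}
    \left(H_1\otimes H_2\right)\!\left(\gamma\right) = \int_{\O_1} H_2\!\left(\gamma\frac{q_1\!\left(y_1\right)}{p_1\!\left(y_1\right)}\right) p_1\!\left(y_1\right)\d y_1,
\end{align*}
supplemented by the bijection of Theorem~\ref{thm:f<->H} and the symmetry identity of Proposition~\ref{prop:symmetry-distribution-level}. We treat points where $p_1=0$ by the convention $H_2(\infty)=0$, and we add the term $\left[1-\gamma\right]_+\cdot\left(\text{mass of }P_1\text{ where }q_1=0\right)$ as the $\gamma\cdot 0$ limit; we would set up this convention once at the start.

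\textbf{Well-definedness (item 0).} We show that the right-hand side depends only on $H_1$ and $H_2$. Dependence on $(P_2,Q_2)$ enters only through $H_2$, as the formula already displays. For $(P_1,Q_1)$, the integral is $\E_{P_1}\!\left[g\!\left(\e^{-L_1}\right)\right]$ with $g(s)=H_2(\gamma s)$, so it depends only on the law of the PLRV $L_1$ under $P_1$. Since $H_1(\gamma)=\E_{P_1}\!\left[\left[1-\gamma\e^{-L_1}\right]_+\right]$, the function $H_1$ determines this law: the second derivative of $H_1$ in the distributional sense gives the distribution of $\e^{-L_1}$ under $P_1$ on $(0,\infty)$, and the mass at $\e^{-L_1}=0$ is read off from the slope or limit behaviour. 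We expect this step to be the main obstacle, in particular the careful handling of atoms and of the endpoints $L_1=\pm\infty$. As a fallback, we invoke the standard fact (Blackwell, Dong et al.) that pairs with equal trade-off functions are mutual garblings, and garblings preserve product hockey-stick values.

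\textbf{Commutativity and associativity (item 1).} Both follow from the defining expression $D_\gamma[P_1\times P_2\midd Q_1\times Q_2]$. Swapping the coordinates is a measurable bijection that preserves hockey-stick divergence, which gives commutativity. Associativity follows from $(P_1\times P_2)\times P_3\cong P_1\times(P_2\times P_3)$, where the privacy profile of $(P_1\times P_2,Q_1\times Q_2)$ is exactly $H_1\otimes H_2$, combined with item 0.

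\textbf{Ordering and identity (items 2 and 3).} Monotonicity is immediate: the integrand $H_{21}(\gamma q_1/p_1)\ge H_{22}(\gamma q_1/p_1)$ holds pointwise and $p_1\ge0$. For the identity, choose $P_2=Q_2$, whose profile is $\left[1-\gamma\right]_+=H_{\textup{Id}}$. Then $D_\gamma[P_1\times P_2\midd Q_1\times P_2]=D_\gamma[P_1\midd Q_1]$, and commutativity gives the other side.

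\textbf{Symmetry (item 4).} By Proposition~\ref{prop:symmetry-distribution-level}, $\hat H$ is the profile of the swapped pair $(Q,P)$: indeed $\hat H(\gamma)=1-\gamma+\gamma D_{1/\gamma}[P\midd Q]=D_\gamma[Q\midd P]$. Hence $\hat H_1\otimes\hat H_2$ is the profile of $(Q_1\times Q_2,P_1\times P_2)$, which is precisely $\widehat{H_1\otimes H_2}$, and item 0 completes the argument.
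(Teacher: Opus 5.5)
Your items 1--4 coincide with the paper's proof: regrouping or swapping product densities, pointwise comparison of the Fubini integrand, taking $P_2=Q_2$, and passing to the swapped pair $(Q,P)$ via Proposition~\ref{prop:symmetry-distribution-level}. Item 0 is where you take a different route.

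The paper's argument is only that item 2, applied with equality in both directions, shows the product depends on $(P_2,Q_2)$ only through $H_2$. Independence from the representative $(P_1,Q_1)$ is left implicit. The cheap way to get it is to run Fubini in the other order,
\[
(H_1\otimes H_2)(\gamma)=\int_{\mathcal{O}_2}H_1\!\left(\gamma\,q_2(y_2)/p_2(y_2)\right)p_2(y_2)\,\mathrm{d}y_2 .
\]
This uses only distribution-level commutativity of the product, i.e.\ your coordinate-swap argument, which does not itself depend on item 0.

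You instead show that $H_1$ determines the law $\nu$ of $q_1/p_1$ under $P_1$. That is correct, and not really an obstacle. One has $t\,H_1(1/t)=\int_0^t\nu([0,s])\,\mathrm{d}s$, so $H_1$ recovers the CDF of $\nu$. This includes the atom $\nu(\{0\})=\lim_{\gamma\to\infty}H_1(\gamma)=P_1(q_1=0)$, and $\nu$ has no mass at $+\infty$ because $p_1>0$ holds $P_1$-a.s.

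Your route proves more: the privacy profile pins down the privacy loss distribution under $P_1$. The cost is a measure-theoretic detour or, with your fallback, an appeal to Blackwell's garbling theorem. The symmetric-Fubini route is one line, and it is what the paper's terse proof implicitly relies on.

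One correction to your preamble. No extra term is needed on $\{q_1=0<p_1\}$, and $[1-\gamma]_+$ is the wrong weight there. On that set the product integrand is $[p_1p_2-0]_+=p_1p_2$, contributing $P_1(q_1=0)=P_1(q_1=0)\,H_2(0)$ since $H_2(0)=1$, so the Fubini formula is already exact. Points with $p_1=0$ carry zero weight, so the value you assign to $H_2(\infty)$ is irrelevant; in general $\lim_{\gamma\to\infty}H_2(\gamma)=P_2(q_2=0)$, which need not be $0$.
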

\begin{proof}
    \textit{Proof for 0}. We need to show that the resulting privacy profile \(H_1\otimes H_2\) is independent of the distribution pair \(\left(P_1,Q_1\right)\) and \(\left(P_2,Q_2\right)\) behind \(H_1\) and \(H_2\). It is sufficient to show that property \ref{propitem:H-product-ordering} holds: By property \ref{propitem:H-product-ordering}, \(D_\gamma\!\left[P_{21},Q_{21}\right] = D_\gamma\!\left[P_{22},Q_{22}\right]\) for all \(\gamma\) implies that \(D_\gamma\!\left[P_1\times P_{21},Q_1\times Q_{21}\right] = D_\gamma\!\left[P_1\times P_{22},Q_1\times Q_{22}\right]\) for all \(\gamma\).

    \textit{Proof for 2}. Using Fubini's theorem,
    \begin{multline*}
        \left(H_1\otimes H_{21}\right)\!\left(\gamma\right) = \int_{\O_1} H_{21}\!\left(\gamma\frac{q_1\!\left(y_1\right)}{p_1\!\left(y_1\right)}\right) p_1\!\left(y_1\right)\d y_1 \\ \geq \int_{\O_1} H_{22}\!\left(\gamma\frac{q_1\!\left(y_1\right)}{p_1\!\left(y_1\right)}\right) p_1\!\left(y_1\right)\d y_1 = \left(H_1\otimes H_{22}\right)\!\left(\gamma\right).
    \end{multline*}

    \textit{Proof for 1}. \(H_1\otimes H_2 = H_2\otimes H_1\) follows from 
    \begin{align*}
        p_1\!\left(y_1\right)p_2\!\left(y_2\right)&=p_2\!\left(y_2\right)p_1\!\left(y_1\right), \\
        q_1\!\left(y_1\right)q_2\!\left(y_2\right)&=q_2\!\left(y_2\right)q_1\!\left(y_1\right).
    \end{align*}
    \(\left(H_1\otimes H_2\right)\otimes H_3 = H_1\otimes\left(H_2\otimes H_3\right)\) follows from the Fubini's theorem combined with
    \begin{align*}
        p_1\!\left(y_1\right)p_2\!\left(y_2\right)p_3\!\left(y_3\right)&=p_1\!\left(y_1\right)\!\left[p_2\!\left(y_2\right)p_3\!\left(y_3\right)\right], \\
        q_1\!\left(y_1\right)q_2\!\left(y_2\right)q_3\!\left(y_3\right)&=q_1\!\left(y_1\right)\!\left[q_2\!\left(y_2\right)q_3\!\left(y_3\right)\right].
    \end{align*}

    \textit{Proof for 3}. \(H_\textup{Id}\) is the privacy profile of any pair of identical distributions \(\left(P_2,P_2\right)\). Therefore
    \begin{multline*}
        \left(H_1\otimes H_2\right)\!\left(\gamma\right) 
        = \int_{\O_1\times\O_2} \left[
            p_1\!\left(y_1\right)p_2\!\left(y_2\right)
            - \gamma q_1\!\left(y_1\right)p_2\!\left(y_2\right)
            \right]_+ \d y_1\d y_2 \\
        = \int_{\O_1\times\O_2} \left[
            p_1\!\left(y_1\right)
            - \gamma q_1\!\left(y_1\right)
            \right]_+ p_2\!\left(y_2\right) \d y_1\d y_2
        = \int_{\O_1} \left[
            p_1\!\left(y_1\right)
            - \gamma q_1\!\left(y_1\right)
            \right]_+ \d y_1 = H_1\!\left(\gamma\right).
    \end{multline*}

    \text{Proof for 4}. By the definition of tensor product and Proposition \ref{prop:symmetry-distribution-level},
    \begin{align*}
        \widehat{\left(H_1\otimes H_2\right)}\!\left(\gamma\right)
        = D_\gamma\!\left[Q_1\times Q_2 \midd P_1\times P_2\right]
        = \left(\hat{H}_1\otimes\hat{H}_2\right)\!\left(\gamma\right).
    \end{align*}
\end{proof}
Properties \ref{propitem:H-product-ordering} and \ref{propitem:H-product-with-zero} from Proposition \ref{prop:properties-H-product} together imply that \(H_1\otimes H_2\geq H_1\), i.e.~tensor product weakens privacy protections. The following Theorem \ref{thm:H_product_dominating} is equivalent to Theorem \ref{thm:product_dominating}
\begin{theorem}\label{thm:H_product_dominating}
    If \(\M_1\) is dominated by \(H_1\), and the adaptively chosen \(\M_2\) is dominated by a fixed \(H_2\), then \(\left(\M_1,\M_2\right)\) is dominated by \(H_1 \otimes H_2\).
\end{theorem}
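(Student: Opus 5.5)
The plan is to work with dominating pairs of distributions and integrate out the first coordinate via Fubini, using the integral representation of Definition \ref{def:H-product}. Fix a remove pair \(S,S^-\) and write \(P_1=\M_1(S)\), \(Q_1=\M_1(S^-)\), with densities \(p_1,q_1\). For each \(y_1\), the adaptively chosen mechanism gives \(P_2(y_1)=\M_2^{(y_1)}(S)\) and \(Q_2(y_1)=\M_2^{(y_1)}(S^-)\). The joint output distributions have densities \(p_1(y_1)p_2(y_2\mid y_1)\) and \(q_1(y_1)q_2(y_2\mid y_1)\). For any \(\gamma\geq0\) I would write
\begin{align*}
    D_\gamma\!\left[(\M_1,\M_2)(S)\midd(\M_1,\M_2)(S^-)\right]
    = \int_{\O_1}\int_{\O_2}\left[p_1(y_1)p_2(y_2\mid y_1)-\gamma q_1(y_1)q_2(y_2\mid y_1)\right]_+\d y_2\,\d y_1 .
\end{align*}
On \(\{p_1>0\}\), the inner integral equals \(p_1(y_1)\,D_{\gamma q_1(y_1)/p_1(y_1)}\!\left[P_2(y_1)\midd Q_2(y_1)\right]\). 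By domination of \(\M_2^{(y_1)}\) by the fixed \(H_2\), this is at most \(p_1(y_1)H_2\!\left(\gamma q_1(y_1)/p_1(y_1)\right)\). On \(\{p_1=0\}\), the integrand is zero.

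The central step is to bound the resulting quantity, \(\int H_2(\gamma q_1/p_1)\,p_1\,\d y_1\), by \((H_1\otimes H_2)(\gamma)\). This integral is exactly \(D_\gamma[P_1\times P_2\midd Q_1\times Q_2]\) for a fixed pair \((P_2,Q_2)\) with profile \(H_2\), evaluated with the actual pair \((P_1,Q_1)\). However, \((P_1,Q_1)\) only has a profile \(H_{P_1,Q_1}\leq H_1\), not equal to it. So I still need monotonicity in the first argument: \(H\leq H_1\) should imply \(H\otimes H_2\leq H_1\otimes H_2\). By commutativity (Proposition \ref{prop:properties-H-product}, item 1), this reduces to item \ref{propitem:H-product-ordering} with the roles swapped. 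Well-definedness (item 0) guarantees that \(H_{P_1,Q_1}\otimes H_2\) equals the integral computed with the actual \((P_1,Q_1)\) and any representative \((P_2,Q_2)\) of \(H_2\).

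The other direction \(D_\gamma[\cdot(S^-)\midd\cdot(S)]\) is handled identically, with the roles of \(S,S^-\) swapped. The domination hypotheses in Definition \ref{def:dominating} cover both orderings. Taking the supremum over remove pairs then gives the claim.

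The main obstacle is measure-theoretic care in the Fubini step. First, conditional densities must exist; I would take a common dominating measure and a Markov kernel for \(y_1\mapsto\M_2^{(y_1)}\), requiring measurability of the adaptive choice. Second, the points with \(p_1=0<q_1\) must be handled; there the integrand \([0-\gamma q_1q_2]_+\) vanishes, so these points contribute nothing. Alternatively, one can cite Theorem \ref{thm:product_dominating} directly, since Theorem \ref{thm:H_product_dominating} is just its restatement via the bijection of Theorems \ref{thm:f<->H} and \ref{thm:order-f-H}.
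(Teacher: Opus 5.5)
Your proof is correct, but it is not how the paper proceeds. The paper gives no separate argument. It simply declares Theorem~\ref{thm:H_product_dominating} equivalent to Theorem~\ref{thm:product_dominating}, the dominating-pair composition result of \citet{zhu_optimal_2022}. Choosing representatives $(P_1,Q_1)$ and $(P_2,Q_2)$ of $H_1,H_2$ (which exist by Theorem~\ref{thm:H}) and unfolding Definition~\ref{def:H-product} turns one statement into the other. So your fallback remark does not actually need the $f$--$H$ bijection of Theorems~\ref{thm:f<->H} and~\ref{thm:order-f-H}; that bijection is what links the result to Theorem~\ref{thm:f_product_dominating} instead.

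Your route proves the content from scratch using only the paper's own Proposition~\ref{prop:properties-H-product}:
\begin{itemize}
\item Conditioning on $y_1$ and bounding the inner hockey-stick divergence by $H_2$ at the rescaled order $\gamma q_1(y_1)/p_1(y_1)$ removes the adaptivity pointwise, leaving a standard product $H_{P_1,Q_1}\otimes H_2$.
\item You then correctly identify that the only remaining issue is monotonicity in the first slot, since the actual first-step pair has profile at most $H_1$ rather than equal to it. You obtain this from commutativity plus item~\ref{propitem:H-product-ordering}.
\end{itemize}

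What this buys over the citation is an explicit account of why the order of operations matters. The bound on the adaptive second step must be applied before invoking first-slot monotonicity. As the paper's Remark on $\otimes_{\textup{adapt}}$ shows, monotonicity in the first argument fails for adaptive products with a fixed selection rule. The citation route is shorter but hides this, along with the measure-theoretic points you handle explicitly (the set $\{p_1=0\}$ and joint measurability of the kernel).
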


Alternatively, we can define the tensor product of trade-off functions. We list results by \citet{dong_gaussian_2022}, which are equivalent to the corresponding versions in terms of privacy profiles presented above.
\begin{definition}
    Let \(f_1,f_2\) be trade-off functions. The tensor product \(f_1\otimes f_2\) is defined as
    \begin{align*}
        f_1\otimes f_2 = T\!\left[P_1\times P_2,Q_1 \times Q_2\right],
    \end{align*}
    where \(\left(P_1,Q_1\right)\) is any pair of distributions that satisfies \(f_1=T\!\left[P_1,Q_1\right]\), and \(\left(P_2,Q_2\right)\) is any pair of distributions that satisfies \(f_2=T\!\left[P_2,Q_2\right]\).
\end{definition}
\begin{proposition}\label{prop:properties-f-product}
    The tensor product of trade-off functions satisfies all the following properties.
    \begin{enumerate}\setcounter{enumi}{-1}
        \item It is well-defined.
        \item It is commutative and associative.
        \item\label{propitem:f-product-ordering} It carries over the Blackwell ordering: If \(f_{21}\leq f_{22}\), then \(f_1\otimes f_{21}\leq f_1\otimes f_{22}\).
        \item\label{propitem:f-product-with-zero} \(f\otimes \textup{Id} = \textup{Id}\otimes f = f\), where \(\textup{Id}\!\left(\alpha\right)=\left[1-\alpha\right]_+\) is the perfectly private privacy profile.
        \item \(\left(f_1\otimes f_2\right)^{-1} = f_1^{-1}\otimes f_2^{-1}\).
    \end{enumerate}
\end{proposition}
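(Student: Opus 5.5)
The plan is to deduce every property from the corresponding property of privacy profiles in Proposition \ref{prop:properties-H-product}, transported through the bijection of Theorem \ref{thm:f<->H} and the order correspondence of Theorem \ref{thm:order-f-H}. The key link is the following. For any pairs \(\left(P_1,Q_1\right)\) and \(\left(P_2,Q_2\right)\), the trade-off function \(T\!\left[P_1\times P_2,Q_1\times Q_2\right]\) and the privacy profile \(D_\gamma\!\left[P_1\times P_2\midd Q_1\times Q_2\right]\) belong to the same pair of distributions. So they correspond under Theorem \ref{thm:f<->H}, which is a statement about a single distribution pair. Hence \(f_1\otimes f_2\) is exactly the trade-off function whose profile is \(H_1\otimes H_2\), where \(H_i\) is the profile corresponding to \(f_i\).

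\emph{Property 0 (well-definedness).} Take two pairs realizing \(f_1\), and two pairs realizing \(f_2\).
\begin{compactenum}
\item By Theorem \ref{thm:f<->H}, the two pairs realizing \(f_1\) have the same profile \(H_1\), and likewise for \(f_2\).
\item Property 0 of Proposition \ref{prop:properties-H-product} gives the same product profile \(H_1\otimes H_2\) for both choices of pairs.
\item Injectivity of the bijection (part 2 of Theorem \ref{thm:f<->H}, which recovers \(f^{-1}\), hence \(f\), from \(H\)) gives the same product trade-off function.
\end{compactenum}

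\emph{Property 2 (ordering).} Suppose \(f_{21}\leq f_{22}\).
\begin{compactenum}
\item Theorem \ref{thm:order-f-H} gives \(H_{21}\geq H_{22}\).
\item Property \ref{propitem:H-product-ordering} of Proposition \ref{prop:properties-H-product} gives \(H_1\otimes H_{21}\geq H_1\otimes H_{22}\).
\item Applying Theorem \ref{thm:order-f-H} in the reverse direction gives \(f_1\otimes f_{21}\leq f_1\otimes f_{22}\).
\end{compactenum}

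\emph{Property 1 (commutativity and associativity).} I would argue this directly from the product measures. \(P_1\times P_2\) and \(P_2\times P_1\) differ only by the coordinate swap, which is a measurable bijection, and the same swap maps \(Q_1\times Q_2\) to \(Q_2\times Q_1\). Trade-off functions are invariant under a common bijective relabelling of the sample space. Associativity follows in the same way from the identification \(\left(\O_1\times\O_2\right)\times\O_3\cong\O_1\times\left(\O_2\times\O_3\right)\). Alternatively, both facts can be transported from property 1 of Proposition \ref{prop:properties-H-product} exactly as above.

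\emph{Property 3 (identity).} \(\textup{Id}\) is the trade-off function of any identical pair \(\left(P_2,P_2\right)\), and its profile is \(H_{\textup{Id}}\). Property \ref{propitem:H-product-with-zero} of Proposition \ref{prop:properties-H-product} then yields \(f\otimes\textup{Id}=f\). A direct argument also works: any test on \(\O_1\times\O_2\) can be averaged over \(y_2\sim P_2\) into a test on \(\O_1\) with the same FPR and FNR, because \(y_2\) has the same law under both hypotheses.

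\emph{Property 4 (inverse).} This is immediate from Proposition \ref{prop:symmetry-distribution-level}:
\begin{align*}
\left(f_1\otimes f_2\right)^{-1}=T\!\left[Q_1\times Q_2,P_1\times P_2\right]=f_1^{-1}\otimes f_2^{-1},
\end{align*}
where the last equality uses well-definedness with the realizing pairs \(\left(Q_i,P_i\right)\) of \(f_i^{-1}\).

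I expect the main obstacle to be bookkeeping rather than analysis. Theorem \ref{thm:f<->H} ties \(H\) to \(f^{-1}\), not to \(f\) itself. So I must check that "same \(f\)" is equivalent to "same \(H\)" for a fixed pair orientation, and that the order reversal in Theorem \ref{thm:order-f-H} is applied with the correct orientation.
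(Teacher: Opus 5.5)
Your proposal is correct and follows essentially the same route as the paper. The paper's entire proof is the remark that the statement is equivalent to Proposition \ref{prop:properties-H-product} under the trade-off/privacy-profile correspondence, and you spell out that transport property by property. Your direct arguments for commutativity, the identity, and the inverse mirror the distribution-level computations the paper uses in proving Proposition \ref{prop:properties-H-product}.
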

\begin{proof}
    The proposition is equivalent to Proposition \ref{prop:properties-H-product}.
\end{proof}
\begin{theorem}\label{thm:f_product_dominating}
    If \(\M_1\) is \(f_1\)-DP, and the adaptively chosen \(\M_2\) is \(f_2\)-DP with a fixed \(f_2\), then \(\left(\M_1,\M_2\right)\) is \(f_1\otimes f_2\)-DP.
\end{theorem}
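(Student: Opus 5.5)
The plan is to argue at the level of distributions and use the Neyman--Pearson/hypothesis-testing definition of the trade-off function. Let \(\left(P_1,Q_1\right)\) be any pair with \(T\!\left[P_1,Q_1\right]\geq f_1\); this covers \(\M_1\!\left(S\right),\M_1\!\left(S^-\right)\) for any remove pair. Conditionally on \(y_1\), the second step draws \(y_2\sim P_2^{(y_1)}\) or \(y_2\sim Q_2^{(y_1)}\). Here \(\M_2\) is \(f_2\)-DP, so \(T\!\left[P_2^{(y_1)},Q_2^{(y_1)}\right]\geq f_2\) for every \(y_1\), with \(f_2\) fixed. We must show \(T\!\left[P_1\times P_2^{(y_1)},Q_1\times Q_2^{(y_1)}\right]\geq f_1\otimes f_2\). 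The symmetric inequality follows the same way, or from Proposition \ref{prop:properties-f-product} (item 4) applied to the swapped pair.

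The cleanest route is through the equivalent privacy-profile statement, Theorem \ref{thm:H_product_dominating}. By Theorems \ref{thm:f<->H} and \ref{thm:order-f-H} it suffices to bound hockey-stick divergences. Write \(H_2\) for the profile of \(f_2\) and \(H_1\) for that of \(f_1\). By Fubini, exactly as in Definition \ref{def:H-product}, for each \(\gamma\geq0\):
\begin{align*}
D_\gamma\!\left[P_1\times P_2^{(y_1)}\midd Q_1\times Q_2^{(y_1)}\right]
&=\int_{\O_1}\int_{\O_2}\left[p_1 p_2^{(y_1)}-\gamma q_1 q_2^{(y_1)}\right]_+\d y_2\,\d y_1\\
&=\int_{\O_1} D_{\gamma q_1/p_1}\!\left[P_2^{(y_1)}\midd Q_2^{(y_1)}\right] p_1\,\d y_1
\leq \int_{\O_1} H_2\!\left(\gamma\tfrac{q_1(y_1)}{p_1(y_1)}\right)p_1(y_1)\,\d y_1 .
\end{align*}
The inequality uses the pointwise domination of each conditional pair by the fixed \(H_2\). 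The set where \(p_1=0\) contributes \(0\) and is handled separately.

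The last integral equals \(D_\gamma\!\left[P_1\times P_2,Q_1\times Q_2\right]\) for a fixed pair \(\left(P_2,Q_2\right)\) realizing \(H_2\). By Proposition \ref{prop:properties-H-product} (property 2, with the first factor varying), it is therefore at most \(\left(H_1\otimes H_2\right)(\gamma)\). This monotonicity in the first argument follows from commutativity combined with property 2.

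The step needing the most care is this replacement of \(\left(P_1,Q_1\right)\) by a pair with profile exactly \(H_1\). The expression \(\int H_2(\gamma q_1/p_1)\,p_1\) depends on \(\left(P_1,Q_1\right)\) only through its privacy profile, and it is monotone in that profile by commutativity of the product. So domination of \(\left(P_1,Q_1\right)\) by \(H_1\) gives the bound. Translating back via Theorem \ref{thm:order-f-H} yields the \(f_1\otimes f_2\)-DP claim.
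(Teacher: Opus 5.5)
Your proof is correct and takes essentially the paper's route. The paper simply cites this result (from Dong et al.) and treats it as the trade-off-function translation, via Theorems \ref{thm:f<->H} and \ref{thm:order-f-H}, of the privacy-profile statement Theorem \ref{thm:H_product_dominating}; your argument supplies that profile-level step with the Fubini representation of Definition \ref{def:H-product} and the monotonicity of Proposition \ref{prop:properties-H-product}, the same tools the paper uses for the upper bound in Theorem \ref{thm:standard-composition-optimal}. Your ordering of the steps is also correct: you first replace the adaptive conditional profiles by the fixed \(H_2\), and only then apply first-factor monotonicity to a standard (non-adaptive) product. The paper's remark on the adaptive tensor product shows this ordering is necessary, since first-factor monotonicity can fail for adaptive products.
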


The standard adaptive composition results in GDP follow by using a simple relationship.
\begin{proposition}\label{prop:product-G}
    Gaussian trade-off functions satisfy \(G_{\mu_1} \otimes G_{\mu_2} = G_{\sqrt{\mu_1^2+\mu_2^2}}\).
\end{proposition}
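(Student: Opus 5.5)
We need to prove $G_{\mu_1}\otimes G_{\mu_2}=G_{\sqrt{\mu_1^2+\mu_2^2}}$. The plan is to pick explicit distribution pairs realizing each factor, use that the tensor product does not depend on the choice of representing pairs (property 0 of Proposition \ref{prop:properties-f-product}), and reduce the resulting two-dimensional testing problem to a one-dimensional Gaussian one.

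\emph{Step 1: representing pairs.} By Definition \ref{def:fDP} we may take $G_{\mu_i}=T[\N(0,1),\N(\mu_i,1)]$. Then
\begin{align*}
G_{\mu_1}\otimes G_{\mu_2}=T\!\left[\N(0,I_2),\N\!\left((\mu_1,\mu_2),I_2\right)\right],
\end{align*}
since a product of independent unit-variance normals is a standard bivariate normal with the corresponding mean vector.

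\emph{Step 2: rotation invariance.} Let $R$ be an orthogonal matrix sending $(\mu_1,\mu_2)$ to $(\mu,0)$, where $\mu=\sqrt{\mu_1^2+\mu_2^2}$. A bijective measurable map $y\mapsto Ry$ leaves the trade-off function unchanged, because tests $\phi$ correspond one-to-one via $\phi\mapsto\phi\circ R^{-1}$ with identical FPR and FNR. Since $R\,\N(0,I_2)=\N(0,I_2)$ and $R\,\N((\mu_1,\mu_2),I_2)=\N((\mu,0),I_2)$, the trade-off function equals $T[\N(0,1)\times\N(0,1),\N(\mu,1)\times\N(0,1)]$.

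\emph{Step 3: removing the uninformative coordinate.} The second coordinate has identical law under both hypotheses, so its trade-off function is $\textup{Id}$. Property \ref{propitem:f-product-with-zero} of Proposition \ref{prop:properties-f-product}, which states $f\otimes\textup{Id}=f$, then gives $G_\mu\otimes\textup{Id}=G_\mu$.

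An alternative route is to note that the log-likelihood ratio is linear in $y$, namely $\langle(\mu_1,\mu_2),y\rangle-\tfrac12\mu^2$. Under the two hypotheses it is distributed as $\N(-\mu^2/2,\mu^2)$ and $\N(\mu^2/2,\mu^2)$, and by Neyman--Pearson these laws determine the trade-off function as $G_\mu$. The only point requiring care is the invariance of $T$ under bijections in Step 2, which is immediate from the definition, so I expect no real obstacle.
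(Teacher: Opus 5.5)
Your proof is correct and follows essentially the same route as the paper: write the product as the bivariate testing problem $T[\N(0,\mathbf{I}_2),\N((\mu_1,\mu_2),\mathbf{I}_2)]$, rotate the mean vector onto an axis, and discard the uninformative coordinate. The differences are only in justification: you get bijection-invariance directly by transporting tests rather than through the data-processing inequality for hockey-stick divergences, and you drop the second coordinate explicitly via $f\otimes\textup{Id}=f$ where the paper simply writes the equality.
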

\begin{proof}
    The hockey-stick divergence is an \(F\)-divergence, constructed using the convex function \(\left[x-\gamma\right]_+ = \max\!\left(x-\gamma,0\right)\). Therefore, it satisfies the data-processing inequality: for any randomized function \(\mathcal{T}\), \(D_\gamma\!\left[\mathcal{T}\!\left(P\right)\!,\mathcal{T}\!\left(Q\right)\right] \leq D_\gamma\!\left[P,Q\right]\) for all \(\gamma\). In particular, if \(\mathcal{T}\) is a deterministic bijection, then \(D_\gamma\!\left[\mathcal{T}\!\left(P\right)\!,\mathcal{T}\!\left(Q\right)\right] = D_\gamma\!\left[P,Q\right]\) for all \(\gamma\), and equivalently \(T\!\left[\mathcal{T}\!\left(P\right)\!,\mathcal{T}\!\left(Q\right)\right] = T\!\left[P,Q\right]\) by Theorem \ref{thm:f<->H}.

    For Gaussian trade-off functions \(G_{\mu_1}\) and \(G_{\mu_2}\), we apply the rotation transformation as a bijection,
    \begin{align*}
        G_{\mu_1}\otimes G_{\mu_2} 
        = T\!\left[\N\!\left(0,\mathbf{I}_2\right)\!,\N\!\left(\left(\mu_1,\mu_2\right)\!,\mathbf{I}_2\right)\right] 
        &= T\!\left[\N\!\left(0,\mathbf{I}_2\right)\!,\N\!\left(\left(\sqrt{\mu_1^2+\mu_2^2},0\right)\!,\mathbf{I}_2\right)\right] \\
        &= T\!\left[\N\!\left(0,1\right)\!,\N\!\left(\sqrt{\mu_1^2+\mu_2^2},1\right)\right]
        = G_{\sqrt{\mu_1^2+\mu_2^2}}.
    \end{align*}
\end{proof}
\begin{corollary}\label{cor:product-GDP}
    If \(\M_1\) is \(\mu_1\)-GDP, and the adaptively chosen \(\M_2\) is \(\mu_2\)-GDP with a fixed \(\mu_2\), then \(\left(\M_1,\M_2\right)\) is \(\sqrt{\mu_1^2+\mu_2^2}\)-GDP.
\end{corollary}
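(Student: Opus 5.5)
The statement to prove is Corollary~\ref{cor:product-GDP}. The plan is to reduce it to the general $f$-DP composition result, Theorem~\ref{thm:f_product_dominating}, and then evaluate the resulting tensor product with Proposition~\ref{prop:product-G}. The argument has three short steps.

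First, I translate the hypotheses into the language of trade-off functions. By Definition~\ref{def:fDP}, saying $\M_1$ is $\mu_1$-GDP means $\M_1$ is $G_{\mu_1}$-DP. Likewise, saying $\M_2$ is $\mu_2$-GDP with a fixed $\mu_2$ means the adaptively chosen $\M_2$ is $G_{\mu_2}$-DP with a fixed trade-off function $G_{\mu_2}$. The essential point is that $\mu_2$ does not depend on the realized output $y_1$. Hence the trade-off function is a single fixed $f_2=G_{\mu_2}$, and the hypothesis of Theorem~\ref{thm:f_product_dominating} is met exactly.

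Second, I apply Theorem~\ref{thm:f_product_dominating} with $f_1=G_{\mu_1}$ and $f_2=G_{\mu_2}$. This gives that $(\M_1,\M_2)$ is $G_{\mu_1}\otimes G_{\mu_2}$-DP. Proposition~\ref{prop:product-G} then rewrites this tensor product as $G_{\sqrt{\mu_1^2+\mu_2^2}}$. By the definition of GDP, $(\M_1,\M_2)$ is therefore $\sqrt{\mu_1^2+\mu_2^2}$-GDP.

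Third, I address the two-sided nature of $f$-DP. Definition~\ref{def:fDP} requires both $T[\M(S),\M(S^-)]\ge f$ and $T[\M(S^-),\M(S)]\ge f$. Theorem~\ref{thm:f_product_dominating} is stated directly in terms of $f$-DP, so both directions are already covered. As an independent check, the reversed direction also follows from Proposition~\ref{prop:properties-f-product}(4), since $G_\mu^{-1}=G_\mu$.

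I do not expect any step to be a real obstacle. The only subtlety is verifying that the word ``fixed'' in the corollary matches the fixed-$f_2$ hypothesis of Theorem~\ref{thm:f_product_dominating}. Under full adaptivity, where $\mu_2$ may depend on $y_1$, this reduction fails, and one would instead need the filter argument of Theorem~\ref{thm:filter-GDP}.
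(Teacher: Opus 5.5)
Your proposal is correct and follows the paper's route. The paper gives no separate proof: the corollary comes from applying the fixed-$f_2$ composition result (Theorem~\ref{thm:f_product_dominating}) with $f_1=G_{\mu_1}$ and $f_2=G_{\mu_2}$, then collapsing $G_{\mu_1}\otimes G_{\mu_2}$ to $G_{\sqrt{\mu_1^2+\mu_2^2}}$ via the rotation argument of Proposition~\ref{prop:product-G}.
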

The following Lemma \ref{lem:accounting-per-pair} shows that performing analysis per pair---applying tensor products on the pair's trade-off functions before taking the infimum---can yield tighter bounds than the standard composition Theorem \ref{thm:f_product_dominating}, which directly composes the mechanism-level privacy bounds.
\begin{lemma}\label{lem:accounting-per-pair}
    Let \(\F_1=\left\{f_{1,i}\right\}_{i\in I}\) and \(\F_2=\left\{f_{2,i}\right\}_{i\in I}\) be families of trade-off functions with common indexing by \(i\in I\). Then
    \begin{align*}
        \textup{ce}\inf_{i\in I} f_{1,i}\otimes f_{2,i} \geq \left(\textup{ce}\inf_{i\in I}f_{1,i}\right) \otimes \left(\textup{ce}\inf_{i\in I}f_{2,i}\right),
    \end{align*}
    where \(\textup{ce}\inf\) denotes the convex envelope of the point-wise infimum. There exists an example in which the inequality is strict.
\end{lemma}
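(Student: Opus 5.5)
The plan is to work on the privacy-profile side, where the greatest lower bound of a family of trade-off functions becomes a pointwise supremum. This avoids handling convex envelopes directly. For each \(i\in I\), let \(H_{1,i}\) and \(H_{2,i}\) be the privacy profiles of \(f_{1,i}\) and \(f_{2,i}\). By Theorem \ref{thm:f<->H}, Theorem \ref{thm:order-f-H} and the remark following it, \(\ceinf_{i} f_{1,i}\) corresponds to \(H_1^\uparrow=\sup_i H_{1,i}\), and likewise \(\ceinf_i f_{2,i}\) corresponds to \(H_2^\uparrow=\sup_i H_{2,i}\). The left-hand side \(\ceinf_i f_{1,i}\otimes f_{2,i}\) corresponds to \(\sup_i \left(H_{1,i}\otimes H_{2,i}\right)\). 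Since the correspondence is order-reversing, the claimed inequality is equivalent to
\begin{align*}
    \sup_{i\in I} \left(H_{1,i}\otimes H_{2,i}\right) \leq H_1^\uparrow\otimes H_2^\uparrow
\end{align*}
pointwise in \(\gamma\).

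To prove this, fix \(i\). Since \(H_{1,i}\leq H_1^\uparrow\) and \(H_{2,i}\leq H_2^\uparrow\), applying Proposition \ref{prop:properties-H-product} item \ref{propitem:H-product-ordering} twice gives
\begin{align*}
    H_{1,i}\otimes H_{2,i}\leq H_{1,i}\otimes H_2^\uparrow \leq H_1^\uparrow\otimes H_2^\uparrow .
\end{align*}
The second inequality uses commutativity to move the varying factor into the second slot. Taking the supremum over \(i\) then gives the inequality. Equivalently, on the trade-off side, \(f_1^\downarrow\otimes f_2^\downarrow\) is a common lower bound of every \(f_{1,i}\otimes f_{2,i}\) by Proposition \ref{prop:properties-f-product}. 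It is therefore below their greatest lower bound, which is the lower convex envelope of their infimum. The one point to state carefully is this greatest-lower-bound characterization: any trade-off function lying below all members lies below the convex envelope, because it is itself convex and below the infimum.

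For strictness, the plan is to use a two-element index set \(I=\{1,2\}\) with crossing pairs. The simplest choice makes the pairs "anti-aligned": \(f_{1,1}=\textup{Id}\), \(f_{2,1}=g\), \(f_{1,2}=g\), \(f_{2,2}=\textup{Id}\) for some nontrivial \(g\), for instance \(g=G_1\). By Proposition \ref{prop:properties-f-product} item \ref{propitem:f-product-with-zero}, the left-hand side is \(\ceinf(g,g)=g\). The right-hand side is \(g\otimes g\). For \(g=G_1\), Proposition \ref{prop:product-G} gives \(g\otimes g=G_{\sqrt2}\), which is strictly below \(G_1\) on \((0,1)\). This exhibits strict inequality.

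The main obstacle is mostly bookkeeping. One must justify that the envelope on the left equals the trade-off function of \(\sup_i(H_{1,i}\otimes H_{2,i})\) without the family being Blackwell ordered. This follows from the supremum-of-profiles description already given in the appendix, so it should be routine.
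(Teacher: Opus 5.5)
Your proposal is correct and matches the paper's proof. The paper uses exactly your ``equivalently'' argument on the trade-off side: each \(f_{1,i}\otimes f_{2,i}\) dominates \((\ceinf_i f_{1,i})\otimes(\ceinf_i f_{2,i})\) by monotonicity of \(\otimes\), and hence so does their greatest lower bound; your privacy-profile supremum is only a dual restatement of this. Your strictness example \(\{(\textup{Id},G_1),(G_1,\textup{Id})\}\) is the paper's family \(f_{1,\mu}=G_\mu\), \(f_{2,\mu}=G_{1-\mu}\) restricted to the endpoints \(\mu\in\{0,1\}\) (since \(G_0=\textup{Id}\)), and it yields the same comparison of \(G_1\) against \(G_{\sqrt2}\).
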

\begin{proof}
    For all \(i\in I\), \(f_{1,i}\geq\textup{ce}\inf_{i\in I}f_{1,i}\) and \(f_{2,i}\geq\textup{ce}\inf_{i\in I}f_{2,i}\). By Property 2 of Proposition~\ref{prop:properties-f-product}, \(f_{1,i}\otimes f_{2,i}\geq\left(\textup{ce}\inf_{i\in I}f_{1,i}\right) \otimes \left(\textup{ce}\inf_{i\in I}f_{2,i}\right)\) for all \(i\in I\), and hence we can take the \(\textup{ce}\inf\) of the left hand side.

    An example in which the inequality is strict: For \(\mu\in\left[0,1\right]\), let \(f_{1,\mu}=G_\mu\) and \(f_{2,\mu}=G_{1-\mu}\), then
    \begin{align*}
        \textup{ce}\inf_{i\in I} f_{1,i}\otimes f_{2,i} = G_{1} \geq G_{\sqrt{2}} = \left(\textup{ce}\inf_{i\in I}f_{1,i}\right) \otimes \left(\textup{ce}\inf_{i\in I}f_{2,i}\right).
    \end{align*}
\end{proof}
\begin{corollary}\label{cor:accounting-per-pair}
    For arbitrary trade-off functions \(f_1,f_2\), \(\sym\!\left(f_1\otimes f_2\right)\geq\sym\!\left(f_1\right)\otimes\sym\!\left(f_2\right)\).
\end{corollary}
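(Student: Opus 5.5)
The plan is to reduce the symmetric statement to Lemma~\ref{lem:accounting-per-pair}, applied to a two-element index set. By definition, \(\sym(f)=\cemin(f,f^{-1})\), which is \(\textup{ce}\inf\) over the family \(\{f,f^{-1}\}\).

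\textbf{Step 1: choose the index set.} Take \(I=\{1,2\}\) and set
\begin{align*}
    f_{1,1}=f_1,\quad f_{1,2}=f_1^{-1},\quad f_{2,1}=f_2,\quad f_{2,2}=f_2^{-1}.
\end{align*}
The pairing is chosen so that the \(i=2\) term of the left-hand side is again an inverse.

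\textbf{Step 2: apply Lemma~\ref{lem:accounting-per-pair}.} It gives
\begin{align*}
    \cemin\!\left(f_1\otimes f_2,\, f_1^{-1}\otimes f_2^{-1}\right) \geq \cemin\!\left(f_1,f_1^{-1}\right)\otimes\cemin\!\left(f_2,f_2^{-1}\right) = \sym(f_1)\otimes\sym(f_2).
\end{align*}

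\textbf{Step 3: identify the left-hand side.} By property~4 of Proposition~\ref{prop:properties-f-product}, \(f_1^{-1}\otimes f_2^{-1}=(f_1\otimes f_2)^{-1}\). The left-hand side is therefore \(\cemin\!\left(f_1\otimes f_2,(f_1\otimes f_2)^{-1}\right)=\sym(f_1\otimes f_2)\), which is the claim.

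\textbf{Main care point.} The only subtlety is matching the indices correctly. Pairing \(f_1\) with \(f_2^{-1}\) would produce a cross term \(f_1\otimes f_2^{-1}\) that is not \((f_1\otimes f_2)^{-1}\). The diagonal pairing in Step~1 avoids this, and Lemma~\ref{lem:accounting-per-pair} only needs the common-index pairing, so no further obstacle is expected.
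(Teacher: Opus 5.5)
Your proposal is correct, and it is exactly the intended derivation. The paper gives no separate proof and states this as an immediate corollary of Lemma~\ref{lem:accounting-per-pair}: instantiate it on the two-element family $\{f, f^{-1}\}$ with the diagonal pairing, then use property~4 of Proposition~\ref{prop:properties-f-product} to identify $f_1^{-1}\otimes f_2^{-1}$ with $(f_1\otimes f_2)^{-1}$.
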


\para{Adaptive Tensor Product.} Translating the adaptive tensor product \eqref{eq:f-adapt-product} via Theorem~\ref{thm:f<->H} and Definition~\ref{def:H-product}, the adaptive tensor product of privacy profiles takes the form
\begin{align*}
    \bigl(H_1 \otimes_{\textup{adapt}} H_2^{\left(y_1\right)}\bigr)(\gamma)
    = \int_{\O_1} H_2^{\left(y_1\right)}\!\left(\gamma\,\frac{q_1(y_1)}{p_1(y_1)}\right) p_1(y_1)\,\d y_1,
\end{align*}
where the selection rule $H_2^{\left(y_1\right)}$ picks a privacy profile for the second mechanism based on the first output $y_1$. Analogously to Property~\ref{propitem:H-product-ordering} in Proposition~\ref{prop:properties-H-product}, if $H_{21}^{\left(y_1\right)} \leq H_{22}^{\left(y_1\right)}$ for every $y_1$, then 
\begin{align*}
    H_1 \otimes_{\textup{adapt}} H_{21}^{\left(y_1\right)} \leq H_1 \otimes_{\textup{adapt}} H_{22}^{\left(y_1\right)}
\end{align*}
immediately by Fubini's theorem. However, ordering in the first component behaves very differently.

\begin{remark}
    The adaptive tensor product is not well-defined as an operation on privacy profiles/trade-off functions alone. Unlike the standard tensor product, the implication
    \begin{align*}
        H_{11} \leq H_{12} \;\not\Rightarrow\; H_{11} \otimes_{\textup{adapt}} H_2^{\left(y_1\right)} \leq H_{12} \otimes_{\textup{adapt}} H_2^{\left(y_1\right)}
    \end{align*}
    fails in general. The reason is that the selection rule maps each first-mechanism output $y_1$ to a second-mechanism profile via the likelihood ratio $q_1(y_1)/p_1(y_1)$, so the result depends on how $(P_1, Q_1)$ is organised over $y_1$, not merely on the profile $H_1$ it induces. Two distribution pairs sharing the same privacy profile can therefore yield different adaptive tensor products under the same selection rule.
    To see this concretely, let $(P_{11}, Q_{11})$ be $0.1$-GDP and $(P_{12}, Q_{12})$ be $0.3$-GDP, so that $H_{11} \leq H_{12}$. Fix the selection rule
    \begin{align*}
        H_2^{(y_1)} = \begin{cases} H_{G_{0.01}} & \text{if } y_1 < 0.3, \\ H_{G_2} & \text{otherwise,} \end{cases}
    \end{align*}
    where $H_{G_\mu}$ denotes the privacy profile corresponding to the trade-off function $G_\mu$ of $\mu$-GDP via Theorem~\ref{thm:f<->H}. Figure~\ref{fig:adapt_prod_notdefined} plots $H_{11} \otimes_{\textup{adapt}} H_2^{(y_1)}$ and $H_{12} \otimes_{\textup{adapt}} H_2^{(y_1)}$: the two curves cross, so neither dominates the other pointwise despite the first profiles being Blackwell ordered.
    
    Nevertheless, all results on adaptive composition in this work are valid at the level of distribution pairs, and we abuse the notation $\otimes_{\textup{adapt}}$ as a shorthand in the main text to better communicate the structure of these results.
\end{remark}

\begin{figure}
    \centering
    \includegraphics[width=\linewidth]{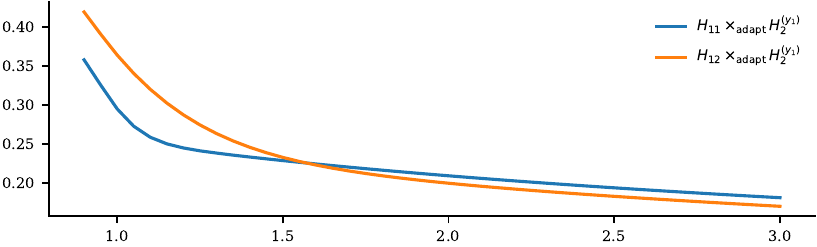}
    \caption{Privacy profiles $H_{11} \otimes_{\textup{adapt}} H_2^{(y_1)}$ and $H_{12} \otimes_{\textup{adapt}} H_2^{(y_1)}$ for $(P_{11}, Q_{11})$ being $0.1$-GDP and $(P_{12}, Q_{12})$ being $0.3$-GDP, with the same selection rule for step 2 that assigns $H_{G_{0.01}}$ when $y_1 < 0.3$ and $H_{G_2}$ otherwise. The two curves cross near $\gamma \approx 1.5$, demonstrating that the Blackwell ordering $H_{11} \leq H_{12}$ of the first profiles is not preserved under adaptive tensor product.}
    \label{fig:adapt_prod_notdefined}
\end{figure}

\section{Counterexamples Against \(f\)-DP Filter Validity}\label{appx:counterexample}

We compute tensor products in terms of privacy profiles, as Definition \ref{def:H-product} offers a direct integral form. We can then convert between privacy profiles and trade-off functions using Theorem \ref{thm:f<->H}. By Theorem \ref{thm:order-f-H}, the natural \(f\)-DP filter can be equivalently expressed using privacy profiles as
\begin{align}\label{eq:H-filter}
    F\!\left(H_B,H_1,\ldots,H_t\right) = \begin{cases}
            \textup{CONT} & \text{if } H_1\otimes\ldots\otimes H_t\leq H_B, \\
            \textup{HALT} & \text{otherwise}.
        \end{cases}
\end{align}
Motivated by Corollary \ref{cor:accounting-per-pair}, we use the \(f\)-DP filter with a budget \(H_B^-\) for privacy profiles \(H_1^-,\ldots,H_t^-\) under remove adjacency, since the validity of the filter would imply that the produced composition has a privacy profile \(\sym\!\left(H_B^-\right)\).

\subsection{Details of the Counterexample in Corollary~\ref{cor:SG_counterexample}}\label{appx:details-counterexample}

We work throughout with remove-adjacency privacy profiles, as motivated in the paragraph above. Let the datapoint space be \(\O=\{X\}\), so the only remove pair is \(S=(X)\) and \(S^-=\emptyset\). Fix composition length \(T=3\), subsampling rate \(q=0.5\), and additive Gaussian noise \(\sigma=1\) for all mechanisms \(\mathcal{M}_{1:3}\). Under input \(S^-\), every mechanism outputs mean \(0\); under input \(S\), the output means are
\begin{align*}
    \mu_1 = 1.3,\qquad
    (\mu_2,\mu_3) = \begin{cases}
        (2,\,2) & \text{if } y_1 \leq 0.65,\\
        (0.1,\,10) & \text{if } y_1 > 0.65.
    \end{cases}
\end{align*}
Note that \(\mathcal{M}_3\) depends only on \(y_1\), the output of \(\mathcal{M}_1\), and not on \(y_2\).

\para{Budget construction.} Denote by \(H^-_{SG,q,\mu}\) the remove-adjacency privacy profile of the subsampled Gaussian mechanism with parameters \((q,\mu)\), converted from \(SG_{q,\mu}^-\) as given by Theorem~\ref{thm:f<->H}. Set
\[
    H_{2\times3,1} = H^-_{SG,0.5,2} \otimes H^-_{SG,0.5,2}, \qquad
    H_{2\times3,2} = H^-_{SG,0.5,0.1} \otimes H^-_{SG,0.5,10},
\]
which correspond to the future privacy profiles of \((\mathcal{M}_2,\mathcal{M}_3)\) conditional on \(y_1\leq 0.65\) and \(y_1>0.65\), respectively. These are illustrated in Figure~\ref{fig:H231xH232_remove}; they cross at a calibration point \(\gamma_0\approx 0.971\), confirming there is no Blackwell order between them.

\begin{figure}
    \centering
    \includegraphics[width=\linewidth]{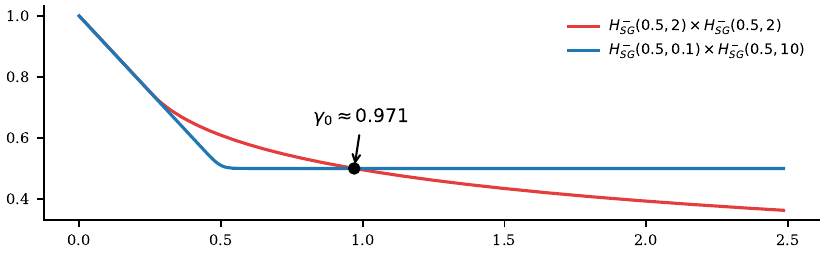}
    \caption{Tensor products of future privacy profiles, crossing at calibration point \(\gamma_0\approx 0.971\). These correspond to the trade-off functions in the left panel of Figure~\ref{fig:SG_counterexample}.}
    \label{fig:H231xH232_remove}
\end{figure}

The tightest budget that permits release of \(y_2\) and \(y_3\) regardless of \(y_1\) is, by~\eqref{eq:H-filter},
\[
    H_{B,\textup{tight}}
    = \max\!\left(
        H^-_{SG,0.5,1.3} \otimes H_{2\times3,1},\;
        H^-_{SG,0.5,1.3} \otimes H_{2\times3,2}
    \right),
\]
which passes the filter check for each branch by construction.

\para{True adaptive guarantee.} Let \(p_t\) and \(q_t\) denote the output densities of \(\mathcal{M}_t\) under inputs \(S\) and \(S^-\), respectively. Using Definition~\ref{def:H-product}, the tight privacy profile of the fully adaptive composition is
\begin{align*}
    H_{\textup{adapt}}(\gamma)
    &= \int_{\mathbb{R}^3}
        \Bigl[p_1(y_1)\,p_2(y_2\mid y_1)\,p_3(y_3\mid y_1)
        - \gamma\, q_1(y_1)\,q_2(y_2\mid y_1)\,q_3(y_3\mid y_1)\Bigr]_+
        \,\mathrm{d}y_1\,\mathrm{d}y_2\,\mathrm{d}y_3 \\
    &= \int_{\mathbb{R}} \left(
        \int_{\mathbb{R}^2}
        \Bigl[p_2(y_2\mid y_1)\,p_3(y_3\mid y_1)
        - \gamma\tfrac{q_1(y_1)}{p_1(y_1)}\,q_2(y_2\mid y_1)\,q_3(y_3\mid y_1)\Bigr]_+
        \mathrm{d}y_2\,\mathrm{d}y_3
    \right) p_1(y_1)\,\mathrm{d}y_1 \\
    &= \int_{-\infty}^{0.65}
        H_{2\times3,1}\!\left(\gamma\tfrac{q_1(y_1)}{p_1(y_1)}\right) p_1(y_1)\,\mathrm{d}y_1
      + \int_{0.65}^{\infty}
        H_{2\times3,2}\!\left(\gamma\tfrac{q_1(y_1)}{p_1(y_1)}\right) p_1(y_1)\,\mathrm{d}y_1.
\end{align*}

All privacy profiles are computed as integrals numerically; the resulting \(H_{B,\textup{tight}}\) and \(H_{\textup{adapt}}\) are compared in Figure~\ref{fig:H_B_vs_H_adapt}.

\begin{figure}
    \centering
    \includegraphics[width=\linewidth]{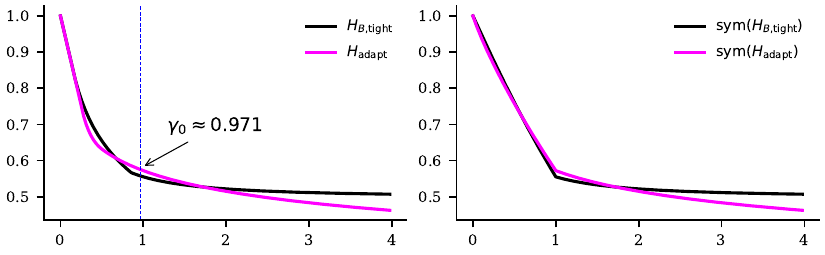}
    \caption{Left: \(H_{\textup{adapt}}\) exceeds \(H_{B,\textup{tight}}\) in the region surrounding \(\gamma_0\approx 0.971\). Right: the violation persists after symmetrization. These privacy profiles correspond to the middle and right panels in Figures~\ref{fig:SG_counterexample}.}
    \label{fig:H_B_vs_H_adapt}
\end{figure}

\para{Why the failure occurs.} We now show that \(H_{\textup{adapt}}(\gamma_0) > H_{B,\textup{tight}}(\gamma_0)\), which establishes the invalidity claimed in Corollary~\ref{cor:SG_counterexample}.

The two tensor products underlying \(H_{B,\textup{tight}}\) each average a \emph{fixed} future profile over all outputs \(y_1\):
\begin{align*}
    \bigl(H^-_{SG,0.5,1.3} \otimes H_{2\times3,i}\bigr)(\gamma)
    = \int_{\mathbb{R}} H_{2\times3,i}\!\left(\gamma\tfrac{q_1(y_1)}{p_1(y_1)}\right) p_1(y_1)\,\mathrm{d}y_1,
    \quad i=1,2.
\end{align*}
By contrast, \(H_{\textup{adapt}}\) is free to select a different profile per output. At \(\gamma=\gamma_0\), the ratio \(q_1(y_1)/p_1(y_1)\leq 1\) when \(y_1\leq 0.65\) and \(>1\) when \(y_1>0.65\), so the rescaled argument \(\gamma_0 q_1(y_1)/p_1(y_1)\) lies below \(\gamma_0\) in the first \(y_1\)-region and above in the second. Because \(H_{2\times3,1}>\!H_{2\times3,2}\) below \(\gamma_0\) and \(H_{2\times3,2}>\!H_{2\times3,1}\) above \(\gamma_0\) (the calibration point), the adaptive integrand takes the larger profile in each region simultaneously:
\begin{align*}
    H_{\textup{adapt}}(\gamma_0)
    &= \int_{-\infty}^{0.65} H_{2\times3,1}\!\left(\gamma_0\frac{q_1(y_1)}{p_1(y_1)}\right) p_1(y_1)\,\mathrm{d}y_1
     + \int_{0.65}^{\infty} H_{2\times3,2}\!\left(\gamma_0\frac{q_1(y_1)}{p_1(y_1)}\right) p_1(y_1)\,\mathrm{d}y_1,
\end{align*}
whereas each fixed-branch integral must use the \emph{same} profile across both regions. It follows that
\[
    H_{\textup{adapt}}(\gamma_0)
    > \max\!\left[
        \bigl(H^-_{SG,0.5,1.3}\otimes H_{2\times3,1}\bigr)(\gamma_0),\;
        \bigl(H^-_{SG,0.5,1.3}\otimes H_{2\times3,2}\bigr)(\gamma_0)
    \right]
    = H_{B,\textup{tight}}(\gamma_0).
\]
By continuity of privacy profiles, the strict inequality holds on a neighbourhood of \(\gamma_0\). This confirms that the filter is violated, with the calibration point as the precise cause.

\subsection{Proof of Adaptive Tightness of Standard Composition Results}

We prove the result in terms of privacy profiles, then convert to trade-off functions via Theorem~\ref{thm:f<->H}.

\begin{theorem}\label{thm:standard-composition-optimal}
    Let $(P_1,Q_1)$ be a pair of distributions on $\mathcal{O}_1$ with privacy profile $H_1$. For each $y_1\in\mathcal{O}_1$, let $(P_2^{(y_1)},Q_2^{(y_1)})$ be freely chosen from a family of pairs of distributions on $\mathcal{O}_2$, and let $\mathcal{H}_2$ denote the family of all achievable privacy profiles at step 2. Set $H_2^\uparrow=\sup_{H_2\in\mathcal{H}_2}H_2$ (pointwise). Then the adaptive product $(P_1\times P_2^{(y_1)},\,Q_1\times Q_2^{(y_1)})$ for any selection rule is dominated by $H_1\otimes H_2^\uparrow$. Furthermore, this bound is tight: for every $\gamma\geq 0$ and $\eta>0$, there exists a selection rule $y_1\mapsto(P_2^{(y_1)},Q_2^{(y_1)})$ such that
    \[
        \left(H_1\otimes H_2^\uparrow\right)(\gamma) - \eta
        \;\leq\;
        D_\gamma\!\left[P_1\times P_2^{(y_1)}\;\middle\|\;Q_1\times Q_2^{(y_1)}\right]
        \;\leq\;
        \left(H_1\otimes H_2^\uparrow\right)(\gamma).
    \]
\end{theorem}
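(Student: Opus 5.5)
The plan is to work directly with the Fubini representation of the adaptive product from Definition~\ref{def:H-product}. For any measurable selection rule,
\begin{align*}
    D_\gamma\!\left[P_1\times P_2^{(y_1)}\,\middle\|\,Q_1\times Q_2^{(y_1)}\right]
    = \int_{\O_1} H_2^{(y_1)}\!\left(\gamma\tfrac{q_1(y_1)}{p_1(y_1)}\right)p_1(y_1)\,\d y_1 ,
\end{align*}
where $H_2^{(y_1)}\in\H_2$ is the profile of the pair chosen at $y_1$. Points with $p_1(y_1)=0$ contribute nothing to the integrand $[p_1p_2-\gamma q_1q_2]_+$, so we may restrict to the support of $P_1$.

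\textbf{Upper bound.} Pointwise in $y_1$ we have $H_2^{(y_1)}\le H_2^\uparrow$. Monotonicity of the integral gives that the expression above is at most $\int H_2^\uparrow(\gamma q_1/p_1)\,p_1\,\d y_1$. To identify this with $(H_1\otimes H_2^\uparrow)(\gamma)$, we note that $H_2^\uparrow$ is itself a privacy profile by Theorem~\ref{thm:H}: it is a pointwise supremum of decreasing convex functions with value $1$ at $0$ and lying above $[1-\gamma]_+$. Hence there is a pair $(P_2,Q_2)$ realizing $H_2^\uparrow$, and Definition~\ref{def:H-product} applies. Since this holds for every $\gamma$, the adaptive product is dominated by $H_1\otimes H_2^\uparrow$.

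\textbf{Tightness at fixed $\gamma$.} Fix $\gamma\ge 0$ and $\eta>0$. The integrand depends on $y_1$ only through the likelihood ratio $r(y_1)=\gamma q_1(y_1)/p_1(y_1)\in[0,\infty]$. The construction proceeds in four steps.
\begin{compactenum}
    \item Partition $[0,\infty)$ into finitely many intervals $I_1,\dots,I_K$ plus a tail $[R,\infty]$. Choose $R$ large enough that $\int_{r>R}H_2^\uparrow(r)\,p_1\,\d y_1\le \eta/3$; this is possible because $H_2^\uparrow(r)\le H_2^\uparrow(R)$ there and $H_2^\uparrow(R)\to 0$. If the limit is not zero, we instead use $P_1(r>R)\to0$ whenever $P_1(r=\infty)=0$, which holds since $q_1/p_1<\infty$ on the support of $P_1$.
    \item In each cell $I_k$ pick a point $r_k$ (the left endpoint) and a profile $H_{2,k}\in\H_2$ with $H_{2,k}(r_k)\ge H_2^\uparrow(r_k)-\eta/3$.
    \item Define the selection rule by choosing the pair for $H_{2,k}$ whenever $r(y_1)\in I_k$. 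This rule is measurable because it is a finite partition of $\O_1$ by a measurable function.
    \item Control the within-cell error. Both $H_2^\uparrow$ and $H_{2,k}$ are decreasing and $1$-Lipschitz, since their slopes lie in $[-1,0]$ by convexity and $H(0)=1$, $H\ge[1-\gamma]_+$. On $I_k$, using monotonicity with $r\ge r_k$, the gap $H_2^\uparrow(r)-H_{2,k}(r)$ is at most $H_2^\uparrow(r_k)-H_{2,k}(r_k)+|I_k|$. Choosing the mesh $\le\eta/3$ makes the integrated loss at most $2\eta/3$ on the bounded part.
\end{compactenum}
Combining the three error terms yields the lower bound $(H_1\otimes H_2^\uparrow)(\gamma)-\eta$.

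\textbf{Main obstacle.} The delicate step is the uniform-in-$r$ approximation of the supremum by a single member of $\H_2$ within each cell. Lipschitz continuity on $[0,R]$ resolves this, but we must verify it carefully, including the endpoint at $r=0$ (i.e.\ $q_1=0$), where every profile equals $1$. A secondary issue is the measure-theoretic justification that $P_1$ puts no mass on $r=\infty$. Once tightness holds for each fixed $\gamma$, the final step is to convert back to trade-off functions. Via Theorems~\ref{thm:f<->H} and \ref{thm:order-f-H}, the pointwise supremum $H_2^\uparrow$ corresponds to $f_2^\downarrow=\ceinf f_2$, which gives the trade-off version stated in the main text.
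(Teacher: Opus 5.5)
Your proof is correct. The upper bound is the paper's argument: monotonicity of the Fubini integral, i.e.\ the adaptive analogue of Property~2 of Proposition~\ref{prop:properties-H-product}. For tightness the underlying idea is also the same, but the execution differs.

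The paper's proof is a single pointwise step. For each individual $y_1$ it picks some $H_2^{(y_1)}\in\mathcal{H}_2$ that is $\eta$-close to $H_2^\uparrow$ at the specific argument $\gamma q_1(y_1)/p_1(y_1)$, then integrates against $p_1$. This is short and transparent. However, it makes an uncountable, unstructured choice and never checks that the resulting rule $y_1\mapsto(P_2^{(y_1)},Q_2^{(y_1)})$ is measurable. Measurability is needed for the adaptive pair to be a genuine distribution and for the integral to be defined.

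Your discretization fills that gap at the cost of length. You use finitely many cells of the likelihood ratio with mesh at most $\eta/3$, one near-optimal profile chosen at each cell's left endpoint, and within-cell error control from monotonicity of $H_2^\uparrow$ plus $1$-Lipschitz continuity of privacy profiles. The resulting rule is finite-valued and a measurable function of $r(y_1)$, so the integral is well defined. It also shows that a near-extremal adversary needs only finitely many step-2 mechanisms. This matches the two-branch selections used in Appendix~\ref{appx:details-counterexample} and in the proof of Theorem~\ref{thm:nochain=>invalid}.

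One tidy-up for the tail: your first branch ($H_2^\uparrow(R)\to 0$) generally fails. For example, if $\mathcal{H}_2$ contains the $\mu$-GDP profiles for all $\mu>0$, then $H_2^\uparrow\equiv 1$. Your fallback, $H_2^\uparrow\le 1$ together with $P_1(r>R)\to 0$, always works, so use it alone.
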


\begin{remark}
    The tightness statement is pointwise in $\gamma$: a single selection rule need not bring the full privacy profile uniformly close to $H_1\otimes H_2^\uparrow$.
\end{remark}

\begin{proof}
    The upper bound follows immediately from Property~\ref{propitem:H-product-ordering} of Proposition~\ref{prop:properties-H-product}.

    For the lower tightness bound, fix $\gamma>0$ and $\eta>0$. Since $H_2^\uparrow=\sup_{H_2\in\mathcal{H}_2}H_2$, for each $y_1\in\mathcal{O}_1$ we may choose $H_2^{(y_1)}\in\mathcal{H}_2$ such that
    \[
        H_2^\uparrow\!\left(\gamma\frac{q_1(y_1)}{p_1(y_1)}\right) - \eta
        \;<\;
        H_2^{(y_1)}\!\left(\gamma\frac{q_1(y_1)}{p_1(y_1)}\right)
        \;\leq\;
        H_2^\uparrow\!\left(\gamma\frac{q_1(y_1)}{p_1(y_1)}\right).
    \]
    Integrating against $p_1(y_1)\,\mathrm{d}y_1$ and applying Definition~\ref{def:H-product} gives
    \begin{multline*}
        \left(H_1\otimes H_2^\uparrow\right)\!(\gamma) - \eta
        \;\leq\;
        \int_{\mathcal{O}_1} H_2^{(y_1)}\!\left(\gamma\frac{q_1(y_1)}{p_1(y_1)}\right)p_1(y_1)\,\mathrm{d}y_1 \\
        \;=\;
        D_\gamma\!\left[P_1\times P_2^{(y_1)}\;\middle\|\;Q_1\times Q_2^{(y_1)}\right]
        \;\leq\;
        \left(H_1\otimes H_2^\uparrow\right)\!(\gamma).
    \end{multline*}
\end{proof}

The trade-off function counterpart follows by translating between privacy profiles and trade-off functions.

\begin{corollary}\label{cor:tightness-fDP}
    Under the same setup, let $f_2^\downarrow = \ceinf_{f_2\in\mathcal{F}_2}f_2$. Then the adaptive product is lower bounded by $f_1\otimes f_2^\downarrow$, and for every $\alpha\in[0,1]$ and $\eta>0$ there exists a selection rule such that
    \[
        \left(f_1\otimes f_2^\downarrow\right)(\alpha)
        \;\leq\;
        T\!\left[P_1\times P_2^{(y_1)}\;\middle\|\;Q_1\times Q_2^{(y_1)}\right](\alpha)
        \;\leq\;
        \left(f_1\otimes f_2^\downarrow\right)(\alpha) + \eta.
    \]
\end{corollary}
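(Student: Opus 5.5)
This corollary is the trade-off-function version of Theorem \ref{thm:standard-composition-optimal}. I will derive it through the bijection of Theorem \ref{thm:f<->H} together with the order correspondence of Theorem \ref{thm:order-f-H}.

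\emph{Lower bound.} Let \(\H_2\) be the family of privacy profiles corresponding to \(\F_2\). By the construction described after Theorem \ref{thm:order-f-H}, the pointwise supremum \(H_2^\uparrow\) corresponds exactly to \(f_2^\downarrow=\ceinf_{f_2\in\F_2}f_2\). Theorem \ref{thm:standard-composition-optimal} shows that every adaptive product is dominated by \(H_1\otimes H_2^\uparrow\). Applying Theorem \ref{thm:order-f-H} to this domination, together with the correspondence between \(\otimes\) on privacy profiles and on trade-off functions (Definition \ref{def:H-product} and Proposition \ref{prop:properties-f-product}), gives \(T[P_1\times P_2^{(y_1)},Q_1\times Q_2^{(y_1)}]\geq f_1\otimes f_2^\downarrow\) pointwise.

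\emph{Tightness at a fixed \(\alpha\).} Pointwise closeness in \(\gamma\) does not transfer directly to pointwise closeness in \(\alpha\), so this is the main obstacle. My plan is to exploit the conjugate relation \(f^{-1}(\alpha)=1+H^*(-\alpha)=1+\sup_{\gamma}\{-\alpha\gamma-H(\gamma)\}\). I will work with \(f^{-1}\) throughout and use the symmetry \(f\leftrightarrow f^{-1}\) (Proposition \ref{prop:symmetry-distribution-level}) to return to \(f\) at the end. The steps are:
\begin{enumerate}
\item Let \(g=f_1\otimes f_2^\downarrow\), with profile \(H=H_1\otimes H_2^\uparrow\). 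Fix an \(\alpha\), which I treat as the FPR for \(g^{-1}\).
\item By convexity, choose \(\gamma^\ast\) to be a slope attaining (or nearly attaining) the supremum in \(1+H^*(-\alpha)\). Boundary cases, such as \(\alpha\) at an endpoint or the supremum not being attained, are handled by a limiting argument with an extra \(\eta/2\).
\item Apply Theorem \ref{thm:standard-composition-optimal} at \(\gamma^\ast\) with tolerance \(\eta\). This gives a selection rule whose profile \(H_{\mathrm{ad}}\) satisfies \(H_{\mathrm{ad}}(\gamma^\ast)\geq H(\gamma^\ast)-\eta\). The upper bound \(H_{\mathrm{ad}}\leq H\) holds everywhere.
\item Then
\begin{align*}
T_{\mathrm{ad}}^{-1}(\alpha)=1+\sup_\gamma\{-\alpha\gamma-H_{\mathrm{ad}}(\gamma)\}\geq 1-\alpha\gamma^\ast-H_{\mathrm{ad}}(\gamma^\ast).
\end{align*}
This quantity is at most \(1-\alpha\gamma^\ast-H(\gamma^\ast)+\eta\approx g^{-1}(\alpha)+\eta\). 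That is the wrong direction: it bounds \(T_{\mathrm{ad}}^{-1}(\alpha)\) from below, whereas we need an upper bound.
\end{enumerate}

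The correct route is the dual one. Since \(H_{\mathrm{ad}}\) is convex and \(H_{\mathrm{ad}}(\gamma^\ast)\) is near \(H(\gamma^\ast)\), the supporting line of \(H_{\mathrm{ad}}\) at \(\gamma^\ast\) certifies that the Neyman--Pearson test at threshold \(\gamma^\ast\) for the adaptive pair nearly achieves the point \((\alpha,g(\alpha))\). Concretely, I will take the likelihood-ratio test \(\phi=\mathbf{1}\{p>\gamma^\ast q\}\), randomized on the boundary, for the adaptive pair. Its FPR/FNR pair \((a,b)\) satisfies
\begin{align*}
(1-b)-\gamma^\ast a = H_{\mathrm{ad}}(\gamma^\ast)\geq H(\gamma^\ast)-\eta.
\end{align*}
Meanwhile, the point \((\alpha, g(\alpha))\) on \(g\) lies on the supporting line \(1-g(\alpha)-\gamma^\ast\alpha=H(\gamma^\ast)\), with the FPR/FNR orientation fixed by the convention of Theorem \ref{thm:f<->H}. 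Every achievable pair of the adaptive test lies above \(g\), so \((a,b)\) lies in the strip between the supporting line and its parallel shift by \(\eta\).

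The remaining issue is that the test's FPR \(a\) need not equal \(\alpha\). I plan to fix this by mixing \(\phi\) with a trivial test, or by choosing the boundary randomization, so as to move along the segment to FPR exactly \(\alpha\). This costs at most \(\eta\) times a factor depending on \(\gamma^\ast\), which is bounded for interior \(\alpha\). Rescaling \(\eta\) then gives \(T_{\mathrm{ad}}(\alpha)\leq g(\alpha)+\eta\). I expect this localization, which controls where the near-optimal test sits along the flat direction, to be the delicate step. If it fails in general, I would first restrict to \(\alpha\) at which \(g\) is strictly convex, then extend to all \(\alpha\) using continuity of \(g\) and the lower bound already established.
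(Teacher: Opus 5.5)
You have the lower bound right; it is the paper's argument. You are also right that the tightness half is not a mere translation: \(H_{\mathrm{ad}}(\gamma^\ast)\ge H(\gamma^\ast)-\eta\) at a single \(\gamma^\ast\) does not by itself make the trade-off curves close at a single \(\alpha\). The paper's proof only cites the \(H\leftrightarrow f\) correspondence and does not address this. Your strip argument does settle the case where \(H=H_1\otimes H_2^\uparrow\) is differentiable at \(\gamma^\ast\), i.e.\ where \(\alpha\) is an exposed point of \(g^{-1}\). There, \(g^{-1}(a)\le b\le g^{-1}(\alpha)-\gamma^\ast(a-\alpha)+\eta\) forces \(a\to\alpha\) as \(\eta\to0\), and monotonicity plus mixing with a trivial test finishes.

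The gap is the complementary case, and crossing profiles create exactly this case.
\begin{itemize}
\item \emph{Why the strip does not localize.} Suppose \(H_2^\uparrow=\max(H_{21},H_{22})\) has a kink at a crossing point \(\gamma_0\), and \(P_1(A_0)>0\) for \(A_0=\{y_1:\gamma^\ast q_1(y_1)/p_1(y_1)=\gamma_0\}\). Then \(H\) has a kink at \(\gamma^\ast\), \(g^{-1}\) is linear on some \([\alpha_1,\alpha_2]\ni\alpha\), and your strip contains that whole segment, so nothing forces \(a\) toward \(\alpha\).
\item \emph{Why the rule of Theorem \ref{thm:standard-composition-optimal} is not enough.} That rule breaks ties on \(A_0\) arbitrarily. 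Sending all of \(A_0\) to one profile still gives \(H_{\mathrm{ad}}(\gamma^\ast)=H(\gamma^\ast)\). But \(\partial H_{\mathrm{ad}}(\gamma^\ast)\) can then reduce to an endpoint of \(\partial H(\gamma^\ast)\). In that case \(T_{\mathrm{ad}}^{-1}\) meets \(g^{-1}\) only at \(\alpha_1\) or \(\alpha_2\) and lies strictly above it at interior \(\alpha\).
\item \emph{Why your repairs fail.} Boundary randomization of the Neyman--Pearson test moves only within the adaptive pair's own FPR range at threshold \(\gamma^\ast\), and that range is already fixed by the rule. Mixing with a trivial test moves along a line whose slope differs from \(-\gamma^\ast\), so it costs an amount of order \(|a-\alpha|\), not \(\eta\). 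The continuity fallback fails as well: an interior point of a linear piece has no exposed points near it, and the rule has to depend on \(\alpha\).
\end{itemize}

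The missing step is to choose the rule as a function of \(\alpha\). You split \(A_0\) between \(H_{21}\) and \(H_{22}\) in the \(P_1\)-proportion that places \(-\alpha\) in \(\partial H_{\mathrm{ad}}(\gamma^\ast)\). This is possible when \(P_1\) is atomless on \(A_0\), or when selection may be randomized. It can be impossible otherwise, and without such a hypothesis the statement is false. Take \(P_1=Q_1\) a point mass, so \(f_1=\mathrm{Id}\). Every rule then yields a single \(f_{2i}\in\{f_{21},f_{22}\}\). At any \(\alpha\) with \(\cemin(f_{21},f_{22})(\alpha)<\min(f_{21}(\alpha),f_{22}(\alpha))\), no rule comes within a small \(\eta\).
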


\begin{proof}
    Combine Theorem~\ref{thm:standard-composition-optimal} with the correspondence in Theorem~\ref{thm:f<->H}, using the fact that $H_2^\uparrow$ in the privacy profile formulation corresponds to $f_2^\downarrow$ in the trade-off function formulation under Theorem~\ref{thm:order-f-H}.
\end{proof}

\para{Connection to the counterexample.} Corollary~\ref{cor:tightness-fDP} implies that the $f$-DP filter can fail whenever $H_1\otimes H_2^\uparrow > H_B$ at some point, since the adaptive composition can approach $H_1\otimes H_2^\uparrow$ arbitrarily closely there.

\begin{corollary}\label{cor:counterexample-2step}
    In fully adaptive composition of up to 2 steps,\footnote{If $y_1$ causes the filter to halt, we set $H_2^{(y_1)}(\gamma)=H_{\mathrm{Id}}(\gamma)=[1-\gamma]_+$.} a counterexample to filter validity exists whenever there is some $\gamma_0>0$ such that $\left(H_1\otimes H_2^\uparrow\right)(\gamma_0) > H_B(\gamma_0)$.
\end{corollary}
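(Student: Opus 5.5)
The hypothesis already gives a strict gap $\eta_0 := \left(H_1\otimes H_2^\uparrow\right)(\gamma_0) - H_B(\gamma_0) > 0$ at some $\gamma_0>0$. The plan is to realize an adaptive two-step composition whose remove-adjacency privacy profile at $\gamma_0$ exceeds $H_B(\gamma_0)$; this contradicts validity of the filter. Here validity means that the produced composition is dominated by $H_B$, in the remove-adjacency formulation \eqref{eq:H-filter} adopted at the start of this appendix. The construction is the tightness half of Theorem~\ref{thm:standard-composition-optimal}, applied with $\eta=\eta_0/2$.

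\textbf{Step 1 (setup).} Take the first mechanism to be one whose remove pair is $(P_1,Q_1)$ with profile $H_1$. Let $\mathcal{H}_2$ be the family of profiles of the step-2 mechanisms that the filter admits after step 1. By the footnote convention, a step-2 choice that triggers HALT contributes $H_{\mathrm{Id}}$. So $\mathcal{H}_2$ is exactly the family of admissible continuations, and $H_2^\uparrow$ is its pointwise supremum as in Theorem~\ref{thm:standard-composition-optimal}.

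\textbf{Step 2 (adaptive selection).} Apply Theorem~\ref{thm:standard-composition-optimal} at $\gamma=\gamma_0$ with $\eta=\eta_0/2$. For each $y_1$, it gives a choice $H_2^{(y_1)}\in\mathcal{H}_2$ that is $\eta$-close to $H_2^\uparrow$ at the rescaled point $\gamma_0 q_1(y_1)/p_1(y_1)$. The resulting adaptive pair then satisfies
\[
D_{\gamma_0}\!\left[P_1\times P_2^{(y_1)}\,\middle\|\,Q_1\times Q_2^{(y_1)}\right]\geq \left(H_1\otimes H_2^\uparrow\right)(\gamma_0)-\eta_0/2 = H_B(\gamma_0)+\eta_0/2 > H_B(\gamma_0).
\]
Each chosen $H_2^{(y_1)}$ passes the filter check, because it lies in $\mathcal{H}_2$. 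Hence Algorithm~\ref{alg:fully-adaptive-composition} releases both outputs along every path. Yet the hockey-stick divergence at $\gamma_0$ of the realized composition exceeds $H_B(\gamma_0)$, so the composition is not dominated by $H_B$, and the filter is invalid.

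\textbf{Main obstacle.} The delicate point is measurability of the selection $y_1\mapsto H_2^{(y_1)}$ used in Theorem~\ref{thm:standard-composition-optimal}. An arbitrary pointwise $\eta$-optimal choice need not give a measurable adaptive mechanism. I would handle this with a countable discretization. Because profiles are monotone and continuous, $H_2^\uparrow$ restricted to a countable dense set of arguments is approximated by a countable subfamily $\{H_{2,k}\}\subset\mathcal{H}_2$. Partition the range of the likelihood ratio $q_1/p_1$ into countably many intervals small enough that, by continuity of $H_2^\uparrow$ and of each $H_{2,k}$, a single $k$ is $\eta$-optimal on each interval. Then set $H_2^{(y_1)}=H_{2,k(y_1)}$, where $k(y_1)$ is the index assigned to the interval containing $q_1(y_1)/p_1(y_1)$. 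This rule depends on $y_1$ only through the measurable likelihood ratio, which makes the adaptive kernel measurable. It also preserves the integral bound from Definition~\ref{def:H-product}, up to the extra $\eta$ slack already budgeted.
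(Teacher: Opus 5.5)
Your proposal is correct and follows the paper's route: the corollary is the tightness half of Theorem~\ref{thm:standard-composition-optimal}, evaluated at $\gamma_0$ with $\eta$ smaller than the gap $\left(H_1\otimes H_2^\uparrow\right)(\gamma_0)-H_B(\gamma_0)$, where $\mathcal{H}_2$ is read as the admissible step-2 profiles (filter-passing, or $H_{\mathrm{Id}}$ when halted). Your measurability discretization goes beyond the paper, which never addresses it, and it works once you use that every privacy profile, and hence $H_2^\uparrow$, is decreasing and $1$-Lipschitz in $\gamma$: a fixed mesh of intervals then suffices with one $\eta$-optimal profile chosen per interval, so you do not need continuity of each individual $H_{2,k}$.
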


The proof of Theorem~\ref{thm:standard-composition-optimal} also makes explicit how calibration points lead to filter failure: when $H_{2,1},H_{2,2}\in\mathcal{H}_2$ cross at $\gamma_0$, the pointwise maximum $H_2^\uparrow=\max(H_{2,1},H_{2,2})$ satisfies
\[
    \left(H_1\otimes\max(H_{2,1},H_{2,2})\right)(\gamma_0)
    \;>\;
    \max\!\left(\left(H_1\otimes H_{2,1}\right)(\gamma_0),\,\left(H_1\otimes H_{2,2}\right)(\gamma_0)\right),
\]
and the extremizing selection rule is simply $y_1\mapsto\arg\max_{H_2\in\{H_{2,1},H_{2,2}\}}H_2\!\left(\gamma_0\tfrac{q_1(y_1)}{p_1(y_1)}\right)$. This is precisely the mechanism used in Appendix~\ref{appx:details-counterexample}, where $H_{2\times3}^\uparrow=\max(H_{2\times3,1},H_{2\times3,2})$. As shown in Figure~\ref{fig:H_B_vs_H_adapt_vs_H_crit}, $H_{\mathrm{adapt}}$ touches $H_1\otimes H_{2\times3}^\uparrow$ at $\gamma_0\approx 0.971$ but not everywhere, consistent with the pointwise nature of the tightness result.

\begin{figure}[H]
    \centering
    \includegraphics[width=\linewidth]{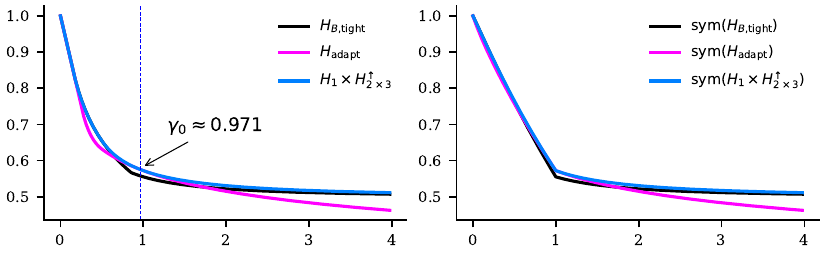}
    \caption{$H_1\otimes H_{2\times3}^\uparrow$ compared to $H_{\mathrm{adapt}}$ and $H_{B,\mathrm{tight}}$, illustrating Corollary~\ref{cor:counterexample-2step}. The adaptive composition reaches $H_1\otimes H_{2\times3}^\uparrow$ at the calibration point $\gamma_0$ but not globally.}
    \label{fig:H_B_vs_H_adapt_vs_H_crit}
\end{figure}

\section{Convergence of Privacy Profiles/Trade-off Functions}\label{appx:convergences}
\subsection{Existence of Convergent Sequence in Blackwell Chains}
We can impose the structural assumption of Blackwell chain on privacy profiles, in a similar way to trade-off functions.
\begin{definition}
    A family \(\H\) of privacy profiles is a Blackwell chain if any pair \(H_1,H_2\in\H\) is Blackwell ordered, i.e.~\(H_1\leq H_2\) or \(H_1\geq H_2\).
\end{definition}
By Theorem \ref{thm:order-f-H}, \(\H\) is a Blackwell chain if and only if its corresponding family \(\F\) of trade-off functions is a Blackwell chain.

\begin{lemma}\label{lem:ineq-expanding-coverage-resolution}
    In a Blackwell chain \(\H\) of privacy profiles with the point-wise supremum \(H^\uparrow\), there exists an everywhere increasing sequence \(\left(H_n\right)_{n\in\Na}\) such that for all \(n\in\Na\),
    \begin{align}\label{eq:ineq-expanding-coverage-resolution}
        H^\uparrow\!\left(\frac{i}{2^n}\right) - \frac{1}{n} < H_n\!\left(\frac{i}{2^n}\right) \leq H^\uparrow\!\left(\frac{i}{2^n}\right)\,\text{for }i=1,2,\ldots,n2^n.
    \end{align}
\end{lemma}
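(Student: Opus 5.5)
Fix \(n\in\Na\) and let \(\gamma_i=i/2^n\) for \(i=1,\ldots,n2^n\). This is a finite grid. The plan is to find, for each \(n\), a single \(H\in\H\) that is within \(1/n\) of \(H^\uparrow\) at all grid points simultaneously. We then enforce monotonicity by taking running maxima along the chain.

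\textbf{Step 1 (one grid point).} For each \(i\), the definition of the pointwise supremum gives some \(H^{(i)}\in\H\) with
\begin{align*}
    H^{(i)}\!\left(\gamma_i\right) > H^\uparrow\!\left(\gamma_i\right) - \frac{1}{n}.
\end{align*}

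\textbf{Step 2 (all grid points at once).} The chain property makes the finitely many profiles \(H^{(1)},\ldots,H^{(n2^n)}\) totally ordered pointwise, so there is a largest one among them; call it \(\tilde H_n\). Then \(\tilde H_n\geq H^{(i)}\) everywhere for every \(i\). Hence \(\tilde H_n(\gamma_i)>H^\uparrow(\gamma_i)-1/n\) for all \(i\), and trivially \(\tilde H_n\leq H^\uparrow\).

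This is exactly where the Blackwell chain assumption is used. Without it, the pointwise maximum of the \(H^{(i)}\) is again a privacy profile but need not belong to \(\H\).

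\textbf{Step 3 (monotone sequence).} Define \(H_n\) to be the largest element among \(\tilde H_1,\ldots,\tilde H_n\), which exists again by the chain property. Then \(H_n\in\H\) and \(H_n\geq H_{n-1}\) everywhere, so the sequence is everywhere increasing. Since \(H_n\geq\tilde H_n\) and \(H_n\leq H^\uparrow\), inequality \eqref{eq:ineq-expanding-coverage-resolution} holds at every grid point.

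There is no genuine obstacle in this argument. The only delicate point is the one in Step 2: replacing a pointwise maximum by a maximal element of a finite totally ordered set. That is also why the lemma is stated for chains.
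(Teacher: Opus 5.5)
Your proposal is correct and matches the paper's proof: choose a near-maximizer for each grid point, take their maximum (which lies in \(\H\) by the chain property), then take running maxima over \(n\) to make the sequence increasing. Your Step 3 also gets the level-\(n\) inequality directly from \(\tilde H_n\leq H_n\leq H^\uparrow\), so you correctly do not need the paper's remark about nested grids.
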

Lemma \ref{lem:ineq-expanding-coverage-resolution} enables inequalities at finitely many points covering regions \(\left[0,n\right]\) of increasing sizes \(n\) and increasing \mbox{resolutions \(\frac{1}{2^n}\)}. As \(\cup_{n\in\Na}\left[0,n\right]=\left[0,\infty\right)\) and \(\frac{1}{2^n}\searrow0\), these inequalities serve to connect point-wise properties to set-wise properties.
\begin{proof}
    Since \(H^\uparrow\!\left(\gamma\right) = \sup_{H\in\H} H\!\left(\gamma\right)\), for all \(n\in\Na\), for all \(i=1,2,\ldots,n2^n\), there exists \(H_{ni}\in\H\) such that at \(\gamma=\frac{i}{2^n}\),
    \begin{align*}
        H^\uparrow\!\left(\frac{i}{2^n}\right) - \frac{1}{n} < H_{ni}\!\left(\frac{i}{2^n}\right) \leq H^\uparrow\!\left(\frac{i}{2^n}\right).
    \end{align*}
    Let \(\hat{H}_n = \max_{1\leq i \leq n2^n} H_{ni}\), then \(\hat{H}_n\) satisfies the required inequality \eqref{eq:ineq-expanding-coverage-resolution}, and \(\hat{H}_n\) is in \(\H\) since \(\H\) is a Blackwell chain. Finally, since \(\left(\frac{i}{2^n}\right)_{i=1}^{n2^n} \subset \left(\frac{i}{2^{n+1}}\right)_{i=1}^{\left(n+1\right)2^{n+1}}\), we choose \(H_n = \max_{k\leq n} \hat{H}_k\) so that \(\left(H_n\right)_{n\in\Na}\) is increasing while maintaining inequality \eqref{eq:ineq-expanding-coverage-resolution}.
\end{proof}
\begin{lemma}\label{lem:chain->inc_seq_conv}
    In a Blackwell chain of privacy profiles \(\H\) with a smallest upper bound \(H^\uparrow=\sup_{H\in\H}H\), there exists an everywhere increasing sequence \(\left(H_n\right)_{n\in\Na}\) such that \(H_n\nearrow_{n\to\infty}H^\uparrow\) point-wise.
\end{lemma}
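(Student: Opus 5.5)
I would take the increasing sequence \(\left(H_n\right)_{n\in\Na}\) produced by Lemma \ref{lem:ineq-expanding-coverage-resolution} and show that it converges point-wise to \(H^\uparrow\) on all of \(\left[0,\infty\right)\). Monotonicity in \(n\) and the bound \(H_n\leq H^\uparrow\) are already built in. So the only task is to upgrade the approximation at the dyadic grid points \(\frac{i}{2^n}\), \(i\leq n2^n\), to point-wise convergence everywhere. The tools are the structural facts from Theorem \ref{thm:H}: every \(H_n\) and \(H^\uparrow\) is decreasing, convex, bounded in \(\left[0,1\right]\), and satisfies \(H\!\left(0\right)=1\).

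\textbf{Convergence on the dyadic rationals.} Fix a dyadic point \(\gamma=\frac{i}{2^m}>0\). For every \(n\geq m\) with \(n\geq\gamma\), \(\gamma\) lies on the \(n\)-th grid, so \(H^\uparrow\!\left(\gamma\right)-\frac{1}{n}<H_n\!\left(\gamma\right)\leq H^\uparrow\!\left(\gamma\right)\). Hence \(H_n\!\left(\gamma\right)\to H^\uparrow\!\left(\gamma\right)\). At \(\gamma=0\) all profiles equal \(1\), so convergence is trivial there.

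\textbf{Extension to all \(\gamma>0\).} Let \(\lim_n H_n=:H_\infty\); this limit exists by monotonicity. \(H_\infty\) is convex and decreasing, being a point-wise limit of such functions, and it agrees with \(H^\uparrow\) on the dense set of dyadic points. The function \(H^\uparrow\) is itself a privacy profile, so it is convex and finite on \(\left[0,\infty\right)\), hence continuous on \(\left(0,\infty\right)\). For arbitrary \(\gamma>0\), I would sandwich with dyadic points \(a<\gamma<b\). Monotonicity (decreasing in \(\gamma\)) of each \(H_n\) gives
\[
H^\uparrow\!\left(b\right)=\lim_n H_n\!\left(b\right)\leq\liminf_n H_n\!\left(\gamma\right)\leq\limsup_n H_n\!\left(\gamma\right)\leq H^\uparrow\!\left(\gamma\right).
\]
Letting \(b\searrow\gamma\) along dyadics and using continuity of \(H^\uparrow\) at \(\gamma\) yields \(\lim_n H_n\!\left(\gamma\right)=H^\uparrow\!\left(\gamma\right)\). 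The point \(a\) is not even needed, since the upper bound \(H_n\leq H^\uparrow\) is free.

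\textbf{Main obstacle.} The delicate point is the continuity of \(H^\uparrow\), which the sandwich argument needs from the right at every \(\gamma>0\). Convexity of a finite function on \(\left[0,\infty\right)\) gives continuity on the open interval \(\left(0,\infty\right)\), so interior points are fine. The boundary \(\gamma=0\) is handled separately by \(H\!\left(0\right)=1\) for all profiles. I would therefore check carefully that \(H^\uparrow\), as a supremum of profiles, is again a profile in the sense of Theorem \ref{thm:H}; the paper states this just before Definition \ref{def:tight-H}. Once that is granted, the rest is the elementary monotone sandwich.
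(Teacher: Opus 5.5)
Your proposal is correct and follows essentially the same route as the paper. Both arguments take the increasing sequence from Lemma \ref{lem:ineq-expanding-coverage-resolution}, get convergence at dyadic grid points, and extend to arbitrary \(\gamma>0\) by approximating from the right with dyadic points, using that each \(H_n\) is decreasing and \(H^\uparrow\) is continuous. The only cosmetic difference is that the paper uses a single diagonal sequence \(\gamma_n\searrow\gamma\) with each \(\gamma_n\) on the \(n\)-th grid, whereas you fix a dyadic \(b>\gamma\), let \(n\to\infty\), and then send \(b\searrow\gamma\), which is slightly cleaner.
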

\begin{proof}
    Apply Lemma \ref{lem:ineq-expanding-coverage-resolution} to find an everywhere increasing sequence \(H_n\) such that for all \(n\in\Na\), for \(i=1,2,\ldots,n2^n\),
    \begin{align*}
        H^\uparrow\!\left(\frac{i}{2^n}\right) - \frac{1}{n} < H_n\!\left(\frac{i}{2^n}\right) \leq H^\uparrow\!\left(\frac{i}{2^n}\right).
    \end{align*}
    We show that this sequence converges point-wise to \(H^\uparrow\). Fix \(\gamma>0\), then \(n\geq \gamma\) once \(n\) is big enough, say, \(n\geq N\). If \(\gamma\) has a finite binary representation, then once \(n\) is big enough,
    \begin{align*}
        H^\uparrow\!\left(\gamma\right) \geq H_n\!\left(\gamma\right) > H^\uparrow\!\left(\gamma\right) - \frac{1}{n} \nearrow_{n\to\infty} H^\uparrow\!\left(\gamma\right).
    \end{align*}
    Otherwise if \(\gamma\) has an infinite binary representation, let \(\left(\gamma_n\right)_{n\geq N}\) be a decreasing sequence that converges to \(\gamma\), where each \(\gamma_n\) has a finite binary representation and \(H_n\!\left(\gamma_n\right)>H^\uparrow\!\left(\gamma_n\right)-\frac{1}{n}\). Then
    \begin{align*}
        H^\uparrow\!\left(\gamma\right) \geq H_n\!\left(\gamma\right) \geq  H_n\!\left(x_n\right) > H^\uparrow\!\left(\gamma_n\right) - \frac{1}{n} \nearrow_{n\to\infty} H^\uparrow\!\left(\gamma\right).
    \end{align*}
    Here, we used the fact that \(H^\uparrow\!\left(\gamma_n\right)\nearrow_{n\to\infty} H^\uparrow\!\left(\gamma\right)\), which is due to the continuity of \(H^\uparrow\).
\end{proof}

\subsection{Unifying Various Notions of Convergence}
In addition to the point-wise convergence shown in Lemma \ref{lem:chain->inc_seq_conv}, we discuss other notions of convergence.
\begin{enumerate}
    \item Uniform convergence \(\sup_{\gamma\geq0} \left|H_n\!\left(\gamma\right) - H\!\left(\gamma\right)\right| \xrightarrow[]{n\to\infty} 0\), much stronger than point-wise convergence.
    \item Uniform convergence over any compact set: \(\sup_{\gamma\in\left[0,\gamma_{\max}\right]} \left|H_n\!\left(\gamma\right) - H\!\left(\gamma\right)\right| \xrightarrow[]{n\to\infty} 0\) for all \(\gamma_{\max}>0\).
    \item Convergence in the symmetrized \(\Delta\)-divergence introduced by \citet{kaissis_beyond_2024}:
    \begin{align*}
        \sup_{\gamma\geq0} \frac{\left|H_n\!\left(\gamma\right) - H\!\left(\gamma\right)\right|}{1+\gamma} \xrightarrow[]{n\to\infty} 0
    \end{align*}
\end{enumerate}
\begin{proposition}\label{prop:conv->conv}
    Uniform convergence implies convergence in \(\Delta\)-divergence.
\end{proposition}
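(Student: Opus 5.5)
The statement follows from a single pointwise bound. Let \(\left(H_n\right)_{n\in\Na}\) and \(H\) be privacy profiles with \(\sup_{\gamma\geq0}\left|H_n\!\left(\gamma\right)-H\!\left(\gamma\right)\right| \to 0\) as \(n\to\infty\). For every \(\gamma\geq0\) we have \(1+\gamma\geq1\), and therefore
\begin{align*}
    \frac{\left|H_n\!\left(\gamma\right)-H\!\left(\gamma\right)\right|}{1+\gamma} \leq \left|H_n\!\left(\gamma\right)-H\!\left(\gamma\right)\right|.
\end{align*}
The right-hand side is at most \(\sup_{\gamma'\geq0}\left|H_n\!\left(\gamma'\right)-H\!\left(\gamma'\right)\right|\). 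Taking the supremum over \(\gamma\geq0\) on the left then gives
\begin{align*}
    \sup_{\gamma\geq0}\frac{\left|H_n\!\left(\gamma\right)-H\!\left(\gamma\right)\right|}{1+\gamma} \leq \sup_{\gamma\geq0}\left|H_n\!\left(\gamma\right)-H\!\left(\gamma\right)\right|.
\end{align*}
The right-hand side tends to zero by assumption. Since the left-hand side is nonnegative, it is squeezed to zero, which is exactly convergence in the symmetrized \(\Delta\)-divergence as defined in the list above.

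The argument uses no structural property of privacy profiles from Theorem~\ref{thm:H}; it only uses that the weight \(\left(1+\gamma\right)^{-1}\) lies in \(\left(0,1\right]\). There is no real obstacle. The only point to check is that the \(\Delta\)-divergence is taken in the weighted-supremum form given in item 3, rather than in Kaissis et al.'s original trade-off-function formulation.

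If the result were needed for trade-off functions, one would first pass to privacy profiles through the bijection of Theorem~\ref{thm:f<->H}. That translation belongs to the later equivalence result, Theorem~\ref{thm:unifying-fconv}, and is not needed for this proposition as stated.
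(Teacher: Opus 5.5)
Your proof is correct and is essentially the same as the paper's. The paper likewise bounds the weighted supremum \(\sup_{\gamma\geq0}\frac{|H_n(\gamma)-H(\gamma)|}{1+\gamma}\) by the \(\mathcal{L}^\infty\) distance \(\sup_{\gamma\geq0}|H_n(\gamma)-H(\gamma)|\) using \(1+\gamma\geq1\), and that distance tends to zero by assumption.
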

\begin{proof}
    The \(\Delta\)-divergence is smaller or equal to the \(\L^\infty\) norm,
    \begin{align*}
        \sup_{\gamma\geq0} \frac{\left|H\!\left(\gamma\right) - H_n\!\left(\gamma\right)\right|}{1+\gamma} \leq \sup_{\gamma\geq0} \left|H\!\left(\gamma\right) - H_n\!\left(\gamma\right)\right|.
    \end{align*}
\end{proof}
The following example shows that the implication in Proposition \ref{prop:conv->conv} is strictly one-sided.
\begin{example}\label{ex:no-uniform-conv}
    For \(n\in\Na\), let
    \begin{align*}
        H_n\!\left(\gamma\right) = \begin{cases}
                1 - \frac{\gamma}{n} & \text{if }\gamma\leq n, \\
                0 & \text{otherwise}.
            \end{cases}
    \end{align*}
    Then \(H_n\) converges to \(H=1\) point-wise but not uniformly. A closer examination reveals that for all \(n\), \(\frac{H\left(\gamma\right)-H_n\left(\gamma\right)}{1+\gamma}\) reaches its maximum at \(\gamma=n\). Therefore
    \begin{align*}
        \max_{\gamma\geq0} \frac{H\!\left(\gamma\right)-H_n\!\left(\gamma\right)}{1+\gamma} = \frac{1}{1+n} \xrightarrow[]{n\to\infty} 0,
    \end{align*}
    i.e.~\(H_n\) converges to \(H\) in \(\Delta\)-divergence.
\end{example}
\begin{theorem}\label{thm:unifying-Hconv}
    The following notions of convergence are equivalent for a sequence of privacy profiles \(H_n\) that converges to a privacy profile \(H\).
    \begin{enumerate}
        \item Convergence in \(\Delta\)-divergence.
        \item Point-wise convergence.
        \item Uniform convergence over any compact set.
    \end{enumerate}
\end{theorem}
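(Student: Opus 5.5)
We prove the cycle of implications (1)\(\Rightarrow\)(2)\(\Rightarrow\)(3)\(\Rightarrow\)(1). Throughout we use only the structure from Theorem~\ref{thm:H}: each \(H_n\) and \(H\) is decreasing and convex on \([0,\infty)\), satisfies \(H(0)=1\) and \(\left[1-\gamma\right]_+\leq H(\gamma)\leq 1\), and is \(1\)-Lipschitz. The Lipschitz bound holds because a convex decreasing function with values in \([0,1]\) and \(H(0)=1\), \(H\geq 1-\gamma\), has slopes in \([-1,0]\).

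\textbf{(1)\(\Rightarrow\)(2).} This step is immediate. For fixed \(\gamma\) we have \(\left|H_n(\gamma)-H(\gamma)\right|\leq(1+\gamma)\sup_{\gamma'}\frac{\left|H_n(\gamma')-H(\gamma')\right|}{1+\gamma'}\to0\).

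\textbf{(2)\(\Rightarrow\)(3).} This is the step where some care is needed. Pointwise convergence of functions that are uniformly \(1\)-Lipschitz on the compact set \([0,\gamma_{\max}]\) upgrades to uniform convergence; this is the standard equicontinuity argument of Arzelà--Ascoli type. Concretely, we fix \(\epsilon>0\) and choose a grid \(0=\gamma_0<\dots<\gamma_K=\gamma_{\max}\) with mesh below \(\epsilon\). We then take \(N\) so that \(\left|H_n(\gamma_k)-H(\gamma_k)\right|<\epsilon\) for all \(k\) and all \(n\geq N\). For arbitrary \(\gamma\) with nearest grid point \(\gamma_k\), the triangle inequality and the common Lipschitz constant give \(\left|H_n(\gamma)-H(\gamma)\right|\leq 3\epsilon\). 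An alternative route, avoiding Lipschitz bounds, uses monotonicity alone: since \(H\) is continuous, sandwich \(H_n\) between its values at neighbouring grid points, in the spirit of Dini/Pólya.

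\textbf{(3)\(\Rightarrow\)(1).} This is the main obstacle, because the supremum in the \(\Delta\)-divergence runs over the unbounded range \([0,\infty)\). The strategy is to split at a large cutoff \(\Gamma\). On \([0,\Gamma]\) we bound \(\frac{\left|H_n-H\right|}{1+\gamma}\leq\sup_{[0,\Gamma]}\left|H_n-H\right|\), which tends to \(0\) by (3). On \((\Gamma,\infty)\) we use only that both profiles take values in \([0,1]\), which gives \(\frac{\left|H_n(\gamma)-H(\gamma)\right|}{1+\gamma}\leq\frac{1}{1+\Gamma}\). Given \(\epsilon\), we first choose \(\Gamma\) with \(\frac{1}{1+\Gamma}<\epsilon\) and then \(n\) large enough that the compact-set term is also below \(\epsilon\). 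This yields convergence in \(\Delta\)-divergence. Example~\ref{ex:no-uniform-conv} shows that the weight \(\frac{1}{1+\gamma}\) is what makes this tail bound work, whereas uniform convergence on all of \([0,\infty)\) can fail.
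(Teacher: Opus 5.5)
Your proof is correct and follows the paper's own cycle: (1)\(\Rightarrow\)(2) via the factor \((1+\gamma)\), and (3)\(\Rightarrow\)(1) by splitting at a cutoff where the tail term is at most \(\frac{1}{1+\Gamma}\) (the paper takes \(\gamma_{\max}=\frac{1}{\epsilon}-1\)). The only variation is in (2)\(\Rightarrow\)(3): the paper simply invokes P\'{o}lya's theorem (monotone \(H_n\), continuous limit \(H\)), which is your alternative route, while your primary argument uses equicontinuity from the uniform \(1\)-Lipschitz bound. That bound holds because convexity, \(H(0)=1\) and \(H\geq 1-\gamma\) force all slopes into \([-1,0]\), so your argument is an equally valid, self-contained substitute.
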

\begin{proof}
    \textit{1\(\implies\)2}. At any order \(\gamma_0>0\),
    \begin{align*}
        \left|H_n\!\left(\gamma_0\right) - H\!\left(\gamma_0\right)\right| = \left(1+\gamma_0\right)\frac{\left|H_n\!\left(\gamma_0\right) - H\!\left(\gamma_0\right)\right|}{1+\gamma_0} \leq \left(1+\gamma_0\right)\sup_{\gamma\geq0}\frac{\left|H_n\!\left(\gamma\right) - H\!\left(\gamma\right)\right|}{1+\gamma} \xrightarrow[]{n\to\infty}0.
    \end{align*}
    \textit{2\(\implies\)3}. Since \(H_n\) is decreasing for all \(n\) and \(H\) is continuous, uniform convergence over any compact set follows from Pólya's theorem, which is proven e.g.~as Lemma A.7 by \citet{dong_gaussian_2022}.
    
    \textit{3\(\implies\)1}. Fix \(\epsilon>0\). We find \(N\) such that for all \(n>N\),
    \begin{align*}
        \sup_{\gamma\geq0} \frac{\left|H_n\!\left(\gamma\right) - H\!\left(\gamma\right)\right|}{1+\gamma} < \epsilon.
    \end{align*}
    Choose \(\gamma_{\max}=\frac{1}{\epsilon}-1\) so that \(\frac{1}{1+\gamma_{\max}}=\epsilon\). Note that \(H_n\) converges uniformly to \(H\) over \(\gamma\in\left[0,\gamma_{\max}\right]\), which implies that there exists \(N\) such that for all \(n>N\),
    \begin{align*}
        \left|H_n\!\left(\gamma\right) - H\!\left(\gamma\right)\right|<\epsilon\textup{ for all }\gamma\in\left[0,\gamma_{\max}\right].
    \end{align*}
    Therefore, for all \(n>N\),
    \begin{align*}
        \sup_{\gamma\geq0} \frac{\left|H_n\!\left(\gamma\right) - H\!\left(\gamma\right)\right|}{1+\gamma} &= \max\!\left[\sup_{\gamma\in\left[0,\gamma_{\max}\right]} \frac{\left|H_n\!\left(\gamma\right) - H\!\left(\gamma\right)\right|}{1+\gamma},\sup_{\gamma>\gamma_{\max}} \frac{\left|H_n\!\left(\gamma\right) - H\!\left(\gamma\right)\right|}{1+\gamma}\right] \\
        &\leq \max\!\left(\epsilon,\frac{1}{1+\gamma_{\max}}\right) = \epsilon.
    \end{align*}
\end{proof}
\begin{corollary}\label{cor:chain->inc_seq_conv_unified}
    In a Blackwell chain \(\H\) of privacy profiles with a smallest upper bound \(H^\uparrow=\sup_{H\in\H}H\), there exists an everywhere increasing sequence \(\left(H_n\right)_{n\in\Na}\) such that \(H_n\nearrow_{n\to\infty}H^\uparrow\) under any notion of convergence in Theorem \ref{thm:unifying-Hconv}.
\end{corollary}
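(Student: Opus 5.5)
The corollary follows by combining Lemma~\ref{lem:chain->inc_seq_conv} with Theorem~\ref{thm:unifying-Hconv}; no new construction is needed. First I apply Lemma~\ref{lem:chain->inc_seq_conv} to the Blackwell chain \(\H\). This gives a sequence \(\left(H_n\right)_{n\in\Na}\subset\H\) that is everywhere increasing and satisfies \(H_n\nearrow H^\uparrow\) point-wise. The sequence is built by taking maxima of finitely many chain elements. These maxima stay in \(\H\) because \(\H\) is totally ordered, so the maximum of finitely many elements equals one of them.

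Next, to invoke Theorem~\ref{thm:unifying-Hconv}, I check that its hypotheses hold. The limit must be a privacy profile, and every \(H_n\) must be one. Each \(H_n\in\H\) is a privacy profile by assumption. For the limit, I use Theorem~\ref{thm:H}: a point-wise supremum of privacy profiles is again decreasing, convex, equal to \(1\) at \(\gamma=0\), and bounded below by \(\left[1-\gamma\right]_+\). So \(H^\uparrow\) is a privacy profile. It is continuous on \((0,\infty)\) because it is convex and finite there. At \(\gamma=0\), continuity follows from \(\left[1-\gamma\right]_+\le H^\uparrow\le 1\). This continuity is exactly what the step 2\(\implies\)3 of Theorem~\ref{thm:unifying-Hconv} needs (the Pólya argument).

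With the hypotheses verified, Theorem~\ref{thm:unifying-Hconv} turns the point-wise convergence \(H_n\to H^\uparrow\) into uniform convergence on compact sets and into convergence in \(\Delta\)-divergence. The same increasing sequence works for all three notions, so the corollary follows.

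The only delicate point is the hypothesis check, namely that \(H^\uparrow\) is a genuine, continuous privacy profile. Convexity preserved under suprema, together with the sandwich bound at \(0\), handles it.
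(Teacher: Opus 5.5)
Your proposal is correct and follows the paper's proof: the paper also just combines Lemma~\ref{lem:chain->inc_seq_conv} with Theorem~\ref{thm:unifying-Hconv}. Your extra check that \(H^\uparrow\) is a continuous privacy profile, via Theorem~\ref{thm:H} and the sandwich \(\left[1-\gamma\right]_+\le H^\uparrow\le 1\), spells out a hypothesis the paper leaves implicit and needs for the Pólya step.
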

\begin{proof}
    Combine Lemma \ref{lem:chain->inc_seq_conv} with Theorem \ref{thm:unifying-Hconv}.
\end{proof}

\begin{theorem}\label{thm:unifying-fconv}
    The following notions of convergence are equivalent for a sequence of trade-off functions \(f_n\) that converges to a trade-off function \(f\).
    \begin{enumerate}
        \item Convergence in \(\Delta\)-divergence \cite{kaissis_beyond_2024}:
        \begin{align*}
            \min\!\left\{\Delta\geq0:\forall\alpha\in\left[0,1-\Delta\right]\!,\ 
            \begin{gathered}
                f_n\!\left(\alpha+\Delta\right)-\Delta\leq f\!\left(\alpha\right) \\
                 f\!\left(\alpha+\Delta\right)-\Delta\leq f_n\!\left(\alpha\right)
            \end{gathered}\right\} \xrightarrow[]{n\to\infty}0.
        \end{align*}
        \item Point-wise convergence over \(\left(0,1\right]\).
        \item Uniform convergence over \(\left[\alpha_{\min},1\right]\) for all \(\alpha_{\min}\in\left(0,1\right]\). 
    \end{enumerate}
\end{theorem}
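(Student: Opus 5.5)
The plan is to argue directly with trade-off functions rather than passing through Theorem~\ref{thm:unifying-Hconv}. Two properties of trade-off functions do all the work: they are decreasing and continuous on \([0,1]\). The key observation is that the shift by \(\Delta\) in the \(\Delta\)-divergence is exactly what absorbs the possible non-uniformity at \(\alpha=0\). For each \(n\), write \(\Delta_n\) for the \(\Delta\)-divergence between \(f_n\) and \(f\), i.e.\ the minimum in item 1. The proof is the cycle 1\(\implies\)2\(\implies\)3\(\implies\)1.

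\emph{1\(\implies\)2.} Fix \(\alpha\in(0,1]\) and suppose \(\Delta_n\to0\).
\begin{itemize}
\item Lower bound: the second defining inequality, applied at \(\alpha\), gives \(f_n(\alpha)\geq f(\alpha+\Delta_n)-\Delta_n\). When \(\alpha+\Delta_n>1\), use \(f_n\geq0\) together with \(f(1)=0\) and continuity of \(f\) instead.
\item Upper bound: once \(\Delta_n\leq\alpha\), the first defining inequality applied at the point \(\alpha-\Delta_n\in[0,1-\Delta_n]\) gives \(f_n(\alpha)\leq f(\alpha-\Delta_n)+\Delta_n\).
\end{itemize}
By continuity of \(f\), both bounds tend to \(f(\alpha)\). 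The requirement \(\alpha>0\) is exactly what makes the upper-bound step available eventually, which explains why item 2 is stated on \((0,1]\) only.

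\emph{2\(\implies\)3.} This is a Pólya-type argument, as in Lemma A.7 of \citet{dong_gaussian_2022} already used for Theorem~\ref{thm:unifying-Hconv}. Fix \(\alpha_{\min}\in(0,1]\) and \(\epsilon>0\).
\begin{itemize}
\item Since \(f\) is uniformly continuous on \([\alpha_{\min},1]\), choose a grid \(\alpha_{\min}=a_0<a_1<\dots<a_K=1\) with \(f(a_{k-1})-f(a_k)<\epsilon\) for every \(k\).
\item By pointwise convergence at the finitely many grid points, there is \(N\) with \(|f_n(a_k)-f(a_k)|<\epsilon\) for all \(k\) and all \(n>N\).
\item For \(\alpha\in[a_{k-1},a_k]\), monotonicity of both \(f_n\) and \(f\) gives
\[
f_n(\alpha)-f(\alpha)\leq f_n(a_{k-1})-f(a_k)<2\epsilon,
\qquad
f(\alpha)-f_n(\alpha)\leq f(a_{k-1})-f_n(a_k)<2\epsilon .
\]
\end{itemize}
This yields uniform convergence on \([\alpha_{\min},1]\).

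\emph{3\(\implies\)1.} Fix \(\epsilon\in(0,1)\) and apply item 3 with \(\alpha_{\min}=\epsilon\) to get \(N\) such that \(\sup_{[\epsilon,1]}|f_n-f|<\epsilon\) for all \(n>N\). The claim is that \(\Delta=\epsilon\) is then feasible in the minimum defining item 1, so that \(\Delta_n\leq\epsilon\). For any \(\alpha\in[0,1-\epsilon]\), the point \(\alpha+\epsilon\) lies in \([\epsilon,1]\), so
\[
f_n(\alpha+\epsilon)-\epsilon< f(\alpha+\epsilon)\leq f(\alpha),
\]
where the last step uses that \(f\) is decreasing. The symmetric inequality \(f(\alpha+\epsilon)-\epsilon\leq f_n(\alpha)\) follows in the same way with the roles of \(f\) and \(f_n\) exchanged. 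Hence \(\Delta_n\leq\epsilon\) for all \(n>N\).

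The only genuinely delicate point is the behaviour near \(\alpha=0\). There, pointwise convergence can fail, since \(f_n(0)\) may stay away from \(f(0)\) (compare Example~\ref{ex:no-uniform-conv} on the privacy-profile side). The horizontal shift in the \(\Delta\)-divergence handles this in both directions. In 3\(\implies\)1, it moves every evaluation of \(f_n\) into the region \([\epsilon,1]\) where uniform control holds. In 1\(\implies\)2, it prevents us from concluding anything at \(\alpha=0\), which is consistent with item 2 excluding that point. The remaining care is bookkeeping at the endpoints: \(\alpha+\Delta_n>1\), handled via \(f(1)=0\), and existence of the minimum in item 1, which follows from continuity, so that any feasible \(\Delta\) upper-bounds it.
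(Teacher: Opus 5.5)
Your proof is correct and takes essentially the same route as the paper. Both run the cycle \(1\implies2\implies3\implies1\): the two shifted inequalities at \(\alpha\) and \(\alpha-\Delta_n\) give \(1\implies2\), the monotone-grid (P\'olya) argument gives \(2\implies3\) (you write it out where the paper cites \citet{dong_gaussian_2022}), and checking that \(\Delta=\epsilon\) is feasible gives \(3\implies1\). Your explicit treatment of \(\alpha=1\) (where \(f_n(1)=f(1)=0\)) and of why the minimum is attained is slightly more careful than the paper, whose first step only treats \(\alpha\in(0,1)\).
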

\begin{proof}
    \textit{1\(\implies\)2}. \(f_n\) converges to \(f\) in \(\Delta\)-divergence if and only if there exists a decreasing sequence of positive numbers \(\Delta_n>0\) such that \(\Delta_n\searrow^{n\to\infty}0\) and for all \(n\),
    \begin{align*}
        f\!\left(\alpha+\Delta_n\right) - \Delta_n \leq f_n\!\left(\alpha\right)\,&\textup{for all }\alpha\in\left[0,1-\Delta_n\right], \\
        f_n\!\left(\alpha\right) \leq f\!\left(\alpha-\Delta_n\right) + \Delta_n \ &\textup{for all }\alpha\in\left[\Delta_n,1\right].
    \end{align*}
    Fix \(\alpha\in\left(0,1\right)\). Then once \(n\) is large enough, say, \(n>N\), \(\Delta_n<\alpha<1-\Delta_n\) since \(\Delta_n\searrow^{n\to\infty}0\). Therefore, for all \(n>N\),
    \begin{align*}
        f\!\left(\alpha+\Delta_n\right) - \Delta_n \leq f_n\!\left(\alpha\right) \leq f\!\left(\alpha-\Delta_n\right) + \Delta_n.
    \end{align*}
    Since \(\lim_{n\to\infty}f\!\left(\alpha+\Delta_n\right) - \Delta_n=\lim_{n\to\infty}f\!\left(\alpha-\Delta_n\right) + \Delta_n=f\!\left(\alpha\right)\) due to the continuity of \(f\), \mbox{\(\lim_{n\to\infty}f_n\!\left(\alpha\right) = f\!\left(\alpha\right)\)}.

    \textit{2\(\implies\)3}. Since \(f_n\) is decreasing for all \(n\) and \(f\) is continuous, uniform convergence over \(\left[\alpha_{\min},1\right]\) for all \(\alpha_{\min}\in\left(0,1\right]\) follows from Pólya's theorem, which is proven e.g.~as Lemma A.7 by \citet{dong_gaussian_2022}.

    \textit{3\(\implies\)1}. Fix \(\Delta>0\). Since \(f_n\) converges to \(f\) uniformly over \(\left[\Delta,1\right]\), there exists some \(N\) such that for all \(n>N\),
    \begin{align*}
        -\Delta\leq f_n\!\left(\alpha\right) - f\!\left(\alpha\right) \leq\Delta\ \textup{for all }\alpha\in\left[\Delta,1\right].
    \end{align*}
    The first and second inequalities imply, respectively,
    \begin{align*}
        f_n\!\left(\alpha\right) &\leq f\!\left(\alpha\right)+\Delta \leq f\!\left(\alpha-\Delta\right)+\Delta, \\
        f\!\left(\alpha\right) &\leq f_n\!\left(\alpha\right)+\Delta \leq f_n\!\left(\alpha-\Delta\right)+\Delta
    \end{align*}
    for all \(\alpha\in\left[\Delta,1\right]\). Since this holds for all \(\Delta>0\), \(f_n\) converges to \(f\) in \(\Delta\)-divergence.
\end{proof}
\begin{corollary}
    In a Blackwell chain \(\F\) of trade-off functions with a greatest lower bound \(f^\downarrow=\textup{ce}\inf_{f\in\F}f\), there exists an everywhere decreasing sequence \(\left(f_n\right)_{n\in\Na}\) such that \(f_n\searrow^{n\to\infty}f^\downarrow\) under any notion of convergence in Theorem \ref{thm:unifying-fconv}.
\end{corollary}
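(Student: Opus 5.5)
The plan is to pass to privacy profiles, use the corresponding statement there, and then transfer back. Let $\F$ be a Blackwell chain of trade-off functions, and let $\H$ be the family of privacy profiles obtained through the bijection of Theorem \ref{thm:f<->H}. By Theorem \ref{thm:order-f-H}, $\H$ is a Blackwell chain. Its pointwise supremum $H^\uparrow$ is a privacy profile by Theorem \ref{thm:H}. As explained after Theorem \ref{thm:order-f-H}, $H^\uparrow$ corresponds exactly to $f^\downarrow=\ceinf_{f\in\F}f$.

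Corollary \ref{cor:chain->inc_seq_conv_unified} gives an everywhere increasing sequence $H_n\in\H$ with $H_n\nearrow H^\uparrow$ pointwise. Let $f_n\in\F$ be the trade-off function corresponding to $H_n$. Theorem \ref{thm:order-f-H} turns "$H_n$ increasing" into "$f_n$ everywhere decreasing", and also gives $f_n\geq f^\downarrow$ for all $n$. So the monotone structure transfers for free. What remains is to show that the pointwise limit of $f_n$ on $(0,1]$ equals $f^\downarrow$. Once that is done, Theorem \ref{thm:unifying-fconv} upgrades pointwise convergence on $(0,1]$ to uniform convergence on every $[\alpha_{\min},1]$ and to convergence in $\Delta$-divergence.

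For the identification of the limit, set $g=\lim_n f_n=\inf_n f_n$, which exists pointwise because $f_n$ is decreasing and bounded below by $f^\downarrow$. Since $g$ is a decreasing limit of convex, decreasing functions bounded by $1-\alpha$, it is convex and decreasing. Hence it is continuous on $(0,1]$, and $g\geq f^\downarrow$. Replace $g$ at $\alpha=0$ by its right limit $g(0^+)$; call the result $\tilde g$. Then $\tilde g$ is a trade-off function by the characterization theorem. The modification matters only at $\alpha=0$, which is harmless because Theorem \ref{thm:unifying-fconv} only concerns $(0,1]$.

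For the reverse inequality, let $H_{\tilde g}$ be the profile of $\tilde g$. Since $\tilde g\leq f_n$ on $(0,1]$ (and hence everywhere by continuity of $f_n$), Theorem \ref{thm:order-f-H} gives $H_{\tilde g}\geq H_n$ for every $n$. Letting $n\to\infty$ yields $H_{\tilde g}\geq H^\uparrow$. Applying Theorem \ref{thm:order-f-H} again gives $\tilde g\leq f^\downarrow$. Therefore $\tilde g=f^\downarrow$, and so $f_n\to f^\downarrow$ pointwise on $(0,1]$.

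I expect the main obstacle to be exactly this limit identification, which amounts to checking that the profile/trade-off bijection commutes with monotone limits. A direct route through the convex-conjugate formula of Theorem \ref{thm:f<->H} would require interchanging $\inf_n$ and $\sup_\gamma$, which needs a compactness argument. The sandwich argument above avoids that interchange by using only monotonicity and order-preservation. The one genuinely delicate point is possible discontinuity of the limit at $\alpha=0$, which forces the modification to $\tilde g$ there.
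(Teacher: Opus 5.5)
Your proof is correct, but it handles the key transfer step differently from the paper. The paper also passes to privacy profiles and takes the increasing sequence $H_n\nearrow H^\uparrow$ from Corollary \ref{cor:chain->inc_seq_conv_unified}. It then uses the $\Delta$-divergence convergence of the $H_n$ and transfers it to $f_n\to f^\downarrow$ by citing the result of Kaissis et al.\ that $\Delta$-divergence can be converted between privacy profiles and trade-off functions. Theorem \ref{thm:unifying-fconv} then supplies the other two notions.

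You instead use only pointwise convergence of $H_n$ (Lemma \ref{lem:chain->inc_seq_conv} already suffices) and identify the limit of the monotone sequence $f_n$ by a sandwich argument:
\begin{itemize}
\item the limit $g$ is convex, hence continuous on $(0,1]$, and lies above $f^\downarrow$;
\item its right-limit modification $\tilde g$ at $\alpha=0$ is a trade-off function;
\item the profile of $\tilde g$ dominates every $H_n$, hence $H^\uparrow$, so Theorem \ref{thm:order-f-H} forces $\tilde g\le f^\downarrow$.
\end{itemize}

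What your route buys is self-containment. It needs only the order isomorphism and the characterization of trade-off functions, not an external conversion lemma. It also makes explicit the possible jump of the limit at $\alpha=0$ (e.g.\ $G_{\mu_n}$ with $\mu_n\to\infty$, whose pointwise limit is $1$ at $0$ and $0$ on $(0,1]$). The paper's route is shorter and yields $\Delta$-convergence directly, but its correctness rests entirely on the cited conversion.
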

\begin{proof}
    Denote \(\H\) the corresponding family of privacy profiles obtained by converting the trade-off functions in \(\F\) using Theorem \ref{thm:f<->H}. Since \(\F\) is a Blackwell chain, \(\H\) is also a Blackwell chain by Theorem \ref{thm:order-f-H}. By Corollary \ref{cor:chain->inc_seq_conv_unified}, there exists an everywhere increasing sequence \(H_n\) that converges to \(H^\uparrow=\sup_{H\in\H}H\) in \(\Delta\)-divergence. 
    
    By Theorem \ref{thm:order-f-H}, the sequence of trade-off functions \(f_n\) that corresponds to \(H_n\) is everywhere decreasing; and furthermore, the smallest upper bound \(H^\uparrow\) of privacy profiles in \(\H\) corresponds to the greatest lower bound \(f^\downarrow\) of trade-off functions in \(\F\). \citet{kaissis_beyond_2024} showed that \(\Delta\)-divergence can be converted between privacy profiles and trade-off functions, and hence \(f_n\) converges to \(f^\downarrow\) in \(\Delta\)-divergence. By Theorem \ref{thm:unifying-fconv}, \(f_n\) converges to \(f^\downarrow\) under all three notions of convergence.
\end{proof}

\section{Proofs for \(f\)-DP Filter Validity via Blackwell Chains}\label{appx:fDP-filter}
\subsection{Blackwell Chain Property Implies \(f\)-DP Filter Validity}
We first restrict ourselves to the simplest case with composition length \(T=2\) as a fundamental result before extending it to the general case of fully adaptive composition of up to \(T\geq3\). The following Proposition \ref{prop:to-use-f2down} is a direct consequence of Theorem \ref{thm:composition_fDP}.
\begin{proposition}\label{prop:to-use-f2down}
    Let \(\M_1\) be \(f_1\)-DP, and let the adaptively chosen \(\M_2\) given output \(y_1\) from \(\M_1\) be \(f_2^{\left(y_1\right)}\)-DP. If there exists a trade-off function \(f_2^\downarrow\) such that
    \begin{enumerate}
        \item\label{propitem:lowerbound} \(f_2^{\left(y_1\right)}\geq f_2^\downarrow\) for all \(y_1\in\O_1\),
        \item\label{propitem:lowerbound-satisfies-filter} \(f_1\otimes f_2^{\downarrow}\geq f_B\),
    \end{enumerate}
    then the composition \(\left(\M_1,\M_2\right)\) is \(f_B\)-DP.
\end{proposition}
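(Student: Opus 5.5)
The plan is to reduce the adaptive composition to the standard composition theorem with the fixed second-step guarantee \(f_2^\downarrow\), and then use monotonicity of the tensor product. I will work at the level of distribution pairs, or equivalently privacy profiles, since the adaptive product is only well defined there.

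\emph{Step 1: replace the adaptive second step by a fixed guarantee.} Fix a remove pair \(S,S^-\). By hypothesis \ref{propitem:lowerbound}, for every \(y_1\) the mechanism \(\M_2^{(y_1)}\) is \(f_2^{(y_1)}\)-DP. Since \(f_2^{(y_1)}\geq f_2^\downarrow\) pointwise, it is also \(f_2^\downarrow\)-DP, so \(\M_2\) is an adaptively chosen mechanism satisfying the fixed guarantee \(f_2^\downarrow\). Theorem \ref{thm:f_product_dominating} (equivalently Theorem \ref{thm:composition_fDP}) then gives that \((\M_1,\M_2)\) is \(f_1\otimes f_2^\downarrow\)-DP.

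To make this step self-contained I would verify it in privacy-profile form. Let \(H_2^\downarrow\) be the profile corresponding to \(f_2^\downarrow\). By Theorem \ref{thm:order-f-H}, hypothesis \ref{propitem:lowerbound} gives \(D_\gamma[P_2^{(y_1)}\midd Q_2^{(y_1)}]\leq H_2^\downarrow(\gamma)\) for every \(y_1\) and \(\gamma\). The Fubini decomposition from Definition \ref{def:H-product} writes the composed divergence as
\[
\int H_2^{(y_1)}\!\left(\gamma \tfrac{q_1}{p_1}\right) p_1 \,\d y_1 .
\]
Bounding the integrand pointwise by \(H_2^\downarrow\) yields at most \((H_1\otimes H_2^\downarrow)(\gamma)\). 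Here \(H_1\) is the profile of \((P_1,Q_1)=(\M_1(S),\M_1(S^-))\), which is at most the profile of \(f_1\). Property \ref{propitem:H-product-ordering} of Proposition \ref{prop:properties-H-product}, applied in the first argument using commutativity, lets us replace the pair's profile by that of \(f_1\).

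The same argument applies to the swapped direction \((S^-,S)\). One might worry that the per-pair profile differs from the fixed profile used in the product. This is harmless: the outer step compares a pair's own profile \(H_1\) to a larger one inside an ordinary, non-adaptive tensor product, and that comparison is well defined.

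\emph{Step 2: pass to the budget.} By hypothesis \ref{propitem:lowerbound-satisfies-filter}, \(f_1\otimes f_2^\downarrow\geq f_B\). Since \(f\)-DP is monotone in \(f\), the composition is \(f_B\)-DP.

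The only subtle point is the well-definedness caveat for \(\otimes_{\textup{adapt}}\) from the earlier Remark. Step 1 avoids it because the adaptivity is confined to the inner integrand and is dominated pointwise before any product is formed.
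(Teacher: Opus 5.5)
Your proof is correct and takes the paper's route: the paper treats this proposition as a direct consequence of Theorem~\ref{thm:composition_fDP}. Hypothesis 1 makes $\M_2$ satisfy the fixed guarantee $f_2^\downarrow$ for every $y_1$, the composition theorem then gives $f_1\otimes f_2^\downarrow$-DP, and hypothesis 2 gives $f_B$-DP. Your privacy-profile verification is an explicit and sound re-derivation of Theorem~\ref{thm:f_product_dominating} through the Fubini decomposition: it bounds the inner integrand pointwise, treats both directions, and uses monotonicity of the ordinary tensor product to pass from the pair's own profile to that of $f_1$.
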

Among the lower bounds of the family of options \(f_2^{\left(y_1\right)}\in\F\), i.e.~the trade-off functions that satisfy Condition \ref{propitem:lowerbound} above, it is enough to verify whether the greatest lower bound \(f_2^\downarrow=\textup{ce}\inf_{f_2\in\F_2}f_2\) satisfies Condition \ref{propitem:lowerbound-satisfies-filter}, as otherwise no other lower bound does.

Figure \ref{fig:H_B_vs_H_adapt_vs_H_crit} shows that calibration points cause Condition \ref{propitem:lowerbound-satisfies-filter} to fail. As shown in Appendix \ref{appx:counterexample}, a diverse family of mechanisms exhibiting a calibration point $\gamma_0$ in their privacy profiles may cause 
$\left(H_1\otimes H_2^{\uparrow}\right)\!\left(\gamma_0\right)>H_B\!\left(\gamma_0\right)$, or equivalently $\left(f_1\otimes f_2^{\downarrow}\right)\!\left(\alpha_0\right)<f_B\!\left(\alpha_0\right)$ at the corresponding $\alpha_0$. This suggests that the Blackwell chain structural assumption is necessary for both conditions in Proposition \ref{prop:to-use-f2down} to hold.
\begin{theorem}\label{thm:chain->nonfully-2step-valid}
    Let \(\M_1\) be \(f_1\)-DP, and let the adaptively chosen \(\M_2\) given output \(y_1\) from \(\M_1\) be \(f_2^{\left(y_1\right)}\)-DP. If
    \begin{enumerate}
        \item \(f_1\otimes f_2^{\left(y_1\right)}\geq f_B\) for all \(y_1\in\O_1\),
        \item \(\F = \left\{f_2^{\left(y_1\right)}\right\}_{\!y_1\in\O_1}\) is a Blackwell chain,
    \end{enumerate}
    then the greatest lower bound \(f_2^\downarrow=\textup{ce}\inf_{y_1\in\O_1}f_2^{\left(y_1\right)}\) satisfies \mbox{\(f_1\otimes f_2^\downarrow\geq f_B\)}. As a consequence, the composition \(\left(\M_1,\M_2\right)\) is \(f_B\)-DP by Proposition \ref{prop:to-use-f2down}.

    This can be equivalently formulated using privacy profiles. Let the privacy profile of \(\M_1\) be \(H_1\), and let the privacy profile of the adaptively chosen \(\M_2\) given output \(y_1\) from \(\M_1\) be \(H_2^{\left(y_1\right)}\). If
    \begin{enumerate}
        \item \(H_1\otimes H_2^{\left(y_1\right)}\leq H_B\) for all \(y_1\in\O_1\),
        \item \(\H = \left\{H_2^{\left(y_1\right)}\right\}_{y_1\in\O_1}\) is a Blackwell chain,
    \end{enumerate}
    then the smallest upper bound \(H_2^\uparrow=\sup_{y_1\in\O_1}H_2^{\left(y_1\right)}\) satisfies \(H_1\otimes H_2^\uparrow\leq H_B\). 
\end{theorem}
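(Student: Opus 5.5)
I will work in the privacy-profile formulation and transfer back to trade-off functions at the end via Theorems \ref{thm:f<->H} and \ref{thm:order-f-H}; these turn the pointwise supremum $H_2^\uparrow$ into $f_2^\downarrow=\ceinf f_2^{(y_1)}$ and reverse the inequalities. The plan is to approximate $H_2^\uparrow$ from below by members of the chain. Hypothesis 1 controls each member, monotone convergence passes the bound to the limit, and the integral representation of the tensor product in Definition \ref{def:H-product} makes the limiting step explicit.

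\textbf{Step 1: monotone approximation inside the chain.} Since $\H=\{H_2^{(y_1)}\}_{y_1\in\O_1}$ is a Blackwell chain, Lemma \ref{lem:chain->inc_seq_conv} gives an everywhere increasing sequence $(H_n)_{n\in\Na}\subset\H$ with $H_n\nearrow H_2^\uparrow$ pointwise on $[0,\infty)$. Each $H_n$ equals $H_2^{(y_{1,n})}$ for some $y_{1,n}\in\O_1$. By hypothesis 1 we then have $H_1\otimes H_n\le H_B$ for every $n$.

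\textbf{Step 2: pass to the limit in the tensor product.} By Definition \ref{def:H-product},
\begin{align*}
    \left(H_1\otimes H_n\right)\!\left(\gamma\right)=\int_{\O_1} H_n\!\left(\gamma\tfrac{q_1(y)}{p_1(y)}\right)p_1(y)\,\d y .
\end{align*}
The integrands are increasing in $n$, converge pointwise to $H_2^\uparrow(\gamma q_1/p_1)$, and are bounded by $1$, so monotone (or dominated) convergence gives $(H_1\otimes H_n)(\gamma)\to(H_1\otimes H_2^\uparrow)(\gamma)$ for every $\gamma\ge0$. Here $H_2^\uparrow$ is a genuine privacy profile by Theorem \ref{thm:H}, so the tensor product with it is well defined. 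Taking limits in $H_1\otimes H_n\le H_B$ yields $H_1\otimes H_2^\uparrow\le H_B$.

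The points where $p_1(y)=0$ need a convention: the argument $\gamma q_1/p_1$ equals $+\infty$ there, but those points carry zero $p_1$-mass, so they contribute nothing. Points with $q_1=0$ give argument $0$, where every profile equals $1$, so they also cause no issue.

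\textbf{Step 3: translate back.} By Theorem \ref{thm:order-f-H}, $H_1\otimes H_2^\uparrow\le H_B$ is equivalent to $f_1\otimes f_2^\downarrow\ge f_B$, using that the bijection of Theorem \ref{thm:f<->H} maps tensor products to tensor products. The $f_B$-DP conclusion then follows from Proposition \ref{prop:to-use-f2down}, since $f_2^{(y_1)}\ge f_2^\downarrow$ holds by construction.

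The main obstacle I expect is Step 1. The essential point is that the chain property lets finitely many near-maximisers at different $\gamma$ be replaced by a single member of $\H$, namely their maximum. Without the chain property this fails, which is exactly the calibration-point phenomenon. Lemma \ref{lem:chain->inc_seq_conv} already handles this, including continuity at non-dyadic points, so the remaining care is only the measure-theoretic limit interchange in Step 2.
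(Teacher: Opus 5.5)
Your proof is correct and matches the paper's argument step for step. The paper also invokes Lemma \ref{lem:chain->inc_seq_conv} to get an increasing sequence $H_2^{(\hat y_{1,n})}$ in the chain converging pointwise to $H_2^\uparrow$, then applies monotone convergence to the integral form $\int_{\O_1} H_2(\gamma q_1/p_1)\,p_1\,\d y_1$ to pass $H_1\otimes H_2^{(\hat y_{1,n})}\le H_B$ to the limit; your remarks that finite maxima of chain members stay in the chain and on the null-set conventions at $p_1=0$ or $q_1=0$ are details the paper leaves implicit.
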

\begin{proof}
    Let \(\left(P_1,Q_1\right)\) be any pair of distributions that has a privacy profile \(H_1\). For a fixed \(\hat{y}_1\), \(H_1\otimes H_2^{\left(\hat{y}_1\right)}\leq H_B\) can be written  explicitly as
    \begin{align*}
        \int_{\O_1} H_2^{\left(\hat{y}_1\right)}\!\left(\gamma\frac{q_1\!\left(y_1\right)}{p_1\!\left(y_1\right)}\right)p_1\!\left(y_1\right)\d y_1 \leq H_B\!\left(\gamma\right)\ \text{for all } \gamma\geq0.
    \end{align*}
    While \(H_1\otimes H_2^\uparrow\leq H_B\) can be written explicitly as
    \begin{align*}
        \int_{\O_1} H_2^\uparrow\!\left(\gamma\frac{q_1\!\left(y_1\right)}{p_1\!\left(y_1\right)}\right)p_1\!\left(y_1\right)\d y_1 \leq H_B\!\left(\gamma\right)\ \text{for all } \gamma\geq0.
    \end{align*}
    Lemma \ref{lem:chain->inc_seq_conv} states that there exists an everywhere increasing sequence \(\left(H_2^{\left(\hat{y}_{1,n}\right)}\right)_{n\in\Na}\) in \(\H_2\) such that \(H_2^{\left(\hat{y}_{1,n}\right)}\nearrow_{n\to\infty}H_2^\uparrow\) point-wise. By the monotone convergence theorem, for any \(\gamma\geq0\), 
    \begin{align*}
        \left(H_1\otimes H_2^\uparrow\right)\!\left(\gamma\right) = \lim_{n\to\infty} \left(H_1\otimes H_2^{\hat{y}_{1,n}}\right)\!\left(\gamma\right) \leq H_B\!\left(\gamma\right).
    \end{align*}
\end{proof}

In fact, Theorem \ref{thm:chain->nonfully-2step-valid} has already covered the case \(T=2\) in Theorem \ref{thm:validity}, as we can equivalently choose \(f_2^{\left(y_1\right)}\!\left(\alpha\right)=\textup{Id}\!\left(\alpha\right)=1-\alpha\) when \(y_1\) halts the filter early. We now give the remaining proof to Theorem \ref{thm:validity}.
\filtervalidity*
\begin{proof}
    The case \(T=2\) has been proven as Theorem \ref{thm:chain->nonfully-2step-valid}. We present a proof for the case of \(T=3\) for clarity, where \(\F^{\left(\_\otimes\right)\left(T-2\right)}=\F^{\_\otimes}\). The general case \(T>3\) follows by iteratively applying the same argument.

    Without loss of generality, when the previous outputs halt the filter early, we can set the \(f\)-DP guarantees for remaining mechanisms as the perfectly private trade-off function \(\textup{Id}\!\left(\alpha\right)=1-\alpha\). Hence, the filter inequality becomes, for all outputs \(y_1\in\O_1\) and \(y_2\in\O_2\),
    \begin{align*}
        f_1\otimes f_2^{\left(y_1\right)} \otimes f_3^{\left(y_1,y_2\right)} \geq f_B,
    \end{align*}
    where \(f_t\) is the \(f\)-DP guarantee of \(\M_t\) for \(t=1,2,3\).
    
    Fix \(y_1\in\O_1\). Then \(f_1\otimes f_2^{\left(y_1\right)}\) is fixed and \(f_3^{\left(y_1,y_2\right)}\) only depends on \(y_2\in\O_2\). Since \(\left\{f_3^{\left(y_1,y_2\right)}\right\}_{y_2\in\O_2}\subset\F\subset\F^{\_\otimes}\), \(\left\{f_3^{\left(y_1,y_2\right)}\right\}_{y_2\in\O_2}\) is a Blackwell chain, and thus by Theorem \ref{thm:chain->nonfully-2step-valid},
    \begin{align*}
        f_1\otimes f_2^{\left(y_1\right)} \otimes f_3^{\left(y_1,\downarrow\right)} \geq f_B,
    \end{align*}
    where \(f_3^{\left(y_1,\downarrow\right)}=\textup{ce}\inf_{y_2\in\O_2}f_3^{\left(y_1,y_2\right)}\) is the greatest lower bound of \(\left\{f_3^{\left(y_1,y_2\right)}\right\}_{y_2\in\O_2}\). 
    
    The deduction so far holds for all \(y_1\in\O_1\). Now, when we let \(y_1\in\O_1\) change as a variable, \(f_1\) is fixed and \(f_2^{\left(y_1\right)} \otimes f_3^{\left(y_1,\downarrow\right)}\) depends on \(y_1\). Since \(\left\{f_2^{\left(y_1\right)} \otimes f_3^{\left(y_1,\downarrow\right)}\right\}_{y_1\in\O_1}\subset\F^{\_\otimes}\), \(\left\{f_2^{\left(y_1\right)} \otimes f_3^{\left(y_1,\downarrow\right)}\right\}_{y_1\in\O_1}\) is a Blackwell chain, and consequently, by Theorem \ref{thm:chain->nonfully-2step-valid} again,
    \begin{align*}
        f_1\otimes f_{2\times3}^{\downarrow\downarrow} \geq f_B,
    \end{align*}
    where \(f_{2\times3}^{\downarrow\downarrow}=\textup{ce}\inf_{y_1\in\O_1}f_2^{\left(y_1\right)} \otimes f_3^{\left(y_1,\downarrow\right)}\) is the greatest lower bound of \(\left\{f_2^{\left(y_1\right)} \otimes f_3^{\left(y_1,\downarrow\right)}\right\}_{y_1\in\O_1}\).
    
    Conditioned on a fixed \(y_1\in\O_1\) as first output, \(\M_2^{\left(y_1\right)}\) is \(f_2^{\left(y_1\right)}\)-DP, and for all \(y_2\in\O_2\), \(\M_3^{\left(y_1,y_2\right)}\) is \(f_3^{\left(y_1,y_2\right)}\)-DP and hence \(f_3^{\left(y_1,\downarrow\right)}\)-DP. Therefore, by Theorem \ref{thm:composition_fDP}, the composition \(\left(\M_2^{\left(y_1\right)},\M_3^{\left(y_1,y_2\right)}\right)\) is \mbox{\(f_2^{\left(y_1\right)} \otimes f_3^{\left(y_1,\downarrow\right)}\)-DP} for all \(y_2\).
    
    It now follows that for all \(y_1\in\O_1\) and all \(y_2\in\O_2\), \(\left(\M_2^{\left(y_1\right)},\M_3^{\left(y_1,y_2\right)}\right)\) is \(f_{2\times3}^{\downarrow\downarrow}\)-DP. Since \(\M_1\) is \(f_1\)-DP, by Theorem \ref{thm:composition_fDP}, \(\left(\M_1,\M_2,\M_3\right)\) is \(f_1\otimes f_{2\times3}^{\downarrow\downarrow}\)-DP and hence \(f_B\)-DP.
\end{proof}
The \(f\)-DP central limit theorem \cite{dong_gaussian_2022} states that a composition of trade-off functions \(f_1\otimes\ldots\otimes f_n\) converges to some Gaussian trade-off function in \(\Delta\)-divergence as \(n\to\infty\). Consequently, as \(T\) increases, the trade-off functions in \(\F^{\left(\_\otimes\right)\left(T-2\right)}\) increasingly resemble Gaussian trade-off functions.
\begin{example}
    Let \(f\) be a trade-off function and consider the \(n\)-fold self-compositions \(\F=\left(f^{\otimes n}\right)_{n\in\Na\cup\left\{\infty\right\}}\), in which \(f^{\otimes\infty}=0\) if \(f\) is not \(\textup{Id}\), and \(\textup{Id}^{\otimes\infty}=\textup{Id}\). In this case, \(\F^{\left(\_\otimes\right)\left(T-2\right)}=\left(f^{\otimes n}\right)_{n\in\Na\cup\left\{\infty\right\}}\) for all \(T\geq2\), which is a Blackwell chain by Proposition \ref{prop:properties-f-product}. By Theorem \ref{thm:validity}, the \(f\)-DP filter is valid for mechanisms with tight \(f\)-DP guarantees matching \(f^{\otimes n}\) for some \(n\). A construction of such mechanisms when \(f\) is symmetric is given by the canonical noise distribution \cite{awan2021canonical}.
\end{example}

\subsection{Failure of Blackwell Chain Property Implies \(f\)-DP Filter Invalidity}
\begin{definition}
    For a family \(\H\) of privacy profiles, define
    \begin{align*}
        \bar{\H} = \left\{\bar{H}:\exists H_n\in\H\ \textup{s.t.}\,H_n\nearrow\bar{H}\right\}
    \end{align*}
    as the family of limits \(\bar{H}\) of some everywhere increasing sequence \(H_n\in\H\) under all notions of convergence in Theorem \ref{thm:unifying-Hconv}. Furthermore, define
    \begin{align*}
        \H^{\_\otimes} = \H\otimes\bar{\H} = \left\{H\otimes\bar{H}:H\in\H,\bar{H}\in\bar{\H}\right\}.
    \end{align*}
\end{definition}
\begin{lemma}\label{lem:exist-real-probs-approx-chain}
    Let \(\H\) be an arbitrary family of privacy profiles, and consider \(\H^{\left(\_\otimes\right)\left(T-2\right)}\) for some \(T\geq2\). For all \(H^*\in\H^{\left(\_\otimes\right)\left(T-2\right)}\), for all \(\gamma_{\max}>0\) and all \(\eta>0\), there exist \(T-1\) privacy profiles \(H_2,H_3,\ldots,H_T\in\H\) such that
    \begin{align*}
        \sup_{\gamma\in\left[0,\gamma_{\max}\right]}\left|\left(H_2\otimes H_3\otimes\ldots\otimes H_T\right)\!\left(\gamma\right) - H^*\!\left(\gamma\right)\right| \leq \eta.
    \end{align*}
\end{lemma}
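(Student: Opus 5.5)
I will argue by induction on \(T\). The key tool is a continuity property of the tensor product with respect to uniform convergence on compacts. For \(T=2\), \(\H^{(\_\otimes)(0)}=\H\), so I can take \(H_2=H^*\). For the inductive step, write \(H^*=H\otimes\bar{H}\) with \(H\in\H\) and \(\bar{H}\in\overline{\H^{(\_\otimes)(T-3)}}\). Here \(\bar{H}\) is the limit of an everywhere increasing sequence \(\bar{H}_n\in\H^{(\_\otimes)(T-3)}\), with the convergence holding under all notions of Theorem~\ref{thm:unifying-Hconv}. I read the recursive definition as \(\H^{(\_\otimes)(k)}=\H\otimes\overline{\H^{(\_\otimes)(k-1)}}\), consistent with the reduction described after Theorem~\ref{thm:validity}.

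The continuity estimate I need is the following. For privacy profiles \(H\), \(G\), \(G'\), Definition~\ref{def:H-product} gives
\begin{align*}
\left|\left(H\otimes G\right)\!(\gamma)-\left(H\otimes G'\right)\!(\gamma)\right|\le \int_{\O_1}\left|G-G'\right|\!\left(\gamma\tfrac{q_1(y_1)}{p_1(y_1)}\right)p_1(y_1)\,\d y_1 .
\end{align*}
I split \(\O_1\) according to whether \(\gamma q_1/p_1\le\Gamma\) or not. On the first region the integrand is at most \(\sup_{[0,\Gamma]}|G-G'|\). On the second region I use that \(G,G'\le 1\) are decreasing with \(G(x)\ge[1-x]_+\), so \(|G-G'|(x)\le G(x)+G'(x)\). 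I then bound the contribution through the mass \(P_1(\gamma q_1/p_1>\Gamma)\). By Markov's inequality, \(\E_{P_1}[q_1/p_1]\le1\) gives \(P_1(q_1/p_1>\Gamma/\gamma)\le\gamma/\Gamma\). This is uniformly small for \(\gamma\le\gamma_{\max}\) once \(\Gamma\) is large. The resulting bound is
\begin{align*}
\sup_{\gamma\le\gamma_{\max}}\left|H\otimes G-H\otimes G'\right|\le \sup_{[0,\Gamma]}|G-G'|+\tfrac{2\gamma_{\max}}{\Gamma}.
\end{align*}
This is the crux, and I expect it to be the main obstacle, since \(\gamma q_1/p_1\) is unbounded and convergence is only on compacts. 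The Markov tail argument above is the fix I will try first.

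With this bound in hand, the inductive step goes as follows. Given \(\gamma_{\max}\) and \(\eta\), I choose \(\Gamma=4\gamma_{\max}/\eta\). Using uniform convergence of \(\bar{H}_n\to\bar{H}\) on \([0,\Gamma]\), I pick \(n\) so that \(\sup_{[0,\Gamma]}|\bar{H}_n-\bar{H}|\le\eta/4\). This gives \(|H\otimes\bar{H}_n-H^*|\le 3\eta/4\) on \([0,\gamma_{\max}]\). Next I apply the inductive hypothesis to \(\bar{H}_n\in\H^{(\_\otimes)(T-3)}\) on the compact \([0,\Gamma]\) with tolerance \(\eta/8\). This produces \(H_3,\dots,H_T\in\H\) whose product \(G'\) satisfies \(\sup_{[0,\Gamma]}|G'-\bar{H}_n|\le\eta/8\). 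Applying the continuity estimate once more gives \(|H\otimes G'-H\otimes\bar{H}_n|\le\eta/8+\eta/2\). To meet the stated \(\eta\), I rebalance the constants, for instance by taking \(\Gamma=8\gamma_{\max}/\eta\) and halving the tolerances.

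Finally, I set \(H_2=H\). By associativity (Proposition~\ref{prop:properties-H-product}), \(H_2\otimes H_3\otimes\cdots\otimes H_T\) is within \(\eta\) of \(H^*\) uniformly on \([0,\gamma_{\max}]\), which is the claim.
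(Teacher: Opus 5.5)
Your proof is correct, but it uses a different mechanism from the paper's.

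The paper writes the argument out for $T=4$ and leaves the general case to iteration. It never needs a continuity estimate for $G\mapsto H\otimes G$. Instead, it keeps the already-chosen outer factors exact and uses the fact that the approximating sequences supplied by the closure are everywhere increasing. Monotonicity of the tensor product (Proposition~\ref{prop:properties-H-product}) and the monotone convergence theorem give $H_2\otimes H_{3\times4,n}\nearrow H_2\otimes\overline{H_{3\times4}}$ pointwise. Theorem~\ref{thm:unifying-Hconv} (P\'olya) then upgrades this to uniform convergence on the same interval $[0,\gamma_{\max}]$. The paper next splits $H_{3\times4,n_0}=H_3\otimes\bar H_4$ exactly and repeats the step on the whole product, $H_2\otimes H_3\otimes H_{4,k}\nearrow H_2\otimes H_3\otimes\bar H_4$, closing with the triangle inequality.

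You instead prove a quantitative modulus,
\[
\sup_{[0,\gamma_{\max}]}\left|H\otimes G-H\otimes G'\right|\le\sup_{[0,\Gamma]}\left|G-G'\right|+\frac{2\gamma_{\max}}{\Gamma},
\]
via Markov's inequality on $q_1/p_1$ under $P_1$. You handle the unbounded argument $\gamma q_1/p_1$ by applying the inductive hypothesis on the enlarged interval $[0,\Gamma]$ with $\Gamma\sim\gamma_{\max}/\eta$. This is legitimate because the hypothesis is quantified over all compacts.

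What each route buys:
\begin{itemize}
\item Yours does not use monotonicity of the approximating sequences; any uniform-on-compacts approximation would do.
\item Yours gives explicit constants and a genuine induction over $T$, rather than ``iterate the same argument''.
\item The paper's avoids the tail estimate and the enlargement of the compact, by exploiting the increasing structure built into the definition of $\bar{\mathcal{H}}$. The price is that it is purely qualitative.
\end{itemize}

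Two cosmetic points:
\begin{itemize}
\item Since $0\le G,G'\le 1$, you can use $|G-G'|\le 1$ on the tail region, which gives $\gamma_{\max}/\Gamma$. The appeal to $[1-x]_+$ and to monotonicity is unnecessary there.
\item First bound $\sup_{[0,\Gamma]}|G'-\bar H|\le\sup_{[0,\Gamma]}|G'-\bar H_n|+\sup_{[0,\Gamma]}|\bar H_n-\bar H|$, then apply your estimate once. This incurs only a single tail term, so no rebalancing of constants is needed.
\end{itemize}
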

\begin{proof}
    For clarity, we present the proof for the case \(T=4\). The general case involves iteratively applying the same kind of argument here.

    Now \(H^*\in\left(\H^{\_\otimes}\right)^{\_\otimes}\). There exists \(H_2\in\H\) and \(\overline{H_{3\times4}}\in\overline{\H^{\_\otimes}}\) such that \(H_2\otimes\overline{H_{3\times4}}=H^*\). Furthermore, there exists an everywhere increasing sequence \(H_{3\times4,n}\in\H^{\_\otimes}\) such that \(H_{3\times4,n}\nearrow_{n\to\infty}\overline{H_{3\times4}}\) under all notions of Theorem \ref{thm:unifying-Hconv}. By the monotone convergence theorem, \(H_2\otimes H_{3\times4,n}\nearrow_{n\to\infty}H_2\otimes \overline{H_{3\times4}}\) point-wise, and hence uniformly over any compact set by Theorem \ref{thm:unifying-Hconv}. Therefore, there exists \(n_0\) such that
    \begin{align*}
        \sup_{\gamma\in\left[0,\gamma_{\max}\right]} \left|\left(H_2\otimes H_{3\times4,n_0}\right)\!\left(\gamma\right)-\left(H_2\otimes\overline{H_{3\times4}}\right)\!\left(\gamma\right)\right| \leq \frac{\eta}{2}.
    \end{align*}
    Now \(H_{3\times4,n_0}\in\H^{\_\otimes}\). There exists \(H_3\in\H\) and \(\bar{H}_4\in\bar{\H}\) such that \(H_3\otimes\bar{H}_4=H_{3\times4,n_0}\). Furthermore, there exists an everywhere increasing sequence \(H_{4,k}\in\H\) such that \(H_{4,k}\nearrow_{k\to\infty}\bar{H_4}\) under all notions of Theorem \ref{thm:unifying-Hconv}. By the monotone convergence theorem, \(H_2\otimes H_3\otimes H_{4,k}\nearrow_{k\to\infty}H_2\otimes H_3\otimes\bar{H}_4\) point-wise, and hence uniformly over any compact set by Theorem \ref{thm:unifying-Hconv}. Therefore, there exists \(k_0\) such that
    \begin{align*}
        \sup_{\gamma\in\left[0,\gamma_{\max}\right]} \left|\left(H_2\otimes H_3\otimes H_{4,k_0}\right)\!\left(\gamma\right)-\left(H_2\otimes H_3\otimes \bar{H}_4\right)\!\left(\gamma\right)\right| \leq \frac{\eta}{2}.
    \end{align*}
    Since \(H_2\otimes H_3\otimes \bar{H}_4 = H_2\otimes H_{3\times4,n_0}\) and \(H_2\otimes\overline{H_{3\times4}}=H^*\), by the triangle inequality,
    \begin{align*}
        \sup_{\gamma\in\left[0,\gamma_{\max}\right]} \left|\left(H_2\otimes H_3\otimes H_{4,k_0}\right)\!\left(\gamma\right)-H^*\!\left(\gamma\right)\right| \leq \frac{\eta}{2}+\frac{\eta}{2} = \eta.
    \end{align*}
\end{proof}
\nochainisinvalid*
\begin{proof}
    Let \(\H\) be the family of privacy profiles corresponding to \(\F\). Let \(H_1^*,H_2^*\) be the pair of privacy profiles that do not have a Blackwell order. The space of orders \(\gamma\) can be partioned as
    \begin{align*}
        \left[0,\infty\right) = \left\{H^*_1>H^*_2\right\} \sqcup \left\{H^*_1=H^*_2\right\} \sqcup \left\{H^*_1<H^*_2\right\},
    \end{align*}
    where \(\sqcup\) denotes disjoint union, and all three sets are nonempty. Since \(\left[0,\infty\right)=\bigcup_{\gamma_{\max}>0}\left[0,\gamma_{\max}\right]\), there exists some \(\gamma_{\max}>0\) such that \(\left[0,\gamma_{\max}\right]\) overlaps with both \(\left\{H^*_1>H^*_2\right\}\) and \(\left\{H^*_1<H^*_2\right\}\). We apply Lemma \ref{lem:exist-real-probs-approx-chain} with small \(\eta>0\) to find \(H_{2,1},H_{3,1},\ldots,H_{T,1}\) and \(H_{2,2},H_{3,2},\ldots,H_{T,2}\) that respectively approximate \(H_1^*\) and \(H_2^*\) on \(\left[0,\gamma_{\max}\right]\). Once \(\eta>0\) is chosen small enough, the pair 
    \begin{align*}
        \tilde{H}^*_1 &= H_{2,1}\otimes H_{3,1}\otimes\ldots\otimes H_{T,1},\\
        \tilde{H}^*_2 &= H_{2,2}\otimes H_{3,2}\otimes\ldots\otimes H_{T,2}
    \end{align*} will not have a Blackwell order either.

    Let \(f_1\in\F\) be the trade-off function corresponding to the mechanism that is not \(\varepsilon\)-DP for all \(\varepsilon\). Let \(H_1\) be the privacy profile that corresponds to \(f_1\), and let \(\left(P_1,Q_1\right)\) be a pair of distributions such that \(T\!\left[P_1,Q_1\right]=f_1\). This failure of \(\varepsilon\)-DP implies that the likelihood ratio \(\nicefrac{q_1\!\left(y_1\right)}{p_1\!\left(y_1\right)}\) is not almost surely bounded above, or not almost surely bounded below away from 0. Therefore, there exists \(\gamma_0>0\) such that
    \begin{align*}
        \left(\gamma_0\frac{q_1\!\left(y_1\right)}{p_1\!\left(y_1\right)}\right)_{y_1\in\O_1} \textup{overlaps with both }\left\{\tilde{H}^*_1>\tilde{H}^*_2\right\}\textup{ and }\left\{\tilde{H}^*_1<\tilde{H}^*_2\right\}.
    \end{align*}
    From this, we can construct a counterexample to the \(f\)-DP filter validity. For \(t=2,\ldots,T\), let \(\left(P_{t,1},Q_{t,1}\right)\) and \(\left(P_{t,2},Q_{t,2}\right)\) be the pairs of distributions that have privacy profiles \(H_{t,1}\) and \(H_{t,2}\) respectively. Choose the adaptive selection rule at step \(t\):
    \begin{align*}
        P_{t}^{\left(y_1,\ldots,y_{t-1}\right)} \coloneqq P_{t}^{\left(y_1\right)} &= \begin{cases}
                    P_{t,1} & \textup{if }\gamma_0\frac{q_1\!\left(y_1\right)}{p_1\!\left(y_1\right)}\in\left\{\tilde{H}^*_1\geq\tilde{H}^*_2\right\}, \\
                    P_{t,2} & \textup{otherwise,}
                \end{cases} \\
        Q_{t}^{\left(y_1,\ldots,y_{t-1}\right)} \coloneqq Q_{t}^{\left(y_1\right)} &= \begin{cases}
                    Q_{t,1} & \textup{if }\gamma_0\frac{q_1\!\left(y_1\right)}{p_1\!\left(y_1\right)}\in\left\{\tilde{H}^*_1\geq\tilde{H}^*_2\right\}, \\
                    Q_{t,2} & \textup{otherwise.}
                \end{cases}
    \end{align*}
    This yields the inequality
    \begin{multline*}
        \int_{\left\{\tilde{H}_1^*\geq\tilde{H}_2^*\right\}}\tilde{H}_1^*\!\left(\gamma_0\frac{q_1\!\left(y_1\right)}{p_1\!\left(y_1\right)}\right)p_1\!\left(y_1\right)\d y_1 + \int_{\left\{\tilde{H}_1^*<\tilde{H}_2^*\right\}}\tilde{H}_2^*\!\left(\gamma_0\frac{q_1\!\left(y_1\right)}{p_1\!\left(y_1\right)}\right)p_1\!\left(y_1\right)\d y_1 \\
        > \max\!\left(\left(H_1\otimes\tilde{H}_1^*\right)\!\left(\gamma_0\right),\left(H_1\otimes\tilde{H}_2^*\right)\!\left(\gamma_0\right)\right).
    \end{multline*}
    Here, the LHS is the true privacy profile of the composition, while the RHS is chosen as the budget used in the filter.
\end{proof}

\section{Fully Adaptive Central Limit Theorem}\label{appx:adaptCLT-approxGDPfilter}
\subsection{Proof to the Fully Adaptive \(f\)-DP Central Limit Theorem}
First, we present the martingale Berry--Esseen theorem from \citet{FAN20191028}.
\begin{theorem}\label{thm:martingale-BE}
    Let \(\left(\Omega,\G,P\right)\) be a probability space and let \(\G_1\subset\G_2\subset\ldots\subset\G_T\subset\G\) be a filtration of increasing \(\sigma\)-algebras. Let \(\left(\xi_i\right)_{i=1}^T\) be a sequence of martingale differences, which means that \(\sum_{i=1}^t \xi_i\) for \(t=1,\ldots,T\) forms a martingale, or equivalently~\(\E_P\!\left[\xi_i\!\mid\!\G_{i-1}\right] = 0\) for all \(i\). Assume that the following holds.
    \begin{enumerate}
        \item There exists \(\kappa\in\left[0,\frac{1}{4}\right]\) such that \(\left|\sum_{i=1}^T\E_P\!\left[\xi_i^2\!\mid\!\G_{i-1}\right]-1\right|\leq\kappa\) almost surely.
        \item There exists \(\rho\in\left(0,\frac{1}{2}\right]\) such that \(\E_P\!\left[\left|\xi_i\right|^3\!\mid\!\G_{i-1}\right] \leq \rho\E_P\!\left[\xi_i^2\!\mid\!\G_{i-1}\right]\) almost surely for all \(i\).
    \end{enumerate}
    Then the cdf of \(\sum_{i=1}^T\xi_i\) satisfies
    \begin{align*}
        \sup_{x\in\real} \left|P\!\left(\sum_{i=1}^T\xi_i\leq x\right) - \Phi\!\left(x\right)\right| \leq C\!\left(\rho\!\left|\ln\rho\right|+\sqrt{\kappa}\right),
    \end{align*}
    where \(\Phi\) is the cdf of the standard Gaussian distribution, and \(C\) is a universal constant.
\end{theorem}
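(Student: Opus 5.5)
The statement is the martingale Berry--Esseen theorem of \citet{FAN20191028}. I would prove it by Bolthausen's method: first reduce to the case of exactly normalized conditional variance, then run a Lindeberg-type replacement argument in which the Gaussian replacing the future increments does the smoothing. Write \(V_t=\sum_{i\le t}\E_P[\xi_i^2\mid\G_{i-1}]\), which is \(\G_{t-1}\)-measurable, and \(S_t=\sum_{i\le t}\xi_i\).

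\emph{Step 1: removing \(\kappa\).} I would build an augmented martingale \(\tilde S\) whose total conditional variance is exactly \(1\).
\begin{itemize}
\item Stop at \(\tau=\max\{t: V_t\le 1\}\), which is a stopping time because \(V_t\) is predictable.
\item Append one extra conditionally centered increment with conditional variance \(1-V_\tau\le\kappa\), for instance \(\sqrt{1-V_\tau}\) times an independent Rademacher or suitably scaled variable. Its ratio of third to second conditional moment must stay \(\lesssim\rho\); if that fails, split it into many small pieces.
\end{itemize}
It remains to show \(\sup_x|P(S_T\le x)-P(\tilde S\le x)|\lesssim\sqrt{\kappa}\). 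The difference \(\tilde S-S_T\) is a martingale tail whose total conditional variance is at most \(\kappa\) almost surely. I would compare the two distributions through a Gaussian smoothing at scale \(\sqrt{\kappa}\), using that \(\tilde S\) is already close to \(\Phi\) by Step 2. A plain Chebyshev bound \(P(|D|>\varepsilon)\le\kappa/\varepsilon^2\) only gives \(\kappa^{1/3}\). To reach \(\sqrt\kappa\) I would instead run Step 2 for \(S_T\) directly, treating the random deficit \(1-V_T\in[-\kappa,\kappa]\) as a variance perturbation of the final Gaussian. The sup-distance between \(\N(0,1)\) and \(\N(0,1\pm\kappa)\) is \(O(\kappa)\), but the randomness of the deficit costs \(O(\sqrt\kappa)\) through the derivative of the smoothed indicator.

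\emph{Step 2: Lindeberg telescoping with \(V_T=1\).} Fix \(x\) and let \(h=\mathbf 1_{(-\infty,x]}\). Let \(Z\sim\N(0,1)\) be independent of everything, and set \(g_s(y)=\E\, h(y+\sqrt{s}Z)\), so \(g_s\) is \(h\) smoothed by a Gaussian of variance \(s\). Because \(1-V_i\) is \(\G_{i-1}\)-measurable, the telescoping sum
\begin{align*}
P(S_T\le x)-\Phi(x)=\sum_{i=1}^T \E\!\left[g_{1-V_i}(S_i)-g_{1-V_{i-1}}(S_{i-1})\right]
\end{align*}
is legitimate. In each summand I would Taylor-expand in \(\xi_i\) to third order and expand the variance argument to first order. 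The first-order term cancels by the martingale property. The second-order term cancels against the variance decrement by the heat equation \(\partial_s g_s=\tfrac12 g_s''\). The remainder is bounded by
\begin{align*}
\E\!\left[|\xi_i|^3\,\bigl\|g'''_{1-V_i}\bigr\|_{\textup{loc}}\right]\lesssim \rho\,\E\!\left[\E[\xi_i^2\mid\G_{i-1}]\,(1-V_i+\varepsilon^2)^{-1}\right].
\end{align*}
Here I have to avoid the crude bound \(\|g_s'''\|_\infty\sim s^{-3/2}\). The key is that \(h\) is an indicator and \(g'''\) is evaluated near the random point \(S_i\), whose own conditional density at scale \(\sqrt{s}\) contributes an extra factor \(\sqrt s\). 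Summing then gives a Riemann sum for \(\rho\int_0^1 (s+\varepsilon^2)^{-1}\,\d s\approx\rho\ln(1/\varepsilon)\).

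\emph{Step 3: choosing \(\varepsilon\).} The regularization \(\varepsilon^2\) is introduced by adding an independent \(\varepsilon Z'\) to \(S_T\). Removing it costs \(O(\varepsilon)\) by the anticoncentration of \(\Phi\), together with a bootstrap or induction on the Kolmogorov distance to control the anticoncentration of \(S_T\) itself. Taking \(\varepsilon=\rho\) yields \(C\rho|\ln\rho|\). Combined with Step 1 this gives the stated bound \(C(\rho|\ln\rho|+\sqrt\kappa)\).

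\emph{Main obstacle.} I expect the hardest step to be the refined remainder estimate in Step 2. Getting \((s+\varepsilon^2)^{-1}\) rather than \((s+\varepsilon^2)^{-3/2}\) requires a recursive or inductive anticoncentration bound for the partial sums \(S_i\). This is what yields the logarithm, and it is also what makes the \(\sqrt\kappa\) bound in Step 1 work. I would first attempt this by induction on the Kolmogorov distance, in the style of Bolthausen's argument.
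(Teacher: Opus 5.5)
The paper gives no proof of this statement. It is quoted from \citet{FAN20191028} and used as a black box, so your opening sentence already matches the paper's treatment. The Bolthausen-style outline you add has the right architecture (telescoping is legitimate, the heat-equation cancellation, the Riemann sum giving a logarithm). As a self-contained proof, however, it is missing the step that actually produces \(\rho|\ln\rho|\), and the mechanism you suggest for that step would not work.

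\textbf{The missing step.} The partial sums \(S_{i-1}\) at deterministic times do not satisfy the hypotheses, because \(V_{i-1}\) is random and not close to any constant. So there is no Kolmogorov bound for them to induct on, and \(S_{i-1}\) has no density to exploit. What is needed is a transfer of anticoncentration from \(S_T\) back to \(S_{i-1}\).
\begin{itemize}
\item Take smoothing variance \(a_i^2=c-V_i\ge\varepsilon^2\). Condition 2 plus Jensen gives \(\E[\xi_i^2\mid\G_{i-1}]\le\rho^2\le a_i^2\) for \(\varepsilon\ge\rho\).
\item Hence \(\E[(S_T-S_{i-1})^2\mid\G_{i-1}]\le c-V_{i-1}\le 2a_i^2\). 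Chebyshev then yields \(\mathbf{1}\{S_{i-1}\in I\}\le\tfrac{4}{3}\,P(S_T\in I^{+3a_i}\mid\G_{i-1})\) for every \(\G_{i-1}\)-measurable interval \(I\), where \(I^{+3a_i}\) is its \(3a_i\)-enlargement.
\item Insert this before summing. With \(S_T\) frozen, the sum over \(i\) is bounded pathwise by an integral over \(u=c-v\in[\varepsilon^2,c]\).
\item Taking expectations gives a remainder of order \(\rho\int_{\varepsilon^2}^{c}(u^{-1}+D\,u^{-3/2})\,\mathrm{d}u\lesssim\rho\ln(1/\varepsilon)+\rho D/\varepsilon\), where \(D=\sup_x|P(S_T\le x)-\Phi(x)|\).
\end{itemize}
This closes as a self-bounding inequality for \(D\) only if \(\varepsilon=K\rho\) with \(K\) a large enough absolute constant that the coefficient of \(D\) drops below one. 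The choice \(\varepsilon=\rho\) need not suffice.

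\textbf{Two further repairs.} First, the increments are not bounded; only their conditional third moments are controlled. So the Taylor remainder involves \(g'''\) along the whole segment from \(S_{i-1}\) to \(S_i\), and localization near \(S_{i-1}\) fails for jumps \(|\xi_i|>a_i\). These must be handled separately. One way is to write \(g(S+\xi)-g(S)\) as the probability that \(S+a_iZ\) lands in an interval of length \(|\xi|\) adjacent to \(x\). Then use \(P(|\xi_i|>t\mid\G_{i-1})\le\rho\,\E[\xi_i^2\mid\G_{i-1}]/t^3\) to get a kernel decaying like \(|u|^{-3}\).

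Second, Step 1 is not yet an argument.
\begin{itemize}
\item The bound \(1-V_\tau\le\kappa\) fails when \(\tau<T\); only \(1-V_\tau\le\max(\kappa,\rho^2)\) holds.
\item Step 2 as written cannot be run for \(S_T\) directly, since \(g_{1-V_i}\) is undefined once \(V_i>1\).
\end{itemize}
The fix is to telescope with \(g_{c-V_i}\), where \(c=1+\kappa+\varepsilon^2\). The terminal smoothing then has \(\G_{T-1}\)-measurable variance in \([\varepsilon^2,2\kappa+\varepsilon^2]\). Set \(b=\sqrt{2\kappa+\varepsilon^2}\). Because \(Z\) is independent of \((S_T,V_T)\), the sandwich \(P(S_T\le y-b|Z|)\le P(S_T+\sqrt{c-V_T}\,Z\le y)\le P(S_T\le y+b|Z|)\) holds. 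It yields the usual smoothing inequality \(D\le 2\sup_y|P(S_T+\sqrt{c-V_T}\,Z\le y)-\Phi(y/\sqrt{c})|+C(\sqrt{\kappa}+\varepsilon)\). That is where \(\sqrt{\kappa}\) comes from, not from a derivative of the smoothed indicator.
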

Note that as $T$ grows, $v$ increases linearly while $\rho$ remains constant, indicating a convergence rate of $O\!\left(\frac{\ln T}{\sqrt{T}}\right)$. \citet{FAN20191028} showed that for martingale difference sequences satisfying the conditions in Theorem~\ref{thm:martingale-BE}, the order of convergence $\rho\!\left|\ln\rho\right|$ is optimal. To the best of our knowledge, this represents the best convergence rate among existing martingale Berry--Esseen variants, though it is weaker than the classical $O\!\left(\frac{1}{\sqrt{T}}\right)$ rate for i.i.d. sequences due to the martingale setting. Additional variants of the martingale Berry--Esseen theorem were derived by \citet{10.1214/aoms/1177696722} and \citet{article}.
\adaptiveCLT*
\begin{proof}
    We apply Theorem \ref{thm:martingale-BE} to the sequences of martingale differences
    \begin{align*}
        \frac{L_t\!\left(y_t\right) - \E_{y\sim P}\!\left[L_t\!\left(y_t\right)\!\mid\!y_{1:t-1}\right]}{\sqrt{v}}, y\sim P \quad\textup{and}\quad \frac{L_t\!\left(y_t\right) - \E_{y\sim Q}\!\left[L_t\!\left(y_t\right)\!\mid\!y_{1:t-1}\right]}{\sqrt{v}}, y\sim Q
    \end{align*}
    for \(t=1,\ldots,T\) to arrive at the inequalities
    \begin{align*}
        \sup_{x\in\real}\left|P\!\left[\frac{\sum_{t=1}^TL_t\!\left(y_t\right) - \sum_{t=1}^T\E_{y\sim P}\!\left[L_t\!\left(y_t\right)\!\mid\!y_{1:t-1}\right]}{\sqrt{\kappa}}\leq x\right] - \Phi\!\left(x\right)\right| &\leq C\!\left(\frac{\rho}{\sqrt{v}}\!\left|\ln\!\frac{\rho}{\sqrt{v}}\right| + \sqrt\frac{\kappa}{v}\right), \\
        \sup_{x\in\real}\left|Q\!\left[\frac{\sum_{t=1}^TL_t\!\left(y_t\right) - \sum_{t=1}^T\E_{y\sim Q}\!\left[L_t\!\left(y_t\right)\!\mid\!y_{1:t-1}\right]}{\sqrt{\kappa}}\leq x\right] - \Phi\!\left(x\right)\right| &\leq C\!\left(\frac{\rho}{\sqrt{v}}\!\left|\ln\!\frac{\rho}{\sqrt{v}}\right| + \sqrt\frac{\kappa}{v}\right).
    \end{align*}
    Denote \(L\!\left(y\right)=\sum_{t=1}^TL_t\!\left(y_t\right)\) and \(\Delta=C\!\left(\frac{\rho}{\sqrt{v}}\!\left|\ln\!\frac{\rho}{\sqrt{v}}\right| + \sqrt\frac{\kappa}{v}\right)\). By Conditions 1 and 2, the sums of conditional means are almost surely bounded,
    \begin{align*}
        m_1-\eta_1\leq \sum_{t=1}^T\E_{y\sim P}\!\left[L_t\!\left(y_t\right)\!\mid\!y_{1:t-1}\right] &\leq m_1+\eta_1, \\
        -m_2-\eta_2 \leq \sum_{t=1}^T\E_{y\sim Q}\!\left[L_t\!\left(y_t\right)\!\mid\!y_{1:t-1}\right] &\leq -m_2+\eta_2.
    \end{align*}
    Therefore for all \(x\in\real\),
    \begin{align}
        P\!\left(\frac{L\!\left(y\right)-m_1+\eta_1}{\sqrt{v}}\leq x\right) &\leq \Phi\!\left(x\right) + \Delta, \label{eq:cdfP_1}\\
        P\!\left(\frac{L\!\left(y\right)-m_1-\eta_1}{\sqrt{v}}\leq x\right) &\geq \Phi\!\left(x\right) - \Delta, \label{eq:cdfP_2}\\
        Q\!\left(\frac{L\!\left(y\right)+m_2+\eta_1}{\sqrt{v}}\leq x\right) &\leq \Phi\!\left(x\right) + \Delta, \label{eq:cdfQ_1}\\
        Q\!\left(\frac{L\!\left(y\right)+m_2+\eta_2}{\sqrt{v}}\leq x\right) &\geq \Phi\!\left(x\right) - \Delta. \label{eq:cdfQ_2}
    \end{align}
    Now, we convert these inequalities to their trade-off function form. Let \(\alpha\in\left(0,1-\Delta\right)\) and let \(x\in\real\) such that the  test
    \begin{align*}
        \phi\!\left(y\right) = \begin{cases}
            1 & \text{if }L\!\left(y\right)\leq x, \\
            0 & \text{otherwise}
        \end{cases}
    \end{align*}
    for the hypothesis testing problem \(H_0:y\sim P\) vs.~\(H_1:y\sim Q\) has an FPR \(P\!\left(L\!\left(y\right)\leq x\right)=\alpha\). By \eqref{eq:cdfP_2},
    \begin{align*}
        \Phi\!\left(\frac{x-m_1-\eta_1}{\sqrt{v}}\right) \leq P\!\left(\frac{L\!\left(y\right)-m_1-\eta_1}{\sqrt{v}}\leq \frac{x-m_1-\eta_1}{\sqrt{v}}\right) + \Delta = \alpha+\Delta.
    \end{align*}
    Since \(\Phi^{-1}\) is an increasing function,
    \begin{align*}
        \frac{x-m_1-\eta_1}{\sqrt{v}}\leq \Phi^{-1}\!\left(\alpha+\Delta\right),
    \end{align*}
    which implies that
    \begin{align*}
        \frac{x+m_2+\eta_2}{\sqrt{v}}\leq \Phi^{-1}\!\left(\alpha+\Delta\right) + \frac{m_1+m_2}{\sqrt{v}} + \frac{\eta_1+\eta_2}{\sqrt{v}}.
    \end{align*}
    Combining \eqref{eq:cdfQ_1} with the fact that \(\Phi\) is increasing,
    \begin{multline*}
        Q\!\left(L\!\left(y\right)\leq x\right) = Q\!\left(\frac{L\!\left(y\right)+m_2+\eta_2}{\sqrt{v}}\leq \frac{x+m_2+\eta_2}{\sqrt{v}}\right)
        \leq \Phi\!\left(\frac{x+m_2+\eta_2}{\sqrt{v}}\right)+\Delta \\
        \leq \Phi\!\left(\Phi^{-1}\!\left(\alpha+\Delta\right) + \frac{m_1+m_2}{\sqrt{v}} + \frac{\eta_1+\eta_2}{\sqrt{v}}\right)+\Delta.
    \end{multline*}
    Therefore the FNR \(Q\!\left(L\!\left(y\right)>x\right)\) satisfies
    \begin{align*}
        Q\!\left(L\!\left(y\right)>x\right) \geq 1 - \Phi\!\left(\Phi^{-1}\!\left(\alpha+\Delta\right) + \frac{m_1+m_2}{\sqrt{v}} + \frac{\eta_1+\eta_2}{\sqrt{v}}\right) - \Delta.
    \end{align*}
    We modify this into the final expression using the relationship \(1-\Phi\!\left(x\right) = \Phi\!\left(-x\right)\) and \(\Phi^{-1}\!\left(\alpha\right) = -\Phi^{-1}\!\left(1-\alpha\right)\),
    \begin{align*}
        Q\!\left(L\!\left(y\right)>x\right) \geq  \Phi\!\left(\Phi^{-1}\!\left(1-\left(\alpha+\Delta\right)\right) - \frac{m_1+m_2}{\sqrt{v}} - \frac{\eta_1+\eta_2}{\sqrt{v}}\right) - \Delta = G_{\mu+\phi}\!\left(\alpha+\Delta\right) - \Delta,
    \end{align*}
    where \(\mu=\frac{m_1+m_2}{\sqrt{v}}\) and \(\phi=\frac{\eta_1+\eta_2}{\sqrt{v}}\). Finally, \(T\!\left[P,Q\right]\!\left(\alpha\right) = Q\!\left(L\!\left(y\right)>x\right)\) by the Neyman-Pearson lemma, so we conclude
    \begin{align*}
        T\!\left[P,Q\right]\!\left(\alpha\right) \geq G_{\mu+\phi}\!\left(\alpha+\Delta\right) - \Delta \text{ for all }\alpha\in\left[0,1-\Delta\right].
    \end{align*}
    The bound is extended from \(\alpha\in\left(0,1-\Delta\right)\) to \(\alpha\in\left[0,1-\Delta\right]\) due to the continuity of trade-off functions. The proof for the remaining bound \(T\!\left[P,Q\right]\!\left(\alpha\right) \leq G_{\mu+\phi}\!\left(\alpha-\Delta\right) + \Delta\) is identical, which utilizes the remaining inequalities \eqref{eq:cdfP_1} and \eqref{eq:cdfQ_2}.
\end{proof}

\subsection{Proofs for Demonstration of Approximate GDP Filter}
The limit conditions in Theorem \ref{thm:validity-approx-GDP-filter} are:
\begin{enumerate}
    \item Near \(\tilde{q}=0\), \(T\bar{q}^2=\Theta\!\left(1\right)\) as \(\bar{q}\to0\).
    \item Near \(\tilde{q}=1\), \(T\frac{1}{\bar{\sigma}^2}=\Theta\!\left(1\right)\) and \(\hat{q}=1-\bar{q}=O\!\left(\frac{1}{\bar{\sigma}^2}\right)\) as \(\bar{\sigma}\to\infty\).
\end{enumerate}
\validapproxGDPfilter*
\begin{proof}
    Fix a remove pair \(S,S^-\) that differs in some datapoint \(X\). After \(T\) steps, the sum of conditional expectations of PLRVs between \(S\) and \(S^-\) may remain below \(B\), because
    \begin{enumerate}
        \item The filter takes into account an amount of \(\frac{1}{2}q_t^2\!\left(\e^{\frac{1}{\sigma_t^2}}-1\right)\) or \(\frac{1}{2}q_t^2\frac{1}{\sigma_t^2}\), which upper bounds the actual amount \(\frac{1}{2}q_t^2\!\left(\e^{\frac{\norm{\grave{g_t}\left(X\right)}}{\sigma^2}}-1\right)\) or \(\frac{1}{2}q_t^2\frac{\norm{\grave{g_t}\left(X\right)}}{\sigma_t^2}\) spent by the \(S,S^-\).
        \item After \(T\) steps, the total budget accounted for may not have reached \(B\).
    \end{enumerate}
    In this case, we can artificially create an extra mechanism \(\M_{T+1}\) at step \(T+1\), tailored to the pair \(S,S^-\), that adaptively consumes the remaining budget, and then apply projection as a post-processing to extract the first \(T\) outputs from \(T+1\) outputs. Therefore, w.l.o.g.~we can assume that the total budget \(B\) is almost surely exhausted by \(S,S^-\) after \(T\) steps.

    We apply Theorem \ref{thm:fully-adaptive-CLT} in combination with Theorems \ref{thm:approx-q=0} and \ref{thm:approx-q=1} to the pair \(S,S^-\):
    \begin{enumerate}
        \item In the regime near \(\tilde{q}=0\), as \(\bar{q}\to0\), the approximate GDP guarantee is 
        \begin{align*}
            \frac{B+B}{\sqrt{2B}}+T\cdot O\!\left(\bar{q}^3\right)=\sqrt{2B}+O\!\left(\bar{q}\right),
        \end{align*}
        with \(\Delta\)-error
        \begin{align*}
            O\!\left(\bar{q}\right)\left|\ln O\!\left(\bar{q}\right)\right|+\sqrt{\frac{T\cdot O\!\left(\bar{q}^3\right)}{2B}} = O\!\left(\sqrt{\bar{q}}\right).
        \end{align*}
        \item In the regime near \(\tilde{q}=1\), as \(\bar{\sigma}\to\infty\) the approximate GDP guarantee is 
        \begin{align*}
            \frac{B+B}{\sqrt{2B}}+T\hat{q}\cdot O\!\left(\frac{1}{\bar{\sigma}^4}\right)+T\cdot O\!\left(\hat{q}^2\right)=\sqrt{2B}+O\!\left(\frac{1}{\bar{\sigma}^2}\right),
        \end{align*} 
        with \(\Delta\)-error
        \begin{align*}
            O\!\left(\frac{1}{\bar{\sigma}}\right)\left|\ln O\!\left(\frac{1}{\bar{\sigma}}\right)\right|+\sqrt{\frac{T\hat{q}\cdot O\!\left(\frac{1}{\bar{\sigma}^4}\right)+T\cdot O\!\left(\hat{q}^2\right)}{2B}} = O\!\left(\frac{\ln\bar{\sigma}}{\bar{\sigma}}\right).
        \end{align*}
    \end{enumerate}
    Since the \(\Delta\)-approximate \(\mu\)-GDP guarantee is symmetric, the privacy guarantees above apply to both trade-off functions \(T\!\left[\M_{1:t}\!\left(S\right),\M_{1:t}\!\left(S^-\right)\right]\) and \(T\!\left[\M_{1:t}\!\left(S^-\right),\M_{1:t}\!\left(S\right)\right]\).
\end{proof}

\individualapproxGDPfilter*
\begin{proof}
    The proof mirrors the proof idea of \citet{feldman_individual_2021}. Fix a remove pair \(S=\left(X_1,\ldots,X_n,X\right)\) and \(S^-=\left(X_1,\ldots,X_n\right)\). At each step \(t\), the subset of datapoints in the input dataset that have positive remaining budget, and thus are still participating as proper inputs, is called the active subset. Denote \(\tilde{S}_t\) the active subset of \(S\) at step \(t\), and denote \(\tilde{S}^-_t\) that of \(S^-\).

    At step \(t\), conditioned on a given sequence of previous outputs \(y_{1:t-1}\), the total budget spent so far for each individual \(X_j\) can be computed, and hence it is deterministic which datapoint remains in \(\tilde{S}_t\) and \(\tilde{S}^-_t\). Consequently, \(\tilde{S}_t\) and \(\tilde{S}^-_t\) are either identical, or differ only in the datapoint \(X\).

    Therefore, the pair of PLRVs of \(S,S^-\) at step \(t\), conditioned on a given sequence of previous outputs \(y_{1:t-1}\), is the pair of PLRVs of \(\tilde{S}_t,\tilde{S}^-_t\), which implies that Theorems \ref{thm:approx-q=0} and \ref{thm:approx-q=1} still apply. We then apply Theorem \ref{thm:fully-adaptive-CLT} and obtain identical results to Theorem \ref{thm:validity-approx-GDP-filter}.

    This proof can be seen as an extension of the proof to Theorem \ref{thm:validity-approx-GDP-filter} by allowing the pair of active subsets adaptively shrink through composition.
\end{proof}

\section{Privacy Loss Moments: Approximations}\label{appx:approximations}

For a subsampled Gaussian mechanism $\M$, let $P = \M(S)$ and $Q = \M(S^-)$ denote
the output distributions for an arbitrary remove pair $S, S^-$ differing by a single
datapoint $X$. In the regimes $q \ll 1$ and $q \approx 1$, the trade-off function is
nearly symmetric and close to Gaussian, which gives rise to the asymptotic relationships
\[
  \E_P\!\left[\ln\tfrac{P}{Q}\right] \approx -\E_Q\!\left[\ln\tfrac{P}{Q}\right],
  \qquad
  \Var_P\!\left[\ln\tfrac{P}{Q}\right] \approx \Var_Q\!\left[\ln\tfrac{P}{Q}\right],
\]
and, most notably,
\[
  \frac{\E_P\!\left[\ln\frac{P}{Q}\right]}{\Var_P\!\left[\ln\frac{P}{Q}\right]}
  \;\approx\;
  \frac{-\E_Q\!\left[\ln\frac{P}{Q}\right]}{\Var_Q\!\left[\ln\frac{P}{Q}\right]}
  \;\approx\; \frac{1}{2}.
\]

To build intuition, consider each regime in turn. When $q \approx 1$, the subsampled
Gaussian mechanism is close to the standard Gaussian mechanism, whose PLRV has
mean-to-variance ratio exactly $\nicefrac{1}{2}$. When $q \ll 1$, the PLRV mean is
small and the variance $\Var[L]$ is well approximated by the second moment $\E[L^2]$. Setting $L =
\ln\nicefrac{P}{Q}$ and applying the Taylor expansions $e^y \approx 1 + y +
\frac{y^2}{2}$ and $e^{-y} \approx 1 - y + \frac{y^2}{2}$ yields
\begin{align*}
  1 = \E_P\!\left[\tfrac{Q}{P}\right] = \E_P\!\left[e^{-L}\right]
    &\approx 1 - \E_P[L] + \tfrac{1}{2}\E_P[L^2], \\
  1 = \E_Q\!\left[\tfrac{P}{Q}\right] = \E_Q\!\left[e^{L}\right]
    &\approx 1 + \E_Q[L] + \tfrac{1}{2}\E_Q[L^2].
\end{align*}

\begin{theorem}\label{thm:approx-q=0}
  Fix a datapoint $X$. Let $h$ be a deterministic function on datasets with
  $h(\emptyset) = 0$, and let $\mu = \|h((X))\|$. Let $\M$ be the subsampled Gaussian
  mechanism derived from $h$, with sampling rate $q$ and noise level $\sigma$. For an
  arbitrary remove-neighbor pair $S = (X_1, \ldots, X_n, X)$ and $S^- = (X_1, \ldots,
  X_n)$, set $P = \M(S)$ and $Q = \M(S^-)$. As $q \to 0$:
  \begin{compactenum}
    \item $\E_P\!\left[\ln\frac{P}{Q}\right]
          = \frac{1}{2}q^2\!\left(e^{\mu^2/\sigma^2} - 1\right) + O(q^3)$,
    \item $\E_Q\!\left[\ln\frac{P}{Q}\right]
          = -\frac{1}{2}q^2\!\left(e^{\mu^2/\sigma^2} - 1\right) + O(q^3)$,
    \item $\Var_P\!\left[\ln\frac{P}{Q}\right]
          = q^2\!\left(e^{\mu^2/\sigma^2} - 1\right) + O(q^3)$,
    \item $\Var_Q\!\left[\ln\frac{P}{Q}\right]
          = q^2\!\left(e^{\mu^2/\sigma^2} - 1\right) + O(q^3)$,
    \item $\E_P\!\left|\ln\frac{P}{Q} - \E_P\!\left[\ln\frac{P}{Q}\right]\right|^3
          \leq q^3\!\left(e^{3\mu^2/\sigma^2} + 3e^{\mu^2/\sigma^2} + 1\right) + O(q^4)$,
    \item $\E_Q\!\left|\ln\frac{P}{Q} - \E_Q\!\left[\ln\frac{P}{Q}\right]\right|^3
          \leq q^3\!\left(e^{3\mu^2/\sigma^2} + 3e^{\mu^2/\sigma^2} + 1\right) + O(q^4)$.
  \end{compactenum}
  Each error term $O(q^3)$ or $O(q^4)$ may depend on $q$, $\sigma$, $X$, and
  $X_1, \ldots, X_n$.
\end{theorem}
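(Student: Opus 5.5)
The plan is to reduce everything to the one-dimensional structure of the likelihood ratio and then expand in powers of \(q\). Conditioned on the subsampling of \(X_1,\ldots,X_n\), the outputs under \(S\) and \(S^-\) differ only through whether \(X\) is included. Write \(Q=\N(c,\sigma^2\mathbf{I})\) mixed over the subsampling randomness of the other points, and \(P=(1-q)Q+qR\), where \(R\) is \(Q\) shifted by \(h((X))\)-type contributions. When \(h\) is additive, \(R(y)=Q(y-h((X)))\). For general \(h\), \(R\) is the law with \(X\) forced into the sample.

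Define the ratio \(Z=\frac{R}{Q}-1\), so that \(\frac{P}{Q}=1+qZ\) and \(L=\ln(1+qZ)\). The only facts about \(Z\) needed are its moments under \(Q\):
\[
\E_Q[Z]=0,\qquad \E_Q[Z^2]=\chi^2(R\|Q),\qquad \E_Q[|Z|^k]<\infty .
\]
In the Gaussian-shift case, \(\chi^2(R\|Q)=e^{\mu^2/\sigma^2}-1\) exactly. With mixtures over the other points' subsampling, I would invoke joint convexity of \(\chi^2\) to bound by \(e^{\mu^2/\sigma^2}-1\). The expansion is then carried out with \(\chi^2\) kept as the leading constant, which the statement permits since error terms may depend on \(X_1,\ldots,X_n\).

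Next, expand \(L=qZ-\tfrac{q^2}{2}Z^2+\tfrac{q^3}{3}Z^3-\ldots\) and take expectations under \(Q\) and under \(P\). For the \(P\) side, use \(\E_P[g]=\E_Q[(1+qZ)g]\). This gives:
\begin{itemize}
\item \(\E_Q[L]=-\tfrac{q^2}{2}\E_Q Z^2+O(q^3)\);
\item \(\E_P[L]=\E_Q[qZ+q^2Z^2-\tfrac{q^2}{2}Z^2]+O(q^3)=\tfrac{q^2}{2}\E_QZ^2+O(q^3)\);
\item both variances equal \(q^2\E_QZ^2+O(q^3)\), since the squared means are \(O(q^4)\).
\end{itemize}

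For the third absolute central moments, note that the centering is \(O(q^2)\). Then \(|L-\E L|^3\le(q|Z|+O(q^2)(1+Z^2))^3\), so the moment is \(q^3\E|Z|^3+O(q^4)\) under \(Q\), and the same leading term under \(P\) (the \((1+qZ)\) weight only contributes at order \(q^4\)). To bound \(\E_Q|Z|^3\), I would use \(|a-1|^3\le a^3+3a^2\cdot 1\cdots\) crudely: expand \(|W-1|^3\le (W+1)^3\) with \(W=R/Q\ge0\), which gives \(\E_QW^3+3\E_QW^2+3\E_QW+1\). Then use \(\E_QW^2=e^{\mu^2/\sigma^2}\) and \(\E_QW^3=e^{3\mu^2/\sigma^2}\) for Gaussian shifts. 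This lands near the stated constant \(e^{3\mu^2/\sigma^2}+3e^{\mu^2/\sigma^2}+1\), and a sharper elementary inequality such as \(|w-1|^3\le w^3+3w\cdot\ldots\) would be tuned to match it exactly.

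The main obstacle is justifying the \(O(q^k)\) remainders rigorously, because \(Z\) is unbounded above and \(\ln(1+qZ)\) is singular near \(Z=-1/q\). Since \(Z\ge-1\), \(qZ\ge-q\) stays away from \(-1\), and the Taylor remainder of \(\ln(1+x)\) on \(x\ge -q\) is bounded by \(C|x|^k\). Combined with finiteness of all Gaussian likelihood-ratio moments (\(\E_Q W^k=e^{k(k-1)\mu^2/2\sigma^2}\)) and dominated convergence, this controls every remainder uniformly. I would handle the mixture over the other points by conditioning on their inclusion pattern only through convexity bounds on these moments.
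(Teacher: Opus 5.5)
Your decomposition $P=(1-q)Q+qR$, giving $P/Q=1+qZ$ with $\E_Q[Z]=0$ exactly, is valid under Poisson subsampling. It is a cleaner route than the paper's, which expands both densities to second order around $P_0=\N(0,\sigma^2\mathbf{I})$ (including the pair terms $b_k$) and then expands the quotient. Your expansions of $\E_Q[L]$, of $\E_P[L]=\E_Q[(1+qZ)L]$ and of the variances are fine, as is your use of convexity for uniform-in-$q$ moment bounds on the remainders.

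The gap is the leading coefficient. $Q$ and $R$ are mixtures over the rate-$q$ subsampling of $X_1,\ldots,X_n$. So what your expansion actually proves is $\E_P[L]=\tfrac{q^2}{2}\chi^2(R_q\|Q_q)+O(q^3)$, whose coefficient depends on $q$. Joint convexity gives only $\chi^2(R_q\|Q_q)\le\sum_s w_s(q)\bigl(e^{\|h(s\cup\{X\})-h(s)\|^2/\sigma^2}-1\bigr)$. That is an inequality, so it cannot yield the equalities in items 1--4. It also reduces to $e^{\mu^2/\sigma^2}-1$ only for additive $h$, whereas the theorem covers any $h$ with $h(\emptyset)=0$. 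Your remark that the error terms may depend on $X_1,\ldots,X_n$ does not help. The discrepancy $q^2\bigl(\chi^2(R_q\|Q_q)-(e^{\mu^2/\sigma^2}-1)\bigr)$ is $O(q^3)$ only if the bracket is $O(q)$, and that is exactly what must be shown. Note also that $h(\emptyset)=0$ plays no role in your argument, although it is what ties the limiting shift to $\mu=\|h((X))\|$.

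The missing step is the $q\to0$ behaviour of the mixtures themselves. The empty subsample of $X_1,\ldots,X_n$ has weight $(1-q)^n=1-O(q)$. Hence, in every $L^k(P_0)$, $Q_q/P_0=1+O(q)$ and $R_q/P_0=1+a+O(q)$, where $a=\N(h((X)),\sigma^2\mathbf{I})/P_0-1$; this is where $h(\emptyset)=0$ enters. Since $Q_q/P_0\ge(1-q)^n$, it follows that $Z_q=a+O(q)$ in every $L^k(P_0)$. Therefore $\E_{Q_q}[Z_q^2]=\E_{P_0}[a^2]+O(q)=e^{\mu^2/\sigma^2}-1+O(q)$, and likewise $\E_{Q_q}|Z_q|^3=\E_{P_0}|a|^3+O(q)$ for items 5--6. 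This is what the paper's expansion around $P_0$ builds in from the start, and with it added your route goes through.

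On the cubic constant, your arithmetic is correct: $\E_{P_0}(W+1)^3=e^{3\mu^2/\sigma^2}+3e^{\mu^2/\sigma^2}+4$. The paper uses the same inequality but drops the term $3\E_{P_0}[W]=3$ in the final equality of its Step 4. The stated bound still holds: for $w\ge0$, $|w-1|^3\le\max(w^3,1)\le w^3+1$, which gives $\E_{P_0}|a|^3\le e^{3\mu^2/\sigma^2}+1$.
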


\begin{theorem}\label{thm:approx-q=1}
  Fix a datapoint $X$. Let $h$ be a summation $h(X_1, \ldots, X_n) = \sum_i g(X_i)$
  for some function $g$, and let $\mu = \|g(X)\|$. Let $\M$ be the subsampled Gaussian
  mechanism derived from $h$, with sampling rate $q$ and noise level $\sigma$, and write
  $\hat{q} = 1 - q$. For an arbitrary remove-neighbor pair $S = (X_1, \ldots, X_n, X)$
  and $S^- = (X_1, \ldots, X_n)$, set $P = \M(S)$ and $Q = \M(S^-)$. Then:
  \begin{compactenum}
    \item $\E_P\!\left[\ln\frac{P}{Q}\right]
          = \frac{1}{2}q^2\frac{\mu^2}{\sigma^2}
            + \hat{q}\cdot O\!\left(\sigma^{-4}\right) + O(\hat{q}^2)$,
    \item $\E_Q\!\left[\ln\frac{P}{Q}\right]
          = -\frac{1}{2}q^2\frac{\mu^2}{\sigma^2}
            + \hat{q}\cdot O\!\left(\sigma^{-4}\right) + O(\hat{q}^2)$,
    \item $\Var_P\!\left[\ln\frac{P}{Q}\right]
          = q^2\frac{\mu^2}{\sigma^2}
            + \hat{q}\cdot O\!\left(\sigma^{-4}\right) + O(\hat{q}^2)$,
    \item $\Var_Q\!\left[\ln\frac{P}{Q}\right]
          = q^2\frac{\mu^2}{\sigma^2}
            + \hat{q}\cdot O\!\left(\sigma^{-4}\right) + O(\hat{q}^2)$,
    \item $\E_P\!\left|\ln\frac{P}{Q} - \E_P\!\left[\ln\frac{P}{Q}\right]\right|^3
          \leq (2 - q^3)\sqrt{\tfrac{8}{\pi}}\,\frac{\mu^3}{\sigma^3}
            + \hat{q}\cdot O\!\left(\sigma^{-4}\right) + O(\hat{q}^2)$,
    \item $\E_Q\!\left|\ln\frac{P}{Q} - \E_Q\!\left[\ln\frac{P}{Q}\right]\right|^3
          \leq (2 - q^3)\sqrt{\tfrac{8}{\pi}}\,\frac{\mu^3}{\sigma^3}
            + \hat{q}\cdot O\!\left(\sigma^{-4}\right) + O(\hat{q}^2)$.
  \end{compactenum}
  Here $O(\hat{q}^2)$ is in the limit $\hat{q} \to 0$, and $O(\sigma^{-4})$ is in the
  limit $\sigma \to \infty$. All error terms may depend on $q$, $\sigma$, $X$, and
  $X_1, \ldots, X_n$.
\end{theorem}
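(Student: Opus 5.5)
The plan is a perturbation expansion around \(q=1\), using the explicit form of the remove-pair densities. Let \(a\) be the summed contribution of \(X_1,\ldots,X_n\) (under Poisson subsampling, the sum over the subsampled others). Conditioning on the random subset of the others gives a common mixture component. Translation invariance and the data-processing structure reduce the problem to one dimension along the direction \(g(X)\). Then \(Q\) is the law of \(Z+W\), and \(P\) is the mixture \(q\,\mathcal{L}(Z+\mu+W)+\hat q\,\mathcal{L}(Z+W)\), where \(Z\sim\N(0,\sigma^2)\) and \(W\) is the independent contribution of the others.

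Write \(r(y)=\frac{p(y)}{q(y)}=q\,\rho(y)+\hat q\), where \(\rho\) is the likelihood ratio of the shifted to the unshifted law. The key algebraic step is then
\begin{align*}
L=\ln r = \ln q+\ln\rho+\ln\!\left(1+\tfrac{\hat q}{q\rho}\right)=\ln\rho-\hat q+\tfrac{\hat q}{\rho}+O(\hat q^2),
\end{align*}
using \(\ln q=-\hat q+O(\hat q^2)\). All moments of \(L\) under \(P\) and \(Q\) are then expanded to first order in \(\hat q\), with the \(O(\hat q^2)\) remainder absorbed. This needs integrability of \(\rho^{-k}\), which holds because \(\E_Q[\rho^{-1}]\) and \(\E_P[\rho^{-2}]\) are Gaussian-type moments bounded by factors \(e^{c\mu^2/\sigma^2}\).

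At zeroth order, the Gaussian shift structure gives the standard Gaussian-mechanism values: mean \(\pm\frac{\mu^2}{2\sigma^2}\), variance \(\frac{\mu^2}{\sigma^2}\), and absolute third central moment \(\sqrt{8/\pi}\,\frac{\mu^3}{\sigma^3}\). The first-order terms in \(\hat q\) have the form \(\hat q(\E[\rho^{-1}]-1)\) and related cross terms, and these are \(e^{\Theta(\mu^2/\sigma^2)}-1=O(\sigma^{-2})\). The statement claims \(\hat q\,O(\sigma^{-4})\). To reach that order, I would first rewrite the zeroth-order term \(\frac{\mu^2}{2\sigma^2}\) as \(\frac{1}{2}q^2\frac{\mu^2}{\sigma^2}\), noting that \(q^2=1-2\hat q+O(\hat q^2)\). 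The \(-\hat q\frac{\mu^2}{\sigma^2}\) discrepancy this creates should then cancel the \(O(\sigma^{-2})\) piece of the first-order term. What remains is of order \(\hat q\,\sigma^{-4}\), which I would identify by Taylor expanding \(e^{\mu^2/\sigma^2}\).

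The same bookkeeping applies to the \(Q\)-mean and both variances, with signs flipped appropriately. For the third moments I would use the triangle inequality in \(L^3\) norm around the Gaussian term. The coefficient \((2-q^3)\) is an upper bound that exceeds the exact Gaussian value \(1\) by roughly \(3\hat q\), which leaves slack to absorb the first-order corrections. I would therefore only need a crude bound \(\hat q\cdot O(1)\cdot\frac{\mu^3}{\sigma^3}\) on the perturbation, and comparing against that slack is enough.

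The main obstacle is the presence of \(W\). When \(W\) is nondegenerate (Poisson subsampling of the others), \(\rho\) is no longer an exact Gaussian likelihood ratio. I expect to handle this in one of two ways. The first is to argue that the problem reduces, by conditioning on the subsampled others (a common, datum-independent part), to the fixed-\(a\) case; this requires the convexity and post-processing caveat, and the statement's allowance that error terms may depend on \(X_1,\ldots,X_n\). The second is to note that for a summation the others' subsampling is independent of \(X\)'s inclusion, so it can be treated as extra noise. The cancellation producing the \(\sigma^{-4}\) order is the delicate computational step, and I would verify it first in the case \(W=0\).
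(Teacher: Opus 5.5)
Your perturbative set-up is sound when the other records contribute deterministically. The exact identity $P/Q=q\rho+\hat q$, together with $x-x^2/2\le\ln(1+x)\le x$ for $x\ge0$, gives $L=\ln\rho-\hat q+\hat q/\rho+O(\hat q^2)$, with a remainder controlled by Gaussian moments of $\rho^{-1},\rho^{-2}$. From this, e.g.\ $\E_P[L]=(1-2\hat q)\frac{\mu^2}{2\sigma^2}+O(\hat q^2)$ and $\E_Q[L]=-\frac{\mu^2}{2\sigma^2}+\hat q\,(e^{\mu^2/\sigma^2}-1)+O(\hat q^2)$ drop out, which is tidier than the paper's density expansion.

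\textbf{The gap} is the case the theorem is really about: $n\ge1$, with the others Poisson-subsampled. Your identity remains exact with $\rho=Q_\mu/Q$ ($Q_\mu$ being $Q$ shifted by $g(X)$), but $Q$ now depends on $\hat q$. Write $m=\sum_j\mu_j$, $\bar Q=\N(m,\sigma^2)$, $\bar P=\N(m+\mu,\sigma^2)$. Then $Q=\bar Q\,\bigl(1+\hat q\sum_jB_j\bigr)+O(\hat q^2)$ with $B_j=\N(m-\mu_j,\sigma^2)/\bar Q-1$. So the Gaussian ``zeroth-order'' moments of $\ln\rho$ and $\rho^{-1}$ acquire first-order corrections of their own, and these are individually of order $\hat q\,\sigma^{-2}$. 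Let $x_j=\langle g(X),g(X_j)\rangle/\sigma^2$. Besides the $-\hat q\,\mu^2/\sigma^2$ already present for $n=0$:
\begin{itemize}
\item the reweighting of $\bar P$ by the others contributes $-\hat q\sum_jx_j$;
\item the change in the log-ratio contributes $+\hat q\sum_j(1-e^{-x_j})$.
\end{itemize}
Only their sum $\hat q\sum_j(1-e^{-x_j}-x_j)$ is $\hat q\,O(\sigma^{-4})$. An analogous cancellation between the $P$-side and $Q$-side contributions of the others is needed in the variances. Exhibiting these cancellations is exactly what the paper's $E_1/E_2$ and $V_1/V_2$ computation (with $A_j,A,B_j$) does, and your plan does not contain it.

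Neither of your proposed fixes supplies it.
\begin{itemize}
\item Conditioning on the subsampled others and invoking convexity or post-processing yields only one-sided bounds (e.g.\ $\E_{Q_\mu}[\ln\rho]\le\frac{\mu^2}{2\sigma^2}$ by data processing). It gives no control of variances or third absolute moments, which are not monotone under post-processing.
\item The clause that the error \emph{constants} may depend on $X_1,\ldots,X_n$ does not rescue this, because the uncancelled pieces have the wrong \emph{order}, $\hat q\,\Theta(\sigma^{-2})$. The same objection applies to reducing to one dimension by data processing when the $g(X_j)$ are not collinear with $g(X)$.
\item Treating $W$ as extra noise is viable only after expanding the resulting non-Gaussian base law in $\hat q$, i.e.\ after doing the missing computation.
\end{itemize}

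A smaller point on items 5--6: a crude bound $\hat q\,O(1)\,\mu^3/\sigma^3$ on the perturbation is not enough, since $\hat q\,\mu^3/\sigma^3$ is not $\hat q\,O(\sigma^{-4})$. The constant must be at most $3\sqrt{8/\pi}$. Write the centred PLRV as $U+\hat q V$, with $U$ the centred Gaussian part. Minkowski gives $3\hat q\,\|U\|_3^2\|V\|_3$ with $\|V\|_3=\|U\|_3+O(\sigma^{-2})$, which uses up the slack exactly; this is the origin of $2-q^3=1+3\hat q+O(\hat q^2)$. A signed expansion instead shows the true leading term is $q^3\sqrt{8/\pi}\,\mu^3/\sigma^3$, leaving ample room.
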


\subsection{Numerical Illustration of Moment Approximations}

We numerically illustrate the asymptotic approximations of Theorems~\ref{thm:approx-q=0} and \ref{thm:approx-q=1} for the means and variances of the PLRVs $L = \ln(P/Q)$ of the Poisson subsampled Gaussian mechanism. The accurate values are computed by direct numerical integration on a fine grid, using the one-dimensional distributions $P= q \cdot \mathcal{N}(1,\sigma^2) + (1-q) \cdot \mathcal{N}(0,\sigma^2)$ and $Q= \mathcal{N}(0,\sigma^2)$.

\para{Small-$q$ regime.}
Figures~\ref{fig:smallq_EP} and~\ref{fig:smallq_VarP} illustrate the accuracy of approximations 1.~and 3.~of Theorem~\ref{thm:approx-q=0}, respectively. We plot the relative error between the numerical quantity and its $O(q^2)$ approximation as $q\to0$. As predicted by the theorem, the error decreases polynomially in $q$, and the accuracy improves for larger noise levels $\sigma$.

\para{Large-$q$ regime.}
Figures~\ref{fig:largeq_EP} and~\ref{fig:largeq_VarP} illustrate the accuracy of approximations 1.~and 3.~of Theorem~\ref{thm:approx-q=1}. Here $q$ is close to $1$, and we again plot relative errors as functions of $q$ for several values of $\sigma$. As predicted by the theorem, the relative error decreases as $\hat q = 1-q \to 0$ and $\sigma$ increases.

The numerical illustrations of Figures~\ref{fig:smallq_EP} to~\ref{fig:largeq_VarP} confirm that the PLRV mean and variance satisfy the predicted scalings, and that the resulting mean-to-variance ratios approach $1/2$, as stated by Theorems~\ref{thm:approx-q=0} and \ref{thm:approx-q=1}.

\vspace{10pt}
\begin{figure}[H]
    \centering
    \includegraphics[width=0.72\linewidth]{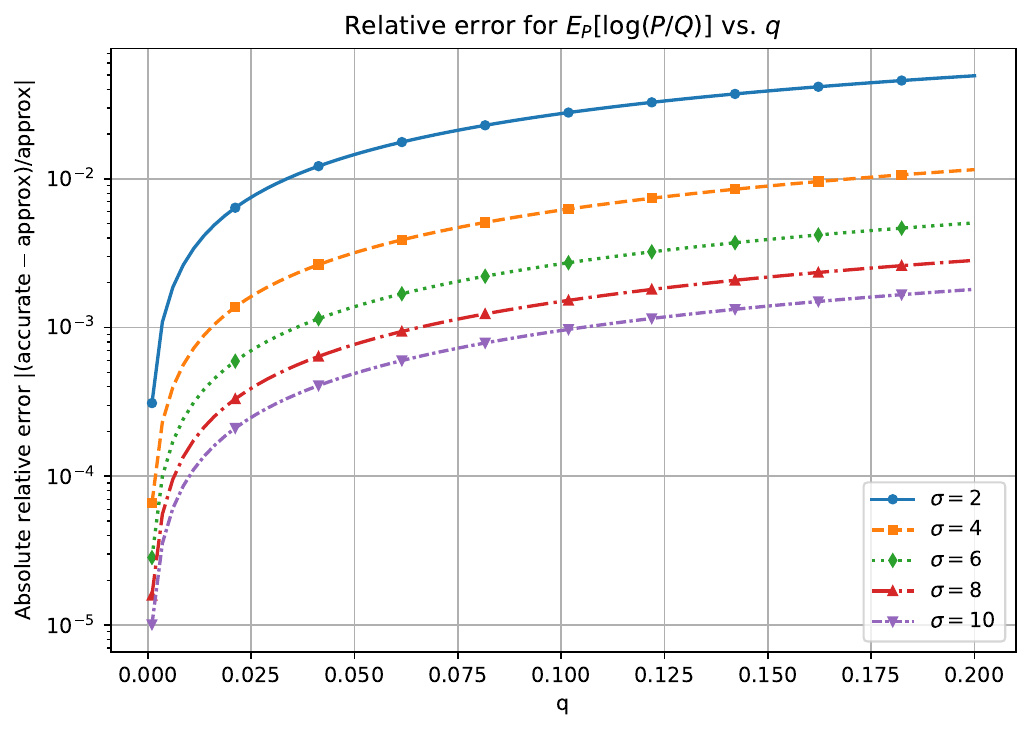}
    \caption{Relative error of
    $\E_P[\ln(P/Q)]$ versus the approximation
    $\frac12 q^2(\exp(\mu^2/\sigma^2)-1)$ as $q\to0$,
    illustrating the approximation of 1.~in Theorem~\ref{thm:approx-q=0}.}
    \label{fig:smallq_EP}
\end{figure}

\vspace{10pt}
\begin{figure}[H]
    \centering
    \includegraphics[width=0.72\linewidth]{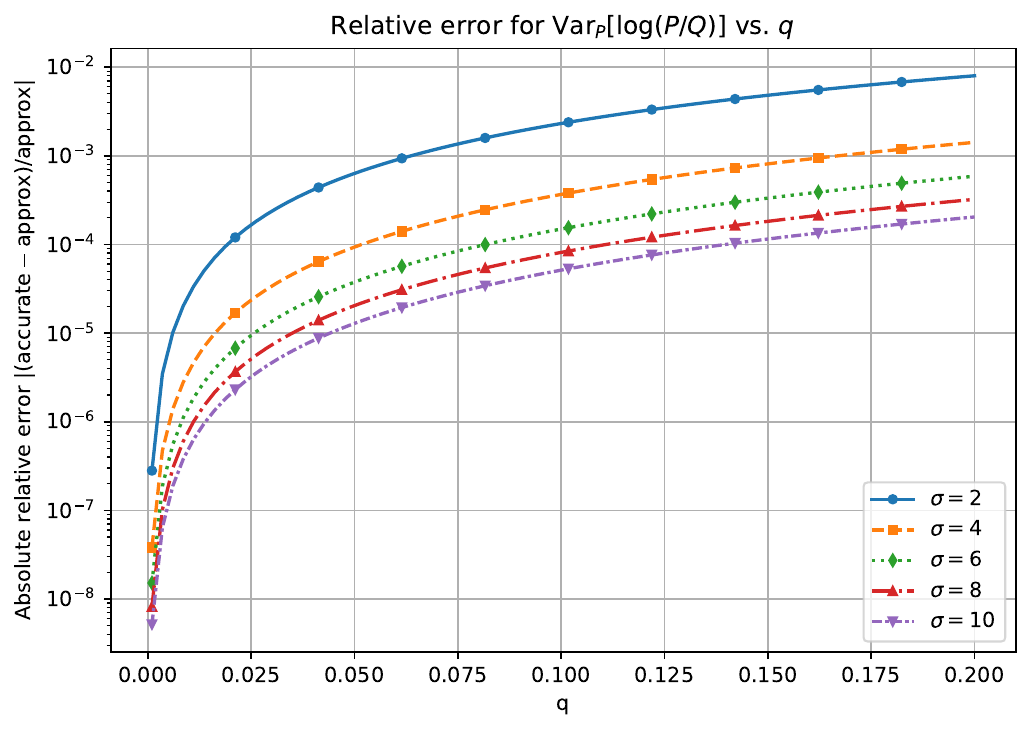}
    \caption{Relative error of
    $\Var_P[\ln(P/Q)]$ versus the approximation
    $q^2(\exp(\mu^2/\sigma^2)-1)$ as $q\to0$,
    illustrating the approximation 3.~in Theorem~\ref{thm:approx-q=0}.}
    \label{fig:smallq_VarP}
\end{figure}

\begin{figure}[H]
    \centering
    \includegraphics[width=0.72\linewidth]{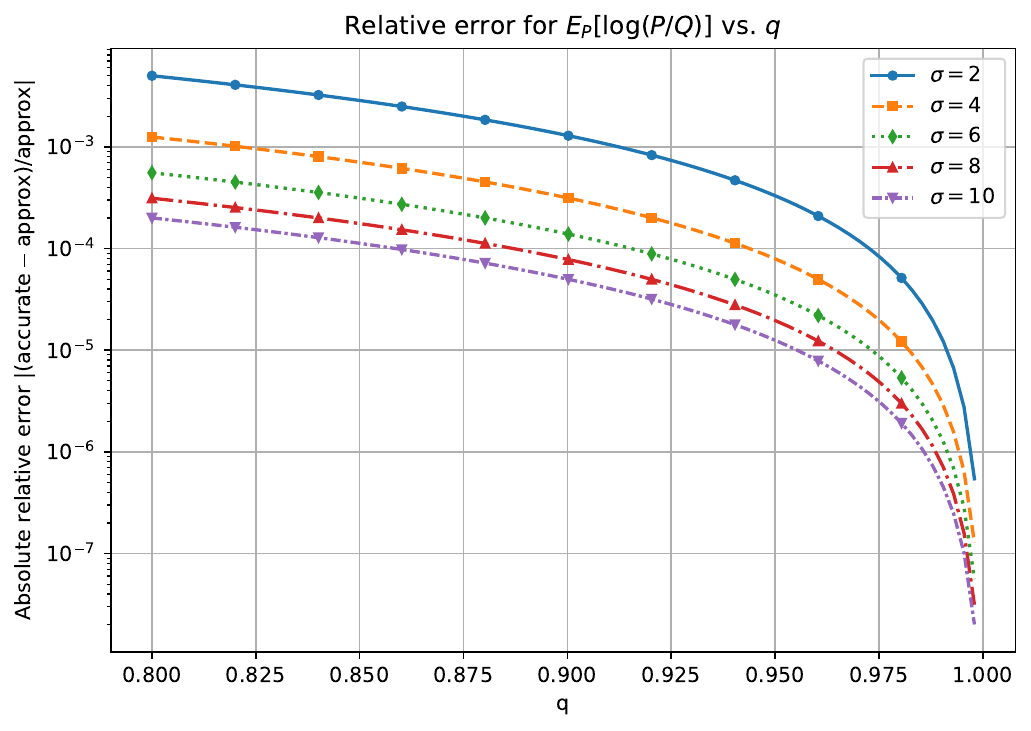}
    \caption{Relative error of
    $\E_P[\ln(P/Q)]$ versus the approximation
    $\frac12 q^2 \mu^2/\sigma^2$ for $q\approx1$,
    illustrating the approximation 1.~in Theorem~\ref{thm:approx-q=1}.}
    \label{fig:largeq_EP}
\end{figure}

\vspace{10pt}
\begin{figure}[H]
    \centering
    \includegraphics[width=0.72\linewidth]{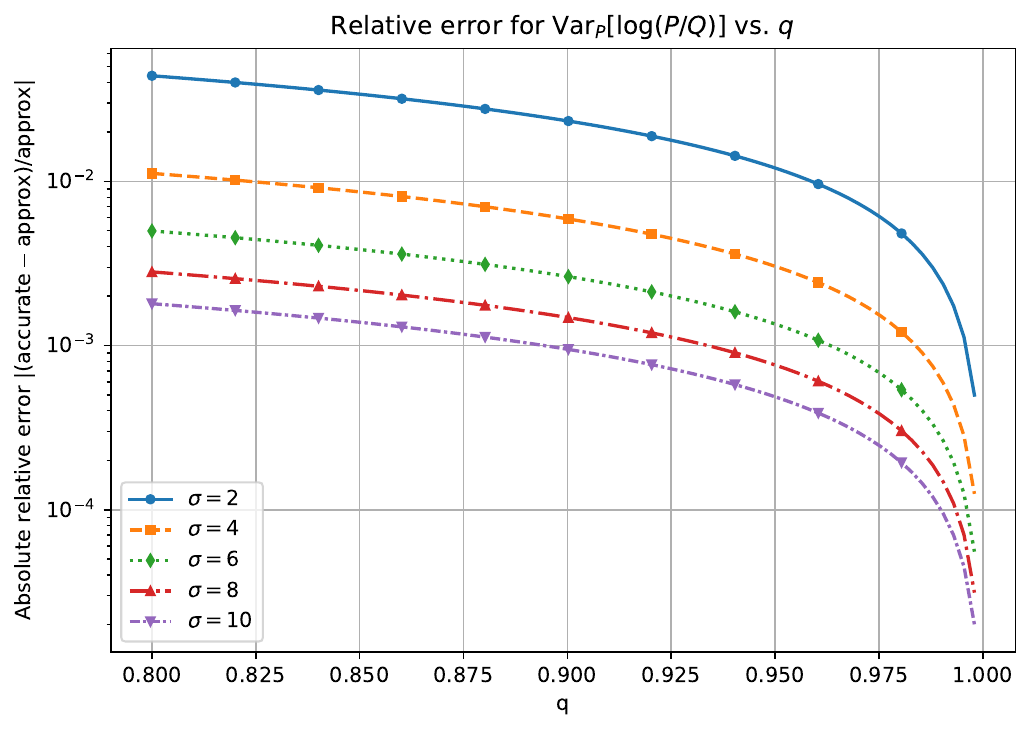}
    \caption{Relative error of
    $\Var_P[\log(P/Q)]$ versus the approximation
    $q^2 \mu^2/\sigma^2$ for $q\approx1$,
    illustrating the approximation 3.~in Theorem~\ref{thm:approx-q=1}.}
    \label{fig:largeq_VarP}
\end{figure}

\subsection{Derivations Using Taylor Expansions}

\begin{proof}[Proof of Theorem~\ref{thm:approx-q=0}]
All big-$O$ terms in this proof are taken in the limit $q\to 0$.
Throughout we use the following notation.
\begin{itemize}
  \item $P_0 = \mathcal{N}(0, \sigma^2 \mathbf{I})$.
  \item $a(y) = \dfrac{\mathcal{N}(h((X)),\,\sigma^2\mathbf{I})(y)}{P_0(y)} - 1$.
  \item $a_j(y) = \dfrac{\mathcal{N}(h((X_j)),\,\sigma^2\mathbf{I})(y)}{P_0(y)} - 1$,
        \quad $j = 1,\ldots,n$.
  \item $b_k(y) = \dfrac{\mathcal{N}(h((X_{j_1},X_{j_2})),\,\sigma^2\mathbf{I})(y)}{P_0(y)} - 1$,
        where $X_{j_1}, X_{j_2}$ is an arbitrary distinct pair from
        $(X,X_1,\ldots,X_n)$. The index $k$ runs from $1$ to $\tfrac{n(n+1)}{2}$
        to cover all such pairs; by convention, $k = 1, \ldots, \tfrac{n(n-1)}{2}$
        covers pairs that exclude $X$, and the remaining indices cover pairs that
        include $X$.
\end{itemize}

\medskip
\noindent\textbf{Step 1: Density approximations.}
The density of $P = \mathcal{M}(S)$ is a Gaussian mixture whose weights depend on
$q$. Applying the Taylor expansion
$(1-q)^\alpha = 1 - \alpha q + \tfrac{\alpha(\alpha-1)}{2}q^2 + q^3 O(1)$
gives two useful approximations:
\begin{align}
  \mathcal{M}(S)(y)
  &= \Bigl[1 + q\!\left(\sum_{j=1}^n a_j(y) + a(y)\right)\notag\\
  &\qquad + q^2\!\left(-n\sum_{j=1}^n a_j(y) - na(y)
                    + \sum_{k=1}^{\frac{n(n+1)}{2}} b_k(y)\right)
       + q^3 O(1)
     \Bigr] P_0(y),
  \label{eq:approx_M(S)_q~0_o(q^2)} \\[4pt]
  \mathcal{M}(S)(y)
  &= \Bigl[
       1 + q\!\left(\sum_{j=1}^n a_j(y) + a(y)\right) + q^2 O(1)
     \Bigr] P_0(y).
  \label{eq:approx_M(S)_q~0_o(q)}
\end{align}
The same expansion applied to $Q = \mathcal{M}(S^-)$ yields
\begin{align}
  \mathcal{M}(S^-)(y)
  &= \Bigl[1 + q\sum_{j=1}^n a_j(y)\notag\\
  &\qquad + q^2\!\left(-n\sum_{j=1}^n a_j(y)
                    + \sum_{k=1}^{\frac{n(n-1)}{2}} b_k(y)\right)
       + q^3 O(1)
     \Bigr] P_0(y),
  \label{eq:approx_M(S-)_q~0_o(q^2)} \\[4pt]
  \mathcal{M}(S^-)(y)
  &= \Bigl[
       1 + q\sum_{j=1}^n a_j(y) + q^2 O(1)
     \Bigr] P_0(y).
  \label{eq:approx_M(S-)_q~0_o(q)}
\end{align}

\medskip
\noindent\textbf{Step 2: Approximation of $\E_P[\ln(P/Q)]$.}
Using \eqref{eq:approx_M(S)_q~0_o(q^2)},
\eqref{eq:approx_M(S)_q~0_o(q)}, and
\eqref{eq:approx_M(S-)_q~0_o(q^2)},
\begin{align*}
  \E_P\!\left[\ln\frac{P}{Q}\right]
  &= \E_{y\sim P_0}\!\left[\frac{P(y)}{P_0(y)}\ln\frac{P(y)}{Q(y)}\right] \\
  &= \E_{y\sim P_0}\!\left[\left(1 + q\!\left(\sum_{j=1}^n a_j + a\right) + q^2 O(1)\right)\right.\\
  &\qquad\left.\cdot\ln\frac{
      1 + q\!\left(\sum_j a_j + a\right)
        + q^2\!\left(-n\sum_j a_j - na + \sum_{k=1}^{\frac{n(n+1)}{2}} b_k\right)
        + q^3 O(1)
    }{
      1 + q\sum_j a_j
        + q^2\!\left(-n\sum_j a_j + \sum_{k=1}^{\frac{n(n-1)}{2}} b_k\right)
        + q^3 O(1)
    }
  \right].
\end{align*}
Because $\E_{y\sim P_0}[a(y)] = \E_{y\sim P_0}[a_j(y)] = \E_{y\sim P_0}[b_k(y)] = 0$
for all $j,k$, the $q^2$ corrections in the numerator and denominator contribute zero
expectation and may be dropped, giving
\begin{align*}
  \E_P\!\left[\ln\frac{P}{Q}\right]
  &= \E_{y\sim P_0}\!\left[
    \left(1 + q\!\left(\sum_j a_j + a\right) + q^2 O(1)\right)
    \ln\frac{
      1 + q\!\left(\sum_j a_j + a\right) + q^3 O(1)
    }{
      1 + q\sum_j a_j + q^3 O(1)
    }
  \right].
\end{align*}

\pagebreak
We simplify the log in two sub-steps.
\begin{enumerate}
  \item Apply $\tfrac{1}{1+x} = 1 - x + x^2 + O(x^3)$ to the denominator.
  \item Apply $\ln(1+x) = x - \tfrac{1}{2}x^2 + O(x^3)$ to the resulting expression.
\end{enumerate}
Sub-step~1 yields
\begin{multline*}
  \E_P\!\left[\ln\frac{P}{Q}\right]
  = \E_{y\sim P_0}\!\left[\left(1 + q\!\left(\sum_j a_j + a\right) + q^2 O(1)\right)\right.\\
  \qquad\qquad\ \left.\cdot\ln\!\left(
       \bigl(1 + q\!\left(\sum_j a_j + a\right) + q^3 O(1)\bigr)
       \bigl(1 - q\sum_j a_j + q^2\!\left(\sum_j a_j\right)^{\!2} + q^3 O(1)\bigr)
     \right)
   \right] \\
  = \E_{y\sim P_0}\!\left[
     \left(1 + q\!\left(\sum_j a_j + a\right) + q^2 O(1)\right)
     \ln\!\left(1 + qa(y) - q^2 a(y)\sum_j a_j(y) + q^3 O(1)\right)
   \right].
\end{multline*}
Sub-step~2 then gives
\begin{align}
  \E_P\!\left[\ln\frac{P}{Q}\right]
  &= \E_{y\sim P_0}\!\left[
     \left(1 + q\!\left(\sum_j a_j + a\right) + q^2 O(1)\right)\right.\\
  &\qquad\qquad\left.\cdot
     \left(
       qa(y)
       - q^2\!\left(a(y)\sum_j a_j(y) + \tfrac{1}{2}a(y)^2\right)
       + q^3 O(1)
     \right)
   \right] \nonumber\\
  &= q\,\E_{y\sim P_0}[a(y)]
     + \tfrac{1}{2}q^2\,\E_{y\sim P_0}[a(y)^2]
     + O(q^3) \nonumber\\
  &= \tfrac{1}{2}q^2\,\E_{y\sim P_0}[a(y)^2] + O(q^3) \nonumber\\
  &= \tfrac{1}{2}q^2\!\left(e^{\mu^2/\sigma^2} - 1\right) + O(q^3).
  \label{eq:target_EP_q->0}
\end{align}
In the second equality we used that $O(1)$ is integrable, so
$\E_{y\sim P_0}[q^3 O(1)] = O(q^3)$. In the third equality we used
$\E_{y\sim P_0}[a(y)] = 0$.

The identical argument applied to $\E_Q[\ln(P/Q)]$ yields
\begin{align*}
  \E_Q\!\left[\ln\frac{P}{Q}\right]
  &= \E_{y\sim P_0}\!\left[
     \left(1 + q\sum_j a_j + q^2 O(1)\right)\right.\\
  &\qquad\qquad\left.\cdot
     \left(
       qa(y)
       - q^2\!\left(a(y)\sum_j a_j(y) + \tfrac{1}{2}a(y)^2\right)
       + q^3 O(1)
     \right)
   \right] \\
  &= q\,\E_{y\sim P_0}[a(y)]
     - \tfrac{1}{2}q^2\,\E_{y\sim P_0}[a(y)^2]
     + O(q^3) \\
  &= -\tfrac{1}{2}q^2\!\left(e^{\mu^2/\sigma^2} - 1\right) + O(q^3).
\end{align*}

\medskip
\noindent\textbf{Step 3: Variance.}
Since $\E_P[\ln(P/Q)] = O(q^2)$, we have
\[
  \Var_P\!\left[\ln\frac{P}{Q}\right]
  = \E_P\!\left[\left(\ln\frac{P}{Q}\right)^{\!2}\right]
    - \left(\E_P\!\left[\ln\frac{P}{Q}\right]\right)^{\!2}
  = \E_P\!\left[\left(\ln\frac{P}{Q}\right)^{\!2}\right] + O(q^4),
\]
so it suffices to approximate $\E_P[(\ln(P/Q))^2]$. The same two sub-steps give
\begin{align*}
  \E_P\!\left[\left(\ln\frac{P}{Q}\right)^{\!2}\right]
  &= \E_{y\sim P_0}\!\left[
     \left(1 + q\!\left(\sum_j a_j + a\right) + q^2 O(1)\right)\right.\\
  &\qquad\qquad\left.\cdot
     \left(
       qa(y)
       - q^2\!\left(a(y)\sum_j a_j(y) + \tfrac{1}{2}a(y)^2\right)
       + q^3 O(1)
     \right)^{\!2}
   \right] \\
  &= \E_{y\sim P_0}\!\left[
     \left(1 + q\!\left(\sum_j a_j + a\right) + q^2 O(1)\right)
     \left(q^2 a(y)^2 + q^3 O(1)\right)
   \right] \\
  &= q^2\,\E_{y\sim P_0}[a(y)^2] + O(q^3) \\
  &= q^2\!\left(e^{\mu^2/\sigma^2} - 1\right) + O(q^3).
\end{align*}
Hence $\Var_P[\ln(P/Q)] = q^2(e^{\mu^2/\sigma^2}-1) + O(q^3)$, and the case of
$\Var_Q[\ln(P/Q)]$ is identical.

\medskip
\noindent\textbf{Step 4: Third absolute central moment.}
\begin{align*}
  \E_{P}\!\left|\ln\frac{P}{Q} - \E_P\!\left[\ln\frac{P}{Q}\right]\right|^3
  &= \E_{y\sim P_0}\!\left[(1 + qO(1))\cdot\left|
       \ln\frac{
         1 + q(\sum_j a_j + a) + q^2 O(1)
       }{
         1 + q\sum_j a_j + q^2 O(1)
       }
  \right.\right.\\
  &\qquad\qquad\qquad\qquad\qquad\qquad\left.\left. - \tfrac{1}{2}q^2(e^{\mu^2/\sigma^2}-1) + q^3 O(1)
     \right|^3
   \right] \\
  &= \E_{y\sim P_0}\!\left[
     (1 + qO(1))\,\bigl|qa(y) + q^2 O(1)\bigr|^3
   \right] \\
  &= \E_{y\sim P_0}\!\left[\bigl|qa(y) + q^2 O(1)\bigr|^3\right] + O(q^4).
\end{align*}
Applying the triangle inequality and using $|O(1)| = O(1)$,
\begin{align*}
  \E_{P}\!\left|\ln\frac{P}{Q} - \E_P\!\left[\ln\frac{P}{Q}\right]\right|^3
  &\leq \E_{y\sim P_0}\!\left[\left(q|a(y)| + q^2|O(1)|\right)^3\right] + O(q^4) \\
  &= q^3\,\E_{y\sim P_0}|a(y)|^3 + O(q^4).
\end{align*}
Applying the triangle inequality once more to $a(y) = \exp\!\left(\tfrac{\mu\cdot y}{\sigma^2} - \tfrac{\mu^2}{2\sigma^2}\right) - 1$,
\begin{align*}
  \E_{P}\!\left|\ln\frac{P}{Q} - \E_P\!\left[\ln\frac{P}{Q}\right]\right|^3
  &\leq q^3\,\E_{y\sim P_0}\!\left|e^{\frac{\mu\cdot y}{\sigma^2} - \frac{\mu^2}{2\sigma^2}} - 1\right|^3
     + O(q^4) \\
  &\leq q^3\,\E_{y\sim P_0}\!\left[\left(e^{\frac{\mu\cdot y}{\sigma^2} - \frac{\mu^2}{2\sigma^2}} + 1\right)^3\right]
     + O(q^4) \\
  &= q^3\!\left(e^{3\mu^2/\sigma^2} + 3e^{\mu^2/\sigma^2} + 1\right) + O(q^4).
\end{align*}
The case of $\E_{Q}|\ln(P/Q) - \E_Q[\ln(P/Q)]|^3$ is identical.
\end{proof}

\begin{proof}[Proof of Theorem~\ref{thm:approx-q=1}]
For clarity we give the proof in the one-dimensional setting, where $g$ outputs real
numbers; the argument extends to the multidimensional case without essential change.

All big-$O$ terms are taken in the limit $\hat{q} = 1-q \to 0$ or $\sigma\to\infty$,
unless otherwise stated. We introduce the following notation.
\begin{itemize}
  \item $\mu = g(X)$, and $\mu_j = g(X_j)$ for $j = 1,\ldots,n$.
  \item $\bar{P} = \mathcal{N}\!\left(\sum_j \mu_j + \mu,\,\sigma^2\right)$
        and $\bar{Q} = \mathcal{N}\!\left(\sum_j \mu_j,\,\sigma^2\right)$.
  \item For $i = 1,\ldots,n$,
        \begin{align}
          A_i(y)
          &= \frac{\mathcal{N}(\sum_j\mu_j + \mu - \mu_i,\,\sigma^2)(y)}{\bar{P}(y)} - 1
           = \exp\!\left(-\frac{\mu_i(y - \sum_j\mu_j - \mu)}{\sigma^2}
                          - \frac{\mu_i^2}{2\sigma^2}\right) - 1.
          \label{eq:Ai}
        \end{align}
  \item Similarly,
        \begin{align}
          A(y)
          = \frac{\bar{Q}(y)}{\bar{P}(y)} - 1
          &= \frac{\mathcal{N}(\sum_j\mu_j,\,\sigma^2)(y)}{\bar{P}(y)} - 1
           = \exp\!\left(-\frac{\mu(y - \sum_j\mu_j - \mu)}{\sigma^2}
                          - \frac{\mu^2}{2\sigma^2}\right) - 1.
          \label{eq:A}
        \end{align}
  \item For $i = 1,\ldots,n$,
        \begin{align*}
          B_i(y)
          = \frac{\mathcal{N}(\sum_j\mu_j - \mu_i,\,\sigma^2)(y)}{\bar{Q}(y)} - 1
          = \exp\!\left(-\frac{\mu_i(y - \sum_j\mu_j)}{\sigma^2}
                         - \frac{\mu_i^2}{2\sigma^2}\right) - 1.
        \end{align*}
\end{itemize}

\medskip
\noindent\textbf{Step 1: Density approximations.}
Applying the Taylor expansion
$q^\alpha = (1-\hat{q})^\alpha = 1 - \alpha\hat{q} + O(\hat{q}^2)$
to the mixture weights gives
\begin{align}
  \mathcal{M}(S)(y)
  &= \Bigl[
       1
       + \hat{q}\!\left(\sum_{j=1}^n A_j(y) + A(y)\right)
       + \hat{q}^2 O(1)
     \Bigr]\bar{P}(y),
  \label{eq:approx_M(S)_q~1} \\[4pt]
  \mathcal{M}(S^-)(y)
  &= \Bigl[
       1 + \hat{q}\sum_{j=1}^n B_j(y) + \hat{q}^2 O(1)
     \Bigr]\bar{Q}(y).
  \label{eq:approx_M(S-)_q~1}
\end{align}

\medskip
\noindent\textbf{Step 2: Approximation of $\E_P[\ln(P/Q)]$.}
By \eqref{eq:approx_M(S)_q~1} and \eqref{eq:approx_M(S-)_q~1},
\begin{align*}
  \E_P\!\left[\ln\frac{P}{Q}\right]
  &= \E_{y\sim\bar{P}}\!\left[\left(1 + \hat{q}\!\left(\sum_j A_j + A\right) + \hat{q}^2 O(1)\right)\right.\\
  &\qquad\left.\cdot\left(
      \ln\frac{\bar{P}(y)}{\bar{Q}(y)}
      + \ln\frac{
          1 + \hat{q}\!\left(\sum_j A_j + A\right) + \hat{q}^2 O(1)
        }{
          1 + \hat{q}\sum_j B_j + \hat{q}^2 O(1)
        }
    \right)
  \right].
\end{align*}
We split this into two parts, $E_1$ and $E_2$.

\smallskip
\textit{Part $E_1$.}
\begin{align*}
  E_1
  &= \E_{y\sim\bar{P}}\!\left[
     \left(1 + \hat{q}\!\left(\sum_j A_j + A\right) + \hat{q}^2 O(1)\right)
     \ln\frac{\bar{P}(y)}{\bar{Q}(y)}
   \right] \\
  &= \frac{\mu^2}{2\sigma^2}
     + \hat{q}\,\E_{y\sim\bar{P}}\!\left[
         \left(\sum_j A_j + A\right)\ln\frac{\bar{P}}{\bar{Q}}
       \right]
     + O(\hat{q}^2).
\end{align*}
Expanding $A_j$, $A$ via \eqref{eq:Ai}--\eqref{eq:A} and
$e^x - 1 = \sum_{k=1}^\infty x^k/k!$, and using
$\E_{y\sim\bar{P}} y = \sum_j\mu_j + \mu$ and $\Var_{\bar{P}} y = \sigma^2$,
the dominant contribution comes from terms of order $\hat{q}\,y^2/\sigma^4$:
\begin{align}
  M_1
  &\;=\; \hat{q}\,\E_{y\sim\bar{P}}\!\left[
     \left(\sum_j A_j + A\right)\ln\frac{\bar{P}}{\bar{Q}}
   \right]
   = \hat{q}\,\E_{y\sim\bar{P}}\!\left[
       \frac{\mu(-\sum_j\mu_j - \mu)\,y^2}{\sigma^4}
     \right]
     + \hat{q}\cdot O(\sigma^{-4}) \nonumber\\
  &\;=\; -\hat{q}\frac{\mu(\sum_j\mu_j + \mu)}{\sigma^2}
     + \hat{q}\cdot O(\sigma^{-4}), \nonumber\\[4pt]
  E_1
  &= \frac{\mu^2}{2\sigma^2}
     - \hat{q}\frac{\mu(\sum_j\mu_j + \mu)}{\sigma^2}
     + \hat{q}\cdot O(\sigma^{-4})
     + O(\hat{q}^2).
  \label{eq:E1}
\end{align}

\smallskip
\textit{Part $E_2$.}
Applying $\ln(1+q) = q + O(q^2)$ to the second log,
\begin{align*}
  E_2
  &= \E_{y\sim\bar{P}}\!\left[
     \left(1 + \hat{q}O(1)\right)
     \left(
       \hat{q}\!\left(\sum_j A_j + A - \sum_j B_j\right) + \hat{q}^2 O(1)
     \right)
   \right] \\
  &= \hat{q}\,\E_{y\sim\bar{P}}\!\left[
       \sum_j A_j + A - \sum_j B_j
     \right]
     + O(\hat{q}^2).
\end{align*}
Retaining only the $y^2/\sigma^4$ terms from the Taylor expansion of $A_j$, $A$,
$B_j$ and using the moments of $\bar{P}$,
\begin{align}
  E_2
  &= \hat{q}\frac{\mu\sum_j\mu_j}{\sigma^2}
     + \hat{q}\cdot O(\sigma^{-4})
     + O(\hat{q}^2).
  \label{eq:E2}
\end{align}
Combining \eqref{eq:E1} and \eqref{eq:E2},
\begin{align*}
  \E_P\!\left[\ln\frac{P}{Q}\right]
  = E_1 + E_2
  &= \left(1 - 2\hat{q}\right)\frac{\mu^2}{2\sigma^2}
     + \hat{q}\cdot O(\sigma^{-4})
     + O(\hat{q}^2) \\
  &= (1-\hat{q})^2\frac{\mu^2}{2\sigma^2}
     + \hat{q}\cdot O(\sigma^{-4})
     + O(\hat{q}^2).
\end{align*}
The case of $\E_Q[\ln(P/Q)]$ is analogous.

\medskip
\noindent\textbf{Step 3: Variance.}
We approximate $\E_P[(\ln(P/Q))^2]$ by splitting it into two parts, $V_1$ and $V_2$.
\begin{align*}
  \E_P\!\left[\left(\ln\frac{P}{Q}\right)^{\!2}\right]
  &= \E_{y\sim\bar{P}}\!
  \left[\left(1 + \hat{q}\!\left(\sum_j A_j + A\right) + \hat{q}^2 O(1)\right)\right.\\
  &\qquad\left.\cdot\left[
      \left(\ln\frac{\bar{P}}{\bar{Q}}\right)^{\!2}
      + 2\ln\frac{\bar{P}}{\bar{Q}}
        \ln\frac{
          1 + \hat{q}\!\left(\sum_j A_j + A\right) + \hat{q}^2 O(1)
        }{
          1 + \hat{q}\sum_j B_j + \hat{q}^2 O(1)
        }
      + \hat{q}^2 O(1)
    \right]
  \right].
\end{align*}

\smallskip
\textit{Part $V_1$.}
\begin{align}
  V_1
  &= \E_{y\sim\bar{P}}\!\left[
     \left(1 + \hat{q}\!\left(\sum_j A_j + A\right) + \hat{q}^2 O(1)\right)
     \left(\ln\frac{\bar{P}}{\bar{Q}}\right)^{\!2}
   \right] \\
  &= \E_{\bar{P}}\!\left[\left(\ln\frac{\bar{P}}{\bar{Q}}\right)^{\!2}\right]
     + \hat{q}\cdot O(\sigma^{-4})
     + O(\hat{q}^2).
  \label{eq:V1}
\end{align}

\smallskip
\textit{Part $V_2$.}
Retaining only terms of order $\hat{q}\,y^2/\sigma^4$,
\begin{align}
  V_2
  &= \E_{y\sim\bar{P}}\!\left[
     (1 + \hat{q}O(1))
     \cdot 2\ln\frac{\bar{P}(y)}{\bar{Q}(y)}
       \ln\frac{
         1 + \hat{q}(\sum_j A_j + A) + \hat{q}^2 O(1)
       }{
         1 + \hat{q}\sum_j B_j + \hat{q}^2 O(1)
       }
   \right] \nonumber\\
  &= \E_{y\sim\bar{P}}\!\left[
     2\hat{q}\!\left(\ln\frac{\bar{P}(y)}{\bar{Q}(y)}\right)
     \!\left(\sum_j A_j + A - \sum_j B_j\right)
   \right] + O(\hat{q}^2) \nonumber\\
  &= 2\hat{q}\,\E_{y\sim\bar{P}}\!\left[
     \frac{\mu(-\sum_j\mu_j - \mu + \sum_j\mu_j)\,y^2}{\sigma^4}
   \right]
   + \hat{q}\cdot O(\sigma^{-4})
   + O(\hat{q}^2) \nonumber\\
  &= -2\hat{q}\frac{\mu^2}{\sigma^2}
     + \hat{q}\cdot O(\sigma^{-4})
     + O(\hat{q}^2).
  \label{eq:V2}
\end{align}
Combining \eqref{eq:V1} and \eqref{eq:V2},
\begin{align*}
  \Var_P\!\left[\ln\frac{P}{Q}\right]
  &= V_1 + V_2 - \left(\E_P\!\left[\ln\frac{P}{Q}\right]\right)^{\!2} \\
  &= \E_{\bar{P}}\!\left[\left(\ln\frac{\bar{P}}{\bar{Q}}\right)^{\!2}\right]
     - 2\hat{q}\frac{\mu^2}{\sigma^2}
     - \left(1 - 4\hat{q}\right)\!\left(\frac{\mu^2}{2\sigma^2}\right)^{\!2}
     + \hat{q}\cdot O(\sigma^{-4})
     + O(\hat{q}^2) \\
  &= \left(\frac{\mu^2}{\sigma^2} + \left(\frac{\mu^2}{2\sigma^2}\right)^{\!2}\right)
     - 2\hat{q}\frac{\mu^2}{\sigma^2}
     - \left(\frac{\mu^2}{2\sigma^2}\right)^{\!2}
     + \hat{q}\cdot O(\sigma^{-4})
     + O(\hat{q}^2) \\
  &= \left(1 - 2\hat{q}\right)\frac{\mu^2}{\sigma^2}
     + \hat{q}\cdot O(\sigma^{-4})
     + O(\hat{q}^2)
   = (1-\hat{q})^2\frac{\mu^2}{\sigma^2}
     + \hat{q}\cdot O(\sigma^{-4})
     + O(\hat{q}^2).
\end{align*}
The case of $\Var_Q[\ln(P/Q)]$ is identical.

\medskip
\noindent\textbf{Step 4: Third absolute central moment.}
We have
\begin{align*}
  & \E_{P}\!\left|\ln\frac{P}{Q} - \E_P\!\left[\ln\frac{P}{Q}\right]\right|^3 \\
  &= \E_{y\sim\bar{P}}\!
  \left[
    \left(1 + \hat{q}\!\left(\sum_j A_j + A\right) + \hat{q}^2 O(1)\right)
    \left|
      \ln\frac{\bar{P}(y)}{\bar{Q}(y)}
      + \ln\frac{
          1 + \hat{q}(\sum_j A_j + A) + \hat{q}^2 O(1)
        }{
          1 + \hat{q}\sum_j B_j + \hat{q}^2 O(1)
        }\right.\right.\\
  &\left.\left.\hspace{7.5cm} - (1-2\hat{q})\frac{\mu^2}{2\sigma^2}
      + \hat{q}\cdot O(\sigma^{-4})
      + \hat{q}^2 O(1)
    \right|^3
  \right] \\
  &= \E_{y\sim\bar{P}}\!
  \left[
    \left(1 + \hat{q}\!\left(\sum_j A_j + A\right) + \hat{q}^2 O(1)\right)
    \left|
      \ln\frac{\bar{P}(y)}{\bar{Q}(y)} - \frac{\mu^2}{2\sigma^2}\right.\right.\\
  &\left.\left.\hspace{5cm}
      + \hat{q}\!\left(
          \sum_j A_j + A - \sum_j B_j + \frac{\mu^2}{\sigma^2} + O(\sigma^{-4})
        \right)
      + \hat{q}^2 O(1)
    \right|^3
  \right].
\end{align*}
Applying the triangle inequality, noting $|O(1)| = O(1)$,
\begin{align*}
  &\E_{P}\!\left|\ln\frac{P}{Q} - \E_P\!\left[\ln\frac{P}{Q}\right]\right|^3 \\
  &\leq \E_{y\sim\bar{P}}\!
  \left[
    \left(1 + \hat{q}\!\left(\sum_j A_j + A\right) + \hat{q}^2 O(1)\right)
    \cdot\left(
      \left|\ln\frac{\bar{P}(y)}{\bar{Q}(y)} - \frac{\mu^2}{2\sigma^2}\right|
    \right.\right.\\
  &\left.\left.\hspace{6cm}
      + \hat{q}\!\left|\sum_j A_j + A - \sum_j B_j + O(\sigma^{-2})\right|
      + \hat{q}^2 O(1)
    \right)^3
  \right] \\
  &= \E_{y\sim\bar{P}}\!
  \left[\left(1 + \hat{q}\!\left(\sum_j A_j + A\right) + \hat{q}^2 O(1)\right)\cdot\left(
      \left|\ln\frac{\bar{P}(y)}{\bar{Q}(y)} - \frac{\mu^2}{2\sigma^2}\right|^3
  \right.\right.\\
  &\left.\left.\hspace{3.3cm}
      + 3\hat{q}\!\left(\ln\frac{\bar{P}(y)}{\bar{Q}(y)} - \frac{\mu^2}{2\sigma^2}\right)^{\!2}
        \left|\sum_j A_j + A - \sum_j B_j + O(\sigma^{-2})\right|
      + \hat{q}^2 O(1)
    \right)
  \right].
\end{align*}
We split this into $T_1$ and $T_2$.

\smallskip
\textit{Part $T_1$.}
\begin{align}
  T_1
  &= \E_{y\sim\bar{P}}\!\left[
     \left(1 + \hat{q}\!\left(\sum_j A_j + A\right) + \hat{q}^2 O(1)\right)
     \left|\ln\frac{\bar{P}(y)}{\bar{Q}(y)} - \frac{\mu^2}{2\sigma^2}\right|^3
   \right] \nonumber\\
  &= \E_{y\sim\bar{P}}\!\left|\ln\frac{\bar{P}(y)}{\bar{Q}(y)} - \frac{\mu^2}{2\sigma^2}\right|^3
     + \hat{q}\cdot O(\sigma^{-5})
     + O(\hat{q}^2) \nonumber\\
  &= \sqrt{\tfrac{8}{\pi}}\,\frac{\mu^3}{\sigma^3}
     + \hat{q}\cdot O(\sigma^{-5})
     + O(\hat{q}^2).
  \label{eq:T1}
\end{align}

\smallskip
\textit{Part $T_2$.}
Applying the triangle inequality and retaining terms of order $\hat{q}/\sigma^3$,
\begin{align}
  T_2
  &= \E_{y\sim\bar{P}}\!\left[
     (1 + \hat{q}O(1))
     \cdot 3\hat{q}\!
     \left(\ln\frac{\bar{P}(y)}{\bar{Q}(y)} - \frac{\mu^2}{2\sigma^2}\right)^{\!2}
     \left|\sum_j A_j + A - \sum_j B_j + O(\sigma^{-2})\right|
   \right] \nonumber\\
  &= \E_{y\sim\bar{P}}\!\left[
     3\hat{q}\!
     \left(\ln\frac{\bar{P}(y)}{\bar{Q}(y)} - \frac{\mu^2}{2\sigma^2}\right)^{\!2}
     \left|\sum_j A_j + A - \sum_j B_j + O(\sigma^{-2})\right|
   \right] + O(\hat{q}^2) \nonumber\\
  &\leq \E_{y\sim\bar{P}}\!\left[
     3\hat{q}\,\frac{\mu^2(y - \sum_j\mu_j - \mu)^2}{\sigma^4}
     \left|-\frac{\mu(y - \sum_j\mu_j - \mu)}{\sigma^2}\right|
   \right]
   + \hat{q}\cdot O(\sigma^{-4})
   + O(\hat{q}^2) \nonumber\\
  &= \frac{3\hat{q}\,\mu^3\,\E_{y\sim\bar{P}}\!|y - \sum_j\mu_j - \mu|^3}{\sigma^6}
     + \hat{q}\cdot O(\sigma^{-4})
     + O(\hat{q}^2) \nonumber\\
  &= 3\hat{q}\sqrt{\tfrac{8}{\pi}}\,\frac{\mu^3}{\sigma^3}
     + \hat{q}\cdot O(\sigma^{-4})
     + O(\hat{q}^2).
  \label{eq:T2}
\end{align}
Combining \eqref{eq:T1} and \eqref{eq:T2},
\begin{align*}
  \E_{P}\!\left|\ln\frac{P}{Q} - \E_P\!\left[\ln\frac{P}{Q}\right]\right|^3
  \leq T_1 + T_2
  &\leq (1 + 3\hat{q})\sqrt{\tfrac{8}{\pi}}\,\frac{\mu^3}{\sigma^3}
     + \hat{q}\cdot O(\sigma^{-4})
     + O(\hat{q}^2) \\
  &= (2 - q^3)\sqrt{\tfrac{8}{\pi}}\,\frac{\mu^3}{\sigma^3}
     + \hat{q}\cdot O(\sigma^{-4})
     + O(\hat{q}^2).
\end{align*}
The case of $\E_Q|\ln(P/Q) - \E_Q[\ln(P/Q)]|^3$ is identical.
\end{proof}

\section{Further Monte Carlo Simulations for the Approximate GDP Filter}%
\label{appx:experiments}

\subsection{Conversion of RDP Order~1 to $(\varepsilon,\delta)$-DP}

The standard algebraic conversion from RDP to $(\varepsilon,\delta)$-DP
\citep{canonne_discrete_2020} is undefined at RDP order~1, because the formula
involves a factor that diverges there. This is not merely an algebraic singularity
--- the underlying proof breaks down at order~1. We adapt that proof to derive a
conversion dedicated to RDP order~1. We note that this algebraic conversion is not
tight: an optimal conversion exists but is numerically unstable, making it less
practical. The algebraic conversion is nonetheless the standard choice in the
literature, and we follow that convention here.

\begin{theorem}\label{thm:rdp-order1-conversion}
  If a mechanism satisfies $(1,B)$-RDP, then for all $\varepsilon > 0$ it satisfies
  $(\varepsilon,\delta)$-DP with $\delta = c \cdot B$, where $c\in(0,1)$ is the
  solution to
  \[
    \frac{1}{c} + \ln c = 1 + \varepsilon.
  \]
\end{theorem}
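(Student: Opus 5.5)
The plan is to work directly with the hockey-stick characterisation of $(\varepsilon,\delta)$-DP from Appendix~\ref{appx:privacy_profile}, and to adapt the order-$\alpha$ argument of \citet{canonne_discrete_2020} to order one. Fix a remove pair with $P=\M(S)$, $Q=\M(S^-)$ and privacy loss $L=\ln\frac{P}{Q}$. The $(1,B)$-RDP assumption gives $\E_P[L]=D_{\mathrm{KL}}(P\,\|\,Q)\le B$, and the same bound in the reverse direction. I will write
\begin{align*}
    D_{e^\varepsilon}\!\left[P\midd Q\right]=\E_P\!\left[\left(1-\e^{\varepsilon-L}\right)_+\right]
\end{align*}
and look for a pointwise majorant $\psi(x)\ge\left(1-\e^{\varepsilon-x}\right)_+$ whose $P$-expectation is controlled by $\E_P[L]$ alone.

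\textbf{Step 1 (the linear majorant).} The natural candidate is a line $\psi(x)=cx$ through the origin tangent to the concave branch $x\mapsto 1-\e^{\varepsilon-x}$. Tangency at $x_0$ requires
\begin{align*}
    \e^{\varepsilon-x_0}=c \quad\text{and}\quad 1-c=cx_0 .
\end{align*}
The second condition gives $x_0=\frac{1}{c}-1$. Substituting into the first gives $\varepsilon=\ln c+\frac1c-1$, which is exactly the equation $\frac1c+\ln c=1+\varepsilon$ in the statement. Since $c\mapsto \frac1c+\ln c$ is strictly decreasing on $(0,1)$ from $+\infty$ to $1$, there is a unique root $c\in(0,1)$ for every $\varepsilon>0$. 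By concavity of $1-\e^{\varepsilon-x}$, the tangent line dominates it for all $x$, and hence dominates $\left(1-\e^{\varepsilon-x}\right)_+$ for all $x\ge 0$.

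\textbf{Step 2 (the obstacle: negative privacy loss).} On $\{L<0\}$ the line $cL$ is negative while the hinge is $0$. So Step~1 alone only yields $\delta\le c\,\E_P[L_+]$, and $\E_P[L_+]$ can exceed the KL divergence. This is the step I expect to be the main difficulty, and it is where the order-$\alpha$ proof breaks down. My first attempt is to add a nonnegative multiple of the zero-mean correction $\e^{-x}-1$, using $\E_P[\e^{-L}]\le 1$. That is, I take
\begin{align*}
    \psi(x)=cx+a\left(\e^{-x}-1\right),\qquad a\ge0,
\end{align*}
so that $\E_P[\psi(L)]\le cB$. I then need to check whether $a$ can be chosen so that $\psi\ge0$ on $x<0$ while preserving tangency, and therefore the constant $c$, on the positive side. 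If this changes $c$, I would instead control $\E_P\!\left[L\,\mathbf 1\{L<0\}\right]$ separately. For that I would use $\E_P[L\,\mathbf 1\{L<0\}]\ge -\,\E_P\!\left[(1-\e^{L})\mathbf 1\{L<0\}\right]\ge-\mathrm{TV}(P,Q)$, bound the total variation by Pinsker in terms of $B$, and verify whether this can be absorbed in the stated regime. I will follow the paper's indication that its adaptation yields exactly $\delta=cB$, and check this point carefully, since it determines whether the constant is exactly $c$.

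\textbf{Step 3 (conclusion).} Applying the same argument to the reverse pair $(Q,P)$, using the reverse KL bound from the RDP definition, gives both hockey-stick inequalities of Definition~\ref{def:dominating}. Taking the supremum over remove pairs then yields $(\varepsilon,cB)$-DP for every $\varepsilon>0$. Finally, I will note that $c$ lies in $(0,1)$, decreases in $\varepsilon$, and tends to $0$ as $\varepsilon\to\infty$, so that the resulting bound behaves sensibly.
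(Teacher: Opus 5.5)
Your Step 1 is the paper's argument: the tangent line $cx$ to the concave map $x\mapsto 1-e^{\varepsilon-x}$, which gives $\frac1c+\ln c=1+\varepsilon$. The obstacle you raise in Step 2 cannot be resolved, because the statement is false as written. No choice of $a$ and no Pinsker-type patch can yield $\delta=cB$.

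The paper's proof gets past this point only by writing $\E_P[1-e^{\varepsilon-L}]\le c\,\E_P[L]$ without the positive part, and by using an inequality proved only for $l>0$. That left side equals $1-e^{\varepsilon}Q(p>0)$, not $D_{e^\varepsilon}[P\,\|\,Q]$. So deferring to the paper does not close the step.

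Your correction term shows what actually happens. For $\psi(x)=c'x+a(e^{-x}-1)$ to be nonnegative just left of $0$ you need $\psi'(0)=c'-a\le 0$, i.e.\ $a\ge c'$. On $x>0$ the term $a(e^{-x}-1)<0$ then pulls $\psi$ below the line $c'x$, so the tangency constant is lost. The constraint on $x>\varepsilon$ only gets harder as $a$ grows, so take $a=c'$. The requirement becomes $c'(x+e^{-x}-1)\ge 1-e^{\varepsilon-x}$ for $x>\varepsilon$, whose smallest solution is
\[
c^*(\varepsilon)=\sup_{l>\varepsilon}\frac{e^{l}-e^{\varepsilon}}{l e^{l}-e^{l}+1}.
\]
This is strictly larger than $c$ for every $\varepsilon>0$: evaluate at $l=\frac1c-1$ and use $l+e^{-l}-1<l$. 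It yields the valid conversion $D_{e^\varepsilon}[P\,\|\,Q]\le\E_P[\psi(L)]=c^*\bigl(\E_P[L]+Q(p>0)-1\bigr)\le c^*B$.

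The constant $c^*$ cannot be improved, and this disproves the statement. Take a mechanism with $\M(S^-)=Q=(q,1-q)$ and $\M(S)=P=(qe^{l},1-qe^{l})$ on two points, with $l>\varepsilon$ and $q\to0$. Then:
\begin{itemize}
\item $D_{e^\varepsilon}[P\,\|\,Q]=q(e^{l}-e^{\varepsilon})$ exactly;
\item $D_{\mathrm{KL}}(P\,\|\,Q)=q(le^{l}-e^{l}+1)+O(q^2)$;
\item $D_{\mathrm{KL}}(Q\,\|\,P)=q(e^{l}-1-l)+O(q^2)$, which is smaller.
\end{itemize}
So the mechanism is $(1,B)$-RDP in both directions with $B=D_{\mathrm{KL}}(P\,\|\,Q)$. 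For $\varepsilon=1$ and $l=2$ this forces $\delta\ge\bigl(\tfrac{e^2-e}{e^2+1}+O(q)\bigr)B\approx0.557B$. The stated constant is $c\approx0.318$ at $\varepsilon=1$.

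Your Pinsker fallback fails for a related reason. It only gives $\E_P[L\,\mathbf 1\{L<0\}]\ge-\sqrt{B/2}$, which is not $O(B)$ and cannot be absorbed into a bound linear in $B$. Separately, your intermediate inequality is reversed: on $\{L<0\}$ one has $L\ge 1-e^{-L}$, not $L\ge e^{L}-1$.

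In short, Step 1 alone proves only $\delta\le c\,\sup_{(P,Q)}\E_P[L_+]$. To bound $\delta$ by the order-one RDP level $B$ you must replace $c$ by $c^*$, and the theorem itself needs that correction.
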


\begin{proof}
We find the smallest $c$ such that
\[
  1 - e^{\varepsilon - l} \leq c\cdot l
  \quad\text{for all } l > 0.
\]
This inequality implies that for every remove pair $(P,Q)$ and
PLRV $L(y) = \ln(P(y)/Q(y))$,
\[
  \E_{y\sim P}\!\left[1 - e^{\varepsilon - L(y)}\right]
  \leq c\cdot\E_{y\sim P}[L(y)].
\]
Taking the supremum over all remove pairs, the left-hand side becomes the tightest
$\delta$ such that the mechanism is $(\varepsilon,\delta)$-DP, while the right-hand
side becomes the RDP bound at order~1 scaled by $c$. Optimising over $l$, we solve
the system
\[
  1 - e^{\varepsilon - l} = c\cdot l, \qquad e^{\varepsilon - l} = c,
\]
which yields the equation $\nicefrac{1}{c} + \ln c = 1 + \varepsilon$ with
$c\in(0,1)$.
\end{proof}

\subsection{Details of Monte Carlo Simulations}

We validate the approximate GDP filter for subsampled Gaussian mechanisms through Monte Carlo simulations across a range of subsampling rates. All experiments employ the adaptive rule
\begin{align}
    \sigma_t = \operatorname{clamp}\!\left(\frac{\sigma_0}{\sqrt{\sum_{i<t} y_i^2}};\, 
    [\sigma_{\min},\, \sigma_0]\right), \qquad
    C_t = \frac{C_0}{\sqrt{\textstyle\sum_{i<t}y_i^2}},\label{eq:adaptive-rule}
\end{align}
which jointly scales down both $\sigma_t$ and $C_t$ by the factor $\sqrt{\sum_{i<t} y_i^2}$. This is an analogue of AdaGrad-style step-size decay \citep{JMLR:v12:duchi11a} applied to DP training. As outputs accumulate, $\sum y_i^2$ grows and the clipping bound $C_t$ naturally decreases, reflecting the observation that gradient norms diminish as a model approaches convergence \citep{andrew2021differentially}. The floor $\sigma_{\min}$ prevents the noise scale from degenerating as $\sum y_i^2 \to \infty$, and the ceiling $\sigma_0$ handles the first step where the accumulated sum is near zero. Other practical adaptive rules are discussed in \citep{lecuyer2021practical}.

The common parameters across all experiments are: $\sigma_0 = 8.0$ (initial noise scale), $\sigma_{\min} = 2.0$ (minimum noise floor), and $C_0 = 1.0$, reflecting practical DP-SGD scenarios where noise decreases gradually as training progresses. For each subsampling rate, we carefully tune the composition length $T$ and budget $B$ so that the filter halts with high probability ($\geq 95\%$) after an average of $11$--$12$ active epochs.

We simulate $N = 10^6$ trajectories of the filter, tracking the privacy loss random variable $L(y) = \ln\frac{P(y)}{Q(y)}$, where $P$ and $Q$ form a remove pair of distributions. For each trajectory, we compute the empirical privacy loss distribution and compare it against the theoretical Gaussian approximation---namely, the PLD of $\sqrt{2B}$-GDP---predicted by Theorem~\ref{thm:validity-approx-GDP-filter}. 

To evaluate the $(\varepsilon,\delta)$-DP curve at privacy levels as small as
$\delta\sim 10^{-7}$, direct empirical estimation from $N=10^6$ samples is
insufficient: fewer than one trajectory in ten would fall in the relevant tail.
We therefore augment the empirical distribution with a peaks-over-threshold
(POT) tail fit using the Generalized Pareto Distribution (GPD). Concretely, for each simulated sample $\{L_i\}_{i=1}^N$ (under $P$ and under $Q$ separately), we set the threshold $u$ at the empirical 95th percentile, collect the
$0.05N$ exceedances $\{L_i - u : L_i > u\}$, and fit a GPD via maximum likelihood to the exceedances, thereby obtaining shape $\xi$ and scale $\sigma$ parameters. The resulting tail-probability estimator is
\begin{equation}
    \widehat{\Pr}[L > x] =
    \begin{cases}
        \dfrac{1}{N}\displaystyle\sum_{i=1}^{N} \mathbf{1}[L_i > x] & x \leq u, \\[8pt]
        \hat{p}_u \cdot \left(1 + \xi\,\dfrac{x - u}{\sigma}\right)^{-1/\xi} & x > u,
    \end{cases}
    \label{eq:gpd-tail}
\end{equation}
where $\hat{p}_u = |\{i : L_i > u\}|/N$ is the empirical exceedance probability.The hockey-stick divergence is then estimated as$\hat{\delta}(\varepsilon) = \widehat{\Pr}_P[L > \varepsilon] - e^{\varepsilon}\,\widehat{\Pr}_Q[L > \varepsilon]$, and for each target level $\delta^{\ast} \in \{10^{-5}, \ldots, 10^{-7}\}$ the corresponding $\varepsilon^{\ast}$ is recovered by binary search (Brent's method,
tolerance $10^{-9}$). The fitted GPD shape parameters across all experiments satisfy $\xi \in (-0.13,\,-0.09)$, indicating bounded-support tails consistent with the Gaussian approximation, and validating the regularity assumption underlying the POT method.

We present five figures and two summary tables below. Each figure corresponds to a subsampling rate \(q\in\{0.01,0.1,0.199,0.801,0.95\}\), covering both the low-subsampling (\(q<0.2\)) and high-subsampling (\(q>0.8\)) regimes. Within each figure, panels~(a)--(b) and~(c)--(d) display the CDF comparison and CDF discrepancy under the \(P\) and \(Q\) arms of the remove pair, respectively, along with the halting probability and average number of steps and epochs before the filter stops. Panel~(e) overlays three \((\varepsilon,\delta)\)-DP curves: the theoretical \(\sqrt{2B}\)-GDP guarantee from Theorem~\ref{thm:validity-approx-GDP-filter}, the empirical simulation, and the \((1,B)\)-RDP conversion (obtained via Theorem~\ref{thm:rdp-order1-conversion} above). Table~\ref{tab:plrv-moments} collects the empirical PLRV means and variances across all five settings and compares them against the theoretical predictions \(\mathbb{E}[L]=\pm B\) and \(\mathrm{Var}[L]=2B\) from \(\sqrt{2B}\)-GDP, while Table~\ref{tab:delta-gdp} summarises the resulting \(\Delta\)-approximate GDP guarantees.

The results across all subsampling rates confirm several key findings.
\begin{enumerate}
    \item The filter halts with high probability in all experiments ($\geq 98\%$ across all settings) after an average of $11$--$12$ active epochs; the high halting probability ensures that the filter governs the privacy accounting in the vast majority of trajectories rather than the finite composition horizon $T$, and the sufficient number of active epochs allows the central limit theorem to take effect and justify the Gaussian approximation of the PLD.
    \item The empirical PLRV moments match the theoretical predictions closely: means agree with $\pm B$ to within a few percent under both $P$ and $Q$, and variances similarly match $2B$ well across all settings (Table~\ref{tab:plrv-moments}), validating the moment approximations of Theorems~\ref{thm:approx-q=0} and~\ref{thm:approx-q=1} as one key ingredient underpinning the Gaussian convergence.
    \item The CDF discrepancy $\Delta$ between the empirical PLD and the $\sqrt{2B}$-GDP Gaussian is small in a distributional sense, confirming that the Gaussian approximation is accurate; nonetheless, even modest values of $\Delta$ can translate into non-negligible slack in the resulting $(\varepsilon,\delta)$-DP guarantee, and so minimising $\Delta$ remains practically important. Encouragingly, $\Delta$ decreases as $q\to 0$ or $q\to 1$ (Table~\ref{tab:delta-gdp}), where the per-step mechanism concentrates in the low- or high-subsampling limit and CLT convergence is fastest.
    \item Panel~(e) of each figure shows that the $(1,B)$-RDP conversion yields dramatically looser $(\varepsilon,\delta)$-DP curves than the empirical simulation in all cases, confirming that RDP accounting is highly suboptimal here. Our $\sqrt{2B}$-GDP guarantee is optimistic relative to the empirical curves for intermediate subsampling rates, but the gap narrows as $q\to 0$ or $q\to 1$ where the Gaussian CLT approximation is most accurate. At $q=0.95$ in particular, the $\sqrt{2B}$-GDP curve and the empirical curve cross each other, reflecting that the Gaussian approximation becomes accurate enough that the GDP guarantee is no longer uniformly conservative.
\end{enumerate}

\vspace{30pt}
\begin{figure}[H]
    \centering
    \includegraphics[width=0.99\textwidth]{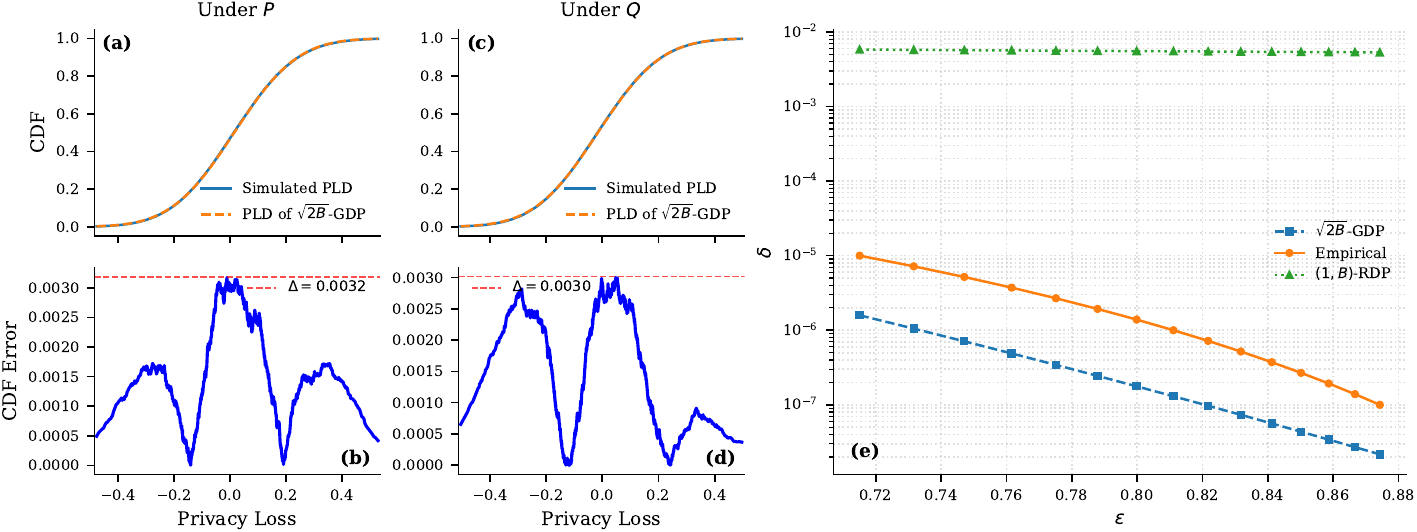}
    \caption{Parameters: \(q=0.01\), \(T=1200\), \(B=20000\cdot\operatorname{Budg}\!\left(0,q,\sigma_0,1\right)\approx 0.01575\). Final guarantee of \(0.003182\)-approximate \(0.1775\)-GDP.
        \textbf{(a)--(b):} Under $y\sim P$: filter halts with \(100.00\%\) probability after an average of \(1111.2\) steps (\(11.11\) epochs); empirical PLD mean \(0.01591\), variance \(0.03151\); CDF discrepancy \(\Delta\approx 0.003182\).
        \textbf{(c)--(d):} Under $y\sim Q$: filter halts with \(100.00\%\) probability after an average of \(1111.2\) steps (\(11.11\) epochs); empirical PLD mean \(-0.01549\), variance \(0.03113\); CDF discrepancy \(\Delta\approx 0.003019\).
        \textbf{(e):} \((\varepsilon,\delta)\)-DP curves comparing \(\sqrt{2B}\)-GDP guarantee, empirical simulation, and \((1,B)\)-RDP conversion.
    }
\end{figure}

\begin{figure}[p]
    \centering

    \includegraphics[width=0.99\textwidth]{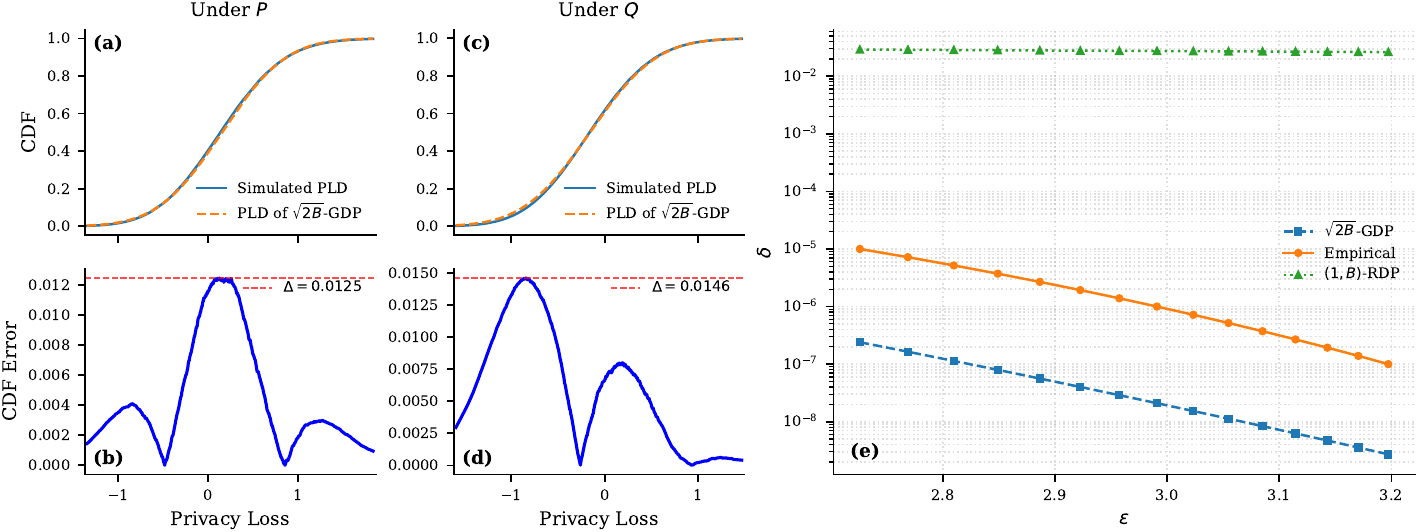}
    \caption{Parameters: \(q=0.1\), \(T=120\), \(B=2000\cdot\operatorname{Budg}\!\left(0,q,\sigma_0,1\right)\approx 0.1575\). Final guarantee of \(0.01464\)-approximate \(0.5612\)-GDP.
        \textbf{(a)--(b):} Under $y\sim P$: filter halts with \(98.86\%\) probability after an average of \(113.1\) steps (\(11.31\) epochs); empirical PLD mean \(0.1526\), variance \(0.314\); CDF discrepancy \(\Delta\approx 0.01247\).
        \textbf{(c)--(d):} Under $y\sim Q$: filter halts with \(98.80\%\) probability after an average of \(113.1\) steps (\(11.31\) epochs); empirical PLD mean \(-0.1494\), variance \(0.2912\); CDF discrepancy \(\Delta\approx 0.01464\).
        \textbf{(e):} \((\varepsilon,\delta)\)-DP curves comparing \(\sqrt{2B}\)-GDP guarantee, empirical simulation, and \((1,B)\)-RDP conversion.
    }

    \vspace*{60pt}

    \includegraphics[width=0.99\textwidth]{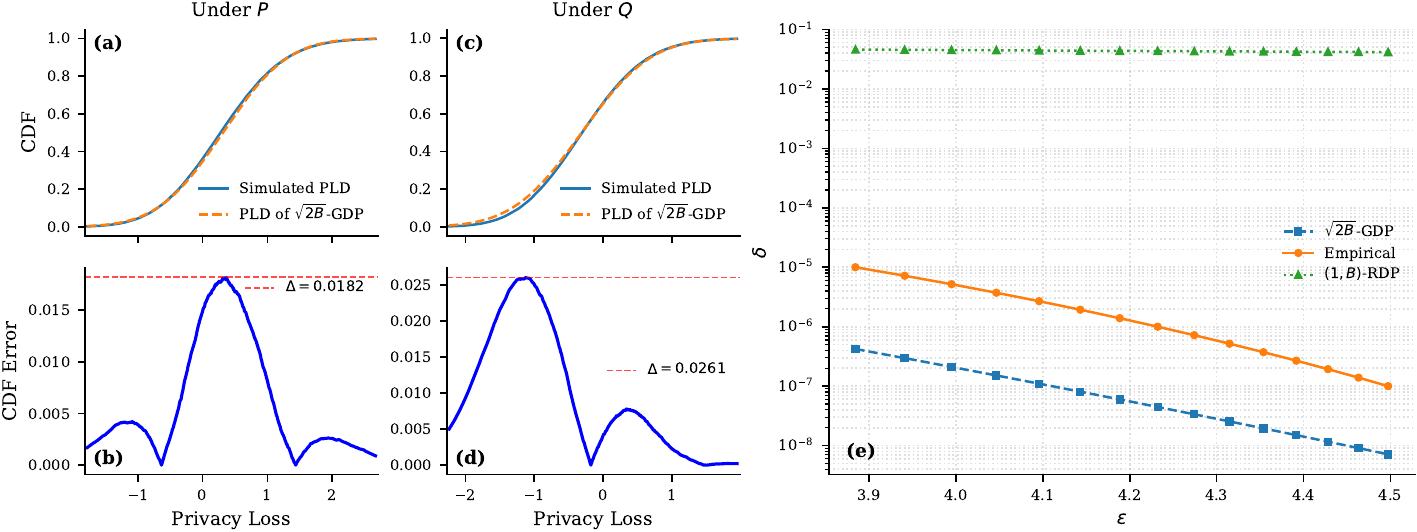}
    \caption{Parameters: \(q=0.199\), \(T=65\), \(B=1000\cdot\operatorname{Budg}\!\left(0,q,\sigma_0,1\right)\approx 0.3118\). Final guarantee of \(0.02606\)-approximate \(0.7897\)-GDP.
        \textbf{(a)--(b):} Under $y\sim P$: filter halts with \(99.24\%\) probability after an average of \(58.0\) steps (\(11.53\) epochs); empirical PLD mean \(0.2972\), variance \(0.6176\); CDF discrepancy \(\Delta\approx 0.01818\).
        \textbf{(c)--(d):} Under $y\sim Q$: filter halts with \(99.13\%\) probability after an average of \(58.0\) steps (\(11.54\) epochs); empirical PLD mean \(-0.2835\), variance \(0.5469\); CDF discrepancy \(\Delta\approx 0.02606\).
        \textbf{(e):} \((\varepsilon,\delta)\)-DP curves comparing \(\sqrt{2B}\)-GDP guarantee, empirical simulation, and \((1,B)\)-RDP conversion.
    }
\end{figure}

\begin{figure}[p]
    \centering
    
    \includegraphics[width=0.99\textwidth]{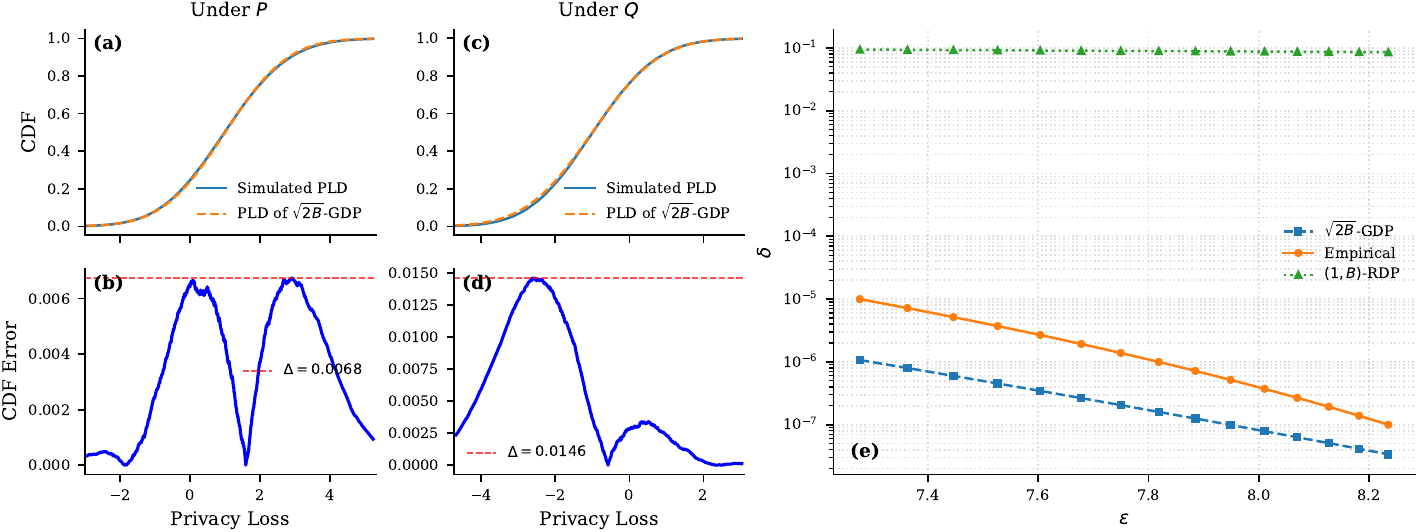}
    \caption{Parameters: \(q=0.801\), \(T=20\), \(B=200\cdot\operatorname{Budg}\!\left(0,q,\sigma_0,1\right)\approx 1.003\). Final guarantee of \(0.01463\)-approximate \(1.416\)-GDP.
        \textbf{(a)--(b):} Under $y\sim P$: filter halts with \(98.61\%\) probability after an average of \(14.8\) steps (\(11.85\) epochs); empirical PLD mean \(1.006\), variance \(2.093\); CDF discrepancy \(\Delta\approx 0.006753\).
        \textbf{(c)--(d):} Under $y\sim Q$: filter halts with \(98.07\%\) probability after an average of \(14.8\) steps (\(11.88\) epochs); empirical PLD mean \(-0.9704\), variance \(1.862\); CDF discrepancy \(\Delta\approx 0.01463\).
        \textbf{(e):} \((\varepsilon,\delta)\)-DP curves comparing \(\sqrt{2B}\)-GDP guarantee, empirical simulation, and \((1,B)\)-RDP conversion.
    }

    \vspace*{60pt}

    \includegraphics[width=0.99\textwidth]{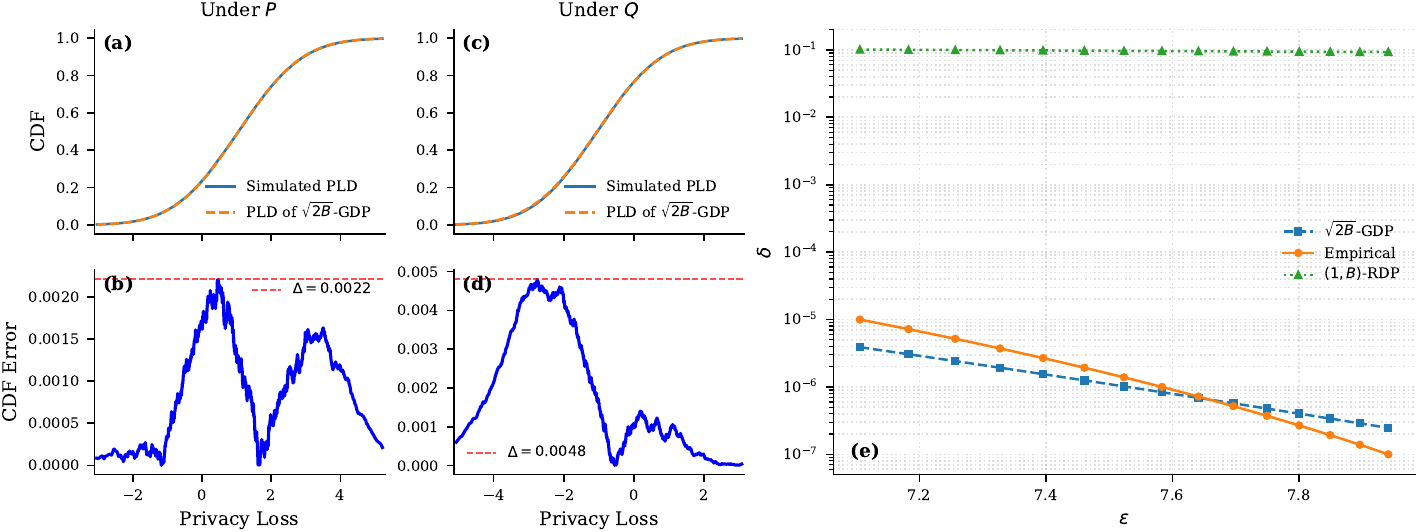}
    \caption{Parameters: \(q=0.95\), \(T=20\), \(B=150\cdot\operatorname{Budg}\!\left(0,q,\sigma_0,1\right)\approx 1.058\). Final guarantee of \(0.004807\)-approximate \(1.454\)-GDP.
        \textbf{(a)--(b):} Under $y\sim P$: filter halts with \(99.98\%\) probability after an average of \(11.8\) steps (\(11.22\) epochs); empirical PLD mean \(1.058\), variance \(2.134\); CDF discrepancy \(\Delta\approx 0.002208\).
        \textbf{(c)--(d):} Under $y\sim Q$: filter halts with \(99.97\%\) probability after an average of \(11.9\) steps (\(11.29\) epochs); empirical PLD mean \(-1.047\), variance \(2.064\); CDF discrepancy \(\Delta\approx 0.004807\).
        \textbf{(e):} \((\varepsilon,\delta)\)-DP curves comparing \(\sqrt{2B}\)-GDP guarantee, empirical simulation, and \((1,B)\)-RDP conversion.
    }
\end{figure}

\vspace{1em}
\begin{table}[p]
\centering
\caption{Empirical vs.\ theoretical PLRV moments across subsampling rates.
Theoretical mean is \(+B\) under \(P\) and \(-B\) under \(Q\); theoretical variance is \(2B\) under both.}
\label{tab:plrv-moments}
\begin{tabular}{c c cc cc cc}
\toprule
 & & \multicolumn{2}{c}{Theoretical} & \multicolumn{2}{c}{Empirical under $P$} & \multicolumn{2}{c}{Empirical under $Q$} \\
\cmidrule(lr){3-4}\cmidrule(lr){5-6}\cmidrule(lr){7-8}
$q$ & $B$ & Mean & Variance & Mean & Variance & Mean & Variance \\
\midrule
$0.01$ & $0.01575$ & $\pm0.01575$ & $0.0315$ & $0.01591$ & $0.03151$ & $-0.01549$ & $0.03113$ \\
$0.1$ & $0.1575$ & $\pm0.1575$ & $0.315$ & $0.1526$ & $0.314$ & $-0.1494$ & $0.2912$ \\
$0.199$ & $0.3118$ & $\pm0.3118$ & $0.6236$ & $0.2972$ & $0.6176$ & $-0.2835$ & $0.5469$ \\
$0.801$ & $1.003$ & $\pm1.003$ & $2.005$ & $1.006$ & $2.093$ & $-0.9704$ & $1.862$ \\
$0.95$ & $1.058$ & $\pm1.058$ & $2.115$ & $1.058$ & $2.134$ & $-1.047$ & $2.064$ \\
\bottomrule
\end{tabular}

\vspace*{80pt}

\caption{Empirical \(\Delta\)-divergence error and resulting privacy guarantees.
The reported \(\Delta\) is the maximum CDF discrepancy between the empirical PLD and
the \(\sqrt{2B}\)-GDP Gaussian; the final guarantee takes \(\Delta = \max(\Delta_P, \Delta_Q)\).}
\label{tab:delta-gdp}
\begin{tabular}{c c cc c}
\toprule
$q$ & $\Delta$ under $P$ & $\Delta$ under $Q$ & Final $\Delta$ & $\sqrt{2B}$-GDP \\
\midrule
$0.01$ & $0.003182$ & $0.003019$ & $0.003182$ & $0.1775$ \\
$0.1$ & $0.01247$ & $0.01464$    & $0.01464$ & $0.5612$ \\
$0.199$ & $0.01818$ & $0.02606$  & $0.02606$ & $0.7897$ \\
$0.801$ & $0.006753$ & $0.01463$ & $0.01463$ & $1.416$ \\
$0.95$ & $0.002208$ & $0.004807$ & $0.004807$ & $1.454$ \\
\bottomrule
\end{tabular}
\end{table}


\end{document}